\newenvironment{protocol}[1][htb]{%
    \renewcommand{\ALG@name}{Protocol}% Update algorithm name
    \begin{algorithm}[#1]%
    }{\end{algorithm}
}
\definecolor{misocolor}{rgb}{0.16,0.27,0.86}
\definecolor{graphicbackground}{rgb}{0.96,0.96,0.8}
\definecolor{rouge1}{RGB}{226,0,38}  % red P
\definecolor{orange1}{RGB}{243,154,38}  % orange P
\definecolor{jaune}{RGB}{254,205,27}  % jaune P
\definecolor{blanc}{RGB}{255,255,255} % blanc P
\definecolor{rouge2}{RGB}{230,68,57}  % red S
\definecolor{orange2}{RGB}{236,117,40}  % orange S
\definecolor{taupe}{RGB}{134,113,127} % taupe S
\definecolor{gris}{RGB}{91,94,111} % gris S
\definecolor{bleu1}{RGB}{38,109,131} % bleu S
\definecolor{bleu2}{RGB}{28,50,114} % bleu S
\definecolor{vert1}{RGB}{133,146,66} % vert S
\definecolor{vert3}{RGB}{20,200,66} % vert S
\definecolor{vert2}{RGB}{157,193,7} % vert S
\definecolor{darkyellow}{RGB}{233,165,0}  % orange S
\definecolor{lightgray}{rgb}{0.9,0.9,0.9}
\definecolor{darkgray}{rgb}{0.6,0.6,0.6}
\definecolor{babyblue}{rgb}{0.54, 0.81, 0.94}
\definecolor{citrine}{rgb}{0.89, 0.82, 0.04}
\definecolor{misogreen}{rgb}{0.25,0.6,0.0}
\definecolor{darkmagenta}{rgb}{0.5,0,0.5}
\DeclareMathOperator*{\argmax}{arg\,max}
\DeclareMathOperator*{\argmin}{arg\,min}
\newcommand{\floor}[1]{\left\lfloor#1\right\rfloor}
\newcommand{\Ber}{\mathrm{Ber}}
\newtheorem{definition}{Definition}
\newtheorem{lemma}{Lemma}
\newtheorem{theorem}{Theorem}
\newtheorem{remark}{Remark}
\newtheorem{assumption}{Assumption}
\newcommand{\NN}{{\mathbb N}}
\newcommand{\ind}[1]{\mathds{1}[#1]}
\newcommand{\E}{\mathbb{E}}
\newcommand{\EEs}[2]{\mathbb{E}_{#1}\left[#2\right]}
\newcommand{\EEc}[2]{\mathbb{E}\left[#1\left|#2\right.\right]}
\newcommand{\norm}[1]{\left\|#1\right\|}
\newcommand{\abs}[1]{\left|#1\right|}
\newcommand{\cA}{\mathcal{A}}
\newcommand{\cB}{\mathcal{B}}
\newcommand{\cE}{\mathcal{E}}
\newcommand{\cF}{\mathcal{F}}
\newcommand{\cL}{\mathcal{L}}
\newcommand{\cO}{\mathcal{O}}
\newcommand{\cP}{\mathcal{P}}
\newcommand{\cS}{\mathcal{S}}
\newcommand{\cY}{\mathcal{Y}}
\renewcommand{\epsilon}{\varepsilon}
\renewcommand{\hat}{\widehat}
\renewcommand{\tilde}{\widetilde}
\renewcommand{\bar}{\overline}
\newcommand{\nothere}[1]{}
\newcommand{\conv}{\mathrm{Conv}}
\newcommand{\A}{\mathcal{A}}
\newcommand{\pip}{\pi}
\newcommand{\brp}{p^*}
\newcommand{\brr}{a^*}
\newcommand{\ir}{\textrm{SwapReg}}
\newcommand{\nr}{\textrm{NegReg}}
\newcommand{\eint}{\epsilon_\textrm{swap}}
\newcommand{\eneg}{\epsilon_\textrm{neg}}
\newcommand{\sig}{\varphi}
\newcommand{\pit}{p_t^\textrm{optimistic}}
\newcommand{\rit}{a_t^\textrm{optimistic}}
\newcommand{\ugap}{\textrm{UGap}}
\begin{document}

\title{Efficient Prior-Free Mechanisms for No-Regret Agents}
\date{}
%\author{Natalie Collina \and Aaron Roth \and Han Shao}\
%\author{Anonymized for Submission}

\author[1]{Natalie Collina\thanks{Supported in part by an AWS AI Gift for Research in Trustworthy AI.}}
\author[1]{Aaron Roth\thanks{Supported in part by the Simons Collaboration on the Theory of Algorithmic Fairness, and NSF grants FAI-2147212 and CCF-2217062.}}
\author[2]{Han Shao\thanks{Supported in part by the National Science Foundation under grants 2212968 and 2216899, by the Simons Foundation under the Simons Collaboration on the Theory of Algorithmic Fairness, by the Defense Advanced Research Projects Agency under cooperative agreement HR00112020003.}}
\affil[1]{University of Pennsylvania Department of Computer and Information Sciences}
\affil[2]{Toyota Technological Institute at Chicago}

\maketitle

 \begin{abstract}
We study a repeated Principal Agent problem between a long lived Principal and Agent pair in a prior free setting.  In our setting, the sequence of realized states of nature may be adversarially chosen, the Agent is non-myopic, and the Principal aims for a strong form of policy regret. 
Following \cite{camara2020mechanisms}, we model the Agent's long-run behavior with behavioral assumptions that relax the common prior assumption (for example, that the Agent has no swap regret). Within this framework, we revisit the mechanism proposed by \cite{camara2020mechanisms}, which informally uses calibrated forecasts of the unknown states of nature in place of a common prior. We give two main improvements. First, we give a mechanism that has an exponentially improved dependence (in terms of both running time and regret bounds) on the number of distinct states of nature. To do this, we show that our mechanism does not require truly calibrated forecasts, but rather forecasts that are unbiased subject to only a polynomially sized collection of events --- which can be produced with polynomial overhead. Second, in several important special cases---including the focal linear contracting setting---we show how to remove strong ``Alignment'' assumptions (which informally require that near-ties are always broken in favor of the Principal) by specifically deploying ``stable'' policies that do not have any near ties that are payoff relevant to the Principal. Taken together, our new mechanism makes the compelling framework proposed by \cite{camara2020mechanisms} much more powerful, now able to be realized  over polynomially sized state spaces, and while requiring only mild assumptions on Agent behavior. 
 \end{abstract}

 \thispagestyle{empty} \setcounter{page}{0}
 \clearpage

\tableofcontents

 \thispagestyle{empty} \setcounter{page}{0}
 \clearpage

\section{Introduction}
Many mechanism design settings can be cast as \emph{Principal/Agent Problems}. These are Stackelberg games of incomplete information, in which the \emph{Principal} first commits to some policy, and then the \emph{Agent} chooses an action by best responding. The utility for both the Principal and the Agent can depend on the actions they each choose, as well as some underlying and unknown state of nature. The fact that the state of nature is unknown is a crucial modeling aspect of Principal Agent problems. Two canonical examples of Principal Agent problems will be instructive: a simple example of a contract theory problem (see e.g. \cite{carroll2021contract}) and of a Bayesian Persuasion Problem \citep{kamenica2011bayesian}.
\begin{enumerate}
    \item \textbf{Contract Theory}: Consider a Principal (say a university endowment office) that has capital that they would like to invest, but who does not themselves have the expertise to invest it effectively. Instead they would like to contract with an Agent (say a hedge fund) so as to maximize their returns. The Agent will choose a strategy (say by dividing funds across a particular portfolio of investments), but the return of the strategy will be unknown at the time that they choose it---it depends on the unknown-at-the-time-of-action returns of each investment. Moreover they may be able to choose a better strategy by investing more time, effort, and money (for example, by hiring talented fund managers away from competing hedge funds). But should they? It is in the Principal's interest that their returns (minus their fees) should be maximized, but it is in the Agent's interest that their fees (minus their costs) should be maximized. How should the Principal design the contract (i.e. a mapping from outcomes to payments to the Agent) so that their utility is maximized when the Agent best responds? 
    \item \textbf{Bayesian Persuasion}: Consider a Principal (say a pharmaceutical company) that manufactures drugs that they need to get approved by an Agent (say a regulatory agency like the FDA) before they can be sold. The drugs will have various properties which we can think of as an underlying state comprising effectiveness, safety, etc. that are initially unknown. But drug trials (that may be at least partially designed by the Principal) will be run that will provide a noisy signal about the qualities of the drug, that the Agent will use to form a belief about the state, and as a result, either approve the drug or not. It is in the Principal's interest that as many drugs as possible should be approved --- but the Agent will approve only those drugs that it believes are safe. How should the Principal design the drug trial (i.e. a stochastic mapping from state to observable signal) so that as many drugs as possible are approved when the Agent best responds? 
\end{enumerate}
The classical economic literature answers these questions in a conceptually straightforward manner (although the structure of the solution can be intricate and rich): The Principal should commit to a strategy such that her payoff will be maximized after the Agent best responds. But given that the state is unknown, how will the Agent choose to best respond, and how will the Principal anticipate the Agent's choice? The classical answer is that the Principal and the Agent share a common \emph{prior distribution} on the unknown state of the world: the Agent best-responds so as to maximize his utility in expectation over this Prior, and the Principal, also being in possession of the same beliefs, anticipates this. There are some assumptions that are traditionally made about tie-breaking (that it is done in favor of the Principal) that we will interrogate, but the reader can ignore these for now. A strong general critique of the foundations of this literature asks: In a complex, dynamic world, where does this prior belief come from, and why is it reasonable to assume it is shared? 

Recently, \cite{camara2020mechanisms} gave an elegant framework for addressing this critique head on. They study a repeated Principal Agent problem (where two long-lived parties interact with each other repeatedly) and dispense with the common prior assumption entirely. In fact, there are no distributional assumptions at all in their model: the sequence of realized states of nature can be arbitrary or even adversarially chosen. Instead, it is assumed that the Agent behaves in a way that is consistent with various efficiently obtainable online-learning desiderata, which are elaborations on the goal that they should have no \emph{swap regret} \citep{blum2007external}, and that they don't have too much ``additional information'' about the state sequence compared to the Principal (this can be formalized in various ways that we shall discuss). These are assumptions that would be satisfied were there a common prior that both Agents were optimizing under --- but can be reasonably assumed (because they can be efficiently algorithmically obtained) without this assumption. Under a collection of such behavioral assumptions --- and other assumptions on the structure of the game --- \cite{camara2020mechanisms} show that a Principal who maintains \emph{calibrated forecasts} for the unknown states of nature, and acts by treating these forecasts as if they were a common prior --- is able to guarantee themselves a strong form of \emph{policy regret}. That is, they are guaranteed to obtain utility nearly as high as they would have had they instead played any fixed policy in some benchmark class, \emph{even accounting for how the Agent would have acted under this counter-factual policy}. Moreover, it has been known since \cite{foster1998asymptotic} that it is possible to produce calibrated forecasts of an arbitrary finite dimensional state, even if the state sequence is chosen adversarially --- so the mechanism proposed by \cite{camara2020mechanisms} could in principle be implemented in their model. This makes the model of \cite{camara2020mechanisms} a compelling alternative to common prior assumptions. Nevertheless, there remain some difficulties with the mechanism they propose within this framework:
\begin{enumerate}
    \item \textbf{Computational and Statistical Complexity}: Informally speaking, a method of producing forecasts $\hat s \in \mathbb{R}^d$ of a $d$-dimensional state $s \in \mathbb{R}^d$ is \emph{calibrated} if the forecasts are unbiased, not just overall, but conditional on the forecast itself: $\E_{s, \hat s}[s | \hat s] = \hat s$, for all values of $\hat s$. When we are forecasting probability distributions over a finite collection of states of nature $\cY$, the forecasts are probability distributions represented as $|\cY|$-dimensional  vectors $\hat s \in \Delta (\cY)$. Under any reasonable discretization, there are $\Omega\left(2^{|\cY|}\right)$ many such vectors, and algorithms for maintaining calibrated forecasts in this space have both computational and statistical complexity scaling exponentially with $|\cY|$. The mechanism proposed by \cite{camara2020mechanisms} inherits these limitations: and as a result has both running time and regret bounds that suffer \emph{exponential} dependencies on the cardinality of the state space $|\cY|$. Thus these mechanisms are reasonable only for very small constant sized state spaces. 
    \item \textbf{Strong ``Alignment'' Assumptions}: Even in the classical model in which the Agent ``best responds'' to the policy of the Principal, using their prior beliefs over the state of nature, there can be ambiguity in how the Agent will act. In particular, what if their set of best responses is not a singleton set: there are multiple actions that they can take that yield the same utility for the Agent---which action will they take? This is an important detail, because even when the Agent's utilities are tied over this set, each action may yield very different utility for the Principal. The traditional assumption is that the Agent breaks ties in favor of the Principal---which although optimistic can perhaps be viewed as a mild assumption because it concerns only exact ties. However, when there is doubt or imprecision about the Agent's beliefs, this problem is exacerbated: one could assume that \emph{near} ties are broken in favor of the Principal, but it is much less reasonable to assume that the Agent will forgo small gains so as to benefit the Principal; a similar phenomenon arises with the mechanism of  \cite{camara2020mechanisms} because sequential forecasts will never be exactly, but only approximately calibrated. \cite{camara2020mechanisms} deal with this issue by making strong ``alignment'' assumptions, which informally require that with respect to all possible prior distributions, the difference in Agent utilities between a pair of actions is comparable to the corresponding change in Principal utilities. This has the effect of making approximate tie-breaking (almost) irrelevant for the Principal. Unlike the behavioral assumptions placed on the Agent, which generalize the common prior assumption, however, these Alignment assumptions are restrictive and not commonly satisfied. It would be preferable to be able to remove them: whenever they can be removed entirely, the model makes strictly weaker assumptions than a common prior. 
\end{enumerate}

\subsection{Our Results}

In this paper we revisit the framework of \cite{camara2020mechanisms} and derive new mechanisms which address these issues. Our mechanisms obtain strong policy regret guarantees, but  are exponentially more efficient (in their dependence on the cardinality of the state space) in terms of both their running time and their regret bounds. Moreover, in a subset of instances (which we show includes linear contracting, that has been the exclusive focus of a large fraction of recent computational work in contract theory) our mechanisms entirely eliminate the need for alignment assumptions.

\paragraph{Computational and Statistical Efficiency---Beyond Calibration:} We show how to obtain both policy regret bounds and running time bounds that scale polynomially with the cardinality of the state space $|\cY|$, rather than exponentially (as in \cite{camara2020mechanisms}). To do this, we need to give mechanisms that do not rely on fully calibrated forecasts of the state of nature. Instead, we give mechanisms that use forecasts of the state of nature that are \emph{statistically unbiased} subject only to a polynomial number of events: informally, the events that the forecasts themselves (were they used as a common prior) would lead the Principal to propose each particular policy, and anticipate each particular action in response by the Agent. Calibration requires unbiasedness subject to exponentially many (in the cardinality of the state space $|\cY|$) events; here we require unbiasedness with respect to only quadratically many events (in the cardinality of the action space of the Principal and the Agent). Using a recent algorithm of \cite{NRRX23}, we are able to produce forecasts with these properties with running time that is polynomial in the cardinality of the state space $|\cY|$ and the action spaces of the Agent and the Principal. Under similar behavioral assumptions as \cite{camara2020mechanisms} (which strictly generalize the common prior assumption), we show that our mechanism obtains policy regret bounds that scale linearly with $|\cY|$ (again, compared to exponentially with $|\cY|$ in \cite{camara2020mechanisms}). 

\paragraph{Stable Policy Oracles---Avoiding Alignment Assumptions:} As discussed above, Alignment assumptions are needed in \cite{camara2020mechanisms} to address, informally, the problem of the mechanism's proposed policy inducing ``near-ties'' in the Agent's utility that  nevertheless lead to very different Principal utility. In contrast, we define a policy to be \emph{stable} with respect to a state distribution $\pi$ if when compared to the Agent's best response to the policy under $\pi$, every other action \emph{either} leads to substantially lower utility for the Agent (in expectation over the distribution), or else leads to nearly the same utility for the Principal. We show that if our mechanism has the ability to construct stable policies that also lead to near optimal utility for the Principal under the Principal's current state forecast, then she can obtain strong policy regret bounds without the need for an Alignment assumption. We then turn to the task of constructing near-optimal stable policies. We show by example that this is not possible for all Principal-Agent games within the framework we consider; but show how to do it in two important special cases. The first is the \emph{linear contracting} setting---the special case of contract theory in which the contract space is restricted to be a linear function of a real valued outcome (e.g. ``The Agent receives payment equal to 10\% of the revenue of the Principal''). Linear contracts are focal within the contract theory literature because they have a variety of robustness properties (see e.g. \cite{carroll2015robustness,dutting19})---and because they are the most commonly used type of contract in practice. As a result they have been the focus of a large fraction of the recent computational work in contract theory (see our discussion in the Related Work section). The second is the Bayesian Persuasion setting when the underlying state of nature is binary: e.g. drugs that are either effective or not, or defendants that are either innocent or guilty. This captures some of the best studied Bayesian Persuasion instances.

\paragraph{Guide to the Paper}
In Section \ref{sec:model} we define the model that we will be working under, following \cite{camara2020mechanisms}. In Section \ref{sec:behavior}, we state and discuss the behavioral assumptions that we make on the Agent throughout this paper. In Section \ref{sec:stable}, we derive our results when we have access to a \emph{stable policy oracle}---in this case, we do not need to make any ``alignment'' assumptions on the underlying game. In Section~\ref{sec:oracles} we show how to derive optimal stable policy oracles for linear contracting problems and for binary state Bayesian Persuasion problems. In Section \ref{sec:general} we consider the general case, in which we do not have the ability to construct stable policies. Here, like \cite{camara2020mechanisms}, we also need to make an alignment assumption. In Section \ref{sec:imposs} we interrogate the need for our assumptions and show several impossibility results that arise from not making them. In particular, we give an example of a game in which there is no stable policy oracle, which demonstrates that our approach for removing alignment assumptions cannot be generalized to all Principal Agent problems within the framework we study.

\subsection{Additional Related Work}
The foundations of principal agent problems and contract theory (in the standard setting with common priors) date back to \cite{holmstrom1979moral} and \cite{grossman1992analysis}. This literature is far too large to survey --- we refer the reader to \cite{bolton2004contract} for a textbook introduction, and here focus on only the most relevant work. 

Optimal contracts under a common prior assumption can be very complicated, and do not reflect structure seen in real world contracts. This criticism goes back to at least \cite{holmstrom1987aggregation}, who show a dynamic setting in which optimal contracts are linear. Recently, linear contracts have become an object of intense study, with work showing that they are optimal in various worst-case settings. In the classical common prior setting, \cite{carroll2015robustness} shows that linear contracts are minimax optimal for a Principal who knows \emph{some} but not \emph{all} of the Agent's actions. Similarly, \cite{dutting19} shows that if the Principal only knows the costs and expected rewards for each Agent action, then linear contracts are minimax optimal over the set of all reward distributions with the given expectation. \cite{dutting2022combinatorial} extends this robustness result to a combinatorial setting. \cite{dutting19} also show linear contracts are bounded approximations to optimal contracts, where the approximation factor can be bounded in terms of various quantities (e.g. the number of agent actions, or the ratio of the largest to smallest reward, or the ratio of the largest to smallest cost, etc). 
\cite{castiglioni2021bayesian} studies linear contracts in Bayesian settings (when the Principal knows a distribution over types from which the Agent's type is drawn) and studies how well linear contracts can approximate optimal contracts. In this setting, optimal contracts can be computationally hard to construct, and show that linear contracts obtain optimal approximations amongst tractable contracts. 

There is also a more recent tradition of studying sequential (repeated) principle agent games. \cite{ho2014adaptive} study online contract design by approaching it as a bandit problem in which an unknown distribution over myopic agents arrive and respond to an offered Principal contract by optimizing their expected utility with respect to a known prior. \cite{cohen2022learning} extend this to the case in which the Agent has bounded risk aversion. 
\cite{zhu2022sample} revisit this problem and characterize the sample complexity of online contract design in general (with nearly matching upper and lower bounds) and for the special case of linear contracts (with exactly matching upper and lower bounds). In contrast to this line of work, our Agent is not myopic --- a primary challenge is that we need to manage their long-term incentives --- and we make no distributional assumptions at all, either about the actual realizations nor about agent beliefs. 

\cite{chassang2013calibrated} studies a repeated interaction between a Principal and a long-lived Agent, with a focus on the \emph{limited liability} problem. As discussed, linear contracts have many attractive robustness properties, but can require negative payments from the Agent, which are difficult to implement. A limited liability contract, in contrast, never requires negative payments. Using a Blackwell-approachability argument, \cite{chassang2013calibrated} shows how to repeatedly contract with a single Agent (or instead to use a free outside option) so that the aggregate payments made to the agent is the same as they would have been under a linear contract, but negative payments are never required, and the Principal has no regret to either always contracting with the agent or always using the outside option.

The Bayesian Persuasian problem was introduced by \cite{kamenica2011bayesian} and has been studied from a computational perspective since \cite{dughmi2016algorithmic}. It has been applied to various problems, including incentivizing exploration in multiarmed bandit problems \citep{cohen2019optimal,sellke2021price,mansour2022bayesian}. A recent literature has studied sequential Bayesian Persuasian problems. \cite{zu2021learning} and \cite{bernasconi2022sequential} study a sequential Bayesian Persuasian problem in which the Principal does \emph{not initially} know the underlying distribution on the state space, and needs to learn it while acting in the game. \cite{wu2022sequential} study a sequential problem in which a Principal repeatedly interacts with myopic agents, using tools from reinforcement learning. \cite{gan2022bayesian} study a sequential Bayesian Persuasian problem in which the state evolves according to a Markov Decision Process, and show that for a myopic agent, the optimal signalling scheme can be computed efficiently, but that it is computationally hard for a non-myopic agent. \cite{bernasconi2023optimal} study regret bounds for a Principal in a sequential Bayesian Persuasian problem facing a sequence of myopic Agents, whose utility functions can be chosen by an adversary.

There is a substantial body of work on learning in repeated Stackelberg games (both in general and in various special cases like security games, strategic classification, and dynamic pricing) in settings in which the Agent has complete information and the Principal needs to learn about the Agent's preferences (see e.g. \citep{blum2014learning,Balcan2015CommitmentWR,roth2016watch,dong2018strategic,chen2020learning,roth2020multidimensional}). In these works, the Agent is myopic and optimizes for their one-round payoff. \cite{haghtalab2022learning} consider a non-myopic agent who discounts the future, and give no-regret learning rules for the Principal that take advantage of the fact that for a future-discounting agent, mechanisms that are slow to incorporate learned information will induce near-myopic behavior. The regret bounds in \cite{haghtalab2022learning} tend to infinity as the Agent becomes more patient. 
\cite{collina2023efficient} derive optimal commitment algorithms for complete-information Stackelberg games when the follower is maximizing their total payoff in expectation. In contrast to these works, we (and \cite{camara2020mechanisms} before us) operate in a setting without distributions (or assumed distributions that Agents can be said to optimize over) and give policy regret bounds contingent on Agent's satisfying behavioral assumptions defined by regret bounds.  This is  similar in spirit to \cite{deng2019strategizing}, which considers playing a repeated game against an agent playing a no-swap regret algorithm and shows that the optimal strategy is to play the single-shot Stackelberg equilibrium at each round. \cite{haghtalab2023calibrated} show that the same is true if an agent is best-responding to a calibrated predictor for the Principal's actions --- and accomplish this also by using a form of ``stable'' policies as we do. 

There is a long tradition of using ``no-regret'' assumptions as relaxations of classical assumptions that players in a game either best respond to beliefs or play a Nash equilibrium --- for example, when proving price of anarchy bounds \citep{blum2008regret,roughgarden2015intrinsic,lykouris2016learning}, when doing econometric inference \citep{nekipelov2015econometrics}, or when designing optimal pricing rules \citep{braverman2018selling,cai2023selling}, as well as work focused on how to play games against no-regret learning agents \citep{deng2019strategizing,mansour2022strategizing,kolumbus2022and,brown2023learning}.

Finally, the use of calibrated forecasts in decision-making settings dates back to \cite{foster1999regret}, who showed that agents best-responding to calibrated forecasts of their payoffs have no internal (equivalently swap) regret. Similarly \cite{kakade2008deterministic} and \cite{foster2018smooth} connect a determinstic ``smooth'' version of calibration to Nash equilibrium. A recent literature on ``multicalibration'' \citep{hebert2018multicalibration} has investigated various refinements of calibration; this has developed into a large literature and we refer the reader to \cite{RothNotes} for an introductory overview. Work on ``omniprediction''  \citep{GopalanKRSW22,gopalan2023loss,GHK23,gopalan2023characterizing,GJRR23} uses multicalibration to provide guarantees for a variety of 1-dimensional downstream decision making problems. Decision calibration \citep{zhao2021calibrating} (in the batch setting) aims to calibrate predictions to the best-response correspondence of a downstream decision maker. The tools we use, developed by \cite{NRRX23} arise from this literature. 
\section{Model}
\label{sec:model}

Consider a repeated Stackelberg game between a female Principal and a male Agent with policy space $\cP$, action space $\cA$, and state space $\cY$. In rounds $t\in \{1,\ldots,T\}$, the Principal selects a policy $p_t\in \cP$ and (possibly) recommends an action $r_t\in \cA$ for the Agent. After observing the policy $p_t$ and the recommendation $r_t$, the Agent takes an action $a_t\in \cA$. At the end of round $t$, a state of nature $y_t$ chosen by nature is revealed to both the Principal and the Agent.
Utility functions depend on the action, the policy and the state of nature.
We denote the Agent's utility by $U(a_t,p_t,y_t)\in [-1,1]$ and the Principal's utility by $V(a_t,p_t,y_t) \in [-1,1]$.
For example, in the context of contract design, a policy corresponds to a contract, the action (to follow a  traditional two-action toy example) could be either ``working'' or ``shirking'', and the state of nature corresponds to the difficulty level of the job. 

When there is a known (to both the Principal and the Agent) common prior $\pi\in \Delta(\cY)$ and the state of nature $y_t$ is drawn from this prior, the Principal can maximize her utility by solving for an optimal policy by backwards induction, choosing the policy that will maximize her utility after the Agent best responds by breaking ties in favor of the Principal. Formally, for any prior distribution $\pi$, if the Principal selects a policy $p$, then the Agent will best respond to $(p, \pi)$ by choosing an action in $A^*(p,\pi):= \argmax_{a\in \cA} \EEs{y\sim \pi}{U(a,p,y)}$ to maximize the Agent's utility. When there are multiple best responding actions, the traditional assumption is that the Agent will break ties by maximizing the Principal's utility, i.e., 
\begin{equation}
    \brr(p,\pi) \in \argmax_{a\in A^*(p,\pi)} \EEs{y\sim \pi}{V(a,p,y)}\,. \label{eq:brr}
\end{equation} 

The Principal, assuming that the Agent will best respond, best responds to $\pi$ by selecting policy
\begin{equation}
    \brp(\pi) \in \argmax_{p\in \cP_0} \EEs{y\sim \pi}{V(\brr(p,\pi),p,y)}\,, \label{eq:brp}
\end{equation}
where $\cP_0\subseteq \cP$ is a set of given benchmark policies.
In Eq~\eqref{eq:brr} and~\eqref{eq:brp}, we break ties arbitrarily.
Therefore, given a prior $\pi$, the Principal will choose policy $p_t = \brp(\pi)$ and (may without loss of generality) recommend that the Agent take action $r_t= \brr(\brp(\pi),\pi)$. The Agent will follow the Principal's recommendation by taking action $r_t$.

In this work, we consider a more challenging prior-free scenario where there is no common prior and the states of the world can be generated adversarially. We also will \emph{not} assume that the Agent breaks ties in favor of the Principal.  
The Agent runs a learning algorithm $\cL$, which maps the state history $y_{1:t-1}$, the action history $a_{1:t-1}$, the recommendation history $r_{1:t-1}$, the policy history $p_{1:t-1}$, and the current policy $p_t$ and recommendation $r_t$ to a distribution over actions. Formally, the Agent's action distribution at round $t$ is given by a function:
\begin{equation*}
    \cL_t: \cY^{t-1}\times \cA^{t-1}\times \cA^{t}\times \cP^t\mapsto \Delta(\cA)\,.
\end{equation*}

The Principal runs a learning algorithm (henceforth, a mechanism $\sigma$) that maps the state history $y_{1:t-1}$, the recommendation history $r_{1:t-1}$, and the policy history $p_{1:t-1}$ to a distribution over policies and recommendations. Note that the Principal's algorithm does \emph{not} depend on the action history, which is by design (and in fact it is an important modelling choice that Agent's actions need not be directly observable to the Principal). The result is that the Principal's mechanism is \emph{nonresponsive} to Agent's actions, i.e., the Principal's policy at time $t$ does not depend on the Agent's action history. 
When mechanisms are nonresponsive, non-policy regret and policy regret coincide for the Agent and so lack of ``regret'' (to be defined shortly) is an unambiguously desirable property for the Agent to have. 
Formally, the Principal's policy distribution at round $t$ is given by a function: 
\begin{equation*}
    \sigma_t: \cY^{t-1}\times \cA^{t-1}\times \cP^{t-1}\mapsto \Delta(\cP\times \cA)\,.
\end{equation*}
In this work, we consider a specific family of mechanisms in which the Principal generates a forecast of the distribution over states in each round that will satisfy certain ``unbiasedness'' conditions, to be specified shortly. These forecasts will informally play the role of the prior distribution in the Principal's decision about which policy to offer. 

Specifically, assume that the Principal has access to a forecasting algorithm (implemented by either herself or a third party), which provides a forecast $\pi_t\in \Delta(\cY)$ of (the distribution over) the state in each round $t$. By viewing $\pi_t$ as the prior, the Principal selects policy $p_t = \psi(\pi_t)$, which is determined by $\pi_t$ and recommends that the Agent play the best response $r_t = \brr(p_t,\pi_t)$ --- as if $\pi_t$ were in fact a prior. The recommendation is the best action that the Agent could play were $\pi_t$ in fact a correct prior. The Agent is under no obligation to follow this recommendation, and may not---the recommendation is only as good as the Principal's forecast. However, in our mechanism, the forecasts will turn out to guarantee that \emph{if} the Agent follows the recommendation, then he will have strong regret guarantees with respect to his own utility function---and the behavioral assumptions we impose on the Agent will require that he satisfies these regret guarantees (whether or not he chooses to do so by following the recommendation, or satisfies these guarantees through some other means).

% Formally, a forecast algorithm  maps the history of forecasts $\pi_{1:t-1}$ and states $\pi_{1:t-1}$ and $y_{1:t-1}$, to a distribution over states $\pi_t$.
We only consider deterministic rules $\psi: \Delta(\cY)\mapsto \cP$, mapping forecasts $\pi_t$ to policies $p_t$ and our recommendations will always be $r_t = \brr(p_t,\pi_t)$.
The Principal-Agent interaction protocol is described as follows. 
\begin{protocol}[H]
    \caption{Principal-Agent Interaction at round $t$}
    \label{prot:interaction}
        \begin{algorithmic}[1]
            \State{The Principal produces or obtains a forecast $\pi_t$.}
            \State{The Principal chooses policy $p_t = \psi(\pi_t)$ and recommends that the Agent play action $r_t = \brr(p_t,\pi_t)$.}
            \State{The Principal discloses $(p_t,r_t)$ to the Agent.}
            \State{The Agent takes an action $a_t \sim \cL_t(y_{1:t-1}, a_{1:t-1},r_{1:t}, p_{1:t})$.}
            \State{The state $y_t$ is revealed to both the Principal and the Agent.}
        \end{algorithmic}
    \end{protocol}

% Let  denote the proposed mechanism. At each round $t$, $p_t\sim \sigma_t^*(y_{1:t-1},r^*_{1:t-1},p^*_{1:t-1})$ denote the policy selected by the Principal and $\mu_t^* = L_t(y_{1:t-1},r^*_{1:t-1},p^*_{1:t-1})$ denote the (possibly randomized) response by the Agent under mechanism $\sigma^*$.
% For a fixed policy $p\in \cP$, a constant mechanism $\sigma^p$ will keep selecting the same policy $p$.
% Let $\mu_t^p = L_t(y_{1:t-1},r^p_{1:t-1},(p,\ldots,p))$ denote the Agent's response under the constant mechanism $\sigma^p$.

Mechanisms designed within this framework (the only sort we consider in this paper) are specified by  a forecasting algorithm $\cF$ and a choice rule $\psi$ mapping forecasts to polices. 
Given a forecasting algorithm $\cF$, we want a choice rule $\psi$ that guarantees that the Principal has no ``regret'' to using, relative to having counter-factually offered the best fixed policy in hindsight, which we think of as using a constant ``mechanism'' from the set $\{\sigma^{p_0}|p_0\in  \cP_0\}$. The constant mechanism $\sigma^{p_0}$ ignores the history, and consistently chooses the policy $p_0\in \cP_0$ at every round, while recommending that the Agent take action $r_t^{p_0} = a^*(p_0,\pi_t)$---i.e. his best response to $p_0$ under the current realized forecast. Note that the sequence of \emph{forecasts} is the same under both the realized and counter-factual constant mechanism. Here we will define a strong notion of policy regret --- regret to the counterfactual world in which the Principal used a fixed policy, \emph{and the Agent responded to that fixed policy, producing a different sequence of actions}. 
% Our goal is to design a rule $h^\dagger$ that the Principal has no ``regret'' to using, relative to having counter-factually offered the best fixed policy in hindsight, which we think of as using a constant ``mechanism'' from the set $\{\sigma^{p_0}|p_0\in  \cP_0\}$. The constant mechanism $\sigma^{p_0}$ ignores the history, and consistently chooses the policy $p_0\in \cP_0$ at every round, while recommending that the Agent take action $r_t^{p_0} = a^*(p_0,\pi_t)$---i.e. their best response to $p_0$ under the current forecast. Here we will define a strong notion of policy regret --- regret to the counterfactual world in which the Principal used a fixed policy, \emph{and the Agent responded to that fixed policy, producing a different sequence of actions}. 
Formally we define the Principal's policy regret as follows.

\begin{definition}[Principal's Regret]
    For a realized sequence of states of nature $y_{1:T}$, an Agent learning algorithm $\cL$, and a realized sequence of forecasts $\pi_{1:T}$, the Principal's policy regret from having used a rule $\psi$ is defined as:
\begin{equation*}
    \textrm{PR}(\psi,\pi_{1:T}, \cL, y_{1:T}) = \max_{p_0\in \cP_0} \EEs{a_{1:T},a^{p_0}_{1:T}}{\frac{1}{T}\sum_{t=1}^T\left(V(a_t^{p_0},p_0,y_t) -V(a_t,p_t,y_t)\right)} \,,
\end{equation*}
where $p_t = \psi(\pi_t)$ is the policy selected by the rule $\psi$, $a_{1:T}$ and $a^{p_0}_{1:T}$ are the sequences of actions generated by $\cL$ when the Principal selects policies according to the proposed rule $\psi$ and the constant policy $p_0$ respectively. The expectation is taken over the randomness of the learning algorithm $\cL$.
    
% \begin{equation*}
%     \textrm{PR}(\sigma^\dagger,\pi_{1:T}, \cL, y_{1:T}) = \max_{p_0\in \cP_0}\frac{1}{T}\sum_{t=1}^T\left(\EEs{a\sim \mu_t^{p_0}}{V(a,p_0,y_t)} -\EEs{a\sim \mu_t}{V(a,p_t,y_t)}\right) \,,
% \end{equation*}
%  Here, $\mu_t= \cL(y_{1:t-1}, a_{1:t-1},r_{1:t-1}, p_{1:t})$ is the Agent's (realized) action  distribution at round $t$ under the proposed mechanism $\sigma^\dagger$ and $\mu_t^{p_0} = \cL(y_{1:t-1}, a^{p_0}_{1:t-1}, r^{p_0}_{1:t-1}, (\underbrace{p_0,\ldots,p_0}_{t \text{ times}}) )$ is the Agent's round-$t$ action distribution in the counterfactual world in which the Principal employed the constant mechanism $\sigma^{p_0}$. Similarly,  
%  $a^{p_0}_t$ is the realized action at time $t$ \todoar{How is this defined? Is it a random variable?} and  
%  $r_t^{p_0}$ is the recommendation at time $t$  under the mechanism $\sigma^{p_0}$. \todoar{We haven't properly defined the constant mechanism. It is ambiguous, e.g. what recommendations it makes. We should be very explicit about this.}

Observe that the forecasts are an argument to the Principal's regret, and these are random variables because the forecasting algorithm is permitted to be randomized. For a mechanism $\sigma^\dagger = (\cF,\psi)$, we compute the Principal's regret  by taking the expectation over the random forecasts generated by $\cF$
 \begin{equation*}
    \textrm{PR}(\sigma^\dagger,\cL, y_{1:T}) = \EEs{\pi_{1:T}}{\textrm{PR}(\psi,\pi_{1:T}, \cL, y_{1:T})}\,.
\end{equation*}
 
\end{definition}

Throughout this work, we consider finite action spaces and finite state spaces.
For notational simplicity, we represent actions $a\in \cA$ and states $y\in \cY$ in their one-hot encoding vector forms.

\section{Behavioral Assumptions}\label{sec:behavior}
In the common prior setting, it is clear how to model rational Agent behavior---the standard assumption is that the Agent chooses his action so as to maximize his payoff in expectation over the prior. This assumption, of course, no longer makes sense in a prior-free setting. However, we cannot simply drop all behavioral assumptions on the Agent when moving to the prior-free setting. Consider what happens if we allow the Agent's algorithm to be any mapping from a history of nature states, policies, and recommendations to an action in the current round. Then, the Agent's algorithm could be entirely agnostic to his own payoffs, playing actions with the sole purpose of minimizing the Principal's payoff under the Principal's deployed mechanism. The same algorithm for the Agent might, under some alternative mechanism for the Principal, choose  actions so as to \emph{maximize} the Principal's payoff. Such an algorithm will always lead to high policy regret for the Principal; to obtain diminishing policy regret, we need to make assumptions on the Agents' behavior that constrain them to be ``rational'' in some way. Similarly, we must preclude Agents that have perfect foreknowledge of the states of nature hard-coded into their learning algorithm when this information is not available to the Principal --- because he could then selectively use this information in a way that would preclude proving a bound on (counter-factual) policy regret. See \cite{camara2020mechanisms} and Section~\ref{sec:imposs} for extended discussions of these issues.

The upshot is that we cannot dispense with behavioral assumptions entirely. Instead, we establish more general assumptions which make sense in the prior-free setting. Our behavioral assumptions must hold in both the realized sequence of play and in several counterfactual scenarios, so that we can meaningfully measure policy regret. Taken together, the assumptions below are strictly weaker than the assumption that the Agent always best-responds to a common prior. The reader can therefore view our behavioral assumptions as a strict generalization of the definition of rational behavior in a common prior setting, which can be studied in the prior-free setting. The assumptions will also end up being strictly weaker than the assumption that the Agent follows the Principal's recommended action --- so they are easily satisfied if the Agent chooses to do this, but do not constrain the Agent to following the Principal's recommendations. We will now introduce our two key assumptions, along with intuition for how they generalize the common prior setting. 

The first assumption generalizes the `best-response' behavior of the Agent. While our Agent may not have access to a prior to best-respond to, we can still rule out some clearly suboptimal behavior. A standard prior-free rationality assumption is that the Agent should have no swap regret: i.e. for each of his actions, on the subsequence of rounds on which he played that action, he should be obtaining utility at least what he could have guaranteed by playing the best \emph{fixed} action on that subsequence. Swap regret is an efficiently obtainable guarantee, weaker than pointwise optimality under a common prior, and having lower swap regret is always desirable, since the Principal is non-responsive. Of course, in our setting, in which the Principal first commits to a policy, which defines the best response correspondence of the Agent, it makes little sense to speak of the ``best fixed action'' without first conditioning on the policy offered by the Principal. So we ask for a form of contextual swap regret that is a better fit to our setting: namely, that the Agent should have no swap regret not just overall, but on each subsequence that results from \emph{fixing} the policy and recommendation made by the Principal. Once again, this is a weaker assumption than that the Agent is best responding to a shared prior --- if the Agent is playing a pointwise optimal action, he will have no swap regret on every subsequence.  It also still always desirable (since the Principal is non-responsive), and efficiently obtainable in a prior-free setting: for example, by running a copy of a no-swap-regret algorithm like \cite{blum2007external} separately for each policy/recommendation pair $(p,r)$ offered by the Principal, or by best responding to appropriately calibrated, efficiently computable forecasts as in \cite{NRRX23}.

% \begin{assumption}[No Contextual Internal Regret for The  Agent]\label{asp:no-internal-reg} \todoar{Confused by this definition}
% We let $\eint$ denote an upper bound on the internal regret conditional on a context triple $(p_t,r_t, a_t)$, given the realized sequence of states $y_{1:T}$ and the realized sequence of policies and recommendations generated by either the proposed mechanism or the constant mechanisms.
% By this we mean: 

%    \begin{equation*}        \ir(y_{1:T},p_{1:T},r_{1:T}):=\E_\cL\left[\max_{h:\cP\times \cA\times \cA\mapsto \cA}\frac{1}{T}\sum_{t=1}^T (U(h(p_t,r_t,a_t),p_t, y_t) - U(a_t,p_t,y_t)) \right]\leq \eint\,,
%    \end{equation*}
%    and for all $p_0\in \cP_0$,
%    \begin{equation*}        
%    \ir(y_{1:T},(p_0,\ldots,p_0),r^{p_0}_{1:T}):=\E_\cL\left[\max_{h:\cP\times \cA\times \cA\mapsto \cA}\frac{1}{T}\sum_{t=1}^T (U(h(p_0,r^{p_0}_t,a^{p_0}_t),p_0, y_t) - U(a^{p_0}_t,p_0,y_t)) \right]\leq \eint\,.
%    \end{equation*}
%    We assume that there exists a small $\eint$.
%     There exists a small $>0$ such that for any sequence of nature states $y_{1:T}$ and any realized sequence of policies $p_{1:T}$,  the Agent's achieves no-internal regret, i.e.,
% \end{assumption}

\begin{assumption}[No Contextual Swap Regret for The  Agent]\label{asp:no-internal-reg} 
We write $h:\cP\times \cA\times \cA\mapsto \cA$ to denote a modification rule that takes as input a policy and recommended action from the Principal, as well as a played action by the Agent, and as a function of these arguments ``swaps'' the Agent's action for an alternative action.  
Given the realized sequence of states $y_{1:T}$ and the realized sequence of policies and recommendations generated by either the deployed mechanism or the constant mechanisms, we define the Agent's swap regret to be:

    \begin{equation*}        \ir(y_{1:T},p_{1:T},r_{1:T}):=\E_{a_{1:T}}\left[\max_{h:\cP\times \cA\times \cA\mapsto \cA}\frac{1}{T}\sum_{t=1}^T (U(h(p_t,r_t,a_t),p_t, y_t) - U(a_t,p_t,y_t)) \right]\,,
    \end{equation*}
    and for all $p_0\in \cP_0$,
    \begin{equation*}        
    \ir(y_{1:T},(p_0,\ldots,p_0),r^{p_0}_{1:T}):=\E_{a_{1:T}^{p_0}}\left[\max_{h:\cP\times \cA\times \cA\mapsto \cA}\frac{1}{T}\sum_{t=1}^T (U(h(p_0,r^{p_0}_t,a^{p_0}_t),p_0, y_t) - U(a^{p_0}_t,p_0,y_t)) \right]\,.
    \end{equation*}
    We assume that there exists an $\eint$ such that for all fixed policies $p_0 \in \cP_0$ we have both:
    $$\ir(y_{1:T},p_{1:T},r_{1:T}) \leq \eint \ \ \ \ \ir(y_{1:T},(p_0,\ldots,p_0),r^{p_0}_{1:T}) \leq \eint\,.$$
    
    % There exists a small $>0$ such that for any sequence of nature states $y_{1:T}$ and any realized sequence of policies $p_{1:T}$,  the Agent's achieves no-internal regret, i.e.,
\end{assumption}

The second assumption generalizes the notion of a shared prior. One important feature of the shared prior setting is that the realized state of nature is independent of the actions chosen by both the Principal and the Agent. In an adversarial setting, we can no longer appeal to statistical independence, as there is no distribution. But we need to preclude the possibility that the Agent somehow can ``predict the future'' in ways that the Principal can't. To do this, we make a ``no secret information'' assumption that informally requires that the Agent's actions appear to be (almost) statistically independent of the states of nature in the empirical transcript in terms of the utility functions of the Principal and Agent, conditionally on the policies and recommendations chosen by the Principal. Once again, this generalizes the shared prior assumption, in which we have actual statistical independence---and in which the Principal's ``recommendation'' is always the same as the Agent's action. Even in the adversarial setting, if for example, the Agent follows the Principal's recommendations, then this assumption will always be satisfied exactly --- but it can also be satisfied in many other ways. 
For any distribution $\mu$ over actions, let $U(\mu,p,y):= \EEs{a
\sim \mu}{U(a,p,y)}$ and $V(\mu,p,y):= \EEs{a
\sim \mu}{V(a,p,y)}$ denote the expected utilities when the action is sampled from $\mu$.

%While our setting does not include any shared information between the Principal and Agent, we retain the fact that the Principal and the Agent both have the same amount of information about the states of nature--in this case, none. Therefore, the Agent should not be able to consistently outperform the Principal's best guess of what the Agent should do. This assumption holds in the shared prior setting, as the Principal and Agent will both always agree what the Agent's optimal action is. Furthermore, it is easily achievable in the prior-free setting by running any algorithm which does not take $y_{t}$ as input. 

\begin{assumption}[No Secret Information]\label{asp:no-correlation} 
% For $p_{1:T} = p_{1:T}^*$ or $p_{1:T} = p_{1:T}^{p_0}$ for any $p_0\in \cP_0$, for all $(p,r)\in \cP\times \cA$,
Consider any fixed sequence of forecasts $\pi_{1:T}$.
Given the sequence of policies $p_{1:T}$ and recommendations $r_{1:T}$ generated by the deployed mechanism, for any $(p,r)\in \cP\times \cA$, for any sequence of Agent's actions $a_{1:T}$ generated by $\cL$, let $\hat \mu_{p,r} = \frac{1}{n_{p,r}}\sum_{t:(p_t,r_t)= (p,r)} a_t$, where $n_{p,r} = |\{t :(p_t,r_t)= (p,r)\}|$, denote the empirical distribution of the Agent's actions during the subsequence of rounds in which $(p_t,r_t) = (p,r)$.
Then we assume that for all $(p,r)\in \cP\times \cA$,
\begin{align*}
\frac{1}{n_{p,r}}\EEs{a_{1:T}}{\abs{\sum_{t : (p_t,r_t) = (p,r)} (U(a_t, p, y_t) -U(\hat \mu_{p,r}, p, y_t))}}\leq \cO\left(\frac{1}{\sqrt{n_{p,r}}}\right)\,,\\
\frac{1}{n_{p,r}}\EEs{a_{1:T}}{\abs{\sum_{t: (p_t,r_t) = (p,r)} (V(a_t, p, y_t) -V(\hat \mu_{p,r}, p, y_t))}}\leq \cO\left(\frac{1}{\sqrt{n_{p,r}}}\right)\,.
\end{align*}
%Therefore, we can view Agent's action as playing a fixed action $\hat\mu_{p,r}$ on this subsequence.

Similarly, given the sequence of policies $(p_0,\ldots,p_0)$ and recommendations $r^{p_0}_{1:T}$ generated by constant mechanism $\sigma^{p_0}$, for any $r\in \cA$, let $\hat \mu^{p_0}_{r} = \frac{1}{n^{p_0}_{r}}\sum_{t:r^{p_0}_t=r} a_t^{p_0}$, where $n^{p_0}_{r} = |\{t : r_t^{p_0}=r\}|$, denote the empirical distribution of the Agent's actions during the period's in which the recommendation $r_t^{p_0} = r$. 
Then we assume that, for all $p_0\in \cP_0$, for all $r\in \cA$,
\begin{align*}
    \frac{1}{n^{p_0}_{r}}\EEs{a^{p_0}_{1:T}}{\abs{\sum_{t:r^{p_0}_t=r} (U(a_t^{p_0}, p_0, y_t) - U(\hat \mu^{p_0}_{r}, p_0, y_t))}}\leq \cO\left(\frac{1}{\sqrt{n^{p_0}_{r}}}\right)\,,\\
     \frac{1}{n^{p_0}_{r}}\EEs{a^{p_0}_{1:T}}{\abs{\sum_{t:r^{p_0}_t=r} (V(a_t^{p_0}, p_0, y_t) - V(\hat \mu^{p_0}_{r}, p_0, y_t))}}\leq \cO\left(\frac{1}{\sqrt{n^{p_0}_{r}}}\right)\,.
\end{align*}
% \todoar{Here the RHS is exactly $1/\sqrt{n_{p,r}}$ with no leading constant. I this reasonable? Should be put this in big $O$ notation?}
\end{assumption}

While the need for Assumption~\ref{asp:no-internal-reg} is clear (from the example provided earlier of an Agent who does not act to maximize his own payoffs, but instead behaves adversarially), the need for Assumption~\ref{asp:no-correlation} is less immediately clear. However it is indeed the case that Assumption~\ref{asp:no-internal-reg} is insufficient on its own.

\begin{restatable}[Necessity of Assumption~\ref{asp:no-correlation}]{proposition}{necessity}\label{lem:lb-easy}
There exists a simple linear contract setting where, for any Principal mechanism $\sigma$, one of the following must hold:
\begin{itemize}
    \item No learning algorithm $\cL^{*}$ can satisfy Assumption~\ref{asp:no-internal-reg} with $\eint = o(1)$ for all possible sequence of states $y_{1:T}\in \cY^T$.
    \item There exists a learning algorithm $\cL^{*}$ satisfying Assumption~\ref{asp:no-internal-reg} with $\eint = o(1)$ for all possible sequence of states $y_{1:T}\in \cY^T$ and a sequence of states $\bar y_{1:T} \in \cY^T$ for which $\sigma$ achieves non-vanishing regret, i.e., $\text{PR}(\sigma,\cL^{*},\bar y_{1:T}) = \Omega(1)$.
\end{itemize}
\end{restatable}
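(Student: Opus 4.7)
My plan is to exhibit a simple linear contract instance in which, for every Principal mechanism $\sigma$, we can construct an Agent learning algorithm $\cL^*$ and an adversarial state sequence $\bar y_{1:T}$ such that $\cL^*$ satisfies Assumption~\ref{asp:no-internal-reg} uniformly over all state sequences but the Principal's policy regret on $\bar y_{1:T}$ is $\Omega(1)$. The core insight is that contextual swap regret only controls the Agent's own cumulative utility on each $(p,r)$-subsequence; it imposes no constraint on how the Agent's actions correlate with the realized states. But the Principal's utility is exactly driven by that correlation, so an Agent who is indifferent with respect to his own utility can freely sabotage the Principal without incurring any swap regret.

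\emph{Construction.} I would pick a two-action, two-state linear contract so that, at every contract $p \in \cP_0$, the two Agent actions yield the same expected payment minus cost, making $U(a_0, p, y) = U(a_1, p, y)$ for every $y$, while $V(a, p, y)$ sharply rewards matching the state, e.g.\ $V(a,p,y)= +1$ if $a = y$ and $-1$ otherwise. Because $U$ is action-independent on $\cP_0$, \emph{any} deterministic Agent satisfies Assumption~\ref{asp:no-internal-reg} with $\eint = 0$ uniformly across state sequences, so the first bullet of the proposition is vacuous on this instance. To witness the second bullet I exploit the non-responsiveness of $\sigma$ highlighted in Protocol~\ref{prot:interaction}: the transcript $(p_{1:T}, r_{1:T})$ is a deterministic function of the state sequence after fixing $\sigma$'s randomness and the forecaster $\cF$. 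Fix an adversarial $\bar y_{1:T}$ (say $\bar y_t = 0$ for all $t$), compute the resulting $\sigma$-transcript $\bar\tau$, and define $\cL^*$ to play the state-mismatching action $a_t = 1-\bar y_t$ whenever the observed policy/recommendation prefix is consistent with $\bar\tau$, and otherwise to follow the recommendation $r_t$. In particular, on counterfactual trajectories generated by a constant mechanism $\sigma^{p_0}$, the observed prefix fails to match $\bar\tau$ and $\cL^*$ plays $r_t^{p_0} = \brr(p_0, \pi_t)$.

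\emph{Regret gap and main obstacle.} On $\bar y_{1:T}$ the Agent mismatches the state on every round, so the Principal's realized utility is $-1$ per round. Under an appropriately chosen constant benchmark $\sigma^{p_0}$ the Agent follows $\brr(p_0, \pi_t)$; because $\brr$ breaks Agent-ties in favor of the Principal, and the forecasts will concentrate on the essentially degenerate distribution of $\bar y$, the recommendation matches the state on every round, yielding Principal utility $+1$ and hence an $\Omega(1)$ gap in average utility. The delicate step is arranging the linear-contract instance so that (i) the Agent is genuinely indifferent under every contract in $\cP_0$ — a nontrivial constraint on the outcome-to-payment structure — and (ii) some constant benchmark $p_0 \in \cP_0$ induces a recommendation stream that matches the state through the tie-breaking rule. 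The construction must make Agent utility flat enough to trivialize Assumption~\ref{asp:no-internal-reg}, yet keep $V$ steep enough in the action–state pair to produce the regret gap; this tension is where the proof's main technical care has to go.
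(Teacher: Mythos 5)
The core obstruction in your construction is structural, and you gesture at it yourself in the last paragraph without resolving it: in a linear contract, the Agent's and Principal's utilities are algebraically coupled, so you cannot make the Agent pointwise indifferent while keeping the Principal state-sensitive. Concretely, if $U(a_0,p,y)=U(a_1,p,y)$ for all $y$ at a fixed $p>0$, then $v(o(a_0,y))-v(o(a_1,y)) = (c(a_0)-c(a_1))/p$ is constant in $y$, and hence
\[
V(a_0,p,y)-V(a_1,p,y) = (1-p)\bigl(v(o(a_0,y))-v(o(a_1,y))\bigr) = (1-p)\,\frac{c(a_0)-c(a_1)}{p}
\]
is \emph{also} constant in $y$. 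So the Principal cannot ``sharply reward matching the state'' at a contract where the Agent is exactly indifferent; she simply prefers one action by a fixed margin, independent of $y$. Moreover, the proposed $V(a,p,y)=\pm 1$ is not of the form $(1-p)v(o(a,y))$ with $v\in[0,1]$, so it is not a linear contract utility at all. If instead you try to impose indifference at \emph{two} distinct $p\in\cP_0$, you are forced into $v(o(a_0,y))=v(o(a_1,y))$ and $c(a_0)=c(a_1)$, contradicting $\Delta_c>0$. So the instance you describe does not exist in the class of settings the proposition is about.

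A second, independent gap is the ``detect the $\sigma$-transcript'' step. The Principal mechanism $\sigma$ and the forecaster $\cF$ are allowed to be randomized, so there is no deterministic target transcript $\bar\tau$ for the Agent to recognize; the Agent observes only realized $(p_t,r_t)$, not the Principal's internal coins. You would need to replace the exact-match trigger with a probabilistic argument. The paper handles exactly this: its Agent uses the \emph{first-round} state to pick one of two target contracts, and the proof averages over an i.i.d.\ state distribution $y^*$ to show that whichever target the Principal plays with probability at most $1/2$ in round one, there is a witness sequence with $\Omega(T)$ regret. The paper also does not make the Agent's contextual swap regret identically zero --- it keeps the Agent genuinely constrained (he is \emph{not} indifferent between actions at every state) and instead constructs an Agent that plays a carefully chosen state-aware strategy ($a^*$ or $b^*$) when the history looks balanced, falling back to a genuine no-swap-regret algorithm $noreg$ otherwise; Lemmas~\ref{lem:boundreg} and~\ref{lem:contextualnegative} then establish $\eint=o(1)$. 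This two-mode design is precisely what circumvents the tension you identify: the Agent's own regret can stay small because his deviations from optimality average out under the balanced condition, while his state-correlated choices swing the Principal's realized utility by $\Omega(T)$.

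In short, your high-level intuition (contextual swap regret controls only the Agent's utility, so a state-aware Agent can sabotage the Principal for free) is exactly the right one and matches the paper's, but the specific instance you propose is infeasible in linear contracts, and the proof plan does not yet cope with randomized Principal mechanisms or exhibit a concrete Agent algorithm with $\eint=o(1)$. The paper's construction supplies all of these missing pieces.
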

%============

We will prove in Section~\ref{sec:linear} that in this same setting, if $\mathcal{L}$ satisfies Assumption~\ref{asp:no-internal-reg} and~\ref{asp:no-correlation}, there does exist a Principal mechanism which guarantees vanishing policy regret against $\mathcal{L}$. Therefore, Assumption~\ref{asp:no-correlation} plays an important role in our result. 
We will further discuss the necessity of the assumption in Section~\ref{sec:imposs}, where we also show that this impossibility result remains true even when Assumption~\ref{asp:no-internal-reg} is paired with an additional assumption which is in the same spirit of, but strictly weaker than, Assumption~\ref{asp:no-correlation}.

\section{Games with Stable Policy Oracles}\label{sec:stable}

In this section, we present a general no-policy-regret mechanism which applies in all settings where the Agent has access to a \emph{stable policy oracle}. A stable policy oracle is informally a way of producing or adjusting a policy to ensure that the Agent has only a single approximate best response given a particular fixed prior---or else that the Principal is almost indifferent between all of the Agent's approximate best responses.  What we will show is that the existence of such an oracle obviates the need for the kinds of very strong \emph{alignment} assumptions made in \cite{camara2020mechanisms}. In Section \ref{sec:oracles} we show that we in fact can implement such ``oracles'' in two very important cases: Principal Agent problems with \emph{linear} contracts, and binary state Bayesian Persuasion games, which allows us to obtain diminishing policy regret in these settings with minimal assumptions. In Section \ref{sec:general}, we extend our analysis to the general case (where Agents might unavoidably have multiple approximate best responses that the Principal is not indifferent between) --- there we will have to make the same kind of alignment assumption that is made in \cite{camara2020mechanisms}. 

%The core idea from \cite{camara2020mechanisms} is that in place of a common prior $\pi \in \Delta(\cY)$, the Principal  instead produces \emph{forecasts} $\pi_1,\ldots,\pi_T \in \Delta(\cY)$ that satisfy the condition of sequential calibration. At each round $t$, the Principal chooses their optimal policy $p_t$ as if the current round's forecast were a common prior. Informally, calibration requires that the forecasts have no statistical bias conditional on the value of the forecast itself, where all expectations are taken over the \emph{empirical joint distribution} on forecast/outcome pairs in hindsight. This approach has two important shortcomings:
%\begin{enumerate}
%    \item The computational and statistical complexity of maintaining $d$-dimensional calibrated forecasts grows exponentially with $d$: hence, the mechanism in \cite{camara2020mechanisms} inherits \emph{exponential} dependencies on $|\cY|$ in both its running time and regret bounds. 
%    \item The forecasts $\pi_t$ can induce ties (or near ties) in the Agent's best response function, given the policy chosen by the Principal, and the way that the Agent breaks ties can have significant effects on the utility of the Principal. This requires \cite{camara2020mechanisms} to make strong \emph{alignment} assumptions, that amount to (in effect) the Agent always breaking near-ties in favor of the Principal.
%\end{enumerate}

Recall that we aim to resolve \emph{two} shortcomings of \cite{camara2020mechanisms}: the exponential computational and statistical complexity of producing calibrated forecasts, as well as the necessity to make strong alignment assumptions. To resolve the first issue,  rather than having the Principal produce calibrated forecasts, we have the Principal produce forecasts that satisfy a substantially weaker condition: unbiasedness subject to polynomially many ``events'', that will be eventually determined by the Principal's choice of policy and recommendation. Recent work of \cite{NRRX23} gives an algorithm for producing $d$-dimensional forecasts that satisfy this unbiasedness condition for polynomially in $d$ many events in time that is polynomial in $d$. Hence, this condition can be obtained with running time and bias bounds that scale only polynomially (rather than exponentially) in $|\cY|$. 

To resolve the second issue, rather than using the forecast $\pi_t$ directly as a prior and choosing the policy that would exactly optimize the Principal's payoff, we choose our policy using a stable policy oracle, defined below, which finds a policy that eliminates near ties: this will remove the necessity of an alignment assumption.  

%Our core mechanism for the Principal maintains a calibrated forecast on the states of nature and uses this forecast as a prior each round. Intuitively, the calibration condition guarantees that, in aggregate over all rounds, the Principal's forecasts are very close to the true distribution over states of nature. Given the assumptions we make on the Agent not knowing more information than the Principal and having vanishing regret, he must be approximately best responding to the Principal's prediction, where this approximation is bounded by his regret term. 

First we define our notion of conditional bias.

\begin{definition}[Conditional Bias of Forecasts]\label{def:cal}
% \todoar{Changing the name from calibration to conditional bias} 
Let $\cE$ be a collection of ``events'', each defined by a function $E:\Delta(\cY)\rightarrow \{0,1\}$. 
For any sequence of states $y_{1:T}$, any sequence of forecasts $\pi_{1:T}$, and a collection of events $\cE$, we say $\pi_{1:T}$ has bias $\alpha$ conditional on $\cE$ if for all $E\in \cE$:
\begin{equation*}
    \frac{1}{T} \norm{\sum_{t=1}^T E(\pi_t) (\pi_t -y_t)}_1 \leq \alpha(E)\,.
\end{equation*}
\end{definition}

\cite{NRRX23} show how to efficiently make predictions obtaining low conditional bias against an adversarially chosen state sequence, for any polynomially sized collection of events:

\begin{theorem}[\cite{NRRX23}]\label{thm:forecast-bias}
    For any collection of events $\cE$ that can each be evaluated in polynomial time, there is a forecasting algorithm with per-round running time polynomial in $|\cY|$ and $|\cE|$ that produces forecasts $\pi_{1:T}$ such that for any (adversarially) chosen sequence of outcomes $y_{1:T}$, the expected bias conditional on $\cE$ is bounded by:
    $$\EEs{\pi_{1:T}}{\alpha(E)} \leq O\left(\frac{|\cY|\ln(|\cY||\cE|T)}{T} + \frac{|\cY|\sqrt{\ln(|\cY||\cE|T)|\{t : E(\pi_t) = 1|\}}}{T}\right) \leq O\left(\frac{|\cY|\sqrt{\ln(|\cY||\cE|T)}}{\sqrt{T}}\right)\,.$$
\end{theorem}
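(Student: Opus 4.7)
The plan is to apply the online forecasting algorithm of \cite{NRRX23} as a black box; I briefly sketch the underlying approach. First, I would frame the problem as a repeated zero-sum game: in round $t$, the forecaster selects a (possibly randomized) forecast $\pi_t \in \Delta(\cY)$ and nature chooses $y_t$ adversarially. For each event $E \in \cE$ and each coordinate $i \in \cY$, the per-round contribution to the conditional bias in that coordinate is the signed quantity $b_t(E, i) = E(\pi_t)(\pi_{t,i} - y_{t,i})$. Since $\alpha(E) \le \tfrac{1}{T}\sum_{i \in \cY} |\sum_t b_t(E, i)|$, it suffices to bound each of the $|\cE|\cdot|\cY|$ scalar sums $\sum_t b_t(E, i)$.

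The key step is a per-round minimax argument. I would show that the forecaster can always play a distribution $D_t$ over forecasts, supported on polynomially many points in $\Delta(\cY)$, such that $\E_{\pi_t \sim D_t}[b_t(E, i) \mid y_t] = 0$ simultaneously for every $(E, i)$ and every possible $y_t$. Such a distribution exists because the constraints are bilinear in $(\pi_t, y_t)$ and the forecaster's feasible set (distributions over forecasts with a given mean structure) is rich enough to satisfy a fixed-point / Blackwell-approachability condition; concretely, a feasible $D_t$ can be found by solving an LP whose constraints encode $\E_{\pi_t \sim D_t}[E(\pi_t)\pi_{t,i}] = \E_{\pi_t \sim D_t}[E(\pi_t)] y_{t,i}$ for each $(E, i)$. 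This LP has $O(|\cY|\cdot|\cE|)$ constraints, so it admits a basic feasible solution supported on $O(|\cY|\cdot|\cE|)$ forecasts, which is what yields the stated per-round runtime polynomial in $|\cY|$ and $|\cE|$.

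Having ensured that $\E[b_t(E, i) \mid y_t] = 0$ for every $(E, i)$, the sequence $(b_t(E,i))_{t \le T}$ is a bounded martingale difference sequence with increments in $[-1, 1]$. Applying Azuma-Hoeffding to each scalar sum and union bounding over the $|\cY|\cdot|\cE|$ pairs $(E, i)$, with a further union bound over the random $n_E := |\{t : E(\pi_t) = 1\}|$, gives a high-probability bound of $O(\sqrt{n_E \ln(|\cY||\cE|T)})$ on $|\sum_t b_t(E,i)|$ for every coordinate $i$ and every event $E$. Summing over the $|\cY|$ coordinates and integrating the tail to pass to expectation recovers the stated $O(|\cY|\sqrt{\ln(|\cY||\cE|T)/T})$ bound on $\E[\alpha(E)]$.

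The main obstacle is the per-round minimax step: one must argue that a small-support randomized forecast achieving zero expected conditional bias along every coordinate and every event exists, and moreover can be computed in polynomial time. This is the principal technical contribution of \cite{NRRX23}; granted this, the martingale concentration step is routine.
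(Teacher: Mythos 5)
Your per-round minimax step does not go through as stated, and it is the hinge of the whole argument. You want a distribution $D_t$ over forecasts, committed \emph{before} $y_t$, such that $\E_{\pi_t \sim D_t}\left[E(\pi_t)(\pi_{t,i} - y_{t,i})\right] = 0$ for every $(E,i)$ and \emph{every possible} $y_t$. Expanding, this requires $\E_{\pi_t \sim D_t}[E(\pi_t)\pi_{t,i}] = y_{t,i}\,\E_{\pi_t \sim D_t}[E(\pi_t)]$ to hold simultaneously for $y_{t,i}=0$ and $y_{t,i}=1$; subtracting the two cases forces $\E_{\pi_t \sim D_t}[E(\pi_t)]=0$ for every $E\in\cE$, i.e., $D_t$ must almost surely trigger no event. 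In the present application the events (e.g., $\cE_1$, $\cE_4$) indicate which policy/recommendation the rule $\psi$ selects, so every forecast fires some event and the constraint set is empty. Put differently, the LP you describe has constraints parameterized by the unseen $y_t$ and cannot be solved before the outcome is revealed; the Blackwell-approachability condition you allude to asserts approachability of the \emph{low-cumulative-bias set}, not a per-round zero-bias fixed point.

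What \cite{NRRX23} actually does is control a smooth scalar potential $\Phi_t$ of the vector of cumulative signed biases $\bigl(\sum_{s\le t}b_s(E,i)\bigr)_{E,i}$ rather than zero each increment. The per-round argument shows $\min_{D_t}\max_{q_t\in\Delta(\cY)}\E_{\pi_t\sim D_t,\,y_t\sim q_t}\left[\Phi_t-\Phi_{t-1}\right]$ is small: after linearizing $\Phi$, the objective is linear in $q_t$ and linear in $D_t$, so the minimax theorem applies, and the forecaster who could move second against any $q_t$ would set $\pi_t=q_t$, which kills the first-order term since $\E_{y_t\sim q_t}[\pi_{t,i}-y_{t,i}]=0$ for all $i$, leaving only a bounded second-order contribution. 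The cumulative potential bound then translates into the claimed bias bound. The remaining pieces of your sketch---decomposing over $(E,i)$, extracting a small-support $D_t$ from a basic feasible solution of an LP with $O(|\cY|\,|\cE|)$ constraints, and sharpening to $\sqrt{n_E}$ by union-bounding over event counts---are consistent with this framework, but they must sit on top of the potential-based minimax, not a per-round zero-bias LP.
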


Next, we formalize our notion of a ``stable policy'' and a ``stable policy oracle''. Informally, what we need to deal with is  the possibility that the Agent has a range of approximate best responses with very different payoffs for the Principal. If this is the case, then the Agent could behave very differently given seemingly unimportant changes to the Principal's mechanism, leading to high policy regret. In many settings it is possible resolve this issue by adjusting the per-round policies a small amount to ensure a unique approximate best response---or else approximate indifference for the Principal between all of the Agent's approximate best responses.  

For any given prior distribution $\pi$, we say a policy $p$ is stable if choosing any action $a$ that deviates from the optimistic best response $\brr(p,\pi)$ results in either significantly lower Agent utility or a comparable level of utility for the Principal. Informally, this will mean that the Principal's payoff can be reliably predicted given the policy, assuming only that the Agent plays an approximate best response: any approximate best response will yield approximately the same payoff for the Principal. We emphasize that we will not \emph{assume} that policies are stable, but \emph{enforce it}. More specifically, for any prior distribution $\pi$, let $V(a,p,\pi) =\EEs{y\sim \pi}{V(a,p,y)}$ and $U(a,p,\pi) = \EEs{y\sim \pi}{U(a,p,y)}$ denote the expected utilities for the Principal and the Agent when the state $y$ is drawn from $\pi$. We define stable policies as follows.
\begin{definition}[Stable Policy]
    For any $\beta, \gamma>0$ and $\pi\in \Delta(\cY)$, a policy $p$ is $(\beta,\gamma)$-stable under $\pi$ if for all $a \neq \brr(p,\pi)$ in $\cA$, we have either
    \begin{equation*}
        U(a,p,\pi)\leq U(\brr(p,\pi),p,\pi) -\beta\,, 
    \end{equation*}
    or
    \begin{equation*}
        V(a,p,\pi)\geq V(\brr(p,\pi),p,\pi) -\gamma\,. 
    \end{equation*}
\end{definition}

Classically, in the common prior setting, both the Principal and the Agent best respond to (exactly) maximize their expected utilities. As discussed, in our setting, we have relaxed this best response assumption to a low-contextual-swap-regret assumption (Assumption~\ref{asp:no-internal-reg}), which is in fact a relaxation of an \emph{approximate} best response assumption --- i.e. it is satisfied in the commmon prior setting even if Agents do not exactly best respond, but merely approximately best respond. How shall we deal with this?

The Principal's utility would be maximized if the Agent were to choose amongst his approximate best responses so as to optimize for the Principal.  Specifically, let $\cB(p,\pi,\epsilon):= \{a\in \cA| U(a,p,\pi)\geq U(\brr(p,\pi),p,\pi) -\epsilon\}$ denote the set of all $\epsilon$-best responses for the Agent and let $\brr(p,\pi,\epsilon)$ denote the utility-maximizing action for the Principal, amongst the Agent's $\epsilon$-best responses to $p$, i.e.,
$$\brr(p,\pi,\epsilon) = \argmax_{a\in \cB(p,\pi,\epsilon)} V(a,p,\pi)\,.$$
Given any $\pi$, we say that a policy $p$ is an optimal stable policy under $\pi$ if $p$ is stable and implementing $p$ will lead to utility for the Principal that is comparable with her best achievable utility---i.e. the utility that the Principal could have obtained were the Agent guaranteed to  choose amongst his $\epsilon$-approximate best responses in the way that has highest payoff for the Principal.
% \begin{definition}\label{def:stabilized}
%     For a prior distribution $\pi$ and a policy $p$ we say $p'$ is a $(c,\epsilon,\beta,\gamma)$-stabilized policy of $p$ under $\pi$ if $p'$ is $(\beta,\gamma)$-stable under $\pi$ and $V(a^*(p',\pi),p',\pi)\geq V(a^*(p,\pi,\epsilon),p,\pi)-c$. A stable policy oracle $\cO_{c,\epsilon,\beta,\gamma}: \cP\times \Delta(\cY) \mapsto \cP$, given any policy $p$ and prior $\pi$, outputs a $(c,\epsilon,\beta,\gamma)$-stabilized policy of $p$ under $\pi$.
% \end{definition}

\begin{definition}[Optimal Stable Policy Oracle]\label{def:stabilized}
    For a prior distribution $\pi$, we say that a policy $p$ is a $(c,\epsilon,\beta,\gamma)$-optimal stable policy under $\pi$ if 
    \begin{itemize}
        \item $p$ is $(\beta,\gamma)$-stable under $\pi$;
        \item and $V(a^*(p,\pi),p,\pi)\geq V(a^*(p_0,\pi,\epsilon),p_0,\pi)-c$ for all $p_0\in \cP_0$.
    \end{itemize} 
    An optimal stable policy oracle $\cO_{c,\epsilon,\beta,\gamma}: \Delta(\cY) \mapsto \cP_\cO$, given as input any prior $\pi$, outputs a $(c,\epsilon,\beta,\gamma)$-optimal stable policy in $\cP_\cO$ under $\pi$, where $\cP_\cO\subseteq \cP$ is the set of all possible output policies by the oracle.
\end{definition}

Intuitively, when $\beta > \epsilon$, then if the Agent can be assumed to play an $\epsilon$-best response to $\pi$ this is sufficient to guarantee that when the Principal deploys an optimal stable policy, she will obtain utility comparable to the utility she could have obtained assuming that the Agent were to best respond exactly while tiebreaking in the Principal's favor (i.e.  $V(\brr(p,\pi),p,\pi)$), and that, $V(\brr(p,\pi),p,\pi)$ is larger than the the utility achieved by any benchmark policy even if the Agent could have been assumed to optimistically respond. With such an oracle we can construct the mechanism described in Algorithm~\ref{alg:general-stable}, that guarantees the Principal no policy regret. Of course, we do \emph{not} assume that the Agent $\epsilon$-best responds to the forecast $\pi_t$ at round $t$ --- but as we will show, Assumptions \ref{asp:no-internal-reg} and \ref{asp:no-correlation} will be enough to make the analysis go through.

\begin{algorithm}[H]
    \caption{Principal's choice at round $t$}
    \label{alg:general-stable}
        \begin{algorithmic}[1]
            \State{\textbf{Input}: 
            % approximation parameter $\epsilon$, 
            Forecast $\pi_t\in \Delta(\cY)$
            % , optimal stable policy oracle 
            }
            % \State{The Principal computes $\pit = p^*(\pi_t,\epsilon) := \argmax_{p\in \cP} \max_{a\in \cB(p,\pi_t,\epsilon)} V(a,p,\pip_t)$ and $\rit = \argmax_{a\in \cB(p,\pi_t,\epsilon)} V(a,\pit,\pip_t)$}
            \State{Call the optimal stable policy oracle $\cO_{c,\epsilon,\beta,\gamma}$ to get a policy $p_t = \cO_{c,\epsilon,\beta,\gamma}(\pi_t)$ 
            }
            % \State{Disclose the policy $p_t$ and recommend $r_t = \brr(p_t,\pip_t)$} 
        \end{algorithmic}
    \end{algorithm}

Let $\pit = \argmax_{p_0\in \cP_0} V(a^*(p_0,\pip_t,\epsilon),p_0,\pip_t)$ denote the policy that the Principal would pick if the Agent optimistically best responded to $(\pit,\pi_t)$ and $\rit = a^*(\pit,\pip_t,\epsilon)$ denote the corresponding optimistic $\epsilon$-best responding action.

\begin{restatable}{theorem}{thmstable}\label{thm:stable}

Define the following collections of events:
$$\cE_1 = \{\ind{(p_t,r_t) = (p,r)}\}_{p \in \cP_\cO, r \in \cA}\,, \ \ \  \ \ \cE_2 = \{\ind{(\pit,\rit) = (p,a)}\}_{p \in \cP_0, a \in \cA}\,,$$
$$\cE_3 = \{\ind{a^*(p_0,\pi_t) = a}\}_{p_0 \in \cP_0,a\in \cA}\,.$$
Let $\cE = \cE_1 \cup \cE_2 \cup \cE_3$, the union of these events. 
Assume that the Agent's learning algorithm $\cL$ satisfies the behavioral assumptions~\ref{asp:no-internal-reg} and \ref{asp:no-correlation}.
% and 
% that the forecasts $\pip_{1:T}$  have conditional bias $\alpha$ conditional on the events $\cE$. 
Given access to an optimal stable policy oracle $\cO_{c,\epsilon,\beta,\gamma}$, by running the forecasting algorithm from \cite{NRRX23} for events $\cE$ and the choice rule in Algorithm~\ref{alg:general-stable}, the Principal can achieve policy regret
\begin{align*}
&\text{PR}(\sigma^\dagger, \cL,y_{1:T}) \leq\tilde \cO\left(c +\gamma +\sqrt{\frac{\abs{\cP_0}\abs{\cA}}{T}} + \frac{\eint + \abs{\cY}\sqrt{\abs{\cP_\cO}\abs{\cA}/T}}{\beta} + \frac{\eint + \abs{\cY}\sqrt{\abs{\cA}/T}}{\epsilon}\right)\,,
 % \\
 %    = &\cO(c+ \abs{\cP_\cO}\alpha + \gamma +\frac{\eint +\sqrt{\abs{\cP_\cO}/T}+\abs{\cP_\cO}\alpha }{\beta} + \frac{\eint + \sqrt{1/T}+\alpha }{\epsilon} +\sqrt{\abs{\cP_\cO}/T})\,.
\end{align*}
where $\tilde \cO$ ignores logarithmic factors in $T, \abs{\cY}, \abs{\cP_\cO},\abs{\cP_0}, \abs{\cA}$.
%     \begin{align*}
%         \text{PR}(\sigma^\dagger, \pi_{1:T}, \cL,y_{1:T})
%     \leq c+3(\abs{\cP_0}+\abs{\cP_\cO})\abs{\cA}\alpha +\gamma + \frac{\eint +2 \abs{\cP_\cO}\abs{\cA}\alpha}{\beta}+ \frac{\eint +2\abs{\cA}\alpha}{\epsilon}\,.
%     \end{align*}
\end{restatable}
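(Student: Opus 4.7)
The plan is to bound, for each fixed $p_0\in\cP_0$, the quantity $\frac{1}{T}\sum_t\EE{V(a_t^{p_0},p_0,y_t)-V(a_t,p_t,y_t)}$ by chaining three estimates: an expected \emph{realized lower bound} $\EE{V(a_t,p_t,y_t)}\gtrsim V(r_t,p_t,\pi_t)-\gamma-(\text{error}_t)$ summed over $t$, the \emph{per-round oracle inequality} $V(r_t,p_t,\pi_t)\ge V(a^*(p_0,\pi_t,\epsilon),p_0,\pi_t)-c$ from Definition~\ref{def:stabilized}, and an expected \emph{benchmark upper bound} $\EE{V(a_t^{p_0},p_0,y_t)}\lesssim V(a^*(p_0,\pi_t,\epsilon),p_0,\pi_t)+(\text{error}'_t)$ summed over $t$. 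Summing and taking the max over $p_0$ delivers the stated bound.

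For the realized lower bound I would partition $[T]$ by the value of $(p_t,r_t)=(p,r)$ and on each subsequence run the standard four-step reduction: Assumption~\ref{asp:no-correlation} replaces the realized actions by the empirical mixture $\hat\mu_{p,r}$ (error $O(\sqrt{n_{p,r}})$); linearity of $V$ in the one-hot state rewrites $\sum_{t:(p_t,r_t)=(p,r)}V(\hat\mu_{p,r},p,y_t)$ as $n_{p,r}V(\hat\mu_{p,r},p,\bar y_{p,r})$; Theorem~\ref{thm:forecast-bias} applied to the event $\ind{(p_t,r_t)=(p,r)}\in\cE_1$ swaps $\bar y_{p,r}$ for $\bar\pi_{p,r}$ (error $O(T\alpha(E))$); and linearity in $\pi$ then re-expands back into $\sum_{t:(p_t,r_t)=(p,r)}V(\hat\mu_{p,r},p,\pi_t)$. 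Now the $(\beta,\gamma)$-stability of $p$ under $\pi_t$ gives
\begin{equation*}
V(\hat\mu_{p,r},p,\pi_t)\;\ge\;V(r,p,\pi_t)-\gamma-2\delta_t,
\end{equation*}
where $\delta_t$ is the $\hat\mu_{p,r}$-mass on actions $a$ with $U(a,p,\pi_t)<U(r,p,\pi_t)-\beta$ (every other action already satisfies $V(a,p,\pi_t)\ge V(r,p,\pi_t)-\gamma$ by stability). Because $\beta\delta_t\le U(r,p,\pi_t)-U(\hat\mu_{p,r},p,\pi_t)$, running the same bias-plus-no-secret chain in reverse on $U$ relates $\sum_t\delta_t$ to $\sum_t(U(r_t,p_t,y_t)-U(a_t,p_t,y_t))$, which by Assumption~\ref{asp:no-internal-reg} applied to the \emph{single global} modification rule $h(p,r,a)=r$ is at most $T\eint$. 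Cauchy--Schwarz on $\sum_{(p,r)}\sqrt{n_{p,r}}\le\sqrt{|\cP_\cO||\cA|T}$ collects the per-piece errors into $\tcO(|\cY|\sqrt{|\cP_\cO||\cA|/T})$, yielding $\frac{1}{T}\sum_t 2\delta_t\le\tcO((\eint+|\cY|\sqrt{|\cP_\cO||\cA|/T})/\beta)$.

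The benchmark upper bound is symmetric: partition by $r_t^{p_0}=r$, apply the benchmark version of Assumption~\ref{asp:no-correlation} and conditional bias on $\ind{a^*(p_0,\pi_t)=r}\in\cE_3$ to reduce to $\frac{1}{T}\sum_t V(\hat\mu^{p_0}_{r_t^{p_0}},p_0,\pi_t)$, and then use
\begin{equation*}
V(\hat\mu^{p_0}_r,p_0,\pi_t)\;\le\;V(a^*(p_0,\pi_t,\epsilon),p_0,\pi_t)+2\eta_t,
\end{equation*}
where $\eta_t$ is the $\hat\mu^{p_0}_r$-mass on actions outside $\cB(p_0,\pi_t,\epsilon)$. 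Since $\epsilon\eta_t\le U(r_t^{p_0},p_0,\pi_t)-U(\hat\mu^{p_0}_r,p_0,\pi_t)$ (using $r_t^{p_0}=a^*(p_0,\pi_t)$) and the benchmark version of Assumption~\ref{asp:no-internal-reg} with the global modification $h(p_0,r,a)=r$ caps the aggregate $U$-regret by $T\eint$, the analogous conversion bounds $\frac{1}{T}\sum_t 2\eta_t$ by $\tcO((\eint+|\cY|\sqrt{|\cA|/T})/\epsilon)$.

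The main obstacle is coordinating these four approximation layers---no-secret information, conditional bias, stability/$\epsilon$-best-response splitting, and swap regret---so that the $\beta$ and $\epsilon$ denominators carry only the stated combinatorial multipliers; crucially Assumption~\ref{asp:no-internal-reg} must be invoked once with a single global modification rule rather than per subsequence, else $\eint$ would acquire a $|\cP_\cO||\cA|$ or $|\cA|$ factor. Chaining the three per-$p_0$ estimates, the residual $\sqrt{|\cP_0||\cA|/T}$ term arises from collecting the benchmark no-secret slack and the $\cE_3$ bias under the outer $\max_{p_0}$, with the logarithmic factor from $|\cE|=|\cP_\cO||\cA|+2|\cP_0||\cA|$ supplied by Theorem~\ref{thm:forecast-bias}.
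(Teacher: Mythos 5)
Your proposal is correct and takes essentially the same route as the paper's proof (Theorem~\ref{thm:stable-appendix}): the same decomposition into (i) realized-vs-recommendation under the deployed mechanism, handled by $\cE_1$-bias, no-secret-information, stability (giving the $\beta$-denominator), (ii) the oracle's optimality condition, and (iii) benchmark-realized-vs-$\epsilon$-optimistic-response, handled by $\cE_3$-bias, no-secret-information and the $\epsilon$-best-response split (giving the $\epsilon$-denominator), with contextual swap regret invoked exactly as in the paper via the recommendation-following modification rule so that no $|\cP_\cO||\cA|$ factor multiplies $\eint$. The only minor deviation is that you compare directly to $V(a^*(p_0,\pi_t,\epsilon),p_0,\pi_t)$ for each $p_0$ rather than routing through the optimistic pair $(\pit,\rit)$, so the $\cE_2$ bias is never used (the paper's additive $\alpha(\cE_2)$ term, which is where its $\sqrt{\abs{\cP_0}\abs{\cA}/T}$ term actually originates, simply drops out), which only tightens the bound; your attribution of that residual term to the benchmark no-secret slack is therefore slightly off but immaterial.
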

% When we produce forecasts using the algorithm from \cite{NRRX23}, then we have $\alpha = \tilde\cO\left(\frac{|\cY|}{\sqrt{T}}\right)$, and the per-round running time of producing those forecasts is polynomial in $|\cY|$ (the dimension of the predictions) and $(|\cP_\cO|+|\cP_0|)\cdot|\cA|$ (the cardinality of the set of conditioning events). 
Note that we consider a fixed benchmark policy set, a fixed action space and a fixed state space. Hence we have that $\abs{\cP_0}$, $\abs{\cA}$ and $\abs{\cY}$ are all independent of $T$.
If we can construct an optimal stable policy oracle with $c,\gamma, \frac{\eint}{\beta},\frac{\sqrt{\abs{\cP_\cO}/T}}{\beta},\frac{\eint}{\epsilon},\frac{1}{\epsilon\sqrt{T}}  = o(1)$, then we can achieve vanishing regret $\text{PR}(\sigma^\dagger,\cL,y_{1:T}) = o(1)$. 
If the Agent is running a standard no-swap-regret algorithm, e.g. \citep{blum2007external}, the Agent can obtain swap regret $\eint = \cO(\sqrt{\abs{\cP_\cO}/T})$. We note that while it appears that the regret bound is decreasing in $\beta$ and $\epsilon$, when we actually construct optimal stable policy oracles in Section \ref{sec:oracles}, $c$ will grow with $\beta$ and $\epsilon$, and so there will be a tradeoff to manage.
The proof the theorem is deferred to Section~\ref{sec:stable-proof}.

\section{Constructing Stable Policy Oracles}
\label{sec:oracles}
In this section, we instantiate the general algorithm we derived in Section \ref{sec:stable} by constructing efficient stable policy oracles for two important special cases of the general Principal-Agent setting: the \emph{linear contracting} problem and the \emph{Bayesian Persuasion} problem in which there is an unknown binary state of nature. Linear contracting in particular has been focal in the contract theory literature due to the robustness and practical ubiquity of linear contracts \cite{carroll2015robustness,dutting19} --- and much of the recent computational and learning theoretic work on contract theory has focused exclusively or primarily on linear contracts. Binary state Bayesian Persuasion is a canonical case in Bayesian Persuasion, encompassing various intriguing scenarios, such as the FDA approval example.
In the following, we will introduce these two problems and construct efficient stable policy oracles for them.

\subsection{Linear Contracts}
\label{sec:linear}
 In the contract setting, there is a finite outcome space $\mathbb{O} = \{o_1,\ldots,o_m\}$ (e.g., \{success, failure\}).
A contract $p: \mathbb{O} \mapsto [0,1]$ is a mapping from outcomes to payments and the Principal commits to pay the Agent a specified amount $p(o)$ if the outcome is $o$.
The Principal provides a contract to the Agent, and the Agent then decides to take an action (e.g., working or shirking). 
The Agent's action and the state of nature (e.g., hard job or easy job) together determine the outcome through a mapping $o: \cA\times \cY\mapsto \mathbb{O}$. 
Different outcomes will lead to different outcome values.
The Agent incurs different costs by taking different actions.
Then the utility of the Principal is the difference between the the outcome value and the payment to the Agent. The utility of the Agent is the difference between the payment and the cost of taking the action.
More specifically, 
let $v: \mathbb{O}\mapsto [0,1]$ denote the value function of outcomes and $c:\cA\mapsto [0,1]$ denote the cost function for the Agent.
When the Principal offers contract $p$, the Agent takes action $a$, and the outcome is $o$, then the Principal's utility is $v(o) - p(o)$ and Agent's utility is $p(o) - c(a)$.

Our focus will be on linear contracts, a particularly simple and widespread type of contract which provides the Agent with a constant fraction of the outcome value. Linear contracts are focal in the contract theory literature in part because of their robustness properties \citep{carroll2015robustness,dutting19}.

\begin{definition}[Linear contract]
For a linear contract parameterized by $p \in [0,1]$, the Principal pays the Agent a $p$-fraction of the value, i.e., $p \cdot v(o)$ when the outcome is $o$. Hence, we use this fraction to represent the linear contract and write the policy space as $\cP = [0,1]$, the set of all parameters that can specify a linear contract. 
\end{definition}
For any linear contract $p\in \cP$, action $a\in \cA$ and state of nature $y\in \cY$, the Principal's utility is 
\begin{align*}
    V(a,p,y) = v(o(a,y)) - p\cdot v(o(a,y)) = (1-p)v(o(a,y))\,,
\end{align*}
and the Agent's utility is 
\begin{align*}
    U(a,p,y) = p\cdot v(o(a,y)) - c(a)\,.
\end{align*}
% Note that we overload notation a little by letting $p$ denote both the policy and the fraction of the outcome value given to the Agent.
% It will be useful for us to work with a finite set of linear contracts. 
We consider a finite action space and assume that the costs are different for each action. Hence the minimum gap between the costs is positive, and we denote it by:
\begin{align*}
    \Delta_c = \min_{a_1, a_2\in \cA: a_1\neq a_2}\abs{c(a_1)-c(a_2)}>0\,.
\end{align*}
% While we are working in a prior-free setting, it will be crucial to understand how the Agent and Principal would behave under shared prior. 
%\begin{definition}[$\delta$-linear contract]
   % A $\delta$-linear contract is a subset of linear contracts $p \in  \{0, \delta, 2\delta, \ldots, \floor{\frac{1}{\delta}}\delta\}$  pays the Agent $p \cdot \vec v$ 
%\end{definition}
%The linear contract space is $\{p \cdot \vec v|p \in [0,1]\}$. Here we overload notation a little by letting $[0,1]$ denote the policy space. Let $\cP_\delta = \{0, \delta, 2\delta, \ldots, \floor{\frac{1}{\delta}}\delta\}$ denote the $\delta$-cover of the policy space for some small $\delta \leq \frac{1}{6\abs{\cA}}$. 
For any action $a\in \cA$ and prior $\pi$, let $$f(\pi,a) := \EEs{y \sim \pi}{v(o(a,y))} $$ denote the expected outcome value when the Agent takes action $a$ and the state of nature is drawn from the prior distribution $\pi$. Then the Principal's utility under $\pi$ can be written as 
\begin{align}
    V(a,p,\pi) = \EEs{y\sim \pi}{(1-p) \cdot v(o(a,y))} = (1-p) f(\pi,a)\,,\label{eq:expectedV}
\end{align}
and the Agent's utility can be written as 
\begin{align}
    U(a,p,\pi) =\EEs{y\sim \pi}{p \cdot v(o(a,y))}- c(a)= p f(\pi,a) - c(a)\,.\label{eq:expectedU}
\end{align}
% Let $\cB(p,\pi,\epsilon) = \{a\in \cA| U(a,p,\pi)\geq \max_{\tilde a\in \cA} U(\tilde a, p,\pi) -\epsilon\}$ denote the set of $\epsilon$-approximate best responses to $(p,\pi)$.

Then we can construct an optimal stable policy oracle as follows.
Given any prior $\pi$, we initially identify the policy $p^\text{optimistic}$ that maximizes the Principal's utility assuming that the Agent optimistically approximately best responds---i.e. chooses the action amongst all of his \emph{approximate} best responses that maximizes the Principal's utility. However, $p^\text{optimistic}$ will generally be unstable, and thus the Agent may not actually  optimistically respond if we were to implement $p^\text{optimistic}$. 
The subsequent step involves stabilizing $p^\text{optimistic}$ by incrementally adjusting the contract until it becomes stable. 
It turns out that a small increase in $p^\text{optimistic}$ allows us to obtain a stable policy. Since this policy is close to $p^\text{optimistic}$, the Principal's utility remains comparable to the performance of any benchmark policy when the Agent optimistically approximately best responds---even though the stabilization means we no longer need to \emph{assume} that the Agent will optimistically best respond. Finally, recall that the regret guarantee in Theorem~\ref{thm:stable} depends on the cardinality of the output policy space. Consequently, we will have to discretize the policy space and provide a discretized stable policy.
Let $\cP_\delta = \{0, \delta, 2\delta, \ldots, \floor{\frac{1}{\delta}}\delta\}$ denote a $\delta$-cover of the linear contract space for
some $\delta =o(1)$. We construct the following optimal stable policy oracle with output space $\cP_\cO = \cP_\delta$ so that $\abs{\cP_\cO} = \floor{\frac{1}{\delta}}+1$.
% $\delta = \frac{1}{T^{1 - \alpha}}$ for some small constant $\alpha > 0$.   

\begin{algorithm}[H]
    \caption{Optimal Stable Policy Oracle for Linear Contracts}
    \label{alg:linear-oracle}
        \begin{algorithmic}[1]
            \State{\textbf{Parameters}: stability parameter $\beta$, discretization parameter $\delta$}
            \State{\textbf{Input}: prior distribution $\pi$}
            \State{Compute $p^\text{optimistic} = \argmax_{p\in \cP_0} \max_{a\in \cB(p,\pi,\frac{\Delta_{c}\beta}{2})} V(a,p,\pip)$}
            \State{\textbf{Output}: $$p(\pi) = \min\left(\left\{p\in \cP_\delta|p\geq p^\text{optimistic}, p \text{ is } \left(\frac{\Delta_c \beta}{2},0\right)\text{-stable under }\pi \right\}\cup \{1\}\right)$$}
            % and the recommendation $r_t = \brr(p_t,\pip_t)$
        \end{algorithmic}
    \end{algorithm}

\begin{theorem}[Optimal Stable Policy Oracle for Linear Contracts]\label{thm:linear}
    Algorithm~\ref{alg:linear-oracle} is a $(|\cA|(\beta+\delta),\frac{\Delta_{c}\beta}{2},\frac{\Delta_{c}\beta}{2},0)$-optimal stable policy oracle with $\abs{\cP_\cO} = \cO(\frac{1}{\delta})$. By combining with Theorem~\ref{thm:stable} and setting $\beta = T^{-\frac{1}{4}}$ and $\delta = \sqrt{\beta}$, we can achieve Principal's regret: 
    $$\text{PR}(\sigma^\dagger, \cL,y_{1:T}) = \tilde\cO\left(T^{-\frac{1}{8}}\right)\,,$$
    when the Agent obtains swap regret $\eint = \cO(\sqrt{\abs{\cP_\cO}/T})$.
\end{theorem}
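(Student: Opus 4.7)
The plan is to verify the two properties that define a $(|\cA|(\beta+\delta), \Delta_c\beta/2, \Delta_c\beta/2, 0)$-optimal stable policy oracle, and then instantiate Theorem~\ref{thm:stable} with the resulting parameters. Stability of the output is essentially free: by construction $p(\pi)$ is either drawn from the explicit stable set in $\cP_\delta$, or defaults to $p=1$, and the latter is $(\Delta_c\beta/2, 0)$-stable trivially since $V(a,1,\pi)=(1-1)f(\pi,a)=0$ for every $a$, so the $V$-condition is met with slack $0$ against every competitor. The main content is the near-optimality bound $V(\brr(p(\pi),\pi),p(\pi),\pi) \geq V(\brr(p_0,\pi,\Delta_c\beta/2),p_0,\pi) - |\cA|(\beta+\delta)$, which by the definition of $p^\text{optimistic}$ reduces to proving the inequality at $p_0 = p^\text{optimistic}$.

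The backbone of that argument is a slope-gap lemma tailored to linear contracts: if two actions $a,a'$ with $f(\pi,a)>f(\pi,a')$ have their tie point $\frac{c(a)-c(a')}{f(\pi,a)-f(\pi,a')}$ inside $[0,1]$, then $f(\pi,a)-f(\pi,a') \geq c(a)-c(a') \geq \Delta_c$. I would use this in two ways. First, scanning upward in $\cP_\delta$ from $p^\text{optimistic}$, at each break point of the Agent's best-response function the $U$-gap between the incoming and outgoing best responses grows at rate at least $\Delta_c$, so a buffer of $\beta/2$ past each break point is enough to restore the stability threshold $\Delta_c\beta/2$; since there are at most $|\cA|-1$ break points and one grid-rounding step of $\delta$, the smallest stable grid point at or above $p^\text{optimistic}$ satisfies $p(\pi) - p^\text{optimistic} \leq |\cA|(\beta+\delta)$. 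Second, applied to the pair $(a_\text{opt}, \brr(p^\text{optimistic},\pi))$ with $a_\text{opt} = \brr(p^\text{optimistic},\pi,\Delta_c\beta/2)$, it shows that $a_\text{opt}$ becomes the actual best response (and stably so) within the same $O(|\cA|\beta)$ window, at which point $V(\brr(p,\pi),p,\pi) = (1-p)f(\pi,a_\text{opt})$ has fallen by at most $p - p^\text{optimistic}$ relative to $V(a_\text{opt}, p^\text{optimistic}, \pi) = V^\text{opt}(p^\text{optimistic})$. Along the way I use that $p \mapsto V(\brr(p,\pi),p,\pi) = (1-p) f(\pi, \brr(p,\pi))$ is piecewise linear, declines at rate at most $1$ within each constant-best-response interval and only \emph{jumps upward} at break points; so the net loss from $p^\text{optimistic}$ to $p(\pi)$ is bounded by $p(\pi) - p^\text{optimistic} \leq |\cA|(\beta+\delta)$, delivering the oracle guarantee.

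To obtain the regret bound I would plug $c = |\cA|(\beta+\delta)$, $\gamma = 0$, stability parameter $\Delta_c\beta/2$, $\epsilon = \Delta_c\beta/2$, $|\cP_\cO| = O(1/\delta)$, and no-swap-regret $\eint = O(\sqrt{|\cP_\cO|/T}) = O((\delta T)^{-1/2})$ into Theorem~\ref{thm:stable}. Up to factors in $|\cA|,|\cY|$ and log terms, the dominant contributions collapse to $\tilde{\cO}(\beta + \delta + \tfrac{1}{\beta\sqrt{\delta T}})$, and the choices $\beta = T^{-1/4}$ and $\delta = \sqrt{\beta} = T^{-1/8}$ equalize each term at $\tilde{\cO}(T^{-1/8})$. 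The hard part is the near-optimality step: one must interlock the continuous structure of the upper envelope of Agent-utility lines (controlling when $a_\text{opt}$ truly becomes best-response), the stability margin requirement $\Delta_c\beta/2$, and the grid discretization $\cP_\delta$, so that the summed buffers across at most $|\cA|$ transitions stay within $|\cA|(\beta+\delta)$ in both the movement $p(\pi) - p^\text{optimistic}$ and the associated loss in Principal utility.
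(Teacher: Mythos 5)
Your overall route is the paper's own: you verify the two clauses of the oracle definition—(i) existence of a $(\Delta_c\beta/2,0)$-stable grid contract at most $|\cA|(\beta+\delta)$ above $p^\text{optimistic}$, with the $p=1$ fallback being trivially stable, and (ii) near-optimality reduced to the benchmark $p_0=p^\text{optimistic}$—and then plug $c=|\cA|(\beta+\delta)$, $\gamma=0$, stability/approximation parameters $\Delta_c\beta/2$, $|\cP_\cO|=O(1/\delta)$ and $\eint$ into Theorem~\ref{thm:stable}; this is exactly the paper's split into Lemmas~\ref{lem:stable_contract} and~\ref{lem:payoff_close}, built on the same facts (at most $|\cA|-1$ tie contracts, Lemma~\ref{lem:ties}, and the $\Delta_c$ slope-gap at ties, Lemma~\ref{lem:gap}), and your parameter arithmetic giving $\tilde\cO(T^{-1/8})$ is right. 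One bookkeeping caution on (i): stability requires a $\beta/2$ margin from ties on \emph{both} sides (Lemma~\ref{lem:gap} is two-sided), so with clustered ties the "buffer $\beta/2$ past each break point" scan needs the paper's pigeonhole over the augmented tie set to certify the $|\cA|(\beta+\delta)$ budget; your counting is the same in substance, just stated one-sidedly.

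The step that does not hold as written is your mechanism for near-optimality: that $a_\text{opt}=a^*(p^\text{optimistic},\pi,\Delta_c\beta/2)$ "becomes the actual best response (and stably so) within the same $O(|\cA|\beta)$ window." This can fail: if some $a'$ has $f(\pi,a')=f(\pi,a_\text{opt})$ with strictly smaller cost, then $a'$ dominates $a_\text{opt}$ at every contract and $a_\text{opt}$ is never an exact best response; likewise an action of strictly higher $f$ from outside the $\epsilon$-ball may take over first. In these cases the conclusion you need still holds, but only because the correct statement is weaker: the exact best response at the returned stable contract has $f$ at least $f(\pi,a_\text{opt})$. That is precisely what the paper proves via Lemma~\ref{lem:monotonicity} (monotonicity in $p$ of $\max_{a\in\cB(p,\pi,\epsilon)}f(\pi,a)$) combined with Eq~\eqref{eq:maxball} (at a $(\Delta_c\beta/2,0)$-stable contract that maximum is attained, up to equal-$V$ ties, by the exact best response). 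Your backup observation—that $(1-p)f(\pi,a^*(p,\pi))$ declines at rate at most $1$ and only jumps upward at ties—is true but on its own compares to $V(a^*(p^\text{optimistic},\pi),p^\text{optimistic},\pi)$, which can be strictly smaller than the $\epsilon$-optimistic benchmark $V(a_\text{opt},p^\text{optimistic},\pi)$ you must beat, so it does not close the argument without the monotonicity-over-the-$\epsilon$-ball step. The repair is close to what you wrote (cost separation does show that any strictly-lower-$f$ action within $\epsilon$ of $a_\text{opt}$ is overtaken within roughly $\beta$), but it should be packaged as monotonicity of the optimistic $f$-value over approximate best responses, not as $a_\text{opt}$ itself winning.
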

% By the definition of optimal stable policy oracle (Definition~\ref{def:stabilized}), we only need to show that the returned policy $p(\pi)$ is $(\frac{\Delta_c \beta}{2},0)$-stable under $\pi$ and  that the Principal's utility when the Agent best responds is comparable to any benchmark policy even when the Agent optimistically $\epsilon$-best responses, i.e., $V(a^*(p,\pi),p,\pi)\geq V(a^*(p_0,\pi,\epsilon),p_0,\pi)-c$ for all $p_0\in \cP_0$ to prove the theorem.

\begin{proof}
% [Proof of Theorem~\ref{thm:linear}]
According to the definition of optimal stable policy oracle (Definition~\ref{def:stabilized}), the proof of the theorem follows directly from Lemma~\ref{lem:stable_contract} and Lemma~\ref{lem:payoff_close}.
\begin{lemma}\label{lmm:linear-stable}
   For any prior $\pi$, the policy $p(\pi)$ returned by Algorithm~\ref{alg:linear-oracle}, is a $(\frac{\beta \Delta_c}{2},0)$-stable policy under $\pi$ and satisfies that $p(\pi) \leq p^{\textit{optimistic}} + \abs{\cA}(\beta + \delta)$ . \label{lem:stable_contract}
\end{lemma}

\begin{lemma}\label{lmm:linear-opt}
    For any prior $\pi$, the policy $p(\pi)$ returned by Algorithm~\ref{alg:linear-oracle} satisfies that
    $$V(a^*(p(\pi),\pi),p(\pi),\pi)\geq V(a^*(p_0,\pi,\frac{\beta \Delta_c}{2}),p_0,\pi) - |\cA|(\beta + \delta)$$
    for all $p_0\in \cP_0$. \label{lem:payoff_close}
\end{lemma}

Lemma~\ref{lem:stable_contract} shows that for any $\pi$, the returned  linear contract $p(\pi)$ is $(\frac{\Delta_c \beta}{2},0)$-stable and is not much larger than $p^\text{optimistic}$.
This implies that the Principal will not pay a much larger fraction of her value under $p(\pi)$ than she would under $p^\text{optimistic}$. In Lemma~\ref{lem:payoff_close}, we prove that the Principal's utility under $p(\pi)$ is comparable to her utility under any benchmark contract.

\begin{proof}[Proof of Lemma~\ref{lmm:linear-stable}]
% Our first step is to prove that the policy returned $p(\pi)$ is $(\frac{\beta \Delta_c}{2},0)$-stable under $\pi$, and that it is fairly close to $p^\text{optimistic}$. 
The intuition for this stability result is that, for any policy returned, either the Agent has a unique best response that gets him a payoff $\frac{\beta \Delta_{c}}{2}$ higher than all other actions, or the Principal is completely indifferent between what actions the Agent selects. We first show that there must be a policy with such a unique best response in the interval $[p^\text{optimistic}, p^\text{optimistic} + \abs{\cA}(\beta + \delta)]$, as long as this interval lies fully within the linear contract policy space of $[0,1]$, i.e., $p^\text{optimistic} + \abs{\cA}(\beta + \delta)\leq 1$. To do this, we take advantage of the fact that for a fixed $\pi$, there are a bounded number of policies which induce ties between actions (Lemma~\ref{lem:ties}), and for all policies far enough away from these policies, the Agent actions are well-separated (Lemma~\ref{lem:gap}). When $p^\text{optimistic}$ is larger than $1 - |\cA|(\beta + \delta)$, we no longer have this guarantee--however, if the Principal does not return a $(\frac{\beta \Delta_c}{2},0)$-stable policy in this case, she will return $p(\pi)= 1$, which is still close to $p^\text{optimistic}$, and furthermore gets the Principal a payoff of $0$ regardless of what action the Agent takes, leading her to be indifferent to the Agent's action.

\begin{restatable}{lemma}{lemties}\label{lem:ties}
%For any $\pi$, there are at most $|\cA|-1$ different linear contract policies $p$ for which $|\cB(p,\pi,0)| >1$.
For any $\pi$, there are at most $|\cA|-1$ linear contracts resulting in more than one best response for the Agent, i.e.: 
$$\abs{\{p\in \cP|\cB(p,\pi,0)| >1\}}\leq |\cA|-1\,.$$ 
\end{restatable}

\begin{restatable}{lemma}{lemgap}\label{lem:gap}
%If the same action $a^{*}$ is optimal for $(\bar p + \beta, \pi)$ and $(\bar p - \beta, \pi)$, then $U(a^{*},\bar p,\pi) \geq U(a,\bar p,\pi) + \Delta_c \cdot \beta$, $\forall a \neq a^{*}$.

For any prior $\pi$ and any $\bar p\in [0,1]$, if $a^*$ is an Agent's best response to both $(\bar p - \beta, \pi)$, and  $(\bar p + \beta, \pi)$, then $U(a^*,\bar p,\pi) \geq U(a,\bar p,\pi) + \Delta_c \cdot \beta$, for all actions $a \neq a^*$.  
\end{restatable}
Now we start formally proving Lemma~\ref{lmm:linear-stable}. There are two cases:
\begin{itemize}
\item $p^\text{optimistic} \leq 1 - |\cA|(\beta + \delta)$. Then, let us consider the policies in the range $[p^\text{optimistic}, p^\text{optimistic} + \abs{\cA}(\beta + \delta)]$ for which the Agent has more than one optimal response. Call this set $s$. By Lemma~\ref{lem:ties}, we have $|s| \leq |\cA|-1$. Note that, by the definition of $s$, for any given $i\in [|s|]$, all policies $p \in (s_{i}, s_{i+1})$ (where $s_i$ is the $i$-th smallest element in $s$) must lead to a unique best response action for the Agent, and must lead to the same best response as each other by the continuity of the Agent's utility with respect to the Principal policy.

Now, let's augment $s$ with the endpoints of the interval by letting $s' = \{p^\text{optimistic}\} \cup s \cup \{p^\text{optimistic} + \abs{\cA}(\beta\ + \delta)\}$.
For any $i\in [|s'|]$, let $s'_i$ denote the $i$-th smallest element in $s'$.
We will lower bound the largest gap between any two neighboring policies in $s'$.
\begin{align*}
 \argmax_{i \in [|s'|-1]}(s'_{i+1}-s'_{i}) & \geq \frac{s'_{|s'|} - s'_{1}}{|s'|-1} = \frac{|\cA|(\beta + \delta)}{|s'|-1} \geq  \frac{|\cA|(\beta + \delta)}{|s|+1}\geq \beta + \delta \,,
\end{align*}
where the last inequality applies Lemma~\ref{lem:ties}.

Hence, there exists an $i\in [|s'|-1]$ such that $s'_{i+1}-s'_{i} \geq \beta + \delta$. Now, consider any policy $p \in [s'_{i} + \frac{\beta}{2}, s'_{i+1} - \frac{\beta}{2}]$. By Lemma~\ref{lem:gap}, we have $U(a^{*}(p,\pi),p,\pi) \geq U(a,p,\pi) + \frac{\Delta_c \beta}{2}$ for all $a \neq a^{*}(p,\pi)$. Therefore, every policy in this range is $(\frac{\Delta_c \beta}{2},0)$-stable under $\pi$. As this range is of size at least $\delta$, there must be at least one policy $p \in \cP_{\delta}$ in the range $[p^\text{optimistic}, p^\text{optimistic} + \abs{\cA}(\beta + \delta)]$ that is $(\frac{\Delta_c \beta}{2},0)$-stable under $\pi$. By the definition of the algorithm, the returned $p(\pi)$ is $(\frac{\Delta_c \beta}{2},0)$-stable under $\pi$ and is in the range $[p^\text{optimistic}, p^\text{optimistic} + \abs{\cA}\beta + \delta]$.

\item $p^\text{optimistic} \geq 1 - |\cA|(\beta + \delta)$. Then the returned policy must be in the range $[p^\text{optimistic}, p^\text{optimistic} + \abs{\cA}(\beta + \delta)]$. If some $p(\pi) < 1$ is returned, by the definition of the algorithm, it will be $(\frac{\beta \Delta_{c}}{2},0)$-stable. Otherwise, the algorithm returns $p(\pi) = 1$, and we have that 
\begin{align*}
V(a^{*}(p(\pi),\pi),p(\pi),\pi)  
& = (1-p(\pi)) \cdot f(a^{*}(p(\pi),\pi),\pi) = 0  \leq V(a,p(\pi),\pi)\,,
\end{align*}
for any $a \in \A$. Thus, in this case we have that $ V(a,p(\pi),\pi)\geq V(\brr(p(\pi),\pi),p(\pi),\pi) - 0$, and thus the policy is also $(\frac{\beta \Delta_{c}}{2},0)$-stable. Furthermore, in this case the returned $p(\pi)$ is also in the range $[p^\text{optimistic}, p^\text{optimistic} + \abs{\cA}(\beta + \delta)]$.
\end{itemize}
This completes the proof of Lemma~\ref{lmm:linear-stable}.
\end{proof}

Now we move on to prove Lemma~\ref{lmm:linear-opt}.
For this part, we must upper bound the difference between the Principal's utility under the policy $p=p(\pi)$ returned by Algorithm~\ref{alg:linear-oracle} and her utility under the best benchmark policy $p_{0}$. 
To do this, 
% we first compare the Principal's utility under $p_{0}$ to their utility under $p^\text{optimistic}$. While the benchmark policy $p_{0}$ may not be in the $\delta$-discretization of $\cP$, we show that this can lead to a utility loss for the Principal of at most $\delta$. Next, 
we compare the utility of the Principal under $p^\text{optimistic}$ to her utility under $p$, taking advantage of the fact that $p$ is not much larger than $p^\text{optimistic}$. We crucially make use of the monotone relationship between $p$ and $f(\pi, a^{*}(p,\pi,\epsilon))$ for linear contracts (Lemma~\ref{lem:monotonicity}).
% , which holds even when the Agent is allowed to approximately best respond.

\begin{restatable}{lemma}{lemmonotonicity}
    \label{lem:monotonicity}
For any two linear contracts $p_{1} \geq p_{2}$, $$\max_{a \in \cB(p_{1},\pi,\epsilon)}f(\pi,a) \geq \max_{a \in \cB(p_{2},\pi,\epsilon)}f(\pi,a)$$ for all $\pi$ and all $\epsilon \geq 0$.
\end{restatable}

\begin{proof}[Proof of Lemma~\ref{lmm:linear-opt}]
We consider two cases: $p(\pi)<1$ and $p(\pi)=1$.
\begin{itemize}
    \item $p(\pi)< 1$. Since $p(\pi)$ is $(\frac{\Delta_{c}\beta}{2},0)$-stable according to Lemma~\ref{lmm:linear-stable}, then for all $a \neq a^{*}(p,\pi)$, either $ U(a,p,\pi) \leq  U(a^{*}(p,\pi),p,\pi) - \frac{\Delta_c \beta}{2}$  or $V(a,p(\pi),\pi) = V(a^{*}(p(\pi),\pi),p(\pi),\pi)$. 
    For all $a$ with $V(a,p(\pi),\pi) = V(a^{*}(p(\pi),\pi),p(\pi),\pi)$, we have $f(\pi,a) = \frac{V(a,p(\pi),\pi)}{1-p(\pi)} = f(\pi,a^*(p(\pi),\pi))$. Therefore, we have 
    \begin{equation}\label{eq:maxball}
        \max_{a \in \mathcal{B}(p(\pi),\pi,\frac{\Delta_{c}\beta}{2})}f(\pi,a) = f(\pi,a^{*}(p(\pi),\pi))\,.
    \end{equation}
\begin{align*}
 &\max_{p_0 \in \cP_{0}}V(a^*(p_0,\pi,\frac{\Delta_{c}\beta}{2}), p_0,\pi) \\
 = &  V(a^*(p^\text{optimistic},\pi,\frac{\Delta_{c}\beta}{2}),p^\text{optimistic},\pi) \tag{ Definition of $p^\text{optimistic}$} \\
= & \max_{a \in \mathcal{B}(p^\text{optimistic},\pi,\frac{\Delta_{c}\beta}{2})}V(a,p^\text{optimistic},\pi)  \\ 
= & (1 - p^\text{optimistic})\max_{a \in \mathcal{B}(p^\text{optimistic},\pi,\frac{\Delta_{c}\beta}{2})}f(\pi,a)\tag{Applying Eq~\eqref{eq:expectedV}}\\ 
\leq & (1 - p^\text{optimistic})\max_{a \in \mathcal{B}(p(\pi),\pi,\frac{\Delta_{c}\beta}{2})}f(\pi,a) \tag{Applying Lemma~\ref{lem:monotonicity}, as $p(\pi) \geq p^\text{optimistic}$}\\ 
= & (1 - p^\text{optimistic})f(\pi,a^{*}(p(\pi),\pi)) \tag{Applying Eq~\eqref{eq:maxball}}\\
\leq & (1 - p(\pi) + |\cA|(\beta + \delta))f(\pi,a^{*}(p(\pi),\pi))) \tag{Applying the gap condition in Lemma~\ref{lem:stable_contract}}\\
= & V(a^*(p,\pi),p,\pi) +|\cA|(\beta + \delta) \cdot f(a^{*}(p,\pi),p,\pi)  \\
\leq & V(a^*(p,\pi),p,\pi) + |\cA|(\beta + \delta)\,.
\end{align*}
\item $p(\pi) = 1$. In this case, we have
$p^\text{optimistic}\geq 1-|\cA|(\beta+\delta)$.
Then we have
\begin{align*}
    &\max_{p_0 \in \cP_{0}}V(a^*(p_0,\pi,\frac{\Delta_{c}\beta}{2}), p_0,\pi)\\
    =&V(a^*(p^\text{optimistic},\pi,\frac{\Delta_{c}\beta}{2}),p^\text{optimistic},\pi) \\
    \leq & 1-p^\text{optimistic}\\
    \leq &|\cA|(\beta+\delta) \leq V(a^*(p,\pi),p,\pi) + |\cA|(\beta+\delta)
\end{align*}
\end{itemize}

This completes the proof of Lemma 2.
\end{proof}

By Lemmas~\ref{lem:stable_contract} and~\ref{lem:payoff_close}, we get that, for any prior $\pi$, the policy $p(\pi)$ returned by Algorithm~\ref{alg:linear-oracle} is a $(\frac{\beta \Delta_{c}}{2},0)$-stable policy under $\pi$, and furthermore that $V(a^*(p(\pi),\pi),p(\pi),\pi)\geq V(a^*(p_0,\pi,\Delta_{c}\frac{\beta}{2}),p_0,\pi)-\delta - |\cA|\beta$ for all $p_0\in \cP_0$. Putting these together proves that Algorithm~\ref{alg:linear-oracle} is a $(\delta + \beta|\cA|,\frac{\Delta_{c}\beta}{2},\frac{\Delta_{c}\beta}{2},0)$-optimal stable policy oracle.
\end{proof}

\subsection{Bayesian Persuasion}\label{sec:bayes}
Bayesian Persuasion is another important special case of the general Principal Agent problem that is quite different from the linear contracting case. 
In Bayesian Persuasion~\citep{kamenica2011bayesian}, Sender (the Principal) wishes to persuade Receiver (the Agent), to choose a particular action: but by controlling the information structure used to communicate with Receiver, rather than by making monetary payments.
For example, a traditional example is a prosecutor (Sender) who tries to convince a judge (Receiver) that a defendant is guilty.
\subsubsection{Fundamentals of Bayesian Persuasion}
A policy in Bayesian Persuasion is a signal scheme, which consists of a signal space $\Sigma$ and a family of distributions $\{\sig(\cdot|y)\in \Delta (\Sigma)\}_{y\in \cY}$ mapping ``states of nature'' $\cY$ to ``signals'' $\Sigma$.
Sender selects and sends a signal scheme to Receiver.
After observing the signal scheme and a signal realization $\sigma\sim \sig(\cdot|y)$ as a function of the underlying state of nature $y$, Receiver selects her strategy $s$ from a strategy space $\cS$.
In other words, after observing the signal scheme, Receiver selects an action $a: \Sigma\mapsto \cS$, which maps signals to strategies.
Both Sender's utility $v(s,y)\in [0,1]$ and Receiver's utility $u(s,y)\in [0,1]$ are functions of Receiver's strategy $s\in \cS$ and the state of nature $y\in \cY$.
For any policy $p$ and any action $a$, the Principal's utility is
\begin{equation*}
    V(a,p,y)=\EEs{\sigma\sim p(\cdot|y)}{v(a(\sigma), y)}\,,
\end{equation*}
and the Agent's utility is
\begin{equation*}
    U(a,p,y)=\EEs{\sigma\sim p(\cdot|y)}{u(a(\sigma),y)}\,.
\end{equation*}

In the common prior setting, there exists a common prior distribution $\pi$ over states of nature $\cY$.
To maximize the expected utility, the Agent will form his posterior distribution conditional on the signal $\pi_\sigma= \pi(y|\sigma)$ using Bayes's rule and best respond by selecting strategy $\argmax_{s\in \cS} \EEs{y\sim \pi_{\sigma}}{u(s,y)}$.
Consider the traditional example of a  prosecutor and a judge. The state space is $\cY$ = \{Innocent, Guilty\} and the strategy space is $\cS$ = \{Convict, Acquit\}.
The judge has 0-1 utility and prefers to convict if the defendant is guilty and acquit if the defendant is innocent.
Regardless of the state, the prosecutor’s utility is 1 following a conviction and 0 following an acquittal.
Consider the case that $\pi(\text{Guilty}) = 0.3$. 
If there is no communication, the judge will  always acquits because guilt is less likely than innocence under his prior. 
However, the prosecutor can construct the following signal scheme to improve her utility.
\begin{align*}
   &p(\text{i} | \text{Innocent}) = \frac{4}{7}, \quad p(\text{g} | \text{Innocent}) = \frac{3}{7}\,,\\
   &p(\text{i} | \text{Guilty}) = 0, \quad p(\text{g} | \text{Guilty}) = 1\,.
\end{align*}
The posterior distribution of observing signal $g$ is $\pi_{g}(\text{Guilty}) = \pi_{g}(\text{Innocent}) = 0.5$ and the judge will convict when observing signal $g$. This leads the judge to convict with probability $0.6$.

A signal scheme is said to be ``straightforward'' if the signal space $\Sigma= \cS$ and Receiver's best responding strategy equals the signal realization. In other words, a straightforward signal scheme simply tells the receiver what action to take, and it is in the reciever's interest to comply. \cite{kamenica2011bayesian} shows that the optimal value can be achieved by straightforward signal schemes. 
Hence, we restrict to straightforward signal schemes in the following and let $\cP$ be the space of all straightforward signal schemes.

A common special case of Bayesian Persuasion is that both the states of nature and the strategies are real-valued, 
% Receiver's best response depends only on the expectation $\EEs{\pi}{y}$ rather than the entire belief $\pi$, 
and Sender's preferences over Receiver's strategies do not depend on the nature state $y$. 
Hence, Sender's utility can be written as a function of Receiver's strategy, i.e.,
\begin{equation*}
    v(s, y) = v(s)\,.
\end{equation*}
We consider a simpler but very common case where the number of states of nature is $2$, i.e., $\abs{\cY} = 2$.  In the example of prosecutor, the states are \{Innocent, Guilty\}. In the context of drug trials, a drug company (the Principal) seeks approval from FDA (the Agent) for a new drug, and the states are \{Effective, Ineffective\}. We remark that in this special case, our improvement over \cite{camara2020mechanisms} is in the removal of the Alignment assumption ---  since the state space is binary, our general efficiency improvements in terms of the cardinality of the state space are not relevant.
Without loss of generality, we assume that $\cY = \{0,1\}$ and $\cS\subset [0,1]$. We consider finite discrete strategy space $\cS$.
For any $\mu\in [0,1]$, let $u(s,\mu) = \EEs{y\sim\Ber(\mu)}{u(s,y)}$ denote the expected Agent's utility of choosing $s$ when $y$ is drawn from $\Ber(\mu)$.  We will assume (without loss of generality) that every strategy is a best response for the Agent for \emph{some} prior distribution (otherwise we can remove such a strategy from $\cS$):
\begin{assumption}\label{asp-bayes:interval}
We assume that for all $s\in \cS$, there exists a $\mu\in [0,1]$ such that $u(s,\mu)>u(s',\mu)$ for all $s'\neq s$ in $\cS$. 
\end{assumption}
% This assumption is without loss of generality since we can remove $s$ from $\cS$ if $s$ is not uniquely optimal under any $\mu$.

% \begin{theorem}\label{thm:bayes}
%     Suppose that Assumption~\ref{asp-bayes:interval} holds. For any $\pi$, there exists a stable policy oracle which outputs a $(\frac{3\beta}{C} + c_2 \sqrt{\epsilon} +\beta/2,\epsilon, x \cdot c_{1}\beta/2, \max(x,\beta/4))$-optimal stable policy under $\pi$ with $\abs{\cP_{\cO}} = \cO(\frac{n^2}{\beta^4})$. \hsedit{By combining with Theorem~\ref{thm:stable}, we can achieve Principal's regret $O(T^?)$.}\todoar{What is the quantification over $\epsilon,\beta$? What are $C, c_1, c_2$? }\todoar{Theorem needs the final bound completed}
% \end{theorem}

%\subsubsection{Construction of an Optimal Stable Policy Oracle}
Since we only focus on Bernoulli distributions, when we refer to $\mu$ as a belief/prior, we are using this as shorthand for the distribution $\Ber(\mu)$.
For any $\mu\in [0,1]$, let
$S^*(\mu) = \argmax_{s\in \cS}u(s,\mu)$
denote the set of optimal strategies under prior $\mu$ and let 
$$s^*(\mu) = \argmax_{s\in S^*(\mu)}v(s)$$ 
denote the optimal strategy breaking ties by maximizing the Principal's utility.

\begin{figure}[!t]
    \centering
    \begin{subfigure}[b]{0.5\textwidth}
    \includegraphics[width = \textwidth]{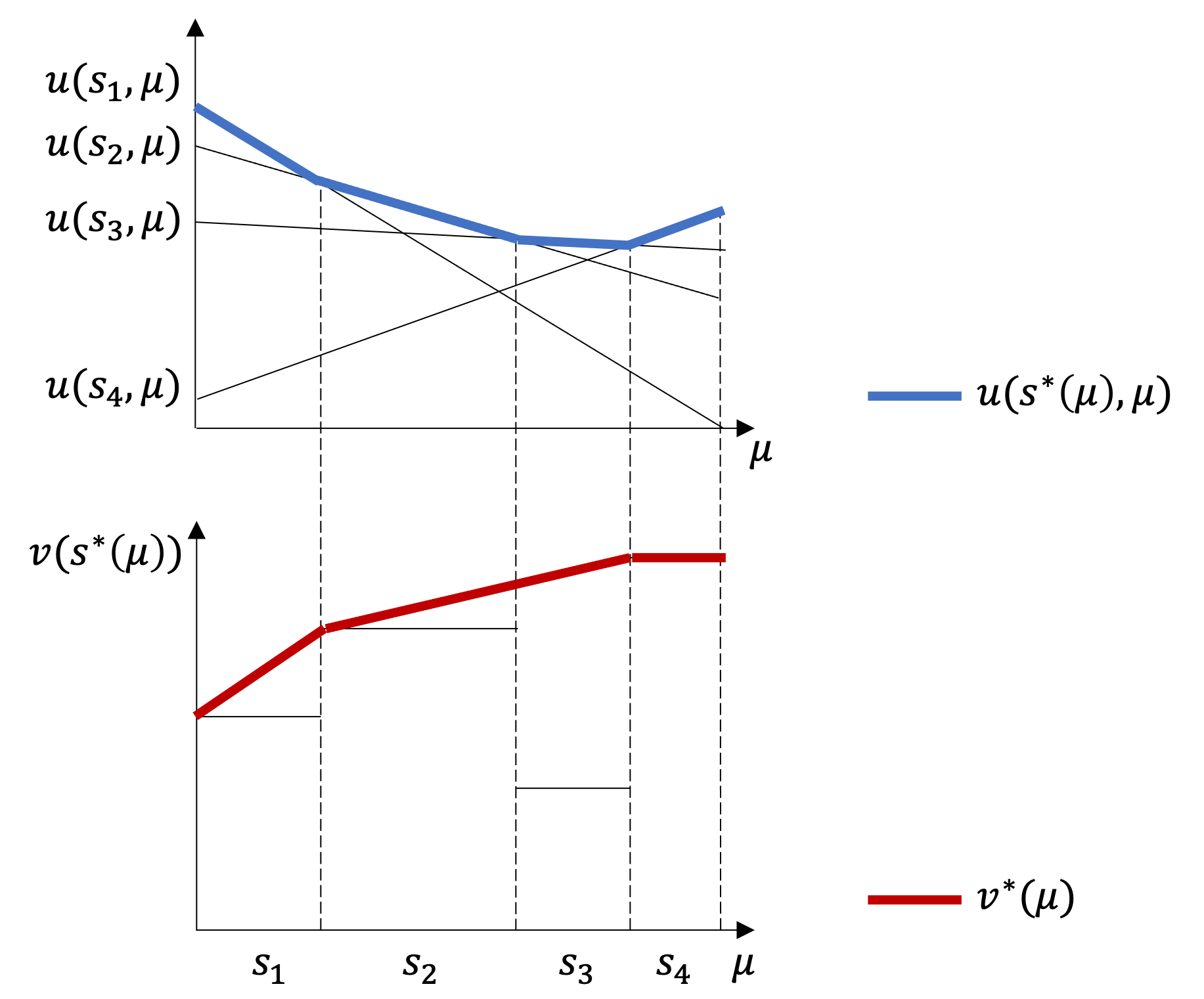}
    \caption{For any $s\in \cS$, $u(s,\mu)$ is a linear function of $\mu$. $u(s^*(\mu),\mu)$ is the maximum over all these linear functions. $v(s^*(\mu))$ is a piecewise constant function. $v^*(\mu)$ is defined in Eq~\eqref{eq:opt-bayes}.}
    \label{fig:bayes}
    \end{subfigure}
    \hfill
    \begin{subfigure}[b]{0.45\textwidth}
    \includegraphics[width = \textwidth]{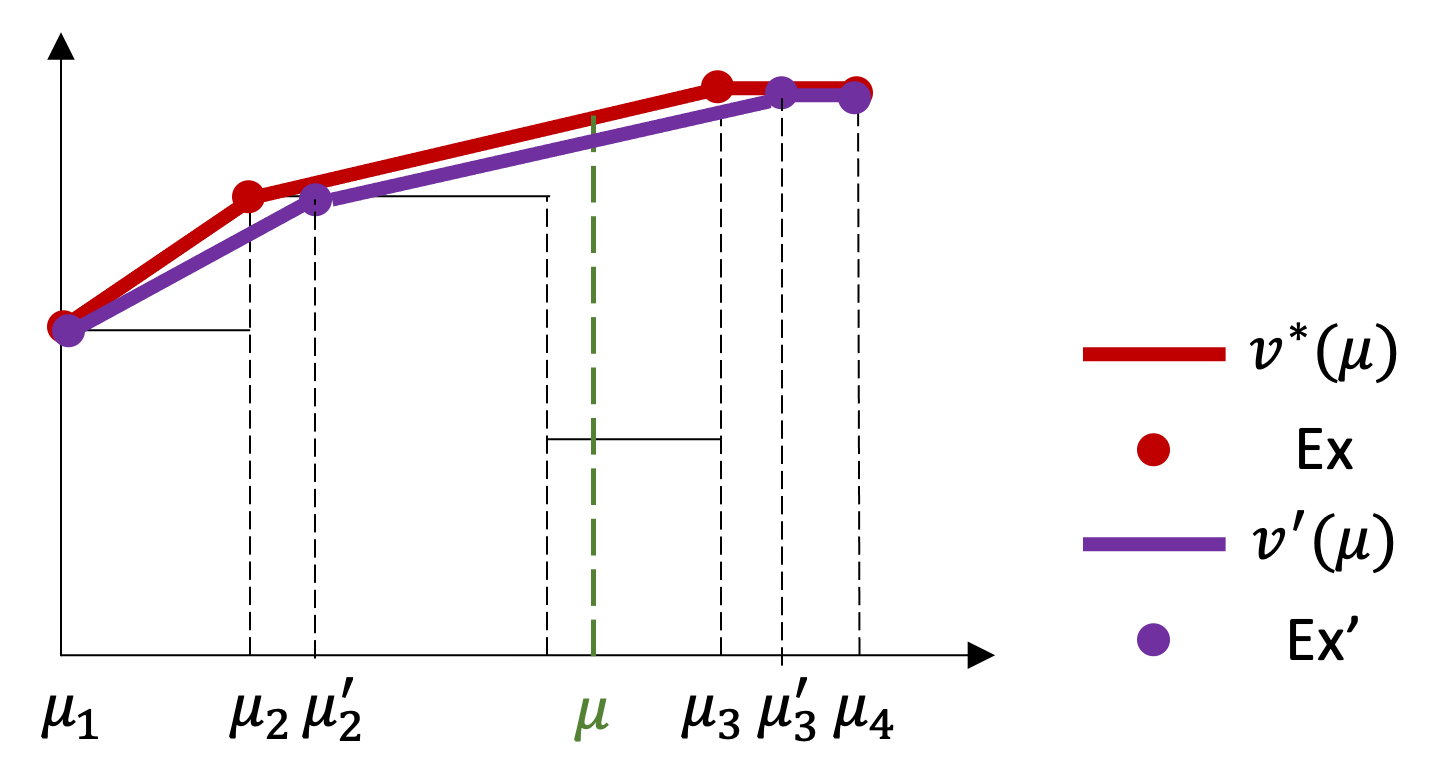}
    \caption{Illustration of $\text{Ex}, \text{Ex}', v^*(\mu), v'(\mu)$. In this figure, the optimal achievable value $v^*(\mu)$ is achieved by the convex combination of $\mu_2$ and $\mu_3$. The value achieved by our scheme $v'(\mu)$ is attained by convex combination of $\mu_2'$ and $\mu_3'$.  $v'(\mu)$ is very close to $v^*(\mu)$.}
    \label{fig:bayes2}
    \end{subfigure}
    \caption{Illustration of utilities in Bayesian Persuasion.}
\end{figure}

% \begin{figure}[!htbp]
%     \centering
%     \includegraphics[width = 0.4\textwidth]{bayes1.png}
%     \caption{Illustration of $(\mu,u(s,\mu))$ and $(\mu,v(s^*(\mu)))$}
%     \label{fig:bayes}
% \end{figure}
% \begin{figure}[!htbp]
%     \centering
%     \includegraphics[width = 0.5\textwidth]{bayes2.png}
%     \caption{Illustration of $(\mu,u(s,\mu))$ and $(\mu,v(s^*(\mu)))$}
%     \label{fig:bayes2}
% \end{figure}

As depicted in Fig~\ref{fig:bayes}, for any $s\in \cS$, $u(s,\mu)$ is linear in $\mu$ with the absolute value of the slope $\abs{\partial u(s,\cdot)} \leq 1$ since the utilities are in $[0,1]$. It is easy to check that for any $\mu<\mu'\in [0,1]$, if $s$ is an optimal strategy for both $\Ber(\mu)$ and $\Ber(\mu')$, then for any $\mu''\in [\mu,\mu']$, $s$ is also an optimal strategy for $\Ber(\mu'')$.
Hence, $[0,1]$ is divided into $n$ closed intervals $(S_1,\ldots, S_{n})$ for some $n\leq |\cS|$, such that all $\mu\in S_i$ have a single shared optimal strategy, denoted by $s_i$.
\begin{restatable}{lemma}{implicationgap}\label{lmm:implication-gap-asp}
    Under Assumption~\ref{asp-bayes:interval}, we have the following observations:
    \begin{itemize}
        \item Each strategy in $\cS$ corresponds to one interval in $(S_1,\ldots, S_{n})$. In other words, we have $\cS = \{s_1, s_2,\ldots, s_n\}$ and $n=\abs{\cS}$.
        \item There exists a positive constant $C>0$ such that the length of every interval in $\{S_1,\ldots, S_{n}\}$ is lower bounded by $C$. For every interval $S_i$, for any $\mu$ inside $S_i$ (not on the edge), $s_i$ is the unique optimal strategy under prior $\mu$.
        \item There exists a positive constant $c_1>0$ such that for any two different strategies $s,s'$, the difference between the utility slopes, $\abs{\partial u(s,\cdot)-\partial u(s',\cdot)}$, is bounded below by $c_1$. 
    \end{itemize}    
\end{restatable}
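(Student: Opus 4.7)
The plan is to exploit the piecewise-linear concave structure of the upper envelope $\mu \mapsto u(s^*(\mu),\mu) = \max_{s \in \cS} u(s,\mu)$, combined with Assumption~\ref{asp-bayes:interval}, which guarantees that every strategy is somewhere strictly best. For each $s \in \cS$ I would introduce the optimality set $I_s := \{\mu \in [0,1] : u(s,\mu) \geq u(s',\mu)\ \forall s' \in \cS\}$. The ``convex combination'' remark just above the lemma (which is immediate from the linearity of each $u(s,\cdot)$ in $\mu$) shows that $I_s$ is a closed interval. Assumption~\ref{asp-bayes:interval} provides a $\mu_s$ at which $s$ is strictly best; by continuity of the affine functions, an open neighborhood of $\mu_s$ lies in $I_s$, so each $I_s$ has nonempty interior.

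For observation 1, since the collection $\{S_1,\ldots,S_n\}$ is, by construction, indexed by the distinct strategies that are optimal on some full sub-interval, and every $s \in \cS$ yields such an interval $I_s$ by the argument above, the map $s \mapsto I_s$ is a bijection onto $\{S_1,\ldots,S_n\}$. Hence $n = |\cS|$ and $\cS = \{s_1,\ldots,s_n\}$.

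For observation 2, the bound $C := \min_i |S_i| > 0$ follows from the fact that each $I_{s_i}$ has nonempty interior and $\cS$ is finite. For the uniqueness claim, fix $\mu$ in the interior of $S_i$ and suppose some $s' \neq s_i$ satisfied $u(s',\mu) = u(s_i,\mu)$. Since both sides are affine in $\mu$, either the two functions agree on all of $[0,1]$, or they cross only at $\mu$. In the first case $s'$ is never strictly best, contradicting Assumption~\ref{asp-bayes:interval}. In the second case, one of $s_i, s'$ strictly dominates the other immediately to the left of $\mu$ and the roles reverse to the right, so $\mu$ must lie on the shared boundary of two of the $S_j$'s, contradicting $\mu$ being in the interior of $S_i$.

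For observation 3, I would argue by contradiction: if two distinct $s, s'$ had equal slopes, then $u(s,\mu)-u(s',\mu)$ is a constant $\delta$ on $[0,1]$. If $\delta > 0$ then $s'$ is weakly dominated by $s$ everywhere; symmetrically if $\delta<0$; and if $\delta=0$, $s$ and $s'$ are tied at every $\mu$. In all three cases at least one of $s, s'$ is never \emph{strictly} uniquely optimal, contradicting Assumption~\ref{asp-bayes:interval}. Hence all slopes are distinct, and by finiteness of $\cS$ we may set $c_1 := \min_{s \neq s'} |\partial u(s,\cdot) - \partial u(s',\cdot)| > 0$. The main obstacle is observation 2, where one has to carefully distinguish interior points from boundary ties using that affine functions cross generically transversally; observations 1 and 3 are then bookkeeping from the piecewise linear concave structure plus Assumption~\ref{asp-bayes:interval}.
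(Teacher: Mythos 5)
Your proposal is correct and follows essentially the same route as the paper's proof: each strategy's optimality region is an interval with nonempty interior by Assumption~\ref{asp-bayes:interval} together with linearity/continuity of $u(s,\cdot)$, finiteness of $\cS$ then yields the positive constants $C$ and $c_1$, and equal slopes are ruled out because one strategy would then (weakly) dominate the other, contradicting the assumption. The only difference is cosmetic: the paper extracts an explicit radius $C_s/2$ around the strictly-optimal point via the bound $\abs{\partial u(s,\cdot)}\leq 1$, whereas you argue via nonempty interior, and you spell out the interior-uniqueness claim (via the transversal-crossing argument) more explicitly than the paper does.
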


% More specifically, there exists $c_1>0$ such that $\min_{s,s'\in \cS}\partial u(s,\cdot) - \partial u(s',\cdot)\geq c_1$.

% of any two actions\todoar{What is a ``utility slope'' of an action?} is some positive constant $c_{1}$. More specifically, Let  denote the minimum (over $i$) difference between the slope of $u(s_{i+1},\cdot)$ and that of $u(s_{i},\cdot)$. 

Then the Agent's utility $u(s^*(\mu),\mu)$, given that he selects the optimal strategy $s^*(\mu)$, as a function of $\mu$, is taking a maximum over the set of linear functions $\{u(s,\mu)|s\in \cS\}$, as depicted in blue in Fig~\ref{fig:bayes}.
The Principal's utility $v(s^*(\mu))$ given that the Agent selects the optimal strategy $s^*(\mu)$, is a piecewise constant function since for all $\mu\in S_i$ (except for the boundary of $S_i$), $v(s^*(\mu)) = v(s_i)$.

Given any prior $\pi = \Ber(\mu)$, it is easy to see that a signal scheme induces a distribution over posteriors, $\pi(y|s)$ for all $s\in \cS$. 
The reverse is true as well: any distribution over posteriors that is consistent with our prior corresponds to a signal scheme. Given a  distribution of posteriors $\{(\tau_i,\mu_i)|i\in [n]\}$ with $\tau_i\geq 0$, $\sum_{i=1}^n \tau_i = 1$, we call the distribution Bayes-plausible if the expected posterior equals the prior, i.e., $\sum_{i}\tau_i \mu_i = \mu$.
Given a Bayes-plausible distribution of posteriors, we can recover the corresponding signal scheme  $p(s_i|y) = \frac{\tau_i\pi(y|s_i)}{\pi(y)}$ by Bayes' rule, where $s_i = s^*(\mu_i)$.
More explicitly, we have
\begin{align}
    p(s_i|y=1) = \frac{\tau_i \cdot \mu_i}{\mu}\,, \qquad p(s_i|y=0) = \frac{\tau_i \cdot (1-\mu_i)}{1-\mu}\,.\label{eq:signal}
\end{align}
Therefore, when given a prior $\mu$, selecting a signal scheme is equivalent  to selecting a Bayes-plausible distribution of posteriors. In the following, we choose a Bayes-plausible distribution of posteriors to represent a signal scheme. 

In Bayesian Persuasion, given any policy $p$ and prior $\pi=\Ber(\mu)$, the optimistic best response $a^*(p,\mu)$ for the Agent is selecting the optimal strategy $s^*(\pi(y|s))$ under the posterior $\pi(y|s)$ when observing signal $s$. Given prior $\pi=\Ber(\mu)$, the optimal achievable Principal's utility is defined as the maximum utility given that the Agent always best responds optimistically, i.e., $v^*(\mu) := \argmax_{p\in \cP}V(a^*(p,\mu), p, \mu)$.
\begin{lemma}[\cite{kamenica2011bayesian}]
    The optimal achievable Principal's utility is the concave closure of the convex hull of $(\mu,v(s^*(\mu)))$:
\begin{equation}\label{eq:opt-bayes}
    v^*(\mu) = \sup\{z|(\mu,z)\in \conv(v)\}\,,
\end{equation}
where $\conv(v)$ is the convex hull of $\{(\mu, v(s^*(\mu)))|\mu\in [0,1]\}$. 
\end{lemma}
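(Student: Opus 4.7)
The plan is a standard ``concavification'' argument from \cite{kamenica2011bayesian}, reformulated to match the notation just established.

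First I would set up the correspondence between signal schemes and Bayes-plausible distributions over posteriors. As noted just before the lemma statement, any straightforward signal scheme $p \in \cP$ induces, for the prior $\mu$, a distribution over posteriors $\{(\tau_i,\mu_i)\}_{i=1}^n$ via $\tau_i = \mu\, p(s_i|y=1) + (1-\mu)\,p(s_i|y=0)$ and $\mu_i = \pi(y=1|s_i)$. By the law of total probability this distribution is Bayes-plausible, i.e.\ $\sum_i \tau_i \mu_i = \mu$. Conversely, the formulas in Eq.~\eqref{eq:signal} show that every Bayes-plausible distribution arises from some straightforward scheme, so maximizing over $\cP$ is the same as maximizing over Bayes-plausible distributions over $[0,1]$.

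Next I would compute the Principal's utility under the optimistic best response. For a straightforward scheme with posteriors $\{(\tau_i,\mu_i)\}$, when the Agent receives signal $s_i$ his best response (with Principal-favorable tie-breaking, matching the definition of $a^*$) is $s^*(\mu_i)$. Hence
\begin{equation*}
V(a^*(p,\mu),p,\mu) \;=\; \sum_{i} \tau_i \, v(s^*(\mu_i)),
\end{equation*}
which is a convex combination of points $(\mu_i, v(s^*(\mu_i)))$ lying on the graph of $v \circ s^*$ whose first coordinates average to $\mu$. By definition of the convex hull, the point $(\mu, \sum_i \tau_i v(s^*(\mu_i)))$ lies in $\conv(v)$, so $V(a^*(p,\mu),p,\mu) \le \sup\{z : (\mu,z)\in \conv(v)\}$.

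Taking a supremum over schemes gives $v^*(\mu) \le \sup\{z : (\mu,z)\in\conv(v)\}$. For the reverse inequality I would note that by Carath\'eodory's theorem every $(\mu,z)\in \conv(v)$ is a finite convex combination $(\mu,z) = \sum_i \tau_i (\mu_i, v(s^*(\mu_i)))$ with $\sum_i \tau_i \mu_i = \mu$; this is exactly a Bayes-plausible posterior distribution, which by the first step corresponds to a scheme achieving Principal utility $\sum_i \tau_i v(s^*(\mu_i)) = z$. Hence $v^*(\mu) \ge z$, and taking the supremum yields the other direction. Closure under suprema (i.e.\ passing from $\max$ to $\sup$) handles the fact that $\conv(v)$ may not be closed, matching the $\sup$ in the statement.

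The main conceptual point to handle carefully is the tie-breaking in the Agent's optimistic best response: one must verify that at boundary posteriors $\mu_i$ shared by two adjacent intervals $S_j, S_{j+1}$, choosing $s^*(\mu_i)$ (the Principal-favored tiebreak) is actually what the optimistic best response $a^*$ selects, so the claimed equality $V(a^*(p,\mu),p,\mu) = \sum_i \tau_i v(s^*(\mu_i))$ holds exactly rather than only after perturbation. Everything else is just definition-chasing plus Bayes-plausibility.
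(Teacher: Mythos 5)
Your sketch is correct and is essentially the same argument the paper relies on: the lemma is cited from \cite{kamenica2011bayesian} rather than reproved, and the surrounding text sets up exactly the correspondence you use (signal schemes $\leftrightarrow$ Bayes-plausible posterior distributions via Eq.~\eqref{eq:signal}, with the Agent's optimistic best response at posterior $\mu_i$ being $s^*(\mu_i)$ by definition, which settles the tie-breaking point you flag). The two-sided inequality via the convex-hull representation is the standard concavification proof, so there is nothing further to add.
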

The optimal achievable value $v^*(\mu)$ given the prior $\mu$ is depicted in red in Fig~\ref{fig:bayes}. Let 
% $\textrm{Ex}= \{(\mu_{i_0}, v(s_1)), (\mu_{i_1}, v(s_{i_1})),\ldots, (\mu_{i_K}, v(s_{i_K})),(\mu_{i_{K+1}}, v(s_n))\}$ 
% $\textrm{Ex}= \{(\mu_{i_1}, v(s^*(\mu_{i_1}))),\ldots,(\mu_{i_{K}}, v(s^*(\mu_{i_{K}})))\}$ 
$\textrm{Ex}= \{(\mu_{1}, v(s^*(\mu_{1}))),\ldots,(\mu_{{K}}, v(s^*(\mu_{{K}})))\}$ 
denote all extreme points of $\conv(v)$ on the concave closure, where $\mu_{1}=0$ and $\mu_{{K}} = 1$ for notation convenience.
By \cite{kamenica2011bayesian}, there exists two points in $\textrm{Ex}$ 
such that $(\mu, v^*(\mu))$ is represented as the convex combination of them. This defines a Bayes-plausible distribution of posteriors, where $\mu_{j}$ is the posterior given signal $s^*(\mu_{j})$ and the weight on $(\mu_{j}, v(s^*(\mu_{j})))$ is the probability mass assigned to $\mu_{j}$. The optimal scheme is the one which induces this distribution of posteriors. Note that all these $\mu_{j}$'s lie on the boundaries of the intervals $\{S_1,\ldots,S_n\}$.

% This optimal achievable value $v^*(\mu)$ is achieved by the signal scheme defined as follows.
% According to Eq~\eqref{eq:opt-bayes}, there exists a set of points in $\{(\mu', v(s^*(\mu')))|\mu'\in [0,1]\}$  

% As illustrated in  Figure~\ref{fig:bayes2}, $(\mu, v^*(\mu))$ is represented by two neighboring extreme points, and each extreme point must be in the form of $(\mu', v(s^*(\mu')))$ with $\mu'$ lying on the edges of the intervals $\{S_1,\ldots,S_n\}$.

% Then let $(\mu_{i_j}, v(s_{i_j}))$ and $(\mu_{i_{j+1}}, v(s_{i_{j+1}}))$ denote the two extreme points such that there exists a $\tau \in [0,1]$ such that $(\mu,v^*(\mu)) =\tau (\mu_{i_j}, v(s_{i_j})) + (1-\tau)(\mu_{i_{j+1}}, v(s_{i_{j+1}}))$ can be represented as a convex combination of them. Hence, the optimal scheme is the one which induces the distribution of posteriors as $(\tau, \Ber(\mu_{i_j}))$, $(1-\tau, \Ber(\mu_{i_{j+1}}))$.
\subsubsection{Optimal Stable Policy Oracle Construction}
Now we are ready to describe how to construct a stable policy oracle in Bayesian Persuasion based on the above optimal scheme.
The reader may notice that the above optimal scheme is not stable since each possible posterior $\mu_{j}$ lies on the edge of intervals and there will be two optimal strategies under $\mu_{j}$, which might lead to different Principal utilities. Hence, we need first to stabilize the optimal scheme. Besides, recall that the Principal's regret in Theorem~\ref{thm:stable} depends on the cardinality  $|\cP_\cO|$ of the output policy space, and so it is not enough to be able to construct near optimal stable policies --- we need to be able to construct near optimal stable policies that are always members of a small discrete set. Thus, as a second part of our construction we need to discretize the output space. We will introduce the stabilization step in the following and defer the discretization step to Appendix~\ref{app:bayes}.

\paragraph{Stabilization of the optimal scheme}
To stabilize the scheme, we need to make sure that the possible posteriors will lie inside the intervals such that each corresponds to one unique optimal Agent strategy.
For each $j\in \{2,\ldots,K-1\}$, if $s^*(\mu_j)=s_{i_j}$, 
% note that $s_{i_j}$ is an optimal strategy for the Agent, in which we break ties in favor of the Principal, i.e., $ s_{i_j}= \argmax_{s\in s^*(\mu_{i_j})} v(s)$.
our method will move $\mu_{j}$ into $S_{i_j}$ by $\beta$ for some $\beta>0$. Specifically, let $\mu_{j}' = \mu_{j}-\beta$ if the interval $S_{i_j}$ is below $\mu_{j}$; and $\mu_{j}' = \mu_{j}+\beta$ if the interval $S_{i_j}$ is above $\mu_{j}$. There is no need to move $\mu_1 =0$ and $\mu_K=1$ as they already correspond to a unique optimal strategy. Hence, we let $\mu_1' = \mu_1$ and $\mu_K' =\mu_K$.
As mentioned previously, the length of each interval in $\{S_1,\ldots, S_{n}\}$ is at least $C$. We set $\beta < \frac{C}{4}$ to be small enough so that $\mu_{j}'\in S_{i_j}$ and thus we have $s_{i_j}= s^*(\mu_{j}')$.
Now let $\textrm{Ex}' =\{(\mu_{1}', v(s_{i_1})),\ldots, (\mu_{K}', v(s_{i_K}))\}$ denote the modified set of extreme points and let
$$v'(\mu) = \sup\{z|(\mu,z)\in \conv(\textrm{Ex}')\}\,.$$
We illustrate $\textrm{Ex}'$ and $v'(\mu)$ in Fig~\ref{fig:bayes2}. Similar to $v^*(\mu)$, we can achieve $v'(\mu)$ by finding two points in $\textrm{Ex}'$ to represent $(\mu,v'(\mu))$ by a convex combination of them. This convex combination leads to a distribution of posteriors and thus a signal scheme. We denote this signal scheme by $p'(\mu)$.

\begin{restatable}{lemma}{lmmbayesalt}\label{lmm:bayes-alt}
    There exists a constant $c_2>0$ such that for any $\mu, \epsilon, x \in [0,1]$, $p'(\mu)$ is a $(\frac{3\beta}{C} + c_2\sqrt{\epsilon},\epsilon,x \cdot c_{1}\beta, x)$-optimal stable policy under $\mu$.  
\end{restatable}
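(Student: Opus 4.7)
}
I need to verify both conditions of Definition~\ref{def:stabilized}: that $p'(\mu)$ is $(x c_1 \beta,\, x)$-stable under $\mu$, and that its principal utility is within $\frac{3\beta}{C} + c_2 \sqrt{\epsilon}$ of the optimal achievable value under any benchmark $p_0$ with an $\epsilon$-best-responding agent. I will treat these as two largely independent arguments.

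\paragraph{Step 1: stability.}
By construction, every realized posterior of $p'(\mu)$ is some $\mu_j'$ that sits strictly inside the interval $S_{i_j}$, at distance at least $\beta$ from its nearest boundary (we chose $\beta < C/4$). By the slope-gap property in Lemma~\ref{lmm:implication-gap-asp}, for any $s \neq s_{i_j}$,
\[
u(s_{i_j}, \mu_j') - u(s, \mu_j') \;\geq\; c_1 \beta.
\]
Now fix any agent action $a \neq a^*(p'(\mu), \mu)$, and let $J$ denote the set of signals $s_{i_j}$ where $a(s_{i_j}) \neq s_{i_j}$. Writing $\tau_j$ for the marginal probability of signal $s_{i_j}$ under $p'(\mu)$, the agent's utility loss is at least $c_1 \beta \sum_{j \in J} \tau_j$, and the principal's utility loss is at most $\sum_{j \in J} \tau_j \cdot 1$ since $v \in [0,1]$. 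Splitting into cases $\sum_{j \in J} \tau_j \geq x$ versus $\sum_{j \in J} \tau_j < x$ immediately yields $(x c_1 \beta, x)$-stability.

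\paragraph{Step 2: near-optimality.}
For any $p_0 \in \cP_0$, I claim
\[
V(a^*(p_0, \mu, \epsilon), p_0, \mu) \;-\; V(a^*(p'(\mu),\mu), p'(\mu), \mu) \;\leq\; \tfrac{3\beta}{C} + c_2 \sqrt{\epsilon},
\]
and I prove it via two sub-bounds. First, I argue $V(a^*(p_0, \mu, \epsilon), p_0, \mu) \leq v^*(\mu) + c_2 \sqrt{\epsilon}$: any Bayes-plausible distribution of posteriors combined with an $\epsilon$-best-responding agent is bounded by the concave closure of an ``$\epsilon$-relaxed'' graph where, at each posterior $\mu_\sigma$, the agent can pick any strategy within utility $\epsilon$ of $s^*(\mu_\sigma)$. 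By the slope gap $c_1$ such deviating strategies are only available when $\mu_\sigma$ lies within $\epsilon/c_1$ of an interval boundary, and a standard concave-closure / Bayes-plausibility argument converts this into an $O(\sqrt{\epsilon})$ gap on $v^*$ at a fixed $\mu$. Second, I show $v^*(\mu) - v'(\mu) \leq 3\beta/C$: the sets $\textrm{Ex}$ and $\textrm{Ex}'$ have matching $y$-coordinates (we moved each $\mu_j$ inside $S_{i_j}$, whose strategy value is $v(s_{i_j}) = v(s^*(\mu_j))$), and each $x$-coordinate moved by at most $\beta$; since every adjacent pair of extreme points is $\geq C$ apart (intervals $S_i$ have length $\geq C$), $v^*$ is $(1/C)$-Lipschitz, and expressing $(\mu, v^*(\mu))$ as a convex combination of two points in $\textrm{Ex}$ and then replacing each with its shifted counterpart in $\textrm{Ex}'$ changes both the interpolation weight and the horizontal position by $O(\beta)$, giving the $3\beta/C$ bound. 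Combining the two sub-bounds completes the claim.

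\paragraph{Main obstacles.}
The stability argument is essentially immediate once one notices the right case split on $\sum_{j \in J}\tau_j$. The delicate part is the $c_2 \sqrt{\epsilon}$ bound comparing the optimal achievable value under optimistic versus $\epsilon$-approximate best response: a pointwise comparison only gives an $O(\epsilon)$ slack at individual posteriors, but to get a clean bound on the concave closure one must carefully use Bayes-plausibility together with the slope gap to control how much principal utility can be gained by placing posteriors in $O(\epsilon/c_1)$-neighborhoods of interval boundaries. The second sub-bound requires a careful case analysis depending on whether $\mu$ still lies in the convex hull of the two shifted extreme points that represented it originally; if not, one must jump to an adjacent pair in $\textrm{Ex}'$ and check that the value only degrades by $O(\beta/C)$, which is where the constant $3$ comes from.
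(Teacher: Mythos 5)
Your proof plan follows essentially the same route as the paper. The stability step is exactly the paper's Lemma~\ref{lmm:bayes-stable} with $\eta=\beta$: both posteriors of $p'(\mu)$ sit at distance at least $\beta$ inside their intervals, the slope gap $c_1$ from Lemma~\ref{lmm:implication-gap-asp} gives a per-signal loss of $c_1\beta$ for any deviation, and splitting on whether the mass of deviating signals exceeds $x$ yields $(x c_1\beta, x)$-stability (your uniform treatment via the set $J$ is a slightly cleaner packaging of the paper's two-case argument and is fine). The optimality step is likewise the paper's decomposition (Lemma~\ref{lmm:stabilized}): the bound $v'(\mu)\geq v^*(\mu)-\frac{3\beta}{C}$ is Lemma~\ref{lmm:tech3}, and your anticipated case analysis --- comparing interpolation weights when $\mu$ remains between the two shifted extreme points, and jumping to an adjacent pair of $\mathrm{Ex}'$ otherwise --- is precisely how the paper gets the constant $3$.

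The one genuine gap is the sub-bound $V(a^*(p,\mu,\epsilon),p,\mu)\leq v^*(\mu)+c_2\sqrt{\epsilon}$ (the paper's Lemma~\ref{lmm:tech4}), which you assert via ``a standard concave-closure / Bayes-plausibility argument'' without supplying it. Two concrete problems. First, the $\epsilon$-best-response constraint is aggregate, $\sum_i \tau_i u(a(s_i),\mu_i)\geq \sum_i\tau_i u(s^*(\mu_i),\mu_i)-\epsilon$, so your ``$\epsilon$-relaxed graph'' in which the agent may deviate by at most $\epsilon$ at \emph{each} posterior is not the right relaxation: a large deviation can occur at a posterior carrying small mass. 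The paper handles this with a mass-weighted displacement bound (Lemmas~\ref{lmm:tech1} and~\ref{lmm:tech2}): signals whose per-posterior loss exceeds $c_1 C$ carry total mass at most $\frac{\epsilon}{c_1 C}$, and the remaining posteriors can be relocated to points where the played strategy is exactly optimal with total weighted displacement $O(\epsilon)$. Second, and more importantly, the square root does not come from a generic concavity argument: after relocating posteriors, Bayes-plausibility fails by an amount $q$ with $|q|=O(\epsilon)$, and the paper restores it by explicitly re-weighting and adding a posterior at $0$ or $1$ (the three cases in the proof of Lemma~\ref{lmm:tech4}); the resulting change in the Principal's value is at most $\min\left(\tau_B,\ \frac{|q|}{C\tau_B-|q|}\right)$, where $\tau_B$ is the mass of relocated posteriors, and it is balancing these two terms over $\tau_B$ that produces $O(\sqrt{|q|})=O(\sqrt{\epsilon})$. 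Without this reweighting construction your plan does not explain where $\sqrt{\epsilon}$ comes from (a naive per-posterior patch degrades badly when the relocated mass is small), so this step needs to be carried out explicitly for the proof to be complete.
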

Recall that the cardinality $|\cP_\cO|$ of the output policy space of the oracle matters (in Theorem~\ref{thm:stable}) but the output space of $p'$ could be huge. Hence we need to discretize the output space $\{p'(\mu)|\mu\in [0,1]\}$. We defer the details of discretization to Appendix~\ref{app:bayes}. The upshot of the discretization step is that together with our stabilization step, we can obtain the following theorem:
\begin{restatable}[Stable Policy Oracle for Bayesian Persuasion]{theorem}{thmbayesdiscrete}\label{thm:bayes-discrete}
     There exist positive constants $C,c_1,c_2$ such that for any $\beta\in [0,\frac{C}{4}), \epsilon,x\in [0,1]$ and any $\delta \leq \frac{\beta^2}{16}$, there exists a policy oracle $p_\delta(\cdot)$ which is
     $(\frac{3\beta}{C} + c_2 \sqrt{\epsilon} +2\sqrt{\delta},\epsilon, x \cdot c_{1}\beta/2, \max(x,\sqrt{\delta}))$-optimal stable with $|\cP_\cO|=\cO(\frac{n^2}{\delta^2})$. By combining with   Theorem~\ref{thm:stable} and setting $\epsilon=T^{-\frac{1}{5}}$, $x=\beta=\sqrt{\epsilon}$, and $\delta = \frac{\beta^2}{16}$, we can achieve Principal's regret: 
     $$\text{PR}(\sigma^\dagger, \cL,y_{1:T}) = \tilde\cO\left(T^{-\frac{1}{10}}\right)\,,$$
     when the Agent obtains swap regret $\eint = \cO(\sqrt{\abs{\cP_\cO}/T})$.
\end{restatable}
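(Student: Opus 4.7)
The plan is to construct $p_\delta(\cdot)$ by discretizing the stabilized, non-discretized oracle $p'(\cdot)$ from Lemma~\ref{lmm:bayes-alt}. For each prior $\mu$, $p'(\mu)$ is a Bayes-plausible two-point signal scheme with posteriors chosen from the stabilized extreme points $\mu_j', \mu_{j+1}' \in \textrm{Ex}'$ satisfying $\mu_j' \leq \mu \leq \mu_{j+1}'$. I would replace each $\mu_j'$ with its nearest point on a uniform $\delta$-grid inside the corresponding interval $S_{i_j}$, and re-solve for Bayes-plausible weights using Eq.~\eqref{eq:signal}. Counting pairs of grid posteriors over the at most $n-1$ adjacent intervals, with $\cO(1/\delta)$ grid points inside each, yields $|\cP_\cO| = \cO(n^2/\delta^2)$ output policies, as claimed.

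The heart of the proof is verifying the advertised stability parameters. By the construction in Lemma~\ref{lmm:bayes-alt}, each $\mu_j'$ (for $j \notin \{1,K\}$) sits at distance $\beta$ inside its interval $S_{i_j}$. The grid-rounded posterior $\tilde\mu_j$ lies within $\delta$ of $\mu_j'$, and the hypothesis $\delta \leq \beta^2/16$ combined with $\beta < C/4$ gives $\sqrt{\delta} \leq \beta/4 < \beta$. Hence $\tilde\mu_j$ remains strictly inside $S_{i_j}$, so its unique Agent-optimal strategy is still $s_{i_j}$. Using the minimum slope gap from Lemma~\ref{lmm:implication-gap-asp}, the Agent's utility of $s_{i_j}$ under $\tilde\mu_j$ exceeds that of any other strategy by at least $x c_1 \beta$ minus a perturbation of at most $\cO(\sqrt{\delta})$; since $\delta \leq \beta^2/16$, this perturbation is absorbed to leave a stability gap of $x c_1 \beta/2$. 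For the $\gamma$ parameter, a $\sqrt{\delta}$ shift in any posterior can change the Principal's utilities of Agent-indifferent strategies by at most $\cO(\sqrt{\delta})$, justifying replacing the $x$ from Lemma~\ref{lmm:bayes-alt} by $\max(x, \sqrt{\delta})$.

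Next I would bound the optimality loss. Since each discretized posterior lies inside the same interval as the original, the Agent's strategy on each signal is unchanged, so the Principal's per-signal utility is preserved; the effect on the expected utility is entirely due to the different Bayes-plausible weights on the two posteriors. A direct computation using Eq.~\eqref{eq:signal} shows that perturbing each posterior by at most $\delta$ while keeping $\mu$ fixed changes each weight by $\cO(\delta/(\mu_{j+1}' - \mu_j')) = \cO(\delta/\beta) = \cO(\sqrt{\delta})$, since neighboring $\mu_j'$ are separated by at least $\beta$. Multiplying by the $[0,1]$-bounded Principal utility contributes at most $2\sqrt{\delta}$ to the optimality gap, so together with the $\frac{3\beta}{C} + c_2\sqrt{\epsilon}$ from Lemma~\ref{lmm:bayes-alt} we obtain $c = \frac{3\beta}{C} + c_2\sqrt{\epsilon} + 2\sqrt{\delta}$, as claimed.

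Finally, I would substitute the oracle parameters $(c,\epsilon,x c_1 \beta/2,\max(x,\sqrt{\delta}))$ and $|\cP_\cO| = \cO(n^2/\delta^2)$ into Theorem~\ref{thm:stable}, using $|\cY|=2$ and a standard no-swap-regret guarantee $\eint = \cO(\sqrt{|\cP_\cO|/T}) = \cO(n/(\delta\sqrt{T}))$. The choices $\epsilon = T^{-1/5}$, $x = \beta = \sqrt{\epsilon} = T^{-1/10}$, and $\delta = \beta^2/16$ balance every term --- the $\beta$, $\sqrt{\epsilon}$, $\sqrt{\delta}$, and $x$ contributions from $c+\gamma$, together with the $(\eint + |\cY|\sqrt{|\cP_\cO|/T})/(x c_1 \beta/2)$ and $(\eint + |\cY|\sqrt{|\cA|/T})/\epsilon$ terms --- at $\tilde\cO(T^{-1/10})$, yielding the stated regret. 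The subtlest step is the stability argument: I must confirm that the $\sqrt{\delta}$ discretization error never overwhelms the $x c_1 \beta$ separation margin from Lemma~\ref{lmm:implication-gap-asp}, which is precisely what the hypothesis $\delta \leq \beta^2/16$ enforces, and the same condition that determines which of $x$ or $\sqrt{\delta}$ dominates the $\gamma$ parameter.
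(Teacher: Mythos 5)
Your high-level plan (discretize the stabilized oracle $p'(\cdot)$ of Lemma~\ref{lmm:bayes-alt}, then track how discretization degrades the four oracle parameters and feed the result into Theorem~\ref{thm:stable}) matches the paper's strategy, and most of your parameter accounting is in the right ballpark. However, there is one genuine gap in the discretization step that undermines the claimed bound $|\cP_\cO| = \cO(n^2/\delta^2)$.

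You propose to snap each posterior $\mu_j'$ to a $\delta$-grid and then ``re-solve for Bayes-plausible weights using Eq.~\eqref{eq:signal}.'' But a signal scheme is the full matrix $p(s_i \mid y)$, which depends on both the posteriors and the weights, and the exact Bayes-plausible weight $\tau = (\tilde\mu_{j+1} - \mu)/(\tilde\mu_{j+1} - \tilde\mu_j)$ is a continuous function of the input prior $\mu$. So for a fixed pair of grid posteriors, as $\mu$ varies over the interval between them you obtain a \emph{continuum} of distinct output signal schemes, and $\cP_\cO$ is infinite. This is fatal: Theorem~\ref{thm:stable} requires a finite $\cP_\cO$ both for the $\sqrt{|\cP_\cO||\cA|/T}$ terms in the regret bound and to define the polynomially many events $\cE_1$ that the forecaster must be unbiased on. The paper avoids this by discretizing the signal-scheme conditional probabilities $p(s_i\mid y=0), p(s_i\mid y=1)$ directly onto a $\delta$-grid (Appendix~\ref{app:bayes-discrete}); the price is that the resulting posteriors and weights are only \emph{approximately} Bayes-plausible, which forces the case split on $\tau \lessgtr \sqrt{\delta}$ in the proof of Theorem~\ref{thm:bayes-discrete}: when $\tau$ is extreme, the posterior perturbation $\delta/|\tau-\delta|$ can blow up to $\cO(\sqrt{\delta})$, but then the corresponding weight is negligible. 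This is also why the extra $\sqrt{\delta}$ appears in both $c$ and $\gamma$; in your construction the posteriors move only by $\delta$ and the weights by $\cO(\delta)$, so your appeal to a ``$\sqrt{\delta}$ shift in any posterior'' to justify $\gamma = \max(x,\sqrt\delta)$ does not actually match your own construction. To repair your argument you would need to additionally discretize the weights (or, equivalently, the signal-scheme entries), at which point you reproduce the paper's construction and inherit the same approximate-Bayes-plausibility bookkeeping.

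Two smaller remarks: your separation bound $\mu_{j+1}' - \mu_j' \geq \beta$ is not the relevant one (adjacent extreme points in $\text{Ex}'$ are separated by roughly an interval length $C - 2\beta \geq C/2$, and $\beta$ is only how far each $\mu_j'$ sits inside its interval); the resulting weight change is $\cO(\delta/C) = \cO(\delta)$, which is fine but not $\cO(\delta/\beta)$. And your verification of $\delta \leq \beta^2/16 \Rightarrow \sqrt{\delta} \leq \beta/4$ is exactly the condition the paper uses to keep the perturbed posteriors inside the correct intervals in the $\tau \in [\sqrt\delta, 1-\sqrt\delta]$ case, so that part of your reasoning is sound.
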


\section{The General Case}\label{sec:general}
In Section \ref{sec:stable} we solved the special case in which we have a stable policy oracle available to us, and in Section \ref{sec:oracles} we showed how to construct stable policy oracles for two important settings: linear contracting, and binary state Bayesian persuasion. 
In this section, we consider the general case, in which we cannot assume the existence of an optimal stable policy oracle. 
In Section~\ref{sec:imposs} we give an example of a setting in which there is no optimal stable policy  (see Lemma~\ref{lem:impossstable}) --- and so indeed, if we want to handle the general case, we need do without such oracles. 
In this case, in addition to the behavioral assumptions in Section~\ref{sec:behavior}, we propose an additional alignment assumption, following~\cite{camara2020mechanisms}. To build intuition for the Alignment assumption, recall that the Principal provides recommendations $r_t$ to the Agent which are the Agent's best response \emph{under the prior corresponding to the Principal's forecast}. We can view the Principal's recommendation as a reflection of what she expects the Agent to do.  The Agent is under no obligation to follow these recommendations however, and will instead play some action $a_t$. In hindsight, we can consider the optimal policy for the Agent mapping the Principal's chosen policies and recommendations to actions for the Agent. We can view this as the benchmark that the Principal expects the Agent to do well with respect to. Alternately, we could consider a richer set of ``swap'' policies that map the Principal's chosen policies and recommendations \emph{and} the Agent's chosen actions to new actions.  The Agent will do well according to this set of swap benchmark policies because of our low swap regret assumption.  This counterfactual ``swap'' set of policies is only richer than the Principal's expectation for the Agent (as it takes as input more information), and so leads to utility for the Agent that is only greater: We call this difference the ``Gap''. The Alignment assumption says that the difference in Principal utility when the Agent plays actions $a_1,\ldots,a_T$ rather than recommendations $r_1,\ldots,r_T$ is upper bounded as a function of the Gap. Or in other words, the only reason that the Principal's utility can substantially suffer given what the Agent plays, compared to what the Principal's expectation was, is if the Gap was large. Said another way, the Principal's utility may well suffer compared to her expectation because the Agent deviates in ways that are beneficial to himself --- but the Agent will not ``frivolously'' deviate in ways that are harmful to the Principal without being helpful to the Agent. In this sense we can view the Alignment assumption as a moral analogue of the traditional assumption that the Agent breaks ties in favor of the Principal.

There is a subtle distinction between our assumption and the one employed in \cite{camara2020mechanisms}: they apply this alignment assumption to the utilities of the stage game for any prior $\pi$ and any $\epsilon$-best response action, whereas we make a similar assumption concerning the sequence of states $y_{1:T}$ for a specific learning algorithm $\cL$ employed by the Agent. Thus it can be that our alignment is satisfied even if the alignment assumption in \cite{camara2020mechanisms} is not.

\begin{assumption}[Alignment]\label{asp:alignment}
For mechanism $\sigma$, let $p^\sigma_{1:T}$ and $r^\sigma_{1:T}$ denote the sequences of realized policies and recommendations and let  $a^\sigma_{1:T}$ denote a realized sequence of actions selected by the Agent's learning algorithm $\cL$. We define the gap of the Agent's utilities to be the difference between the optimal achievable utility when the Agent can adopt any modification rule taking (policy, recommendation, action) as input and the optimal achievable utility when the Agent can adopt any modification rule taking (policy, recommendation) as input. More formally, $\ugap(y_{1:T},p^\sigma_{1:T},r^\sigma_{1:T},a^\sigma_{1:T})$ is defined as
\[\frac{1}{T}\max_{h:\cP_0\times\cA\times \cA\mapsto \cA} \min_{h':\cP_0\times\cA \mapsto \cA}\sum_{t=1}^T (U(h(p^\sigma_t,r^\sigma_t, a^\sigma_t),p^\sigma_t, y_t) -U(h'(p^\sigma_t,r^\sigma_t),p^\sigma_t, y_t))\,.\]
Then we assume that the sequence of states of nature $y_{1:T}$ satisfies that there exists an $M_1=\cO(1)$ and $M_2 =o(1)$ for which, under the proposed mechanism,
\begin{align*}
    &\frac{1}{T}\sum_{t=1}^T (V(r_t,p_t, y_t) -V(a_t,p_t, y_t)) \leq M_1\cdot \ugap(y_{1:T},p_{1:T},r_{1:T},a_{1:T}) + M_2\,, 
\end{align*}
and under any constant mechanism $\sigma^{p_0}$,
\begin{align*}
   \frac{1}{T}\sum_{t=1}^T (V(a_t^{p_0},p_0,y_t)-V(r^{p_0}_t,p_0,y_t))\leq M_1\cdot \ugap(y_{1:T},(p_0,\ldots,p_0),r^{p_0}_{1:T},a^{p_0}_{1:T}) + M_2\,.
    \end{align*}
\end{assumption}

% Consider an example of contract design where the prior distribution $\pi$ is equally distributed over $\{\text{medium}, \text{hard}\}$. 
% Suppose that shirking all the time achieves zero Agent utility while working for medium tasks and working with probability $1-p$ for hard tasks for some $p\in [0,1]$ also achieves zero Agent utility. 
% Hence, the Agent is indifferent between these two strategies. 
% Assumption~\ref{asp:alignment} requires that the Principal's utility when the Agent shirks all the time is not much worse than that when the Agent works with probability $1$ for medium tasks and with probability $(p-1)$\todoar{$1-p$?} for hard tasks.\todoar{I'm not sure the work/shirk example adds lots of insight here beyond describing the condition in the general case.}

Again, as discussed in Section~\ref{sec:behavior}, behavioral assumptions are still necessary. We maintain the no contextual swap regret assumption and a less restrictive version of the no secret information assumption. 
% \begin{assumption}[no internal regret for Agent]\label{asp:no-internal-reg}
% Let $\eint$ be an upper bound on the internal regret with context of $(p_t,r_t, a_t)$ for the realized sequence of states $y_{1:T}$ and realized sequence of policies and recommendations generated by either the proposed mechanism or the constant mechanisms.
% That is,
%     \begin{equation*}        \ir(y_{1:T},p_{1:T},r_{1:T}):=\max_{h:\cP_0 \times \cA\times \cA\mapsto \cA}\frac{1}{T}\sum_{t=1}^T (U(h(p_t,r_t,a_t),p_t, y_t) - U(a_t,p_t,y_t)) \leq \eint\,,
%     \end{equation*}
%     and for all $p_0\in \cP$,
%     \begin{equation*}        
%     \ir(y_{1:T},(p_0,\ldots,p_0),r^{p_0}_{1:T}):=\max_{h:\cP_0 \times \cA\times \cA\mapsto \cA}\frac{1}{T}\sum_{t=1}^T (U(h(p,r^{p_0}_t,a^{p_0}_t),p_0, y_t) - U(a^{p_0}_t,p_0,y_t)) \leq \eint\,.
%     \end{equation*}
%     We assume that there exists a small $\eint$.
    
% \end{assumption}
%     No internal regret can be achieved by best responding to a forecast calibrated on subsequences for which $(p_t,r_t,a_t) = (p,r,a)$.

We consider a weaker ``no secret information'' assumption than Assumption~\ref{asp:no-correlation} that corresponds to assuming that the Agent's ``cross-swap-regret'' with respect to the Principal's communications (policy and recommendation) is not too negative. Intuitively, cross swap regret compares the Agent's utility to a benchmark that lets the Agent choose an action using an arbitrary mapping from the Principal's policies and recommendations to actions. Having very negative cross swap regret means that the Agent is performing substantially better than is possible using the information contained in the Principal's communications. We assume that this is not the case.  
\begin{assumption}[No Negative Cross-Swap-Regret]\label{asp:no-sec-info}
Fix any realized sequence of states $y_{1:T}$.
% and realized sequence of policies $p_{1:T}$ and recommendations $r_{1:T}$. 
The Agent's corresponding negative \emph{cross-swap-regret} given the sequence of policy-recommendation pairs  $(p_{1:T}^\sigma,r^\sigma_{1:T})$ is defined to be: 
$$\nr(y_{1:T},p_{1:T}^\sigma,r^\sigma_{1:T}):=\frac{1}{T}\EEs{a_{1:T}^\sigma}{\sum_{t=1}^T U(a_t^\sigma,p_t^\sigma,y_t)- \max_{h:\cP_0 \times \cA\mapsto \cA} \sum_{t=1}^T U(h(p_t^\sigma, r_t^\sigma),p_t^\sigma,y_t)}\,.$$
We assume that the Agent's negative cross swap regret is bounded by $\eneg$ for both the realized sequence of policies and recommendations generated by the Principal's mechanism, as well as counterfactually for any constant mechanism:

    \begin{equation*}
        \nr(y_{1:T},p_{1:T},r_{1:T})\leq\eneg\,,
    \end{equation*}
    and for all $p_0\in \cP_0$,
    \begin{equation*}
        \nr(y_{1:T},(p_0,\ldots,p_0),r^{p_0}_{1:T}) \leq\eneg\,.
    \end{equation*}
\end{assumption}

The no negative-cross-swap-regret assumption can be viewed as a ``no-secret-information'' assumption. But it seems to have a different character than the no-secret-information assumption we made in previous sections (Assumption~\ref{asp:no-correlation}). Recall that Assumption~\ref{asp:no-correlation} informally asked that the Agent's actions should appear to be statistically independent of the state of nature, conditional on the policy and recommendation offered by the Principal. We note, however, that Assumption \ref{asp:no-sec-info} is strictly weaker than Assumption~\ref{asp:no-correlation}:

\begin{restatable}{lemma}{lmmweakerasp}\label{lmm:weakerasp}
    Assumption~\ref{asp:no-sec-info} is weaker than Assumption~\ref{asp:no-correlation}. More specifically, Assumption~\ref{asp:no-correlation} implies Assumption~\ref{asp:no-sec-info} with $\eneg= \cO(\sqrt{\abs{\cP'}\abs{\cA}/T})$, where $\cP'$ is the set of all possible output policies by the proposed mechanism.
\end{restatable}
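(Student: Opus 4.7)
The plan is to rewrite the best-fixed-policy benchmark inside the negative cross-swap-regret so that it decouples across $(p,r)$ slices, and then to show that on each such slice the agent's realized utility cannot exceed the best fixed action by more than $\cO(\sqrt{n_{p,r}})$ using Assumption~\ref{asp:no-correlation}. Summing and applying Cauchy--Schwarz will yield the advertised $\eneg = \cO(\sqrt{|\cP'||\cA|/T})$ bound.

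Concretely, first observe that the outer $\max_{h:\cP_0\times\cA\mapsto\cA}$ decouples into independent choices on each slice:
\[
\max_{h}\sum_{t=1}^T U(h(p_t^\sigma,r_t^\sigma),p_t^\sigma,y_t)=\sum_{(p,r)}\max_{a'\in\cA}\sum_{t:(p_t^\sigma,r_t^\sigma)=(p,r)}U(a',p,y_t),
\]
since the rule $h$ may be chosen independently at each $(p,r)$ pair. Consequently, the quantity $\nr$ splits into a sum of per-slice contributions $\EE{\sum_{t:(p_t^\sigma,r_t^\sigma)=(p,r)}U(a_t^\sigma,p,y_t)}-\max_{a'\in\cA}\sum_{t:(p_t^\sigma,r_t^\sigma)=(p,r)}U(a',p,y_t)$, since the inner max is a deterministic function of the (fixed) transcript $p_{1:T}^\sigma,r_{1:T}^\sigma,y_{1:T}$.

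On each slice, Assumption~\ref{asp:no-correlation} implies (by Jensen's inequality, since $|\EE{X}|\leq\EE{|X|}$) that
\[
\EEs{a_{1:T}^\sigma}{\sum_{t:(p_t^\sigma,r_t^\sigma)=(p,r)}U(a_t^\sigma,p,y_t)}\leq \EEs{a_{1:T}^\sigma}{\sum_{t:(p_t^\sigma,r_t^\sigma)=(p,r)}U(\hat\mu_{p,r},p,y_t)}+\cO(\sqrt{n_{p,r}}).
\]
Next, because $\hat\mu_{p,r}$ is a distribution over $\cA$, a trivial convexity argument gives
\[
\sum_{t:(p_t^\sigma,r_t^\sigma)=(p,r)}U(\hat\mu_{p,r},p,y_t)=\sum_{a'\in\cA}\hat\mu_{p,r}(a')\sum_{t:(p_t^\sigma,r_t^\sigma)=(p,r)}U(a',p,y_t)\leq\max_{a'\in\cA}\sum_{t:(p_t^\sigma,r_t^\sigma)=(p,r)}U(a',p,y_t),
\]
and this latter max is non-random given the fixed transcript. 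Combining these two inequalities cancels the benchmark term on each slice and leaves a slack of $\cO(\sqrt{n_{p,r}})$.

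Summing over the at most $|\cP'||\cA|$ pairs $(p,r)$ and applying Cauchy--Schwarz, $\sum_{(p,r)}\sqrt{n_{p,r}}\leq\sqrt{|\cP'||\cA|}\sqrt{\sum_{(p,r)}n_{p,r}}=\sqrt{|\cP'||\cA|\,T}$, gives the desired bound $\nr\leq\cO(\sqrt{|\cP'||\cA|/T})$ after dividing by $T$. The counterfactual inequality for any constant mechanism $\sigma^{p_0}$ follows by the identical argument, now slicing by the recommendation $r$ only (since $p_t$ is fixed at $p_0$) and using the second set of inequalities in Assumption~\ref{asp:no-correlation}. There is no essential obstacle here; the only point requiring care is that the slice max is independent of $a_{1:T}^\sigma$ so that it can be pulled outside the expectation before summing, which is what permits the clean Cauchy--Schwarz step.
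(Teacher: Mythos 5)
Your proof is correct and takes essentially the same approach as the paper's: decouple the max over modification rules across $(p,r)$ slices, replace $U(a_t,\cdot)$ by $U(\hat\mu_{p,r},\cdot)$ via Assumption~\ref{asp:no-correlation}, observe that the slice average under $\hat\mu_{p,r}$ is dominated by the per-slice best fixed action, and close with Cauchy--Schwarz to obtain $\sum_{(p,r)}\sqrt{n_{p,r}}\leq\sqrt{|\cP'||\cA|T}$. The paper compresses this into a two-line display and leaves the Cauchy--Schwarz bookkeeping implicit; you make that step explicit and correctly note that the per-slice benchmark max is deterministic given the (non-responsive) transcript, which is what lets it be pulled outside the expectation.
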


% When\todoar{Is this the proof of the above Lemma? If so should be in a proof environment. And what about approximation factors?} 
\begin{remark}
We also note a more intuitive and direct way to model the idea of ``no secret information'': to assume that the Agent cannot consistently outperform the Principal's recommendation, i.e., 
    \begin{equation}
        \frac{1}{T}\sum_{t=1}^T U(a_t,p_t,y_t) - \frac{1}{T}\sum_{t=1}^T U(r_t,p_t,y_t) \leq\eneg\,.\label{eq:no-better-than-guess}
    \end{equation}
    This is also a stronger assumption than Assumption~\ref{asp:no-sec-info}. 
    If the Agent can't consistently outperform the Principal's  recommendation (Eq~\eqref{eq:no-better-than-guess}), then Assumption~\ref{asp:no-sec-info} holds.
\end{remark}

Under this new set of assumptions, the Principal only needs to select the policy that would be optimal each round in the common prior setting, treating the forecast $\pi_t$ as the common prior (Algorithm~\ref{alg:general}). 
\begin{algorithm}[H]
    \caption{Principal's choice at round $t$}
    \label{alg:general}
        \begin{algorithmic}[1]
            \State{\textbf{Input}: Forecast $\pi_t\in \Delta(\cY)$}
            \State{Select policy $p_t = p^*(\pi_t)\in \argmax_{p\in \cP_0} \EEs{y\sim \pi_t}{V(\brr(p,\pi_t),p,y)}$
            }
        \end{algorithmic}
    \end{algorithm}
\begin{theorem}
\label{thm:general}
Recall the definition of the set of events
$$\cE_3 = \{\ind{a^*(p_0,\pi_t) = a}\}_{p_0 \in \cP_0,a\in \cA}$$
and define 
$$\cE_4 = \{p^*(\pi_t)= p, a^*(p,\pi_t) =a\}_{p\in \cP_0, a\in \cA}\,.$$
% Define the following collections of events:
% $$\cE_1 = \{\ind{(p_t,r_t) = (p,r)}\}_{p \in \cP_\cO, r \in \cA}, \ \ \  \ \ \cE_2 = \{\ind{(\pit,\rit) = (p,a)}\}_{p \in \cP_0, a \in \cA}$$
Let $\cE' = \cE_3\cup\cE_4$, the union of these events. Under Assumptions~\ref{asp:no-internal-reg} (No Contextual Swap Regret), \ref{asp:alignment} (Alignment), and \ref{asp:no-sec-info} (No Secret Information), by running the forecasting algorithm from \cite{NRRX23} for events $\cE'$ and the choice rule in Algorithm \ref{alg:general},  the Principal can achieve policy regret:
    \begin{equation*}
        \text{PR}(\sigma^\dagger, \cL,y_{1:T}) \leq \tilde \cO\left(\abs{\cY}\sqrt{\frac{\abs{\cP_0}\abs{\cA}}{T}}\right) +M_1(\eint + \eneg) + M_2.
    \end{equation*}
\end{theorem}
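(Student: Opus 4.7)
The plan is to decompose the Principal's policy regret via the Principal's \emph{recommendations} as an intermediate quantity, and then handle each piece with a separate one of our assumptions. Concretely, I would write
\begin{align*}
V(a_t^{p_0}, p_0, y_t) - V(a_t, p_t, y_t)
&= \underbrace{\bigl(V(a_t^{p_0}, p_0, y_t) - V(r_t^{p_0}, p_0, y_t)\bigr)}_{\text{(I)}} \\
&\quad + \underbrace{\bigl(V(r_t^{p_0}, p_0, y_t) - V(r_t, p_t, y_t)\bigr)}_{\text{(II)}} \\
&\quad + \underbrace{\bigl(V(r_t, p_t, y_t) - V(a_t, p_t, y_t)\bigr)}_{\text{(III)}}
\end{align*}
and average over $t$. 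Terms (I) and (III) measure, under the constant mechanism and under the proposed mechanism respectively, how much Principal utility is lost by the Agent deviating from the recommendations, so they are exactly what the Alignment assumption (Assumption~\ref{asp:alignment}) is designed to control. Term (II) compares the value of the Principal's greedy choice of policy to a benchmark policy, both evaluated on the \emph{true} state sequence; this is where the forecast unbiasedness guarantees from \cite{NRRX23} come in.

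For (I) and (III), Assumption~\ref{asp:alignment} bounds each of them by $M_1 \cdot \ugap(\cdot) + M_2$, so I just have to control the Gap. Here I would argue that for any realized sequence, the outer ``3-argument'' modification rule cannot outperform the played action by more than $\eint$ (by Assumption~\ref{asp:no-internal-reg}, since the sup over modification rules is exactly contextual swap regret relative to playing $a_t$), while the best ``2-argument'' rule cannot underperform the played action by more than $\eneg$ (by Assumption~\ref{asp:no-sec-info}, which says the negative cross-swap-regret is at most $\eneg$). Combining these, $\ugap \leq \eint + \eneg$ on both the realized and counterfactual transcripts, and so (I) $+$ (III) contributes at most $2 M_1(\eint + \eneg) + 2 M_2$, which is within the target bound up to constants.

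The main technical step is term (II). Writing $V(a, p, y) = \langle V(a, p, \cdot), y\rangle$ in the one-hot encoding and using that $\|V(a, p, \cdot)\|_\infty \leq 1$, I would replace $y_t$ by $\pi_t$ on each subsequence indexed by events in $\cE_4$ (for the $V(r_t, p_t, y_t)$ piece) and in $\cE_3$ (for the $V(r_t^{p_0}, p_0, y_t)$ piece). On each such subsequence, Definition~\ref{def:cal} gives
\begin{equation*}
\Bigl|\sum_{t : E(\pi_t)=1}\bigl(V(a, p, y_t) - V(a, p, \pi_t)\bigr)\Bigr| \leq T\,\alpha(E),
\end{equation*}
and once translated to forecasts the comparison is trivial: for every $t$, Algorithm~\ref{alg:general} chooses $p_t$ to maximize $V(a^*(p, \pi_t), p, \pi_t)$ over $p \in \cP_0$, so
\begin{equation*}
V(r_t^{p_0}, p_0, \pi_t) = V(a^*(p_0, \pi_t), p_0, \pi_t) \leq V(a^*(p_t, \pi_t), p_t, \pi_t) = V(r_t, p_t, \pi_t),
\end{equation*}
making the ``$\pi_t$-version'' of term (II) non-positive.

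The last piece I would verify is that summing the conditional bias bounds from Theorem~\ref{thm:forecast-bias} yields the advertised $\tilde{O}\bigl(|\cY|\sqrt{|\cP_0||\cA|/T}\bigr)$ rate. Since the events in $\cE_4$ (fixed $p_0$, varying $(p,a)$) and those in $\cE_3$ (fixed $p_0$, varying $a$) each \emph{partition} the rounds, $\sum_E n_E \leq T$, and by Cauchy--Schwarz
\begin{equation*}
\sum_{E \in \cE_4} \EE[\alpha(E)] \leq \tilde{O}\!\left(\frac{|\cY|}{T}\sqrt{|\cE_4|\sum_E n_E}\right) = \tilde{O}\!\left(|\cY|\sqrt{|\cP_0||\cA|/T}\right),
\end{equation*}
and analogously for $\cE_3$ (with $|\cA|$ in place of $|\cP_0||\cA|$, dominated by the $\cE_4$ term). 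Putting (I), (II), (III) together then yields the bound. The main obstacle, I expect, is bookkeeping the direction of the inequality in the Gap reduction---in particular, making sure that the $\min_{h'}$ inside the Gap is correctly related to the Agent's \emph{negative} cross-swap-regret rather than its usual (positive) counterpart---and confirming that both Alignment inequalities point ``the right way'' so that $2M_1(\eint + \eneg)$ rather than $-2M_1(\eint + \eneg)$ appears in the bound.
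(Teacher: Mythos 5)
Your decomposition
\[
V(a_t^{p_0},p_0,y_t) - V(a_t,p_t,y_t) = \text{(I)} + \text{(II)} + \text{(III)}
\]
is exactly the paper's, with (II) corresponding to Lemma~\ref{lmm:regret-follow-recommendation} and (I)+(III) corresponding to Lemma~\ref{lmm:utility-difference}. The key reductions are also identical: for (I) and (III) you invoke Alignment and then decompose $\ugap$ as (3-arg $\max$ minus played) + (played minus 2-arg $\max$) $\leq \eint + \eneg$, which is precisely how the paper's proof of Lemma~\ref{lmm:utility-difference} writes $\ugap = \sum_{(p,r)}\ir^{\dagger}_{p,r} + \nr^{\dagger}_{p,r}$ before applying Assumptions~\ref{asp:no-internal-reg} and~\ref{asp:no-sec-info}; for (II) you make the same observation that the $\pi_t$-version is non-positive by the greedy choice rule and then charge the discrepancy to the conditional bias on $\cE_3$ and $\cE_4$, aggregating by Cauchy--Schwarz against the partition structure exactly as the paper does in the proof of Lemma~\ref{lmm:regret-follow-recommendation} (and in Theorem~\ref{thm:forecast-bias}). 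This is correct and essentially the same proof; the only place to be slightly careful, which you already flagged, is that (I) and (III) each contribute a full $M_1(\eint+\eneg)+M_2$, so the honest constant is a factor of $2$ larger than stated in the theorem — a cosmetic discrepancy that is also present in the paper's own two-lemma accounting.
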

Recall that the forecasting algorithm of \cite{NRRX23} runs in time polynomial in $|\cY|$ and the number of events we ask for low bias on, which in this case is a set of size polynomial in the problem parameters: $|\cE'| = O(|\cP_0||\cA|)$. 
The proof of Theorem \ref{thm:general} decomposes into two lemmas. The first lemma bounds the loss of the Principal when the Agent behaves in a very simple manner: he simply follows the recommendation of the Principal at every round. In this case, we can bound the regret of the Principal  by the conditional bias of the Principal's predictions:

\begin{restatable}[Regret is Low if Agent Follows Recommendations]
    {lemma}{lmmfollowrecommendation}\label{lmm:regret-follow-recommendation}Recall the definition of events
$$\cE_3 = \{\ind{a^*(p_0,\pi_t) = a}\}_{p_0 \in \cP_0,a\in \cA}\,,\ \ \  \ \  \cE_4 = \{p^*(\pi_t)= p, a^*(p,\pi_t) =a\}_{p\in \cP_0, a\in \cA}\,.$$
% Define the following collections of events:
% $$\cE_1 = \{\ind{(p_t,r_t) = (p,r)}\}_{p \in \cP_\cO, r \in \cA}, \ \ \  \ \ \cE_2 = \{\ind{(\pit,\rit) = (p,a)}\}_{p \in \cP_0, a \in \cA}$$
Let $\cE' = \cE_3\cup\cE_4$, the union of these events. If the Principal runs the forecasting algorithm from \cite{NRRX23} for events $\cE'$ and the choice rule in Algorithm \ref{alg:general}, and the Agent follows the Principal's recommendations, then we have:

    \begin{align*}
        \EEs{\pi_{1:T}}{\max_{p_0\in \cP} \frac{1}{T}\sum_{t=1}^T \left(V(r_t^{p_0}, {p_0}, y_t)-V(r_t, p_t, y_t)\right)} \leq  \tilde \cO\left(\abs{\cY}\sqrt{\frac{\abs{\cP_0}\abs{\cA}}{T}}\right) \,,
    \end{align*}
    where $r_t^{p_0} = \brr({p_0},\pip_t)$ and $r_t = \brr(p_t,\pip_t)$ are recommendations under constant mechanism $\sigma^{p_0}$ and the proposed mechanism respectively.
\end{restatable}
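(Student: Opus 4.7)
The plan is to exploit the fact that when the Agent follows recommendations, both the realized transcript $(p_t, r_t, a_t = r_t)$ and each counterfactual transcript $(p_0, r_t^{p_0}, a_t^{p_0} = r_t^{p_0})$ become deterministic functions of the forecast sequence $\pi_{1:T}$. Consequently, the Principal's regret reduces to comparing sums of the form $\sum_t V(\cdot, \cdot, y_t)$ against sums of the form $\sum_t V(\cdot, \cdot, \pi_t)$, and each such comparison is controlled by the conditional bias of the forecasts on an appropriate partition of time. Since $V$ takes values in $[-1,1]$ and is linear in the one-hot encoded $y$, we have $V(a, p, \pi_t) = \inner{\pi_t}{V(a, p, \cdot)}$ with $\norm{V(a, p, \cdot)}_\infty \leq 1$, so $\abs{\sum_{t \in S}(V(a, p, y_t) - V(a, p, \pi_t))} \leq \norm{\sum_{t \in S}(y_t - \pi_t)}_1$ for any $S \subseteq [T]$.

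Next I would partition time according to events in $\cE_3$ and $\cE_4$. Under the constant mechanism $\sigma^{p_0}$, partition $[T]$ by the value $a = \brr(p_0, \pi_t)$; for each $a$, the indicator $E^{(3)}_{p_0, a}(\pi_t) = \ind{\brr(p_0, \pi_t) = a}$ lies in $\cE_3$, and Definition~\ref{def:cal} yields $\norm{\sum_{t : E^{(3)}_{p_0, a}(\pi_t) = 1}(y_t - \pi_t)}_1 \leq T \alpha(E^{(3)}_{p_0, a})$. Summing over $a$ and applying the linearity observation replaces $y_t$ by $\pi_t$ in the aggregate:
\begin{equation*}
    \sum_t V(r_t^{p_0}, p_0, y_t) \leq \sum_t V(\brr(p_0, \pi_t), p_0, \pi_t) + T \sum_{a \in \cA} \alpha(E^{(3)}_{p_0, a}).
\end{equation*}
An identical argument with the partition induced by $\cE_4$ on the pair $(p^*(\pi_t), \brr(p^*(\pi_t), \pi_t))$ gives the matching lower bound
\begin{equation*}
    \sum_t V(r_t, p_t, y_t) \geq \sum_t V(\brr(p^*(\pi_t), \pi_t), p^*(\pi_t), \pi_t) - T \sum_{(p, a)} \alpha(E^{(4)}_{p, a}).
\end{equation*}

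Subtracting the two displays and using the defining property of $p^*$, namely that $V(\brr(p^*(\pi_t), \pi_t), p^*(\pi_t), \pi_t) \geq V(\brr(p_0, \pi_t), p_0, \pi_t)$ pointwise for every $p_0 \in \cP_0$ and every $t$, the forecast-utility terms cancel in favor of the realized mechanism, leaving
\begin{equation*}
    \max_{p_0 \in \cP_0} \frac{1}{T}\sum_t \left(V(r_t^{p_0}, p_0, y_t) - V(r_t, p_t, y_t)\right) \leq \max_{p_0 \in \cP_0} \sum_{a} \alpha(E^{(3)}_{p_0, a}) + \sum_{(p, a)} \alpha(E^{(4)}_{p, a}).
\end{equation*}

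Finally I would take expectation over $\pi_{1:T}$ and apply Theorem~\ref{thm:forecast-bias}. Because the events in $\cE_4$ partition $[T]$, letting $n^{(4)}_{p,a} = |\{t : E^{(4)}_{p,a}(\pi_t) = 1\}|$, Cauchy--Schwarz gives $\sum_{p,a}\sqrt{n^{(4)}_{p,a}} \leq \sqrt{\abs{\cP_0}\abs{\cA} T}$, and combined with the per-event rate $\tilde \cO(\abs{\cY}\sqrt{n}/T)$ this yields $\sum_{(p,a)} \EEs{\pi_{1:T}}{\alpha(E^{(4)}_{p,a})} \leq \tilde \cO(\abs{\cY}\sqrt{\abs{\cP_0}\abs{\cA}/T})$; the analogous Cauchy--Schwarz argument for each fixed $p_0$ gives $\sum_a \EEs{}{\alpha(E^{(3)}_{p_0, a})} \leq \tilde \cO(\abs{\cY}\sqrt{\abs{\cA}/T})$. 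The main subtlety is the outer $\max_{p_0}$ on the $\cE_3$ contribution: naively bounding $\max$ by the sum over $p_0$ would cost an extra factor of $\abs{\cP_0}$, so to recover the stated rate one invokes the high-probability tail implicit in the forecasting guarantee of \cite{NRRX23} and union-bounds over the $\abs{\cP_0}$ choices of $p_0$, which inflates the bound only by a $\sqrt{\log\abs{\cP_0}}$ factor absorbed into $\tilde \cO$.
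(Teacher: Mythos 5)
Your proof follows the paper's route exactly: bound the realized payoff $\sum_t V(r_t,p_t,y_t)$ from below via $\cE_4$-bias to pass from $y_t$ to $\pi_t$, invoke the pointwise optimality of $p^*(\pi_t)$ in $\cP_0$ under the forecast, and bound the benchmark payoff $\sum_t V(r_t^{p_0},p_0,y_t)$ from above via $\cE_{3,p_0}$-bias; the conditional-bias definition together with linearity of $V$ in the one-hot $y$ is exactly what the paper uses (writing $V(r,p,\hat y_{p,r})$ vs.\ $V(r,p,\pi_{p,r})$ on each block). Your Cauchy--Schwarz treatment of $\sum_{(p,a)}\alpha(E^{(4)}_{p,a})$ over the partition is also the same accounting used in the paper's Theorem~\ref{thm:stable-appendix}.

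One genuinely useful observation you add: you flag that the $\max_{p_0}$ sits \emph{inside} the outer expectation, so applying the per-$p_0$ expected-bias bound $\EEs{}{\alpha(\cE_{3,p_0})}=\tilde\cO(\abs{\cY}\sqrt{\abs{\cA}/T})$ does not directly bound $\EEs{}{\max_{p_0}\alpha(\cE_{3,p_0})}$. The paper's proof is silent on this step --- it states the per-$p_0$ bound and concludes --- so you are not wrong to call it out. Your proposed patch (a high-probability version of the bias guarantee plus a union bound over $\cP_0$) is the natural fix, but note it is not available from Theorem~\ref{thm:forecast-bias} as stated in the paper, which only gives an in-expectation bound on $\alpha(E)$; you would need either to cite a tail-bound form of the \cite{NRRX23} guarantee, or to take the cruder route $\max_{p_0}\alpha(\cE_{3,p_0})\le\sum_{p_0}\alpha(\cE_{3,p_0})$ and observe that the resulting extra $\sqrt{\abs{\cP_0}}$ factor (relative to the stated rate) is what one actually pays without concentration. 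Either way, your argument is no less rigorous than the paper's at this step, and you are more explicit about where the gap is.
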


The next lemma compares the Principal's cumulative utility under the Agent's actual behavior, compared to the utility he would have obtained had the Agent simply followed the Principal's recommendations. It states that under our behavioral assumptions on the Agent, these two quantities are similar, for both the mechanism run by the Principal and for any constant benchmark mechanism. Specifically, the utility obtained by the Principal under the run mechanism cannot be much smaller than the utility she would have obtained had the Agent followed her recommendations --- and for the constant benchmark mechanisms, the utility obtained by the Principal cannot be much \emph{larger} than the utility she would have obtained had the Agent followed her recommendations. Here ``much smaller'' and ``much larger'' are controlled by the parameters $\eint$ and $\eneg$ in the behavioral assumptions.

\begin{restatable}[Principal's Utility is Close to Agent Following Recommendations]{lemma}{lmmutilitydiff}\label{lmm:utility-difference}
    For any sequence of states of nature $y_{1:T}$ and sequnece of forecast $\pi_{1:T}$, under Assumptions \ref{asp:no-internal-reg}, \ref{asp:alignment} and \ref{asp:no-sec-info}, we have
     \begin{align*}
        \EEs{a_{1:T}}{\frac{1}{T}\sum_{t=1}^T  V(a_t,p_t, y_t)}\geq \frac{1}{T}\sum_{t=1}^T V(r_t,p_t, y_t) - M_1(\eint + \eneg) - M_2
    \end{align*}
     and for all $p_0\in \cP_0$,
    \begin{align*}
    \EEs{a^{p_0}_{1:T}}{\frac{1}{T}\sum_{t=1}^T V(a_t^{p_0},p_0,y_t)}\leq \frac{1}{T}\sum_{t=1}^T V(r^{p_0}_t,p_0,y_t) + M_1(\eint + \eneg) + M_2\,.
    \end{align*}
    
\end{restatable}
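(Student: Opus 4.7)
The plan is to prove both inequalities by first establishing that the expected Gap appearing in Assumption~\ref{asp:alignment} is at most $\eint+\eneg$, and then invoking Alignment itself to convert this bound on Agent utilities into a bound on Principal utilities. Thus the whole argument chains together Assumptions~\ref{asp:no-internal-reg}, \ref{asp:no-sec-info}, and \ref{asp:alignment} essentially in that order.

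The key observation is that in the definition of $\ugap$, the inner minimization over $h'$ can be pulled out of the max over $h$, because the two modification rules enter additively and share no arguments. Concretely,
\[
\ugap(y_{1:T},p_{1:T},r_{1:T},a_{1:T}) = \frac{1}{T}\max_{h}\sum_{t=1}^T U(h(p_t,r_t,a_t),p_t,y_t) \;-\; \frac{1}{T}\max_{h'}\sum_{t=1}^T U(h'(p_t,r_t),p_t,y_t).
\]
Since $\psi$ is deterministic and the Principal's mechanism is nonresponsive, both $p_{1:T}$ and $r_{1:T}$ are deterministic functions of the fixed forecast sequence $\pi_{1:T}$; in particular the second maximum is a deterministic quantity with respect to the Agent's random draws $a_{1:T}$. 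Taking expectations over $a_{1:T}$, Assumption~\ref{asp:no-internal-reg} bounds the first term by $\EE\bigl[\tfrac{1}{T}\sum_t U(a_t,p_t,y_t)\bigr]+\eint$, while Assumption~\ref{asp:no-sec-info} yields $-\max_{h'}\tfrac{1}{T}\sum_t U(h'(p_t,r_t),p_t,y_t) \leq \eneg - \EE\bigl[\tfrac{1}{T}\sum_t U(a_t,p_t,y_t)\bigr]$. Summing these two bounds, the Agent-utility terms cancel and we obtain $\EE_{a_{1:T}}[\ugap(y_{1:T},p_{1:T},r_{1:T},a_{1:T})] \leq \eint+\eneg$.

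The first inequality then follows by taking expectations in Assumption~\ref{asp:alignment} (noting that $V(r_t,p_t,y_t)$ does not depend on $a_{1:T}$) and rearranging. For the second inequality, I would simply rerun the entire argument with the constant mechanism $\sigma^{p_0}$ in place of the proposed mechanism: the counterfactual (constant-mechanism) clauses of Assumptions~\ref{asp:no-internal-reg} and~\ref{asp:no-sec-info} give $\EE[\ugap(y_{1:T},(p_0,\ldots,p_0),r^{p_0}_{1:T},a^{p_0}_{1:T})] \leq \eint+\eneg$, and the constant-mechanism direction of Assumption~\ref{asp:alignment} turns this into the required bound on Principal utilities.

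There is no real technical obstacle; the only point requiring care is bookkeeping around where the expectation over $a_{1:T}$ acts. In particular, one must confirm that the max over $h'$ is deterministic (so the no-negative-cross-swap-regret inequality applies unconditionally) and that $V(r_t,p_t,y_t)$ is deterministic (so the Principal-utility difference in Alignment splits cleanly under expectation). Once these are verified, the proof reduces to a short chain of inequalities.
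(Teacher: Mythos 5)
Your proposal is correct and follows essentially the same argument as the paper: decompose $\ugap$ into a swap-regret piece and a (negative) cross-swap-regret piece by adding and subtracting $\sum_t U(a_t,p_t,y_t)$, bound the first by Assumption~\ref{asp:no-internal-reg} and the second by Assumption~\ref{asp:no-sec-info} to get $\E[\ugap]\leq\eint+\eneg$, then apply Assumption~\ref{asp:alignment}. The only cosmetic difference is that the paper additionally groups terms by $(p_t,r_t)$-subsequence before recognizing $\ir^\dagger$ and $\nr^\dagger$, whereas you work directly with the global max, but the resulting inequalities are identical; and your separate treatment of the constant-mechanism case via the counterfactual clauses of the assumptions matches the paper's.
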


Together, these two lemmas combine to give the Theorem. 
% Before proving Lemma~\ref{lmm:utility-difference}, we introduce the following technical lemma first.
% \begin{lemma}\label{lmm:opt-prior-shift}
%     For any $p\in \cP$, any prior distributions $\pi, \pi'$ and any signal structure $\gamma$, we have
%     \begin{equation*}
%         \EEs{y\sim \pi}{\EEs{I\sim \gamma(\cdot,y)}{U(a_I^*(\gamma, p,\pi),p,)}} - \EEs{y\sim \pi'}{\EEs{I\sim \gamma(\cdot,y)}{U(a_I^*(\gamma, p,\pi'),p,)}} \leq \norm{\pi- \pi'}_1
%     \end{equation*}
% \end{lemma}
% \begin{proof}
%     We have 
%     \begin{align*}
%         &\EEs{y\sim \pi}{\EEs{I\sim \gamma(\cdot,y)}{U(a_I^*(\gamma, p,\pi),p,)}} \leq \EEs{y\sim \pi'}{\EEs{I\sim \gamma(\cdot,y)}{U(a_I^*(\gamma, p,\pi),p,)}} + \norm{\pi- \pi'}_1 \\
%         \leq &\max_{(a_I)_{I\in \cI}}\EEs{y\sim \pi'}{\EEs{I\sim \gamma(\cdot, y)}{U(a_I, p,y)}} + \norm{\pi- \pi'}_1\\
%         = & \EEs{y\sim \pi'}{\EEs{I\sim \gamma(\cdot,y)}{U(a_I^*(\gamma, p,\pi'),p,y)}} +  \norm{\pi- \pi'}_1\,.
%     \end{align*}
%     Done.
% \end{proof}

\section{Impossiblity Results}
\label{sec:imposs}

Throughout this paper, we have given policy regret bounds for the Principal under a variety of kinds of assumptions: behavioral assumptions for the Agent, and either alignment assumptions on the interaction, or else assumed access to a way of constructing optimal stable policies. In this Section we interrogate the necessity of those assumptions.

\subsection{Stable Policies Do Not Always Exist}
We avoided alignment assumptions by showing how to construct optimal ``stable'' policies in two important special cases: linear contracting settings, and binary state Bayesian persuasion settings. Might we be able to avoid alignment assumptions in full generality this way? Unfortunately not. The lemma below implies that it is sometimes not possible to construct a $(c, \epsilon, \beta, \gamma)$-stable policy oracle such that Theorem~\ref{thm:stable} guarantees vanishing policy regret. The counterexample involves a simple two-policy, two-action contract setting in which the Principal can get high regret to either of their policies, depending on the tiebreaking rule of the Agent.

\begin{restatable}{proposition}{nostable}\label{lem:impossstable}
There exists a Principal/Agent problem 
% and a policy set $\mathcal{P}$ 
in which for all priors $\pi$ and for all $c \leq \frac{1}{4}$, $\epsilon \geq 0$, $\gamma \leq \frac{1}{2}$ and $\beta > 0$, there is no $(c,\epsilon,\beta, \gamma)$-optimal stable policy under $\pi$.
% Therefore, there exist settings where the best regret guarantee obtainable via  Theorem~\ref{thm:stable} is non-vanishing. 
\end{restatable}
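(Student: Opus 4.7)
The plan is to exhibit a minimal two-policy, two-action Principal--Agent instance in which one policy creates an \emph{exact} Agent-utility tie between two actions whose Principal payoffs differ by a full unit, while the only alternative policy induces a strict Agent preference but yields Principal utility strictly below the optimistic tie-breaking payoff of the first. The tie will prevent stability for the first policy (neither approximate Agent-suboptimality nor Principal-indifference can hold), and the moderate payoff will place the second policy outside any $c$-optimality window for $c\le 1/4$, when compared against optimistic $\epsilon$-best responding under the first policy.

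Concretely, I would take $\cA=\{a_1,a_2\}$ and $\cP=\cP_0=\{p_1,p_2\}$, with utilities that are constant across $y\in\cY$ (so the analysis is prior-independent):
\[
U(a_1,p_1)=U(a_2,p_1)=0,\quad V(a_1,p_1)=1,\quad V(a_2,p_1)=0,
\]
\[
U(a_1,p_2)=0,\quad U(a_2,p_2)=1,\quad V(a_1,p_2)=V(a_2,p_2)=\tfrac{1}{2}.
\]
For $p_1$, optimistic tie-breaking gives $\brr(p_1,\pi)=a_1$, but the other action $a_2$ certifies the failure of stability: $U(a_2,p_1)-U(a_1,p_1)=0>-\beta$ for every $\beta>0$, and $V(a_1,p_1)-V(a_2,p_1)=1>\gamma$ for every $\gamma<1$, so $p_1$ is not $(\beta,\gamma)$-stable in the stated range. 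For $p_2$, the Agent strictly prefers $a_2$ by a unit margin, giving $(\beta,\gamma)$-stability for every $\beta\le 1$; however $V(\brr(p_2,\pi),p_2,\pi)=1/2$, while for every $\epsilon\ge 0$ the exact tie under $p_1$ ensures $a_1\in\cB(p_1,\pi,\epsilon)$ and hence $V(a^*(p_1,\pi,\epsilon),p_1,\pi)=1$. The resulting optimality gap is $1/2$, which exceeds any $c\le 1/4$, so $p_2$ fails the optimality condition as well.

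Combining the two verifications: for every $c\le 1/4$, $\epsilon\ge 0$, $\beta>0$, $\gamma\le 1/2$, neither policy in $\cP$ is $(c,\epsilon,\beta,\gamma)$-optimal stable under $\pi$, so no such policy exists. The only subtle point is that optimistic tie-breaking under $p_1$ must survive the relaxation from exact to $\epsilon$-best response, and this is precisely where the \emph{exact} equality $U(a_1,p_1)=U(a_2,p_1)$ matters: since $a_1$ remains a true best response, it stays in the $\epsilon$-best-response set for every $\epsilon\ge 0$, making the benchmark value $V(a^*(p_1,\pi,\epsilon),p_1,\pi)=1$ robust in $\epsilon$. There is no substantive technical obstacle beyond choosing the instance; the proof is a direct verification once the counterexample is in hand.
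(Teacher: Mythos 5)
Your proposal is correct and follows essentially the same route as the paper: a two-policy, two-action instance with state-independent payoffs, where one policy creates an exact Agent tie with a large Principal payoff gap (killing $(\beta,\gamma)$-stability for all $\beta>0$, $\gamma\le 1/2$) and the other policy is stable but falls short of the optimistic benchmark $V(a^*(p_1,\pi,\epsilon),p_1,\pi)$ by more than $c$. The only difference is cosmetic — the paper realizes the same payoff structure concretely as a two-contract linear contracting instance (values $1,2$, costs $\tfrac14,\tfrac12$, contracts $\tfrac14,\tfrac12$) while you write the utilities abstractly — so the verification is the same.
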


An implication of this is that it is not possible to extend our ``stable policy oracle'' approach to capture the entire scope of the Principal/Agent problem we study in this paper.

\subsection{A No-Secret-Information Assumption is Necessary}

\necessity*

Recall that in Section \ref{sec:behavior} we introduced two behavioral assumptions: A no contextual-swap-regret assumption (Assumption \ref{asp:no-internal-reg}), as well as a ``no-secret-information'' assumption (Assumption \ref{asp:no-correlation}). Assumption \ref{asp:no-internal-reg} was straightforwardly motivated as the ``rationality'' assumption in our model, but it was less clear that Assumption \ref{asp:no-correlation}---which informally asked that the Agent's actions be un-correlated with the states, conditional on the Principal's actions---was necessary. In this Section we establish the necessity of Assumption \ref{asp:no-correlation}.

This proposition can be interpreted as follows: against any Principal mechanism, either there is an Agent learning algorithm that achieves vanishing Contextual Swap Regret and ensures the Principal high regret, or it is impossible for any Agent learning algorithm to achieve vanishing Contextual Swap Regret. This second case is a degenerate case and could only occur if the Principal mechanism is allowed to output $\Omega(T)$ different policies, leading to an unfairly fine-grained context for the Agent to compete against. In this case, no contextual-swap-regret assumption will rule out all learning algorithms.

One might ask whether Assumption~\ref{asp:no-correlation} is unnecessarily strong for this task; in other words, it might be possible to prove a positive result when the Agent is constrained by Assumption~\ref{asp:no-internal-reg} and a weakened version of Assumption~\ref{asp:no-correlation}. To address this, we also prove that if the Agent is allowed to play any algorithm satisfying Assumption~\ref{asp:no-internal-reg}
and Assumption~\ref{asp:no-sec-info} (introduced in Section~\ref{sec:general}), which is similar to but weaker than Assumption~\ref{asp:no-correlation}, he can ensure the Principal high regret. 

Intuitively, Assumption~\ref{asp:no-correlation} asks for the Agent's actions to not be statistically correlated with the state of nature, while Assumption~\ref{asp:no-sec-info} asks for the Agent to not perform much better than the best fixed mapping from (policy, recommendation) to actions. We show in Lemma~\ref{lmm:weakerasp} that Assumption~\ref{asp:no-sec-info} is weaker than Assumption~\ref{asp:no-correlation}. However, it still asks for something quite strong from the Agent: when combined, Assumptions~\ref{asp:no-internal-reg} and~\ref{asp:no-sec-info} bound the performance of the Agent from above and below. This might seem to suggest that the Agent cannot do much other than play a standard no-regret algorithm.

However, we show that even when satisfying Assumptions~\ref{asp:no-internal-reg} and~\ref{asp:no-sec-info}, an Agent can leverage extra information he has to ensure that the Principal attains high regret. In a simple linear contract setting, we construct an Agent algorithm $\mathcal{L}$ which either plays a simple no-regret algorithm, or uses knowledge of the states of nature to play a sequence that gets him the same utility and ensures the Principal larger utility. Depending on the Principal's actions and the states of nature, $\mathcal{L}$ selects which sub-algorithm to run. We show that for every Principal mechanism, there must be some state of nature sequence under which $\mathcal{L}$ picks the worst option for the Principal, leading to non-vanishing policy regret.

For this additional result to hold, we only need there to exist some Agent learning algorithm which not only gets vanishing Contextual Swap Regret, but also gets vanishing negative regret. Many well-known no-regret algorithms are known to have this guarantee \cite{gofer16}.

\begin{restatable}[Necessity of Assumption~\ref{asp:no-correlation}, Strengthened]{proposition}{strongnecessity}\label{lem:lb-hard}

There exists a simple linear contract setting where, for any Principal mechanism $\sigma$, one of the following must hold:
\begin{itemize}
    \item No learning algorithm $\cL^{*}$ can satisfy Assumption~\ref{asp:no-internal-reg} with $\eint = o(1)$ and Assumption~\ref{asp:no-sec-info} with $\eneg = o(1)$ for all possible sequence of states $y_{1:T}\in \cY^T$.
    \item There exists a learning algorithm $\cL^{*}$ satisfying Assumption~\ref{asp:no-internal-reg} with $\eint = o(1)$ and Assumption~\ref{asp:no-sec-info} with $\eneg = o(1)$ for all possible sequence of states $y_{1:T}\in \cY^T$ and a sequence of states $\bar y_{1:T} \in \cY^T$ for which any mechanism $\sigma$ achieves non-vanishing regret for the Principal, i.e., $\text{PR}(\sigma,\cL^{*},\bar y_{1:T}) = \Omega(1)$.
\end{itemize}
\end{restatable}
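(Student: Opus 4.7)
The proof will extend the construction used for Proposition~\ref{lem:lb-easy} so that the Agent's algorithm additionally satisfies Assumption~\ref{asp:no-sec-info}. I will work in a small linear contract instance with binary state space $\cY=\{0,1\}$ and a carefully chosen action set in which there exist two actions, call them $a_{+}$ and $a_{-}$, with identical costs and outcome value functions $v(o(a_{+},0))=v(o(a_{-},1))=0$, $v(o(a_{+},1))=v(o(a_{-},0))=1$. The key structural feature is that on any state sequence that is balanced between $0$ and $1$, $a_{+}$ and $a_{-}$ give the Agent identical time-averaged utility for every linear contract $p$, but the realized sequence of outcome values (and hence Principal utility) is very different depending on whether the Agent plays $a_{+}$ on $y_{t}=1$ rounds (``constructive'' timing, producing outcome value $1$) or on $y_{t}=0$ rounds (``destructive'' timing, producing outcome value $0$).

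Given any Principal mechanism $\sigma$, I split on the number of distinct $(p,r)$ pairs $\sigma$ can produce across state histories. If this quantity grows faster than any $o(T)$ rate, then the per-context subsequences on which Assumption~\ref{asp:no-internal-reg} requires vanishing swap regret are too fine-grained, and the first bullet of the proposition holds for information-theoretic reasons (exactly as in Proposition~\ref{lem:lb-easy}). Otherwise, I construct $\cL^{*}$ as follows: for each observed $(p,r)$ context, $\cL^{*}$ runs an internal copy of a base algorithm $A_{\text{base}}$ which attains both vanishing swap regret and vanishing \emph{negative} external regret (such algorithms exist, as noted in the discussion preceding the proposition; see \citep{gofer16}). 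The output of $A_{\text{base}}$ is a candidate action $\tilde a_{t}$. On rounds in which $\tilde a_{t}\in\{a_{+},a_{-}\}$, $\cL^{*}$ applies an \emph{exploit swap}: it replaces $\tilde a_{t}$ by whichever of $\{a_{+},a_{-}\}$ is ``destructively'' aligned with the upcoming state, with the caveat that this replacement is only triggered when the history prefix $y_{1:t-1}$ matches a prescribed pattern, making the swap compatible with a specific adversarial state sequence $\bar y$ chosen later.

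Verifying the assumptions for every state sequence is the central step. For Assumption~\ref{asp:no-internal-reg}, on every $(p,r)$-subsequence the base $A_{\text{base}}$ has no swap regret, and each exploit swap exchanges two actions that have \emph{exactly equal} time-averaged Agent utility on that subsequence (by the symmetry of the construction and the balanced nature of the swap-triggered state subsequence). Hence the best constant benchmark on the subsequence is unchanged and the contextual swap regret remains $o(1)$. For Assumption~\ref{asp:no-sec-info}, observe that the mapping $h^{*}(p_{t},r_{t})=\tilde a_{t}$ obtained by following $A_{\text{base}}$'s output is itself representable as a fixed element of $h:\cP_{0}\times\cA\mapsto\cA$ up to a $o(1)$ correction driven by $A_{\text{base}}$'s negative-regret bound, and the exploit swap preserves the Agent's cumulative utility (not merely per round) because it only exchanges equal-utility actions. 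Hence $\cL^{*}$'s utility never exceeds the best fixed mapping by more than $o(1)$, giving $\eneg=o(1)$.

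Finally, for a mechanism $\sigma$ not in the degenerate first case, I choose $\bar y$ so that on the subsequence where $\sigma$ plays $(p_{t},r_{t})$ with $\tilde a_{t}\in\{a_{+},a_{-}\}$, the states are balanced and the prescribed swap pattern is met; this forces the realized outcome value to be $0$ on those rounds, driving the Principal's per-round utility to $0$ there. Against the same $\bar y$, a benchmark constant mechanism $\sigma^{p_{0}}$ can be selected (using the fact that $\sigma$'s recommendations, being history-dependent, induce a non-trivial subsequence) so that the Agent's recommendation under $\sigma^{p_{0}}$ does not coincide with a swap-eligible action, and thus $A_{\text{base}}$ plays untampered there, yielding strictly positive Principal utility. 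This separation gives $\mathrm{PR}(\sigma,\cL^{*},\bar y)=\Omega(1)$. The main obstacle is ensuring that the exploit swap is simultaneously compatible with vanishing per-context swap regret on \emph{all} sequences (not just $\bar y$) while being effective against \emph{all} mechanisms in the non-degenerate case; this requires maintaining the invariant that swaps only happen between actions with provably equal cumulative per-context utility, and that the combinatorial structure of $\sigma$ always leaves a subsequence where this invariant is activated.
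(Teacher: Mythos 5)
Your plan has a genuine gap, and it sits exactly where the difficulty of this proposition lies. The ``exploit swap'' --- having the Agent play whichever of $a_{+},a_{-}$ is \emph{destructively} timed against the realized state --- is incompatible with Assumption~\ref{asp:no-internal-reg} whenever the deployed contract pays a share $p$ bounded away from $0$. Contextual swap regret is measured on the subsequence of rounds on which a given action was played: if the Agent plays $a_{+}$ precisely on the rounds where its outcome value is $0$, then on that subsequence the swap $a_{+}\mapsto a_{-}$ gains $p$ per round, so the swap regret is $\Omega(p)$, not $o(1)$. Your justification (``the two actions have equal time-averaged utility on a balanced subsequence'') compares two \emph{fixed} actions, which is not what swap regret compares against a state-correlated strategy. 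The paper's construction never has the Agent act destructively; the Agent only \emph{withholds beneficial correlation}: off the targeted constant policy he plays a state-independent no-regret algorithm (so the Principal's value collapses to $o(T)$ by an independence argument), and on the targeted constant policy he plays a state-correlated strategy that is per-round optimal (work/shirk under the contract $0.5$ is exactly indifferent on the good state), hence has zero swap regret. Relatedly, your treatment of Assumption~\ref{asp:no-sec-info} misses the real danger: it is the \emph{cooperative}, state-correlated play (which you need under the counterfactual benchmark to create the utility gap) that can outperform every fixed map $h(p,r)$ and thus produce $\Omega(1)$ negative cross-swap regret. The paper handles this by calibrating the cooperative subroutines to have essentially zero advantage --- exact indifference at contract $0.5$, and the $4/5$--$1/5$ mixture $b^{*}$ at contract $0.6$ --- and this calibration is precisely what the ``strengthened'' proposition adds over Proposition~\ref{lem:lb-easy}; your proposal does not address it.

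A second unresolved point is the separation between the deployed mechanism and the benchmark. Policy regret is measured against constant mechanisms with the \emph{same} Agent algorithm run in the counterfactual world; if $\sigma$ itself is (or mimics) a constant mechanism, your Agent sees the identical $(p,r)$ context in both worlds and behaves identically, so you get no gap against that benchmark and must produce one against a different benchmark policy --- your sketch (``$\sigma$'s recommendations, being history-dependent, induce a non-trivial subsequence'') does not cover this case. The paper resolves it with a round-one dichotomy: the Agent keys his willingness to cooperate to the first realized state ($M\Rightarrow$ cooperate only with the constant policy $(0.5,\textit{work})$, $H\Rightarrow$ only with $(0.6,\textit{work})$), and since $\sigma$ cannot place probability greater than $1/2$ on both openings, the adversary picks the first state accordingly; an averaging argument over i.i.d.\ continuations, together with a concentration (``balanced'') condition that also triggers a safe fallback to the no-regret subroutine so the assumptions hold on \emph{all} sequences, then yields a specific sequence $\bar y_{1:T}$ with $\Omega(1)$ policy regret. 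Finally, your first-bullet case split (``too many distinct $(p,r)$ pairs implies no algorithm can have vanishing contextual swap regret'') is asserted rather than proved; the paper simply conditions on whether an algorithm with the required guarantees against $\sigma$ exists, which is all that is needed.
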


We show our impossibility result in a linear contract setting, the same setting we show positive results for in Section~\ref{sec:linear} when the Agent is further constrained by Assumption~\ref{asp:no-correlation}. Therefore, when keeping all else fixed, we prove that Assumption~\ref{asp:no-correlation} makes the difference between a tractable and intractable setting. Note that this does not imply that a Principal can never achieve vanishing regret without Assumption~\ref{asp:no-correlation}. Indeed in Section~\ref{sec:general} we show that Assumption~\ref{asp:no-sec-info} (which is weaker than Assumption~\ref{asp:no-correlation}) suffices if it is paired with an \emph{Alignment} assumption (Assumption \ref{asp:alignment}). However, Alignment assumptions are different in character to our behavioral assumptions: they constrain the sequence of states of nature, and simply rule out the kinds of examples we use in proving our lower bound statements. Thus we can also view this proposition as demonstrating the necessity of the Alignment condition in general.

%Indeed, in Section~\ref{sec:general}, we show that it is possible to achieve this guarantee via a different composition of assumptions, which together are incomparable to those made in Section~\ref{sec:behavior}. From Section~\ref{sec:general}, we know that it is possible for a Principal to ensure vanishing policy regret against an Agent not satisfying assumption~\ref{asp:no-correlation}, as long as additional assumptions are made, notably Assumption ~\ref{asp:alignment} (Alignment). Indeed, the impossibility result we give does not satisfy Alignment assumptions. Therefore this example can also be seen as motivation for the necessity of the Alignment conditions in Section~\ref{sec:general}. 

\section{Discussion and Conclusion}

We have shown how to give strong \emph{policy regret} bounds for a Principal interacting with a long-lived, non-myopic Agent, in an adversarial, prior free setting. In place of common prior assumptions, we have relied on strictly weaker behavioral assumptions, in the style of \cite{camara2020mechanisms}. However, unlike \cite{camara2020mechanisms}, our mechanisms are efficient in the cardinality of the state space. Additionally, for several important special cases, including the linear contracting setting that has been focal in both the economic and computer science contract theory literature, we do not need any other assumptions (in particular avoiding the ``Alignment'' assumption of \cite{camara2020mechanisms})---which means that our setting is a strict relaxation of the common prior setting. 

In fact, our ability to avoid Alignment assumptions is not specific to linear contracting settings (or binary state Bayesian Persuasian settings) --- but is proven for any class of interactions for which we can derive algorithms implementing ``stable policy oracles''. We gave given two such examples in this paper, but surely more exist. Understanding which kinds of interactions admit stable policy oracles---and which do not---seems important to understand, towards being able to flexibly solve repeated Principal/Agent problems in an assumption minimal way.

\bibliographystyle{plainnat}
\bibliography{ref}

\appendix
\section{Table of Notation}
\begin{table}[H]
    \begin{tabular}{p{4cm}p{10.5cm}}
    \toprule
        \centering\textbf{Symbol} & \textbf{Description}\\
    \midrule
        \centering$\cP$& Policy space.\\
        \centering$\cA$& Action space.\\
        \centering$\cP_0 \subset \cP$ & Benchmark policy set.\\
        \centering$p \in \cP$& Principal's policy.\\
        \centering $a \in \cA$ & Agent's action.\\
        \centering $\mu \in \Delta (\cA)$ & Distribution over Agent's actions.\\
        \centering$r \in \cA$ & Principal's recommended action for the Agent.\\
         \centering$y \in \cY$ & State of nature.\\
        \centering$\hat{y}$ & Empirical distribution over states of nature over a particular subsequence. \\
        \centering$\pi_t\in \Delta(\cY)$ & forecast at time $t$.\\
        \centering $V(a,p,y)$ & Principal's utility.\\
        \centering $U(a,p,y)$ & Agent's utility.\\
        % \centering \tocheck{$\sigma^\dagger: (\cY\times \cA\times \cP)^*\mapsto \cP$} & Principal mechanism.\\
        \centering$\brp(\pi)$ & Principal best response assuming a shared prior $\pi$. \\
        \centering$\brr(p,\pi)$ & Agent best response assuming a prior $\pi$, breaking ties in favor of the Principal's utility. \\
        \centering $\cB(p,\pi,\epsilon)$ & the set of all $\epsilon$-best responses for the Agent.\\
        \centering $\brr(p,\pi,\epsilon)$ & the utility-maximizing action for the Principal amongst the Agent's $\epsilon$-best responses to $p$.\\
        \centering$\alpha$ & conditional bias parameter. \\
        \centering$\eint$ & swap regret upper bound. \\
        \centering$\eneg$ & negative regret upper bound. \\
    \bottomrule
    \end{tabular}
    \caption{Summary of game-theoretic notation used in this article.}
    \label{tab:notation}
\end{table}
\section{Proofs from Section~\ref{sec:stable}}\label{sec:stable-proof}

\thmstable*

% Recall the following collections of events:
% $$\cE_1 = \{\ind{(p_t,r_t) = (p,r)}\}_{p \in \cP_\cO, r \in \cA}, \ \ \  \ \ \cE_2 = \{\ind{(\pit,\rit) = (p,a)}\}_{p \in \cP_0, a \in \cA}$$
% $$\cE_3 = \{\ind{a^*(p_0,\pi_t) = a}\}_{p_0 \in \cP_0,a\in \cA}$$
% Let $\cE = \cE_1 \cup \cE_2 \cup \cE_3$, the union of these events.
Let $E_{1,p,r}$ denote the event of $\ind{(p_t,r_t) = (p,r)}$ for all $(p,r)$, $E_{2,p,a}$ denote the event of $\ind{(\pit,\rit) = (p,a)}$ for all $(p,a)$ and $E_{3,p_0,a}$ denote the event of $\ind{a^*(p_0,\pi_t) = a}$ for all $a$.
Let $\cE_{3,p_0} = \{\ind{a^*(p_0,\pi_t) = a}\}_{a\in \cA}$.
Let $\alpha(\cE_1) = \sum_{E\in \cE_1} \alpha(E)$, $\alpha(\cE_2) = \sum_{E\in \cE_2} \alpha(E)$ and $\alpha(\cE_{3,p_0}) = \sum_{E\in \cE_{3,p_0}} \alpha(E)$.
We introduce the following generalized version of Theorem~\ref{thm:stable}.
\begin{restatable}{theorem}{thmstableapp} \label{thm:stable-appendix}
Assume that the Agent's learning algorithm $\cL$ satisfies the behavioral assumptions~\ref{asp:no-internal-reg} and \ref{asp:no-correlation} and 
that the forecasts $\pip_{1:T}$  have conditional bias $\alpha$ conditional on the events $\cE$. 
Given access to an optimal stable policy oracle $\cO_{c,\epsilon,\beta,\gamma}$, by running Algorithm~\ref{alg:general-stable}, which uses $\cO_{c,\epsilon,\beta,\gamma}$ as the choice rule, the Principal can achieve policy regret
\begin{align*}
    &\text{PR}(\cO_{c,\epsilon,\beta,\gamma}, \pi_{1:T}, \cL,y_{1:T}) \\
    = & c+ 3\alpha(\cE_1) + 2\alpha(\cE_2) + \max_{p_0\in \cP_0}\alpha(\cE_{3,p_0}) + \gamma + \frac{\eint +\cO(\sqrt{\abs{\cP_\cO}\abs{\cA}/T})+2\alpha(\cE_1)  }{\beta} \\
    & + \frac{\eint + \cO(\sqrt{\abs{\cA}/T})+2\max_{p_0\in \cP_0}\alpha(\cE_{3,p_0}) }{\epsilon}\,.
\end{align*}
    
\end{restatable}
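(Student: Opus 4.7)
}
The plan is to bound the policy regret against any fixed benchmark $p_0\in\cP_0$ via a three-step decomposition:
\textbf{(A)} a mechanism-side lower bound relating $\E[\tfrac{1}{T}\sum_t V(a_t,p_t,y_t)]$ to the ``idealized'' utility $\tfrac{1}{T}\sum_t V(r_t,p_t,\pi_t)$ obtained by treating the forecasts as priors;
\textbf{(B)} a benchmark-side upper bound relating $\E[\tfrac{1}{T}\sum_t V(a_t^{p_0},p_0,y_t)]$ to the optimistic value $\tfrac{1}{T}\sum_t V(a^*(p_0,\pi_t,\epsilon),p_0,\pi_t)$; and
\textbf{(C)} the optimal stable policy oracle guarantee, which gives pointwise $V(r_t,p_t,\pi_t)\geq V(a^*(p_0,\pi_t,\epsilon),p_0,\pi_t)-c$. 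Chaining (A)--(C) and maximizing over $p_0\in\cP_0$ yields the claimed bound. The three event families play complementary roles: $\cE_1$ supports the $(y_t\leftrightarrow\pi_t)$ change of measure on $(p_t,r_t)$-subsequences for (A); $\cE_3$ plays the analogous role on $a^*(p_0,\pi_t)$-subsequences for (B); and $\cE_2$ enters when the $\max_{p_0}$ of the benchmark upper bound is realized via the per-round optimistic pair $(\pit,\rit)$.

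For (A) the plan is to partition time by $T_{p,r}=\{t:(p_t,r_t)=(p,r)\}$ for $(p,r)\in\cP_\cO\times\cA$. On each such subsequence, Assumption~\ref{asp:no-correlation} lets us replace $\sum V(a_t,p,y_t)$ and $\sum U(a_t,p,y_t)$ by their counterparts under the empirical action distribution $\hat\mu_{p,r}$ at cost $\cO(\sqrt{n_{p,r}})$, and the conditional-bias bound $\alpha(E_{1,p,r})$ for $E_{1,p,r}\in\cE_1$ then exchanges $y_t$ for $\pi_t$. Because $p_t$ is the oracle's output, stability under $\pi_t$ forces every $a\neq r_t$ to be either $\beta$-worse in $U$ or within $\gamma$ in $V$ (under $\pi_t$). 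A single \emph{global} ``swap-to-$r_t$'' modification rule combined with Assumption~\ref{asp:no-internal-reg} bounds $\tfrac{1}{T}\sum_t[U(r_t,p_t,\pi_t)-U(a_t,p_t,\pi_t)]$ by $\eint+\cO(\sqrt{|\cP_\cO||\cA|/T})+2\alpha(\cE_1)$; Markov's inequality on the $\beta$ threshold then bounds the expected fraction of ``$\beta$-bad-$U$'' rounds by this quantity divided by $\beta$, and on the complementary rounds stability forces the per-round $V$-loss to be at most $\gamma$. Step (B) is structurally identical but uses $T^{p_0}_r=\{t:a^*(p_0,\pi_t)=r\}$, $\cE_{3,p_0}$, and threshold $\epsilon$ in place of $\beta$; the definition of $a^*(\cdot,\cdot,\epsilon)$ as the $V$-maximizer over $\cB(p_0,\pi_t,\epsilon)$ absorbs the residual mass inside $\cB(p_0,\pi_t,\epsilon)$. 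Step (C) is pointwise in $\pi_t$; the $\cE_2$-bias terms appear when rewriting $\max_{p_0}\tfrac{1}{T}\sum_t V(a^*(p_0,\pi_t,\epsilon),p_0,\pi_t)=\tfrac{1}{T}\sum_t V(\rit,\pit,\pi_t)$ and transporting $V(\rit,\pit,\cdot)$ between $\pi_t$ and $y_t$ to align with realized constant-mechanism payoffs.

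The main obstacle I anticipate is the bookkeeping: we need the $\cO(1/\sqrt{n_{p,r}})$ per-subsequence costs of Assumption~\ref{asp:no-correlation} to sum by Cauchy--Schwarz to $\cO(\sqrt{|\cP_\cO||\cA|/T})$ and $\cO(\sqrt{|\cA|/T})$ on the two sides, the conditional-bias costs to sum to $\alpha(\cE_1)$, $\alpha(\cE_2)$, and $\alpha(\cE_{3,p_0})$ respectively, and the swap-regret penalty to appear as a single $\eint$ rather than $|\cP_\cO||\cA|\cdot\eint$. The key observation enabling the last point is that while Markov/stability naturally operates per subsequence, the aggregate $U$-gap $\tfrac{1}{T}\sum_t[U(r_t,p_t,\pi_t)-U(a_t,p_t,\pi_t)]$ is pointwise nonnegative (because $r_t$ is the Agent's best response under $\pi_t$), so a single global swap-to-recommendation rule suffices to control the total $U$-gap and the per-subsequence Markov conclusions still hold.
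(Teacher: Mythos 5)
Your proposal is correct and is, at its core, the same argument the paper uses: decompose by $(p_t,r_t)$ and $r_t^{p_0}$ subsequences, use Assumption~\ref{asp:no-correlation} to swap the realized action sequence for its per-subsequence empirical distribution $\hat\mu$ at a Cauchy--Schwarz cost of $\cO(\sqrt{|\cP_\cO||\cA|/T})$ (resp.\ $\cO(\sqrt{|\cA|/T})$), use conditional bias on $\cE_1$ (resp.\ $\cE_{3,p_0}$) to transport between $y_t$ and $\pi_t$, control the aggregate $U$-gap to the recommendation by the swap-regret bound (a single global $h(p,r,a)=r$ rule suffices since the per-$t$ gap to the exact best response under $\pi_t$ is nonnegative, exactly as you note), then apply Markov with thresholds $\beta$ (mechanism side, using stability to cap the $V$-loss by $\gamma$ off the rare $\beta$-bad mass) and $\epsilon$ (benchmark side, using the definition of $a^*(p_0,\pi_t,\epsilon)$ to cap the $V$-gain), and finally close with the pointwise oracle guarantee.

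The one place your write-up wobbles is the role of $\cE_2$. The paper's proof routes the comparison through the intermediary $V(\rit,\pit,y_t)$, so it pays $\alpha(\cE_2)$ twice (once in each of two transport steps). Your chain (A)$\to$(C)$\to$(B) compares $V(r_t,p_t,\pi_t)$ directly to $V(a^*(p_0,\pi_t,\epsilon),p_0,\pi_t)$ pointwise in $\pi_t$ for each fixed $p_0$ (the $\max_{p_0}$ sits outermost, so there is never a need to pass through the realized $(\pit,\rit)$), and therefore never needs $\cE_2$ at all --- the sentence claiming $\cE_2$-bias "enters" your chain is not actually used. This is harmless: since $\alpha(\cE_2)\geq 0$, a bound without the $\cE_2$ terms still establishes the theorem as stated; you would just be proving something marginally stronger than what you set out to prove. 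Everything else in your plan matches the paper's proof step-for-step.
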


\begin{proof}[Proof of Theorem~\ref{thm:stable}]
    By Theorem~\ref{thm:forecast-bias}, we have
\begin{align*}
    \EEs{\pi_{1:T}}{\alpha(E)}\leq O\left(\frac{|\cY|\ln(|\cY||\cE|T)}{T} + \frac{|\cY|\sqrt{\ln(|\cY||\cE|T)|\{t : E(\pi_t) = 1|\}}}{T}\right) \,.
\end{align*}
Hence, we have:
\begin{align*}
    &\EEs{\pi_{1:T}}{\alpha(\cE_1)} \leq \cO(\frac{|\cY|\ln(|\cY|(|\cP_{\cO}| +|\cP_0|)\abs{\cA}T)}{T} +\abs{\cY} \sqrt{\frac{\ln(|\cY|(|\cP_{\cO}| +|\cP_0|)\abs{\cA}T)\abs{\cP_\cO}\abs{\cA}}{T}})\,,
    \\
    & \EEs{\pi_{1:T}}{\alpha(\cE_2)} \leq  \cO(\frac{|\cY|\ln(|\cY|(|\cP_{\cO}| +|\cP_0|)\abs{\cA}T)}{T}+\abs{\cY}\sqrt{\frac{\ln(|\cY|(|\cP_{\cO}| +|\cP_0|)\abs{\cA}T)\abs{\cP_0}\abs{\cA}}{T}})\,,\\
   & \EEs{\pi_{1:T}}{\alpha(\cE_{3,p_0})} \leq  \cO(\frac{|\cY|\ln(|\cY|(|\cP_{\cO}| +|\cP_0|)\abs{\cA}T)}{T}+\abs{\cY}\sqrt{\frac{\ln(|\cY|(|\cP_{\cO}| +|\cP_0|)\abs{\cA}T)\abs{\cA}}{T}})\,.
\end{align*}
By taking expectation over $\pi_{1:T}$ and plugging these values into Theorem~\ref{thm:stable-appendix}, we have
\begin{align*}
    &\text{PR}(\sigma^\dagger, \cL,y_{1:T}) \leq\tilde \cO\left(c +\gamma +\sqrt{\abs{\cP_0}\abs{\cA}/T} + \frac{\eint + \abs{\cY}\sqrt{\abs{\cP_\cO}\abs{\cA}/T}}{\beta} + \frac{\eint + \abs{\cY}\sqrt{\abs{\cA}/T}}{\epsilon}\right)\,.
\end{align*}
Hence we are done with proof of Theorem~\ref{thm:stable}.
\end{proof}

\subsection{Proof of Theorem~\ref{thm:stable-appendix} }
\thmstableapp*
\begin{proof}
    For any sequence of states $y_{1:T}$ and sequence of forecasts $\pi_{1:T}$, and any constant policy $p_0\in \cP_0$, for any realized sequence of actions $a_{1:T}$ and $a^{p_0}_{1:T}$,
    we can decompose the (realized) regret compared with constant mechanism $\sigma^{p_0}$ as 
\begin{align*}
    &\frac{1}{T}\sum_{t=1}^T\left(V(a_t^{p_0},{p_0},y_t)- V(a_t,p_t ,y_t)\right)\\
    =& \frac{1}{T}\left(\underbrace{\sum_{t=1}^T(V(\rit,\pit,y_t)- V(r_t,p_t,y_t) )}_{(a)} + \underbrace{\sum_{t=1}^T(V(r_t,p_t,y_t)-V(a_t,p_t ,y_t))}_{(b)}\right.\\
    &\left.+\underbrace{\sum_{t=1}^T(V(a_t^{p_0},{p_0},y_t)- V(\rit,\pit,y_t))}_{(c)} \right)
\end{align*}
\begin{enumerate}
    \item We bound term (a) using the fact that $p_t$ is a $(c,\epsilon,\beta,\gamma)$-optimal stable policy under $\pi_t$.
    According to the definition of stable policy oracle (Definition~\ref{def:stabilized}), we have $V(r_t, p_t,\pip_t)\geq V(\rit,\pit,\pip_t) - c$. Then since $\pip_{1:T}$ has $\alpha$ bias conditional on $(p_t, r_t)$ and $(\pit,\rit)$, we have
    \begin{align*}
        \frac{1}{T}\sum_{t=1}^T V(r_t,p_t, y_t) \geq & \frac{1}{T}\sum_{t=1}^T V(r_t,p_t, \pip_t) - \alpha(\cE_1)\tag{$\cE_1$-bias}\\
        \geq &\frac{1}{T}\sum_{t=1}^TV(\rit,\pit,\pip_t)  -c-\alpha(\cE_1)\tag{stabilization}\\
        \geq &\frac{1}{T}\sum_{t=1}^TV(\rit,\pit,y_t)  -c-\alpha(\cE_2)-\alpha(\cE_1)\,.\tag{$\cE_2$-bias}
    \end{align*}
    Therefore, we have
    \begin{equation*}
        \text{Term (a)} \leq (c+\alpha(\cE_2)+\alpha(\cE_1))T\,.
    \end{equation*}

    \item We bound term (c) using the fact that $V(\rit,\pit,\pi_t)$ is the optimal optimistic achievable utility of the Principal.

For constant mechanism $\sigma^{p_0}$, let $t\in (r)$ denote $t: r^{p_0}_t = r$. Let $n^{p_0}_r = \sum_{t\in (r)}1 $ denote the number of rounds in which $r$ is recommended. 
Let 
\begin{align*}
    b^{p_0}_r = \frac{1}{n^{p_0}_r}\max \left(\abs{\sum_{t:r_t=r} U(a_t^{p_0}, p, y_t) - U(\hat \mu^{p_0}_{r}, p, y_t)}, \abs{\sum_{t:r_t=r} V(a_t^{p_0}, p, y_t) - V(\hat \mu^{p_0}_{r}, p, y_t)} \right)\,.
\end{align*}
By Assumption~\ref{asp:no-correlation}, we have $\EEs{\cL}{b^{p_0}_r} = \cO(\frac{1}{\sqrt{n^{p_0}_r}})$. 
Let $\hat \mu^{p_0}_r = \frac{1}{n^{p_0}_r}\sum_{t\in (r)} a^{p_0}_t$ denote the empirical distribution of Agent's action in the subsequence where $r$ is the recommendation.
Let 
\begin{equation*}
    \ir^{p_0}_{r} = \max_{h:\cA\mapsto \cA}\sum_{t\in (r)} \left(U(h(a^{p_0}_t),p_0, y_t) - U(a^{p_0}_t,{p_0},y_t)\right) 
\end{equation*}
denote the swap regret in this subsequence and let $\ir^{p_0} = \sum_{r\in \cA}\ir^{p_0}_{r}$ denote the swap regret for $a^{p_0}_{1:T}$.
Then we have
    \begin{align}
        &\sum_{t\in (r)}U(\hat \mu^{p_0}_r,p_0,\pip_t) \nonumber\\
        \geq &\sum_{t\in (r)}U(\hat \mu^{p_0}_r,p_0,y_t) - \alpha(E_{3,p_0,r})  T \tag{$\cE_3$-bias} \nonumber\\
        \geq &\sum_{t\in (r)}U(a_t^{p_0},p_0,y_t) -n^{p_0}_r b^{p_0}_r - \alpha(E_{3,p_0,r}) T \tag{no secret info} \nonumber\\
        \geq &\sum_{t\in (r)}U( r,p_0,y_t)-\ir^{p_0}_{r}-n^{p_0}_r b^{p_0}_r-\alpha(E_{3,p_0,r})  T\tag{definition of $\ir^{p_0}_{r}$}\nonumber\\
        \geq &\sum_{t\in (r)}U(r,{p_0},\pip_t)-\ir^{p_0}_{r}-n^{p_0}_r b^{p_0}_r -2\alpha(E_{3,p_0,r}) T\label{eq:u-gap}\,, 
    \end{align}
    where the last inequality again uses our bound on $\cE_3$-bias. 
    For a random action $a\sim \hat \mu^{p_0}_r$, let $F_t$ denote the event that $U(a,{p_0},\pip_t)< U(r,{p_0},\pip_t)-\epsilon$. We have
    \begin{align*}
    &\sum_{t\in (r)}U(\hat \mu^{p_0}_r,p_0,\pip_t)\\
        =&\sum_{t\in (r)}\left(\Pr_{a\sim \hat \mu^{p_0}_r}(F_t) \EEc{U(a,{p_0},\pip_t)}{F_t} +\Pr_{a\sim \hat \mu^{p_0}_r}(\neg F_t) \EEc{U(a,{p_0},\pip_t)}{\neg F_t}\right)\\
        \leq & \sum_{t\in (r)}\left(\Pr_{a\sim \hat \mu^{p_0}_r}(F_t) (U(r,{p_0},\pip_t)-\epsilon) +\Pr_{a\sim \hat \mu^{p_0}_r}(\neg F_t) U(r,{p_0},\pip_t)\right)\,.
    \end{align*}
    By combining with Eq~\eqref{eq:u-gap}, we have
    \begin{align}
        \sum_{t\in (r)}\Pr_{a\sim \hat \mu^{p_0}_r}(F_t)\leq \frac{\ir^{p_0}_{r}+n^{p_0}_r b^{p_0}_r +2\alpha(E_{3,p_0,r}) T}{\epsilon}\,.\label{eq:sum-prob}
    \end{align}
    We also have:
    \begin{align}
    V(\hat \mu^{p_0}_r,p_0,\pip_t) \leq& 
        \Pr_{a\sim \hat \mu^{p_0}_r}(\neg F_t)\max_{\tilde r\in \cB( p_0,\pip_t,\epsilon)}V(\tilde r,p_0,\pip_t) + \Pr_{a\sim \hat \mu^{p_0}_r}(F_t)\nonumber\\
        \leq& \max_{\tilde r\in \cB( p_0,\pip_t,\epsilon)}V(\tilde r,p_0,\pip_t) + \Pr_{a\sim \hat \mu^{p_0}_r}(F_t)\,.\label{eq:v-max}
    \end{align}
    By combining Eqs~\eqref{eq:sum-prob} and \eqref{eq:v-max}, we have
    \begin{align}
        \sum_{t\in (r)}\max_{\tilde r\in \cB( p_0,\pip_t,\epsilon)}V(\tilde r,p_0,\pip_t) \geq \sum_{t\in (r)}V(\hat \mu^{p_0}_r,p_0,\pip_t) - \frac{\ir^{p_0}_{r}+n^{p_0}_r b^{p_0}_r +2\alpha(E_{3,p_0,r}) T}{\epsilon}\,.\label{eq:rounds-not-in-eps-ball}
    \end{align}

    % Since $r = \argmax_{\tilde r} U(\tilde r,{p_0},\pip_t)$, we have $U(\hat \mu^{p_0}_r,{p_0},\pip_t)\leq U(r,{p_0},\pip_t)$ for all $t\in (r)$.
    % Therefore, on this subsequence, the number of rounds where $U(\hat \mu^{p_0}_r,{p_0},\pip_t)< U(r,{p_0},\pip_t)-\epsilon= \max_{\tilde r}U(\tilde r,{p_0},\pip_t)-\epsilon$ will be upper bounded by $\frac{\ir^{p_0}_{r} +\sqrt{n^{p_0}_r}+2\abs{\cA}\alpha T}{\epsilon}$.
    % Formally, we have
    % \begin{equation}
    %     \sum_{t\in (r)}\ind{\hat \mu^{p_0}_r\notin \cB({p_0},\pip_t,\epsilon)}\leq \frac{\ir^{p_0}_{r} +\sqrt{n^{p_0}_r}+2\alpha T}{\epsilon}\label{eq:rounds-not-in-eps-ball}
    % \end{equation}
    Then we have
    \begin{align*}
        &\sum_{t=1}^T V(\rit,\pit,y_t) \\
        \geq & \sum_{t=1}^T V(\rit,\pit,\pip_t)- \alpha(\cE_2) T \tag{$\cE_2$-bias}\\
        = & \sum_{t=1}^T \max_{\tilde p\in \cP}\max_{\tilde r\in \cB(\tilde p,\pip_t,\epsilon)}V(\tilde r,\tilde p,\pip_t)- \alpha(\cE_2) T\tag{definition of $(\pit,\rit)$}\\
        \geq & \sum_{t=1}^T \max_{\tilde r\in \cB( p_0,\pip_t,\epsilon)}V(\tilde r,p_0,\pip_t)- \alpha(\cE_2) T\\
        = & \sum_{r\in \cA}\sum_{t\in (r)}\max_{\tilde r\in \cB( p_0,\pip_t,\epsilon)}V(\tilde r,p_0,\pip_t)- \alpha(\cE_2) T\\
        \geq & \sum_{r\in \cA}\left(\sum_{t\in (r)}V(\hat \mu^{p_0}_r,p_0,\pip_t)- \frac{\ir^{p_0}_{r}+n^{p_0}_r b^{p_0}_r +2\alpha(E_{3,p_0,r}) T}{\epsilon}\right)- \alpha(\cE_2) T\tag{applying Eq~\eqref{eq:rounds-not-in-eps-ball}}\\
        \geq & \sum_{r\in \cA}\sum_{t\in (r)}V(\hat \mu^{p_0}_r,p_0,y_t)- \frac{\ir^{p_0} + \sum_{r\in \cA}n^{p_0}_r b^{p_0}_r +2\alpha(\cE_{3,p_0}) T}{\epsilon}-\alpha(\cE_{3,p_0}) T- \alpha(\cE_2) T\tag{$\cE_3$-bias}\\
        \geq &  \sum_{t} V(a_t^{p_0},p_0,y_t) - \sum_{r\in \cA}n^{p_0}_r b^{p_0}_r- \frac{\ir^{p_0} + \sum_{r\in \cA}n^{p_0}_r b^{p_0}_r +2\alpha(\cE_{3,p_0}) T}{\epsilon} -  (\alpha(\cE_{3,p_0}) + \alpha(\cE_2)) T\,.\tag{no secret info}
    \end{align*}
    Hence, we have
    \begin{align*}
        \text{Term (c)}=& \sum_{t=1}^T V(a_t^{p_0},{p_0},y_t)- V(\rit,\pit,y_t) \\
        \leq& \sum_{r\in \cA}n^{p_0}_r b^{p_0}_r+ \frac{\ir^{p_0} + \sum_{r\in \cA}n^{p_0}_r b^{p_0}_r +2\alpha(\cE_{3,p_0}) T}{\epsilon} +  (\alpha(\cE_{3,p_0}) + \alpha(\cE_2)) T\,.
    \end{align*}
    By taking expectation over the randomness of the Agent's learning algorithm $\cL$, we have
    \begin{equation*}
        \EEs{\cL}{\text{Term (c)}}\leq \left( \cO(\sqrt{\abs{\cA}/T})+ \frac{\eint + \cO(\sqrt{\abs{\cA}/T})+2\alpha(\cE_{3,p_0}) }{\epsilon} +  \alpha(\cE_{3,p_0}) + \alpha(\cE_2)\right)T\,.
    \end{equation*}
    \item We bound term (b) by proving that the number of rounds in which the Agent does not follow the recommendation $r_t$ is small using the fact that that $p_t$ is $(\beta,\gamma)$-stable under $\pi_t$.
    % Specifically, we have $\frac{1}{T}\abs{\{t|a_t\neq r_t\}} \leq \frac{2(\ir+\frac{2}{\delta}\abs{\cR}\alpha T)}{\Delta_c \beta}$. 

    For proposed mechanism, let $t\in (p,r)$ denote $t:(p_t,r_t) = (p,r)$. Let $n_{p,r} = \sum_{t=1}^T\ind{t\in (p,r)}$ denote the number of rounds in which $(p_t,r_t) = (p,r)$.
    \begin{align*}
    b_{p,r} = \frac{1}{n_{p,r}}\max \left(\abs{\sum_{t\in (p,r)} U(a_t, p, y_t) -U(\hat \mu_{p,r}, p, y_t)}, \abs{\sum_{t\in (p,r)} V(a_t, p, y_t) -V(\hat \mu_{p,r}, p, y_t)} \right)\,.
\end{align*}
By Assumption~\ref{asp:no-correlation}, we have $\EEs{\cL}{b_{p,r}} = \cO(\frac{1}{\sqrt{n_{p,r}}})$. 
    Let $\hat \mu_{p,r} = \frac{1}{n_{p,r}}\sum_{t\in (p,r)} a_t
    $ denote the empirical distribution of the actions on this subsequence.
    Let $\hat y_{p,r} = \frac{1}{n_{p,r}}\sum_{t\in (p,r)} y_t$ denote the empirical distribution of states in these rounds and $\pip_{p,r} = \frac{1}{n_{p,r}}\sum_{t\in (p,r)} \pip_t$ denote the empirical distribution of the forecasts.
    Let
    \begin{equation*}
        \ir_{p,r} = \max_{h:\cA\mapsto \cA}\sum_{t\in (p,r)} \left(U(h(a_t),p_t, y_t) - U(a_t,p_t,y_t)\right)
    \end{equation*}
    denote the swap regret for the Agent over the subsequence in which $(p_t,r_t) = (p,r)$
    and let $\ir = \sum_{(p,r)\in \cP_\cO\times \cA} \ir_{p,r}$ denote the total swap regret (for the action sequence $a_{1:T}$).

In the rounds in which $(p_t,r_t) = (p,r)$, similar to Eq~\eqref{eq:u-gap}, we have
    \begin{align*}
        &\sum_{t\in (p,r)}U(\hat \mu_{p,r},p,\pip_t) \\
        \geq &\sum_{t\in (p,r)}U(\hat \mu_{p,r},p,y_t) - \alpha(E_{1,p,r})  T \tag{$\cE_1$-bias} \nonumber\\
        \geq &\sum_{t\in (p,r)}U(a_t,p,y_t) -n_{p,r}b_{p,r} - \alpha(E_{1,p,r})T \tag{no secret info} \nonumber\\
        \geq &\sum_{t\in (p,r)}U( r,p,y_t)-\ir_{p,r}-n_{p,r}b_{p,r}  -\alpha(E_{1,p,r})  T\tag{definition of $\ir^{p_0}_{r}$}\nonumber\\
        \geq &\sum_{t\in (p,r)}U( r,p,\pi_t)-\ir_{p,r}-n_{p,r}b_{p,r}  -2\alpha(E_{1,p,r})  T\tag{$\cE_1$-bias}\,. 
    \end{align*}
    Since $p$ is $(\beta,\gamma)$-stable under $\pi_t$ for all $t\in (p,r)$, we have $U(a,p,\pip_t)\leq U(r,p,\pip_t)-\beta$ or $V(a,p,\pip_t)\geq V(r,p,\pip_t)-\gamma$ for all $a\neq r$ in $\cA$.
    Let $\rho_{p,r,t} = \Pr_{a\sim \hat \mu_{p,r}}(U(a,p,\pip_t)\leq U(r,p,\pip_t)-\beta)$ denote the probability of $U(a,p,\pip_t)\leq U(r,p,\pip_t)-\beta$ for $a\sim \hat \mu_{p,r}$.
    By combining with $U(a,p,\pip_t)\leq U(r,p,\pip_t)$ for all $a\in \cA$, we have 
    \begin{equation}
        \sum_{t\in (p,r)}\rho_{p,r,t} \leq \frac{\ir_{p,r}+n_{p,r}b_{p,r}  +2\alpha(E_{1,p,r})  T}{\beta}\,.\label{eq:rho}
    \end{equation}
    Therefore, we have
    \begin{align*}
        \text{Term (b)} =& \sum_{t=1}^T (V(r_t,p_t,y_t)-V(a_t,p_t ,y_t))\\
        \leq&\sum_{(p,r)\in \cP_\cO\times \cA} \sum_{t\in (p,r)} (V(r,p,y_t)-V(\hat \mu_{p,r},p ,y_t)) + \sum_{(p,r)\in \cP_\cO\times \cA} n_{p,r}b_{p,r}\tag{no secret info}\\
        \leq &\sum_{(p,r)\in \cP_\cO\times \cA} \sum_{t\in (p,r)} V(r,p,\pi_t)-V(\hat \mu_{p,r},p ,\pi_t) + 2\alpha(\cE_1) T +\sum_{(p,r)\in \cP_\cO\times \cA} n_{p,r}b_{p,r}\tag{$\cE_1$-bias}\\
        \leq& \gamma T + \sum_{(p,r)\in \cP_\cO\times \cA}\sum_{t\in (p,r)}\rho_{p,r,t} + 2\alpha(\cE_1) T+\sum_{(p,r)\in \cP_\cO\times \cA} n_{p,r}b_{p,r}\tag{stability of $p$}\\
        \leq & \gamma T + \frac{\ir +\sum_{(p,r)\in \cP_\cO\times \cA}n_{p,r}b_{p,r}  +2\alpha(\cE_1)  T}{\beta}+ 2\alpha(\cE_1) T+\sum_{(p,r)\in \cP_\cO\times \cA} n_{p,r}b_{p,r}\,.\tag{Apply Eq~\eqref{eq:rho}}
    \end{align*}  
    Hence, by taking the expectation over the randomness of the Agent's algorithm $\cL$, we have
    \begin{align*}
        \EEs{\cL}{\text{Term (b)}} \leq \left(\gamma + \frac{\eint +\cO(\sqrt{\abs{\cP_\cO}\abs{\cA}/T})+2\alpha(\cE_1)  }{\beta}+ 2\alpha(\cE_1)+\cO(\sqrt{\abs{\cP_\cO}\abs{\cA}/T})\right)T\,.
    \end{align*}
\end{enumerate}
Now we have the Principal's regret upper bounded by 
\begin{align*}
    &\textrm{PR}(\cO_{c,\epsilon,\beta,\gamma},\pi_{1:T}, \cL, y_{1:T})\\
    \leq & c+\alpha(\cE_2)+\alpha(\cE_1) 
    \\& \gamma + \frac{\eint +\cO(\sqrt{\abs{\cP_\cO}\abs{\cA}/T})+2\alpha(\cE_1)  }{\beta}+ 2\alpha(\cE_1)+\cO(\sqrt{\abs{\cP_\cO}\abs{\cA}/T})
    \\ & \cO(\sqrt{\abs{\cA}/T})+ \frac{\eint + \cO(\sqrt{\abs{\cA}/T})+2\max_{p_0\in \cP_0}\alpha(\cE_{3,p_0}) }{\epsilon} +  \max_{p_0\in \cP_0}\alpha(\cE_{3,p_0}) + \alpha(\cE_2)
    \\
    = & c+ 3\alpha(\cE_1) + 2\alpha(\cE_2) + \max_{p_0\in \cP_0}\alpha(\cE_{3,p_0}) + \gamma + \frac{\eint +\cO(\sqrt{\abs{\cP_\cO}\abs{\cA}/T})+2\alpha(\cE_1)  }{\beta} \\
    & + \frac{\eint + \cO(\sqrt{\abs{\cA}/T})+2\max_{p_0\in \cP_0}\alpha(\cE_{3,p_0}) }{\epsilon} \,.
    % \\
    % \leq & c+(\abs{\cP_0} +\abs{\cP_\cO})\abs{\cA}\alpha \\&+\gamma + \frac{\eint +\sqrt{\abs{\cP_\cO}\abs{\cA}/T}+2\abs{\cP_\cO}\abs{\cA} \alpha }{\beta}+ 2\abs{\cP_\cO}\abs{\cA}\alpha +\sqrt{\abs{\cP_\cO}\abs{\cA}/T} \\
    % &+ \sqrt{{\abs{\cA}}/{T}}+ \frac{\eint + \sqrt{\abs{\cA}/T}+2\abs{\cA}\alpha }{\epsilon} +  (\abs{\cP_0} +1)\abs{\cA}\alpha\\
    % \leq& c+3(\abs{\cP_0}+\abs{\cP_\cO})\abs{\cA}\alpha +\gamma + \frac{\eint +\sqrt{\abs{\cP_\cO}\abs{\cA}/T}+2\abs{\cP_\cO}\abs{\cA} \alpha }{\beta}\\&
    % + \frac{\eint + \sqrt{\abs{\cA}/T}+2\abs{\cA}\alpha }{\epsilon} +2\sqrt{\abs{\cP_\cO}\abs{\cA}/T}\,.\\
\end{align*}
Since $\abs{\cP_0}$ and $\abs{\cA}$ are $\Theta(1)$, we have
\begin{align*}
    &\textrm{PR}(\cO_{c,\epsilon,\beta,\gamma},\pi_{1:T}, \cL, y_{1:T}) \\
    = &\cO(c+ \abs{\cP_\cO}\alpha + \gamma +\frac{\eint +\sqrt{\abs{\cP_\cO}/T}+\abs{\cP_\cO}\alpha }{\beta} + \frac{\eint + \sqrt{1/T}+\alpha }{\epsilon} +\sqrt{\abs{\cP_\cO}/T})\,.
\end{align*}
\end{proof}

\section{Proofs from Section \ref{sec:linear}}

\begin{lemma}
For any $\pi$, and for any Agent actions $a_{1}$ and $a_{2}$ s.t. $a_{1} \neq a_{2}$, there is a unique linear contract $p$ such that $$U(a_{1},p,\pi) = U(a_{2},p,\pi)$$
\label{lem:tie}
\end{lemma}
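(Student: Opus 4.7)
The plan is to reduce the equation $U(a_1,p,\pi) = U(a_2,p,\pi)$ to a linear equation in the single variable $p$ and observe that its coefficient structure forces at most one solution.

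First I would substitute the explicit form of the Agent's expected utility from Eq~\eqref{eq:expectedU}, namely $U(a,p,\pi) = p \cdot f(\pi,a) - c(a)$, into both sides. After rearranging, the condition $U(a_1,p,\pi) = U(a_2,p,\pi)$ becomes the linear equation
\[
p\bigl(f(\pi,a_1) - f(\pi,a_2)\bigr) \;=\; c(a_1) - c(a_2).
\]

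Next I would argue about the two terms on each side. Since $a_1 \neq a_2$ and the minimum cost gap $\Delta_c > 0$ (by the assumption in Section~\ref{sec:linear} that all actions have distinct costs), the right-hand side is nonzero. Thus the equation is a genuine linear equation in $p$ with nonzero constant term, which has at most one real solution: if $f(\pi,a_1) = f(\pi,a_2)$ there is no solution at all, and otherwise the unique solution is $p = \bigl(c(a_1) - c(a_2)\bigr)/\bigl(f(\pi,a_1) - f(\pi,a_2)\bigr)$.

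There is no real obstacle here; the only subtlety is interpreting ``unique'' correctly. The statement should be read in the ``at most one'' sense (the closed-form solution above need not lie in the contract space $[0,1]$), and this is exactly what is needed downstream in Lemma~\ref{lem:ties}, whose counting argument only uses that each pair of actions contributes at most one tying contract.
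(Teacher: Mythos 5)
Your proof is correct and takes essentially the same route as the paper: substitute the linear form $U(a,p,\pi)=p\,f(\pi,a)-c(a)$, rearrange to a linear equation in $p$, and use $\Delta_c>0$ to rule out more than one solution. One small point in your favor: you correctly treat the case $f(\pi,a_1)=f(\pi,a_2)$ as yielding \emph{no} tying contract, so the conclusion is properly ``at most one,'' whereas the paper's own proof loosely calls this case a ``contradiction'' and then asserts a unique solution always exists; the ``at most one'' reading you give is the one Lemma~\ref{lem:ties} actually relies on.
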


\begin{proof}
In order for two Agent actions to give the same payoff, we need a $p$ such that

\begin{align*}
& pf(\pi, a_{1}) - c(a_{1}) = pf(\pi, a_{2}) - c(a_{2})  \\
& p = \frac{c(a_{1}) - c(a_{2})}{f(\pi, a_{1}) - f(\pi, a_{2})}
\end{align*}

If $f(\pi, a_{1}) - f(\pi, a_{2}) \neq 0$ this expression is well defined and has a unique solution, and therefore there can be at most one $p$ for which this is true. If $f(\pi, a_{1}) = f(\pi, a_{2})$, then 
\begin{align*}
& pf(\pi, a_{1}) - c(a_{1}) = pf(\pi, a_{1}) - c(a_{2})  \\
& \Leftrightarrow c_{a_{1}} = c_{a_{2}}
\end{align*}

This is a contradiction, as we assume all costs are separated by $\Delta_{c} \geq 0$. Therefore this expression must be well defined and have a unique solution.
\end{proof}

\lemties*

\begin{proof}
To show this, we will first show that for any Agent action $a^{*}$, there are at most $2$ policies for which $a^{*}$ a non-unique best response. To see this, let's consider the smallest linear contract $p_{1}$ such that $a$ is a best response. Let us also consider the largest linear contract $p_{2}$ such that $a$ is a best response. We will show that for all $p$ such that $p_{1} < p < p_{2}$, $a$ is a unique best response. 

As $a^{*}$ is a best response to $p_{1}$, we have

\begin{align*}
& U(a^{*},p_{1},\pi) = \max_{a \in \cA}U(a,p_{1},\pi) \\
& \Leftrightarrow p_{1} f(\pi, a^{*}) - c(a^{*}) = \max_{a \in \cA}(p_{1} f(\pi, a) - c(a) )
\end{align*}

Similarly,

\begin{align*}
& p_{2} f(\pi, a^{*}) - c(a^{*}) = \max_{a \in \cA}(p_{2} f(\pi, a) - c(a) )
\end{align*}

Combining these, we get that, for any $x \in [0,1]$: 

\begin{align*}
& (x p_{1} + (1-x) p_{2})\cdot f(\pi, a^{*}) - c(a^{*}) = x \cdot \max_{a \in \cA}(p_{1} f(\pi, a) - c(a) ) + (1-x) \cdot \max_{a \in \cA}(p_{2} f(\pi, a) - c(a) )
\end{align*}

Now, consider any action $\bar a \neq a^{*}$, evaluated on the linear contract defined by  $(x p_{1} + (1-x) p_{2})$. Assume for contradiction that $\bar a$ is optimal on this contract. Then we have that 

\begin{align*}
& (x p_{1} + (1-x) p_{2})\cdot f(\pi, \bar a) - c(\bar a) = x(p_{1}f(\pi, \bar a) -  c(\bar a)) + (1-x)(p_{2}f(\pi, \bar a) -  c(\bar a))\\
& \geq x \cdot \max_{a \in \cA}(p_{1} f(\pi, a) - c(a) ) + (1-x) \cdot \max_{a \in \cA}(p_{2} f(\pi, a) - c(a) ) \\
\end{align*}

Therefore, it must be the case that $\bar a$ is optimal at $p_{1}$ and $p_{2}$. So $a^{*}$ and $\bar a$ have the same Agent utility at $2$ different contracts. But this is a contradiction of Lemma~\ref{lem:tie}. Therefore any action can be non-uniquely optimal at at most $2$ contracts, the smallest contract at which it is optimal and the largest contract at which it is optimal. At $p = 0$, the optimal action must be the cheapest action, which by our assumption is unique. Therefore there is at least one action that is uniquely optimal at its smallest optimal contract and can only be non-uniquely optimal at $1$ contract. So the total number of contracts with multiple optimal actions is at most
$$\frac{2(|\cA| - 1) + 1}{2}$$
The largest integer value this could be is $|\cA| - 1$, completing our proof.

\end{proof}

\lemgap*

\begin{proof}
Consider any action $a \neq a^{*}$, and the linear contract $\hat p$ such that $U(a, \hat p, \pi) = U(a^{*}, \hat p, \pi)$. Then,

\begin{align*}
& \hat p f(\pi, a) - c(a) = \hat p f(\pi, a^{*}) - c(a^{*}) \\
& \Rightarrow  \hat p (f(\pi, a) - f(\pi, a^{*})) =  c(a)- c(a^{*}) \\
& \Rightarrow  |\hat p (f(\pi, a) - f(\pi, a^{*}))| =  |c(a)- c(a^{*})| \geq \Delta_c \\
& \Rightarrow  |f(\pi, a) - f(\pi, a^{*})| \geq  \Delta_c \\
\end{align*}

At linear contract $\bar p$, the payoff of $a$ is 

\begin{align*}
& U(a, \bar p, \pi) = \bar p f(\pi, a) - c(a) \\
& = (\bar p - \hat p) f(\pi, a) + \hat p f(\pi, a) - c(a) \\
& = (\bar p - \hat p) f(\pi, a) + \hat p f(\pi, a^{*}) - c(a^{*}) = (\bar p - \hat p) f(\pi, a) + U(a^{*},\hat p, \pi) \tag{By the definition of $\hat p$} \\
\end{align*}

Furthermore, we know that 

\begin{align*}
& U(a^{*},\bar p, \pi) = \bar p f(\pi, a^*) - c(a^*) \\
& = (\bar p - \hat p) f(\pi, a^*) + \hat p f(\pi, a^{*}) - c(a^{*}) = (\bar p - \hat p) f(\pi, a^*) + U(a^*, \hat p, \pi)
\end{align*}

Combining these, we get that 

\begin{align*}
& U(a^{*},\bar p, \pi) - U(a, \bar p, \pi) = (\bar p - \hat p) (f(\pi, a^*) - f(\pi, a)) \\
&= |(\bar p - \hat p)| \cdot |(f(\pi, a^*) - f(\pi, a))| \tag{As $a^*$ is optimal at $\bar p$, and thus this difference cannot be negative}\\
& \geq \beta \cdot \Delta_c \\
\end{align*}

\end{proof}

\lemmonotonicity*

\begin{proof}
Let $a_{1} = \max_{a \in \cB(p_{1},\pi,0)}f(\pi,a)$, let $a_{1,\epsilon} = \max_{a \in \cB(p_{1},\pi,\epsilon)}f(\pi,a)$ and let \\$a_{2,\epsilon} = \max_{a \in \cB(p_{2},\pi,\epsilon)}f(\pi,a)$. Note that $a_{1}$ is the Agent's exact best response action under $p_{1}$ which is best for the Principal, while $a_{1,\epsilon}$ and $a_{2,\epsilon}$ are the Agent's $\epsilon$-approximate best response actions which are best for the Principal, under their respective policies. This, we can restate our lemma as proving that for any two linear contracts $p_{1}$, $p_{2}$ s.t. $p_{1} \geq p_{2}$, $f(\pi, a_{1,\epsilon}) \geq f(\pi, a_{2,\epsilon})$.

%We will prove that, for every point $b_{2}$ in $\cB(p_{2}, \pi, \epsilon)$, there is a point $b_{1}$ in $\cB(p_{1}, \pi, \epsilon)$ such that $f(\pi, b_{1}) \geq f(\pi, b_{2})$. 

Assume for contradiction that this is not the case, and $f(\pi, a_{1,\epsilon}) < f(\pi, a_{2,\epsilon})$. Then it must be that $a_{2,\epsilon} \notin \cB(p_{1},\pi,\epsilon)$, as otherwise we would have that 

\begin{align*}
& f(\pi, a_{2,\epsilon}) > f(\pi, a_{1,\epsilon})  \\
& \geq f(\pi, a_{2,\epsilon}) \tag{By the fact that $a_{2,\epsilon} \in \cB(p_{1},\pi,\epsilon)$ and $a_{1,\epsilon}$ is optimal over all $\cB(p_{1},\pi,\epsilon)$} \\
\end{align*}
This is a contradiction. 

As $a_{2,\epsilon} \notin \cB(p_{1},\pi,\epsilon)$, $a_{2,\epsilon}$ is not an $\epsilon$-approximate best response to $p_{1}$. So we have that
%$\{max}_{r \in \cB(p_{1},\pi,\epsilon)}f(\pi,r) < \max_{r \in \cB(p_{2},\pi,\epsilon)}f(\pi,r) = f(\pi,r_{2,\epsilon})$. Thus 

%this is not the case, and there is some point $b_{2} \in \cB(p_{2}, \pi, \epsilon)$ s.t. $f(\pi, b_{2}) > f(\pi, b_{1})$, $\forall b_{1} \in  \cB(p_{1}, \pi, \epsilon)$. 

%Let $b_{2} = \textit{argmax}_{b \in \cB(p_{2},\pi,\epsilon)}f(\pi,b)$, let $b_{1} = \textit{argmax}_{b \in \cB(p_{1},\pi,\epsilon)}f(\pi,b)$, and let $\hat b = \textit{argmax}_{b \in \cB(p_{1}, \pi, \epsilon)(p_{1}f(\pi, b) - c_{b})}$. 

%We will assume for contradiction that $f(\pi, b_{1}) < f(\pi, b_{2})$. 

%Let us first consider the case where $\hat b \in \cB(p_{1},\pi,\epsilon)$. Then, we will lower bound the payment to the Principal under $p_{1}$ with an $\epsilon$-approximately best responding Agent as follows:

%\begin{align*}
%&\max_{b \in \cB(p_{1},\pi,\epsilon)}f(\pi,b) \\
%& \geq f(\pi,\hat b) = \max_{b \in %\cB(p_{2},\pi,\epsilon)}f(\pi,b)
%\end{align*}
 
%Now we will consider the case where $\hat b \notin \cB(p_{1},\pi,\epsilon)$. This means that $\hat b$ is not an approximately best response for the Agent under $p_{1}$. So we have that

\begin{align*}
&p_{1}f(\pi,a_{1}) - c(a_{1}) > p_{1}f(\pi,a_{2,\epsilon}) - c(a_{2,\epsilon}) + \epsilon \\
& \Leftrightarrow  c(a_{2,\epsilon}) - c(a_{1}) - \epsilon > p_{1}f(\pi,a_{2,\epsilon}) - f(\pi,a_{1}))  
\end{align*}

Furthermore, as $a_{2,\epsilon}$ is an $\epsilon$-approximate best response under $p_{2}$, we have that 

\begin{align*}
&p_{2}f(\pi,a_{1}) - c(a_{1}) \leq p_{2}f(\pi,a_{2,\epsilon}) - c(a_{2,\epsilon}) + \epsilon \\
& \Leftrightarrow c(a_{2,\epsilon}) - c(a_{1}) - \epsilon \leq p_{2}(f(\pi,a_{2,\epsilon}) - f(\pi,a_{1})) \\
\end{align*}

Finally, we note that 

\begin{align*}
& f(\pi,a_{2,\epsilon}) - f(\pi,a_{1}) \geq f(\pi,a_{2,\epsilon}) - f(\pi,a_{1,\epsilon}) \tag{As $a_{1,\epsilon}$ is maximizing over a larger set} \\
& > 0 \tag{By our assumption}
\end{align*}

Putting these together, we get that 

\begin{align*}
& p_{2}(f(\pi,a_{2,\epsilon}) - f(\pi,a_{1})) > p_{1}(f(\pi,a_{2,\epsilon}) - f(\pi,a_{1}))  \\
%& \Rightarrow p_{2}(f(\pi,r_{2,\epsilon}) - f(\pi,r_{1,\epsilon})) > p_{1}(f(\pi,r_{2,\epsilon}) - f(\pi,r_{1}))  \\
& \Rightarrow p_{2} > p_{1} \\
\end{align*}

We have derived a contradiction, completing our proof.
\end{proof}

\section{More Details and Proofs from Section~\ref{sec:bayes}}\label{app:bayes}
\subsection{Discretization details}\label{app:bayes-discrete}
Recall the explicit representation of the signal scheme in Eq~\eqref{eq:signal}. Note that each signal scheme selected under our construction of $p'$ selects two strategies in $\cS$ and each distribution $p'(\cdot|y)$ is supported only on these two strategies.
Now we want to discretize $p'(\cdot|y)$.
For some discretization precision $\delta\ll \beta$ with $\frac{1}{\delta}\in \NN_+$, let $\varphi_{i,j,k_0,k_1}$ for $i,j\in [n], k_0,k_1\in \{0,1,\ldots, \frac{1}{\delta}\}$  represent the signal scheme with 
\[\varphi(s_i|y=1) = k_0\delta \,, \qquad \varphi(s_i|y=0) = k_1\delta\,.\]
Then we let $\cP_\delta = \{\varphi_{i,j,k_0,k_1}| i,j\in [n], k_0,k_1\in  \{0,1,\ldots, \frac{1}{\delta}\}\}$  denote the set of all such signal schemes. We have $\abs{\cP_\delta} = \cO(\frac{n^2}{\delta^2})$. We will return the signal scheme $p_\delta(\mu) \in P_\delta$ closest to $p'(\mu)$. 
Recall that our definition of $p'(\mu)$ induces a convex combination of two points in $\text{Ex}'$, saying $\mu = \tau \cdot \mu_{k}' + (1-\tau) \cdot \mu'_{l}$. Then the explicit form of $p'(\mu)$ is
    \begin{align*}
         p(s_{i_{k}}|y=1) = \frac{\tau \cdot \mu_{k}'}{\mu}\,, \quad &p(s_{i_{l}}|y=1) = \frac{(1-\tau) \cdot \mu_{l}'}{\mu}\\
          p(s_{i_{k}}|y=0) = \frac{\tau \cdot (1-\mu_{k}')}{1-\mu}\,, \quad &p(s_{i_{l}}|y=0) = \frac{(1-\tau) \cdot (1-\mu_{l}')}{1-\mu}\,.
    \end{align*}
    By rounding these two probabilities, we obtain a discretized signal scheme $p_{\delta}(\mu)$ with
    \begin{align*}
         p_\delta(s_{i_{k}}|y=1) = \delta \cdot \argmin_{k\in \{0,\ldots,1/\delta\}} \abs{k\delta -p(s_{i_{k}}|y=1) }\,, \\
         p_\delta(s_{i_{k}}|y=0) =\delta \cdot \argmin_{k\in\{0,\ldots,1/\delta\}} \abs{k\delta -p(s_{i_{k}}|y=0) }\,.
    \end{align*}

% $ p(s_i|y=1) = \frac{\tau_i \cdot \mu_i}{\mu}\,, \qquad p(s_i|y=0) = \frac{\tau_i \cdot (1-\mu_i)}{1-\mu}$

\subsection{Proofs}
For any signal $p$ and any prior distribution $\pi = \Ber(\mu)$, let $\{(\tau_i, \Ber(\mu_i))\}_{i\in [n]}$ denote the induced distribution of posteriors where $\tau_i = \sum_{y\in \cY}p(s_i|y)\pi(y)$ is the probability of the signal being $s_i$ and $\Ber(\mu_i)$ is the posterior distribution $\pi(y|s_i)$ of $y$ given the signal $s_i$. 
Then the expected Principal's utility is
\begin{equation*}
    V(a,p,\mu):=\EEs{y\sim \Ber(\mu)}{V(a,p,y)}=\EEs{y\sim \Ber(\mu)}{\EEs{s\sim p(\cdot|y)}{v(a(s),y)}} = \sum_{i\in [n]}\tau_i v(a(s_i))\,,
\end{equation*}
and the expected Agent's utility is 
\begin{equation*}
    U(a,p,\mu):=\EEs{y\sim \Ber(\mu)}{U(a,p,y)}=\EEs{y\sim \Ber(\mu)}{\EEs{s\sim p(\cdot|y)}{u(a(s),y)}} = \sum_{i\in [n]}\tau_i u(a(s_i),\mu_i)\,.
\end{equation*}
Hence, the best response $a^*(p,\mu)$ is defined by letting $a^*(p,\mu)(s_i) = s^*(\mu_i)$ and an action $a$ is an $\epsilon$-best response if $\sum_{i\in[n]} \tau_i u(a(s_i), \mu_i) \geq \sum_{i\in [n]} \tau_i u(s^*(\mu_i), \mu_i)- \epsilon$. Then we first introduce the following lemma to prove our results in Bayesian Persuasion.

\begin{lemma}\label{lmm:bayes-stable}
    For any $x\in [0,1]$, for any $\mu\in [0,1]$, a signal scheme $p$, which induces distribution of posteriors as $(\tau, w_i), ((1-\tau),w_j)$ with $ w_i\in S_i$ and $w_j\in S_j$, is $(x\cdot \eta, x)$-stable under $\mu$ for any $\eta$ with $[w_i-\eta,w_i +\eta]\subset S_i$ and $[w_j-\eta,w_j +\eta]\subset S_j$.
\end{lemma}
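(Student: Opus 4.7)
The plan is a direct dichotomy argument on any alternative action $a \neq a^*(p,\mu)$. The signal scheme $p$ is supported on two signals — call them $s_i$ and $s_j$ — with induced posteriors $w_i \in S_i$ and $w_j \in S_j$. Because $w_i$ lies $\eta$-interior to $S_i$ (and $w_j$ to $S_j$), the uniqueness part of Lemma~\ref{lmm:implication-gap-asp} gives $s^*(w_i)=s_i$ and $s^*(w_j)=s_j$, so that $a^*(p,\mu)$ sends each signal $s_k$ to strategy $s_k$ for $k\in\{i,j\}$. Any deviating action $a$ must therefore disagree with $a^*$ on a nonempty subset $D\subseteq\{i,j\}$ of signals; let $\tau_i=\tau$, $\tau_j=1-\tau$, and write $q:=\sum_{k\in D}\tau_k$ for the total probability of deviation.

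The first step will be a per-signal lower bound on the Agent's utility loss. For each $k\in D$, both $u(s_k,\cdot)$ and $u(a(s_k),\cdot)$ are linear in the belief and, by Lemma~\ref{lmm:implication-gap-asp}, their slopes differ in absolute value by at least $c_1$. Because $s_k$ is the unique optimizer throughout the open interior of $S_k$, these two lines can only meet at a boundary of $S_k$, which sits at distance at least $\eta$ from $w_k$. Hence
\begin{equation*}
u(s_k, w_k) - u(a(s_k), w_k) \;\geq\; c_1 \eta, \qquad \text{for every } k\in D.
\end{equation*}
Weighting by $\tau_k$ then yields the aggregate Agent loss
\begin{equation*}
U(a^*,p,\mu) - U(a,p,\mu) \;=\; \sum_{k\in D}\tau_k\bigl(u(s_k,w_k)-u(a(s_k),w_k)\bigr) \;\geq\; c_1\eta\, q.
\end{equation*}

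The second step is a crude upper bound on the Principal's loss, using only $v(\cdot)\in[0,1]$:
\begin{equation*}
V(a^*,p,\mu) - V(a,p,\mu) \;=\; \sum_{k\in D}\tau_k\bigl(v(s_k)-v(a(s_k))\bigr) \;\leq\; q.
\end{equation*}
The proof then concludes with a dichotomy on $q$. If $q\geq x$, the Agent's loss is at least $xc_1\eta$, which matches the first disjunct in the definition of stability; otherwise $q<x$, in which case the Principal's loss is at most $x$, matching the second disjunct. Either way the policy is stable with the claimed parameters. The only subtle point, and the one I would flag as the main obstacle, is reconciling the slope-gap constant $c_1$ with the $(x\cdot\eta,x)$ written in the statement: the argument above produces $(xc_1\eta,x)$-stability, which is exactly the form that is propagated into Lemma~\ref{lmm:bayes-alt}; everything else in the proof is one-line accounting.
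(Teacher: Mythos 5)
Your proof is correct and takes essentially the same approach as the paper's: a per-signal slope-gap bound giving an Agent utility loss of at least $c_1\eta$ per deviated signal (exactly the paper's use of Assumption~\ref{asp-bayes:interval} via Lemma~\ref{lmm:implication-gap-asp}), followed by a dichotomy on the deviation mass $q$ against $x$ --- your unified treatment of $q$ subsumes the paper's two-case split (deviation on both signals vs. on one) and is a mild simplification. Your flag about the extra $c_1$ factor is also right: the paper's own proof in fact establishes $(x\cdot c_1\eta, x)$-stability, which is the form carried into Lemma~\ref{lmm:bayes-alt}, so the $(x\cdot\eta, x)$ in the lemma statement is a minor notational slip.
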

    
\begin{proof}
% [Proof of Lemma~\ref{lmm:bayes-stable}]
By Assumption \ref{asp-bayes:interval}, each interval has a length of at least $C$. Then for any $i\in n$ and any $\eta < \frac{C}{2}$, let $S_i^\eta$ denote the interval $[\min(S_i)+\eta,\max(S_i)-\eta]$ by removing $\eta$ top values and $\eta$ bottom values from the interval $S_i$. Then for all $\mu\in S_i^\eta$, we have
\begin{align*}
    u(s_j, \mu) \leq u(s_i, \mu) - c_1 \eta\,,
\end{align*}
for all $j\neq i$. 
This directly follows from Assumption \ref{asp-bayes:interval}. As mentioned before, by Assumption \ref{asp-bayes:interval}, there is some minimum difference $c_{1}$ between the utility slopes $\partial u(s,\cdot)$ of any two strategies. Hence for any $\mu$ which is $\eta$-far away from an interval edge, we can see that the Agent utility of every strategy $s_j$ other than the optimal strategy $s_i$ at $\mu$ is at least $c_{1}\cdot \eta$ lower. 
Hence, taking any strategy other than $s_i$ after seeing signal $s_i$ would achieve a utility at least $c_1 \eta$ lower under $w_i$.

% Therefore, conditioned on seeing signal $s_{1}$, the policy is $c_{1} \cdot \beta$-stable. Equivalently, conditioned on seeing the signal $s_{2}$, the policy is $c_{1} \cdot \beta$-stable. Now we must show that the action mapping signals to policies is stable. 

If the action $a$ taken by the Agent plays a non-optimal strategy to both $s_i$ and $s_j$, it leads to an expected loss for the Agent of $\geq \tau \cdot c_{1} \eta + (1-\tau) \cdot c_{1} \eta = c_{1}\eta$. More formally, $U(a, p, \mu)\leq U(a^*(p,\mu), p,\mu) - c_1 \eta$. Thus, for any $x\in [0,1]$, we have
\begin{align*}
    U(a, p, \mu)\leq U(a^*(p,\mu),p,\mu) - x\cdot c_1 \eta\,.
\end{align*}

Now consider action $a$ playing one optimal response and one non-optimal response. W.l.o.g., assume that $a(s_i) \neq s_i$ and $a(s_j) = s_j$.
Then we have
\begin{align*}
    U(a,p,\mu) &\leq U(a^*(p,\mu), p,\mu) -\tau c_1 \eta\,,\\
    V(a,p,\mu) &\geq V(a^*(p,\mu), p,\mu) - \tau\,.
\end{align*}
Hence, for any $x\in [0,1]$, if $\tau\leq x$, we have
\begin{align*}
    V(a,p,\mu) \geq V(a^*(p,\mu), p,\mu) - x\,.
\end{align*}
If $\tau>x$, we have
\begin{align*}
    U(a,p,\mu) \leq U(a^*(p,\mu), p,\mu) -x\cdot c_1 \eta\,.
\end{align*}
By combining the two cases, we have proved the lemma.
\end{proof}

\subsubsection{Proof of Lemma~\ref{lmm:bayes-alt}}
\lmmbayesalt*
Before the proof, we first introduce the following lemma.
\begin{lemma}\label{lmm:stabilized}
For any $\mu\in [0,1]$,  we have
    \begin{equation*}
        V(a^*(p'(\mu),\mu),p'(\mu),\mu) \geq V(a^*(p,\mu,\epsilon),p,\mu) -\frac{3\beta}{C} - c_2\sqrt{\epsilon}\,,
    \end{equation*}
    for all $p\in \cP$.
\end{lemma}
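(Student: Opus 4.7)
My plan is to bracket both sides of the stated inequality by the concave-closure value $v^*(\mu)$, proving separately $(i)$ $V(a^*(p'(\mu),\mu),p'(\mu),\mu) \geq v^*(\mu)-\tfrac{3\beta}{C}$ and $(ii)$ $V(a^*(p,\mu,\epsilon),p,\mu) \leq v^*(\mu)+c_2\sqrt{\epsilon}$ for all $p\in\cP$, and then chain the two inequalities. The Lipschitz and slope lower bounds from Lemma~\ref{lmm:implication-gap-asp} drive both estimates, while the hypothesis $\beta<C/4$ makes perturbation arguments stable.

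\textbf{Proving (i).} By construction, each shifted posterior $\mu_j'\in S_{i_j}$ lies strictly inside an interval whose unique optimal strategy is $s_{i_j}=s^*(\mu_j)$, so the best response under $p'(\mu)$ plays exactly these strategies and $V(a^*(p'(\mu),\mu),p'(\mu),\mu)=v'(\mu)$. To compare $v'(\mu)$ with $v^*(\mu)$, pick a consecutive pair $(\mu_k,\mu_{k+1})\in\textrm{Ex}$ straddling $\mu$, so $v^*(\mu)=\bar\tau\, v(s_{i_k})+(1-\bar\tau)v(s_{i_{k+1}})$ with $\bar\tau\mu_k+(1-\bar\tau)\mu_{k+1}=\mu$. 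By Lemma~\ref{lmm:implication-gap-asp} consecutive extreme points satisfy $\mu_{k+1}-\mu_k\geq C$, hence $\mu_{k+1}'-\mu_k'\geq C-2\beta\geq C/2$ under $\beta<C/4$. Solving $\tau'\mu_k'+(1-\tau')\mu_{k+1}'=\mu$ and expanding yields $|\tau'-\bar\tau|=O(\beta/C)$; since $v(s_{i_k}),v(s_{i_{k+1}})\in[0,1]$, we conclude $v'(\mu)\geq v^*(\mu)-O(\beta/C)$. When $\mu$ falls in the at-most-$\beta$ sliver outside $[\mu_k',\mu_{k+1}']$, we switch to a neighboring shifted pair; since $\mu_1'=0$, $\mu_K'=1$ and $\mu_{j-1}'<\mu_j'$ throughout (as $\beta<C/4$), the intervals $\{[\mu_j',\mu_{j+1}']\}$ cover $[0,1]$ and the same $O(\beta/C)$ estimate applies.

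\textbf{Proving (ii).} Fix $p$ with induced posteriors $\{(\tau_i,\mu_i)\}$, let $\bar a=a^*(p,\mu,\epsilon)$ and $\delta_i:=u(s^*(\mu_i),\mu_i)-u(\bar a(s_i),\mu_i)\geq 0$, so $\sum_i\tau_i\delta_i\leq\epsilon$. Define $\tilde\mu_i:=\mu_i$ when $\bar a(s_i)=s^*(\mu_i)$, and otherwise take $\tilde\mu_i$ to be the closest point to $\mu_i$ in the interval $S_{\bar a(s_i)}$; then $v(\bar a(s_i))\leq v(s^*(\tilde\mu_i))$, whence $V(\bar a,p,\mu)\leq\sum_i\tau_i v(s^*(\tilde\mu_i))\leq v^*(\tilde\mu)$ for $\tilde\mu:=\sum_i\tau_i\tilde\mu_i$. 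Split deviations into \emph{adjacent} and \emph{non-adjacent}. In the adjacent case, the utility-slope lower bound $c_1$ from Lemma~\ref{lmm:implication-gap-asp} gives $|\tilde\mu_i-\mu_i|\leq\delta_i/c_1$. In the non-adjacent case, an intermediate interval $S_m$ of width $\geq C$ separates $S_{s^*(\mu_i)}$ from $S_{\bar a(s_i)}$; evaluating the linear function $u(s^*(\mu_i),\cdot)-u(\bar a(s_i),\cdot)$ at the boundary of $S_{s^*(\mu_i)}$ adjacent to $S_m$ (where $u(s^*(\mu_i))=u(s_m)$) shows this function is $\geq c_1 C$ there, and the sign constraints on $S_{s^*(\mu_i)}$ and $S_{\bar a(s_i)}$ fix its slope direction so that $\delta_i\geq c_1 C$ throughout $S_{s^*(\mu_i)}$. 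Consequently $\sum_{\text{non-adj}}\tau_i\leq\epsilon/(c_1 C)$, giving $|\tilde\mu-\mu|\leq\sum_i\tau_i|\tilde\mu_i-\mu_i|\leq\epsilon/c_1+\epsilon/(c_1 C)$. Since $v^*$ is piecewise linear concave with slopes bounded by $1/C$ (consecutive extreme points being $\geq C$ apart), $v^*(\tilde\mu)\leq v^*(\mu)+O(\epsilon/(c_1 C))$, which is absorbed into $c_2\sqrt{\epsilon}$ for $\epsilon\leq 1$ and $c_2$ large enough.

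\textbf{Main obstacle.} The primary difficulty is the boundary-case analysis in $(i)$: when $\mu$ is within $\beta$ of an extreme point, the natural straddling shifted pair no longer brackets $\mu$, and we must switch to a neighboring pair while preserving the $O(\beta/C)$ accuracy. A secondary subtlety in $(ii)$ is establishing the $c_1 C$ cost floor for non-adjacent deviations, which requires tracking the linear utility difference at a boundary of $S_{s^*(\mu_i)}$ rather than at $\mu_i$, and verifying that its slope sign is consistent with the optimality patterns on $S_{s^*(\mu_i)}$ and $S_{\bar a(s_i)}$.
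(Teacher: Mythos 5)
Your decomposition into (i) $V(a^*(p'(\mu),\mu),p'(\mu),\mu)\geq v^*(\mu)-\tfrac{3\beta}{C}$ and (ii) $V(a^*(p,\mu,\epsilon),p,\mu)\leq v^*(\mu)+c_2\sqrt{\epsilon}$ is exactly the decomposition the paper uses (Lemmas~\ref{lmm:tech3} and~\ref{lmm:tech4}), and your argument for (i) tracks the paper's case analysis closely --- the interior case via the $|\tau'-\bar\tau|$ perturbation estimate and the $\beta$-sliver boundary case via passing to the neighboring shifted pair.

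For (ii), however, you take a genuinely different and cleaner route than the paper. The paper, having perturbed each posterior $\mu_i$ to $w_i'$, insists on reconstructing an exactly Bayes-plausible posterior distribution for the \emph{original} prior $\mu$; since $\sum_i\tau_i w_i'\neq\mu$, this forces a reweighting of the $\tau_i$'s (their three cases on the sign and size of $q=\sum_i\tau_i(w_i'-w_i)$), and the resulting bound is a $\min(\tau_B,\,|q|/(C\tau_B-|q|))$ term that is controlled only up to $O(\sqrt{|q|/C})=O(\sqrt{\epsilon})$. You instead keep the original weights $\tau_i$, let the effective prior drift to $\tilde\mu=\sum_i\tau_i\tilde\mu_i$, bound $V(\bar a,p,\mu)\leq\sum_i\tau_i v(s^*(\tilde\mu_i))\leq v^*(\tilde\mu)$ by concavity of $v^*$, and then pay for the drift $|\tilde\mu-\mu|\leq\tfrac{\epsilon}{c_1}(1+\tfrac{1}{C})$ through the $1/C$-Lipschitz bound on $v^*$ (which holds because consecutive vertices of the concave envelope sit at interval boundaries spaced $\geq C$ apart). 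Both the adjacent/non-adjacent split and the $c_1 C$ cost floor you invoke are implicitly the content of the paper's Lemmas~\ref{lmm:tech1} and~\ref{lmm:tech2}, so your estimates are sound. The net effect is that your argument delivers an $O(\epsilon/(c_1 C^2))$ bound where the paper settles for $O(\sqrt{\epsilon})$; trivially $O(\epsilon)\leq O(\sqrt{\epsilon})$ for $\epsilon\leq 1$, so the lemma as stated follows, and in fact your approach shows the stated $c_2\sqrt{\epsilon}$ dependence is not tight and can be improved to linear in $\epsilon$.
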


% The proof of Lemma~\ref{lmm:bayes-alt} follows after the following two lemmas.
% \begin{lemma}\label{lmm:bayes-stable}
%     For any $\mu\in [0,1]$, for any $x \in [0,1]$, $p'(\mu)$ is a $(x \cdot c_{1}\beta,x)$-stable policy under $\mu$.
% \end{lemma}

% \todonrc{Note that in order to get the best guarantee from Theorem~$\ref{thm:stable}$, the correct choice of $x$ is $\sqrt{\frac{\epsilon_{int} + 2 |\mathcal{P}| |\mathcal{A}|\alpha}{c_{1} \beta}}$}

% \subsubsection{Proof of Lemma~\ref{lmm:stabilized}}
\begin{proof}
% [Proof of Lemma~\ref{lmm:stabilized}]
The proof is decomposed to two parts.
\begin{itemize}
    \item $V(a^*(p'(\mu),\mu),p',\mu)\geq v^*(\mu)- \frac{3\beta}{C}$. (Lemma~\ref{lmm:tech3})
    \item There exists a constant $c_2$ such that $V(a^*(p,\mu,\epsilon),p,\mu) \leq v^*(\mu)+ c_2\sqrt{\epsilon}$ for all $p\in \cP$. (Lemma~\ref{lmm:tech4})
    % Let $\{(\tau_i, \Ber(w_i))\}_{i\in[n]}$ denote the distribution of posteriors induced by signal $p$.
    % Let $a$ be any $\epsilon$-best response to $p,\mu$. There exists a constant $c_2$ such that $V(a,p,\mu) \leq v^*(\mu)+ c_2\sqrt{\epsilon}$.
\end{itemize}
By combining these two parts, we prove Lemma~\ref{lmm:stabilized}.

\begin{lemma}\label{lmm:tech3}
For any $\mu\in [0,1]$, we have $V(a^*(p',\mu),p',\mu)\geq v^*(\mu)- \frac{3\beta}{C}$ where $p'=p'(\mu)$.
\end{lemma}
\begin{proof}[Proof of Lemma~\ref{lmm:tech3}]
Recall that the method of finding the optimal achievable Principal's utility by \cite{kamenica2011bayesian}, we have $(\mu,v^*(\mu)) =\tau (\mu_{i_j}, v(s_{i_j})) + (1-\tau)(\mu_{i_{j+1}}, v(s_{i_{j+1}}))$.
Now considering our signal scheme $p'$, there are two cases.

\paragraph{Case 1} The prior $\mu$ lies in $[\mu_{i_j}',\mu_{i_{j+1}}']$ with $\mu = \tau' \mu_{i_j}'+ (1-\tau')\mu_{i_{j+1}}'$.
Recalling our definition of $p'$ (where we find the optimal convex combination of points in $\textrm{Ex}'$), we must have 
$$V(a^*(p',\mu),p',\mu) \geq \tau' v(s_{i_j})+ (1-\tau')v(s_{i_{j+1}})\,.$$
% If the signal scheme $p'$ induces distribution of posteriors with support $\mu_{i_j}'$ and $\mu_{i_{j+1}}'$ s.t. ,
Since $\mu = \tau' \mu_{i_j}'+ (1-\tau')\mu_{i_{j+1}}'$ and $\mu = \tau \mu_{i_j}+ (1-\tau)\mu_{i_{j+1}}$, we have 
\begin{equation*}
    \tau (\mu_{i_{j+1}}-\mu_{i_j}) - \tau' (\mu_{i_{j+1}}'-\mu_{i_j}') = \mu_{i_{j+1}}- \mu_{i_{j+1}}'\,.
\end{equation*}
According to the definition of $\mu'$s, we have
\begin{align*}
     \mu_{i_{j+1}}-\mu_{i_j} + 2\beta \leq \mu_{i_{j+1}}'-\mu_{i_j}' \leq \mu_{i_{j+1}}-\mu_{i_j} + 2\beta \,.
\end{align*}
Therefore, we have
\begin{align*}
    \abs{\tau-\tau'} \leq \frac{\abs{\mu_{i_{j+1}}- \mu_{i_{j+1}}'} + \tau' \cdot 2\beta}{\mu_{i_{j+1}}-\mu_{i_j}}\,.
\end{align*}
According to Assumption \ref{asp-bayes:interval} and definition of $\mu'$s, we have
\begin{align*}
    &\mu_{i_{j+1}}-\mu_{i_j} \geq C\,,\\
    &\\
    &\abs{\mu_{i_{j+1}}- \mu_{i_{j+1}}'}\leq\beta\,.
\end{align*}
Hence, we have $\abs{\tau-\tau'}\leq \frac{3\beta}{C}$. Thus, we have
\begin{align*}
    V(a^*(p',\mu),p',\mu) \geq &\tau' v(s_{i_j})+ (1-\tau')v(s_{i_{j+1}}) \geq  \tau v(s_{i_j})+ (1-\tau)v(s_{i_{j+1}}) -\frac{3\beta}{C} \\
    = &v^*(\mu)-\frac{3\beta}{C}\,. 
\end{align*}

\paragraph{Case 2 }
The prior $\mu$ does not lie in $[\mu_{i_j}',\mu_{i_{j+1}}']$.
% If the signal scheme $p'$ induces distribution of posteriors with support not being $\mu_{i_j}'$ and $\mu_{i_{j+1}}'$, which implies that either $\mu$ lies in $[\mu_{i_j}',\mu_{i_{j+1}}']$ but $V(a^*(p',\mu),p',\mu)$ is higher than the value achieved by convex combination of $\mu_{i_j}'$ and $\mu_{i_{j+1}}'$ or 
% If $\mu$ lies in $[\mu_{i_j}',\mu_{i_{j+1}}']$, then according to the result in the first case, we have
% \begin{equation*}
%     V(a^*(p',\mu),p',\mu) \geq v^*(\mu)-\frac{3\beta}{C}\,.
% \end{equation*}
Since $\mu \in [\mu_{i_j}, \mu_{i_{j+1}}]$, we have $\mu$ lies in either $[\mu_{i_j},\mu_{i_j}')$ or $(\mu_{i_{j+1}}', \mu_{i_{j+1}}]$. 
W.l.o.g., suppose that $\mu$ lies in $[\mu_{i_j},\mu_{i_j}')$. Then we have $\abs{\mu - \mu_{i_j}} \leq \beta$ and $\abs{\mu - \mu_{i_j}'} \leq \beta$.
Hence we have $\tau \geq 1 - \frac{\beta}{C}$ and 
\begin{align*}
    v^*(\mu)\leq v(s_{i_j}) +\frac{\beta}{C}\,.
\end{align*}
Since $\beta < \frac{C}{4}$, we could find a $\tau'\in [0,1]$ s.t. $\mu = (1-\tau')\mu_{i_{j-1}}'+\tau' \mu_{i_j}'$. Similarly, we have $\tau' \geq 1 - \frac{\beta}{C}$ and thus
\begin{align*}
    V(a^*(p',\mu),p',\mu) \geq (1-\tau')v(s_{i_{j-1}})+\tau' v(s_{i_{j}}) \geq v(s_{i_j}) - \frac{\beta}{C}\,.
\end{align*}
Hence, we have $V(a^*(p',\mu),p',\mu) \geq v^*(\mu) -\frac{2\beta}{C}$.
\end{proof}
\begin{lemma}\label{lmm:tech4}
    There exists a constant $c_2$ such that $V(a^*(p,\mu,\epsilon),p,\mu) \leq v^*(\mu)+ c_2\sqrt{\epsilon}$ for all $p\in \cP$.
\end{lemma}
For any $p\in \cP$, let $\{(\tau_i,\Ber(w_i))\}_{i\in[n]}$ denote the distribution of posteriors induced by policy $p$ and prior $\mu$.
Let $a$ be any $\epsilon$-best response to $(p,\mu)$. Then to prove the lemma, we need to show that there exists a constant $c_2$ such that $V(a,p,\mu) \leq v^*(\mu)+ c_2\sqrt{\epsilon}$ for all $p$.
We introduce lemmas~\ref{lmm:tech1} and \ref{lmm:tech2} to prove Lemma~\ref{lmm:tech4}.

\begin{lemma}\label{lmm:tech1}
        For any $\alpha\in [0, c_1 \cdot C)$, if a strategy $s$ is an $\alpha$-approximate optimal strategy to $\mu$, i.e., $u(s,\mu) =  u(s^*(\mu),\mu)-\alpha$, then there exists $\mu'\in [\mu-\frac{\alpha}{c_1},\mu+\frac{\alpha}{c_1}]$ s.t. $s \in s^*(\mu')$.
    \end{lemma}
    \begin{proof}
    If $\alpha = 0$, then let $\mu' = \mu$. Now we consider the case of $\alpha>0$.
    We first show that if $s_i$ is a best response to $\mu$ and $s_{i+1}$ is not for some $i\in [n]$, then $s_{i+2}$ cannot be an $\alpha$-approximate optimal strategy to $\mu$. This is because $u(s_i,\mu) -u(s_{i+2},\mu)\geq c_1\cdot C$.
    Therefore, if a strategy $s$ is an $\alpha$-approximate optimal strategy to $\mu$, then $s$ can only be $s_{i-1}$ or $s_{i+1}$.    
    W.l.o.g., suppose that $s=s_{i+1}$. 
    Let $\mu' = S_i\cap S_{i+1}$ be the boundary value s.t. both $s_i$ and $s_{i+1}$ are best response to $\mu'$. Then we have $\alpha \geq c_1 \abs{\mu'-\mu}$.         
    \end{proof}
    \begin{lemma}\label{lmm:tech2}
        If an action $a$ is an $\epsilon$-best response to $(p,\mu)$, i.e., $\sum_{i\in[n]} \tau_i u(a(s_i), w_i) \geq \sum_{i\in [n]} \tau_i u(s_i, w_i)- \epsilon$ with $s_i \in s^*(w_i)$, then we can find a set of $\{w_i'|i\in [n]\}$ such that $a(s_i)\in s^*(w_i')$ and $\sum_{i\in [n]}\tau_i \abs{w_i - w_i'}\leq \frac{\epsilon}{c_1}(1+\frac{1}{C})$.
    \end{lemma}
    \begin{proof}
        For each $i\in [n]$, if $u(a(s_i),w_i)\geq u(s_i,w_i) - c_1\cdot C$, then we can find $w_i'$ in the way introduced in Lemma~\ref{lmm:tech1}. Let $A=\{i|u(a(s_i),w_i)\geq u(s_i,w_i) - c_1\cdot C\}$ denote the corresponding subset of $i$'s.
        According to Lemma~\ref{lmm:tech1}, for all $i\in A$, we have
        \begin{align*}
            \abs{w_i - w_i'} \leq \frac{u(s_i,w_i)-u(a(s_i),w_i)}{c_1}\,.
        \end{align*}
        For $i\notin A$, we just arbitrarily pick an $w_i'$ s.t. $a(s_i)$ is an optimal strategy under $w_i'$, i.e, $a(s_i) \in s^*(w_i')$. 
        Then we have $u(s_i,w_i)-u(a(s_i),w_i)> c_1\cdot C$ for all $i\notin A$, and thus
        \begin{equation*}
            \epsilon\geq \sum_{i\notin A} \tau_i \left(u(s_i,w_i)-u(a(s_i),w_i)\right)\geq c_1\cdot C \sum_{i\notin A} \tau_i \,.
        \end{equation*}
        Therefore, we have $\sum_{i\notin A} \tau_i\leq \frac{\epsilon}{c_1 C}$.
        Then we have
        \begin{equation*}
            \sum_{i\in [n]}\tau_i \abs{w_i - w_i'} \leq \frac{1}{c_1}\sum_{i\in A} \tau_i ( u(s_i,w_i)-u(a(s_i),w_i)) + \sum_{i\notin A} \tau_i\leq \frac{\epsilon}{c_1}(1+\frac{1}{C}).
        \end{equation*}
    \end{proof}
    \begin{proof}[Proof of Lemma~\ref{lmm:tech4}]
    Recall that $\{(\tau_i,\Ber(w_i))\}_{i\in[n]}$ is the distribution of posteriors induced by signal scheme $p$ and prior $\mu$ and $a$ is an $\epsilon$-best response to $(p,\mu)$. Now we want to construct another signal scheme $\varphi$ such that $V(a, p, \mu) \leq V(a^*(\varphi, \mu),\varphi,\mu) +c_2 \sqrt{\epsilon}$. 
    Since $v^*(\mu)\geq V(a^*(\varphi, \mu),\varphi,\mu)$ due to that $v^*(\mu)$ is the optimal achievable value when the Agent best respond, we prove Lemma~\ref{lmm:tech4}.

    Our goal is to apply the construction in Lemma~\ref{lmm:tech2}, and construct a distribution of posteriors with support $\{w_i'|i\in [n]\}$. 
    Since $\sum_{i\in [n]} \tau_i w_i' \neq \mu$, we need to find an alternative set of weights $\tau_i'$'s such that $\sum_{i\in [n]} \tau_i' w_i' = \mu$. 
    According to the construction in Lemma~\ref{lmm:tech1}, for those $i$ with $w_i'\neq w_i$, $w'_i$ must lie in $[C,1-C]$ since $w_i'$ always lie on the boundary of two intervals. Let $B = \{i|w_i'\neq w_i\}$.
    Let $q = \sum_{i\in [n]}\tau_i (w_i'-w_i)$. We have
\begin{equation*}
    \sum_{i\in B} \tau_i w_i' = \mu' + q \,,
\end{equation*}
with $\mu'=\mu-\sum_{i\notin B} \tau_i w_i$. According to Lemma~\ref{lmm:tech2}, we have $q\leq \frac{\epsilon}{c_1}(1+\frac{1}{C})$.  Let $\tau_B = \sum_{i\in B}\tau_i$ denote the probability mass of $i\in B$. Then there are three cases.

\begin{itemize}
    \item $\mu'< \frac{\epsilon}{c_1C}(1+\frac{1}{C})$. In this case, we move all probability mass of $\tau_B$ to $w_{n+1}' = \frac{\mu'}{\tau_B}$, which must lie in $[0,1]$ as $\mu' = \sum_{t\in B} \tau_i w_i$. That is to say, let $\tau_i'=0$ for all $i\in B$, $\tau_i'=\tau_i$ for all $i\notin B$ and $\tau_{n+1}'=\tau_B$ for $w_{n+1}' = \mu'$. 
    Then we have $\sum_{i=1}^{n+1}\tau_i' w_i' = \mu' + \sum_{i\notin B} \tau_i w_i = \mu$.
    Thus, $\{(\tau_i',w_i')|i=1,\ldots,n+1\}$ is a Bayesian-plausible distribution of posteriors with $s^*(w_i') = a(s_i)$ for all $i\in [n]$.
    \item $q>0$. we let $\tau_i' = \frac{\mu'\tau_i}{\mu'+q}$ for $i\in B$, $\tau_i' = \tau_i$ for $i\notin B$, and the remaining probability mass $\tau_{n+1}' = 1-\sum_{i\in [n]}\tau_i'$ on $w_{n+1}'=0$. Then we have  $\sum_{i=0}^n \tau_i' w_i' = \mu$ and thus, $\{(\tau_i',w_i')|i=1,\ldots,n+1\}$ is a Bayesian-plausible distribution of posteriors with $s^*(w_i') = a(s_i)$ for all $i\in [n]$.
    \item $q<0$ and $\mu'\geq \frac{\epsilon}{c_1C}(1+\frac{1}{C})$.
Then let $\tau_i' = \frac{(\tau_B-\mu') \tau_i}{\tau_B-\mu'-q}$ for $i\in B$, $\tau_i' = \tau_i$ for $i\notin B$ and the remaining probability mass $\tau_{n+1}' = 1-\sum_{i}\tau_i'$ on $w_{n+1}'=1$.
Note that $\tau_B\geq \frac{1}{1-C}(\mu'+q) \geq \mu'$ where the first inequality holds due to $w_i'\leq 1-C$ for all $i\in B$ and the second inequality holds due to $\mu'\geq \frac{\epsilon}{c_1C}(1+\frac{1}{C})\geq \frac{|q|}{C}$. Thus we have $\tau_i'\geq 0$ and $\tau_i$'s define a legal distribution.
Then we have
\begin{align*}
    \sum_{i\in [n+1]}\tau_i' w_i' &= \frac{(\tau_B-\mu')}{\tau_B-\mu'-q}\sum_{i\in B}\tau_i w_i' + \sum_{i\notin B}\tau_i w_i + (1-\frac{(\tau_B-\mu')}{\tau_B-\mu'-q})\tau_B\\
    & = \frac{(\tau_B-\mu')}{\tau_B-\mu'-q}(\mu'+q) + \sum_{i\notin B}\tau_i w_i + (1-\frac{(\tau_B-\mu')}{\tau_B-\mu'-q})\tau_B\\
    & = \mu'+ \sum_{i\notin B}\tau_i w_i =\mu\,.
\end{align*}
Hence, $\{(\tau_i',w_i')|i=1,\ldots,n+1\}$ is a Bayesian-plausible distribution of posteriors with $s^*(w_i') = a(s_i)$ for all $i\in [n]$. 
\end{itemize}
Since $w_i'\in [C,1-C]$ for all $i\in B$, we have $\mu'+q \in [\tau_B C, \tau_B (1-C)]$.
Then in the first case, we have
\begin{align*}
     V(a, p, \mu) = &\sum_{i\in [n]}\tau_i v(a(s_i)) =\sum_{i\notin B}\tau_i v(a(s_i)) + \tau_B v(s^*(w_{n+1}'))+ \sum_{i\in B}\tau_i (v(a(s_i)) - v(s^*(w_{n+1}'))) \\
    \leq& \sum_{i\in [n+1]} \tau_i' v(s^*(w_i')) + \tau_B \leq V(a^*(\varphi, \mu),\varphi,\mu) +\frac{2\epsilon}{c_1C^2}(1+\frac{1}{C})\,,
\end{align*}
where the last inequality holds due to $\tau_B \leq \frac{\mu'+q}{C}$.

Since $\mu'+q \in [\tau_B C, \tau_B (1-C)]$, in both of the second case and the third case, we have $\tau_i \leq (1+\frac{\abs{q}}{C\tau_B -\abs{q}})\tau_i'$ for all $i\in B$.
% Now we prove that $V(a, p, \mu) \leq V(a^*(\varphi, \mu),\varphi,\mu)$.
Then we have
\begin{align*}
    V(a, p, \mu) = &\sum_{i\in [n]}\tau_i v(a(s_i)) \leq \sum_{i\in B}(1+\frac{\abs{q}}{C\tau_B -\abs{q}})\tau_i' v(a(s_i)) + \sum_{i\notin B}\tau_i' v(a(s_i))\\
    \leq& \sum_{i\in [n]} \tau_i' v(s^*(w_i')) + \frac{\abs{q}}{C\tau_B-\abs{q}} = V(a^*(\varphi, \mu),\varphi,\mu) +\frac{\abs{q}}{C\tau_B -\abs{q}}\,,
\end{align*}
and
\begin{align*}
    V(a, p, \mu) \leq \tau_B + \sum_{i\notin B} \tau_i v(a(s_i))= \tau_B + \sum_{i\notin B} \tau_i' v(s^*(w_i')) \leq V(a^*(\varphi, \mu),\varphi,\mu) +\tau_B\,.
\end{align*}
Since $\min(\tau_B, \frac{\abs{q}}{C\tau_B -\abs{q}}) \leq \sqrt{\frac{\abs{q}}{C}} + \frac{\abs{q}}{C}$, 
by combining these two inequalities together, we have
\begin{align*}
    V(a, p, \mu)\leq V(a^*(\varphi, \mu),\varphi,\mu) + \sqrt{\frac{\abs{q}}{C}} + \frac{\abs{q}}{C} \leq V(a^*(\varphi, \mu),\varphi,\mu) +2\sqrt{\frac{\epsilon}{c_1 C} (1+\frac{1}{C})} \,.
\end{align*}
when $\epsilon$ is small.

% And we have
% \begin{equation*}
%      \sum_{i\in [n]}\tau_i u(a(s_i),w_i') \leq \sum_{i\in [n]}\tau_i'u(a(s_i),w_i') + \frac{q}{C}\,.
% \end{equation*}
% Therefore, we have
% \begin{align*}
%      \sum_{i\in [n]}\tau_i u(a(s_i),w_i) \leq \sum_{i\in [n]}\tau_i u(a(s_i),w_i') + \sum_{i\in [n]}\tau_i \abs{w_i'-w_i}\leq \sum_{i\in [n]}\tau_i'u(a(s_i),w_i') + o(\epsilon)\\
%      = \sum_{i\in [n]}\tau_i'u(s^*(w_i'),w_i') + o(\epsilon) \leq v^*(\mu) +o(\epsilon)\,.
% \end{align*}
    \end{proof}

\end{proof}

\begin{proof}[Proof of Lemma~\ref{lmm:bayes-alt}]
    According to our definition of $p'(\mu)$, it induces a convex combination of two points in $\text{Ex}'$, saying $\mu = \tau \cdot \mu'_{i_{k}} + (1-\tau) \cdot \mu'_{i_{l}}$. 
Recall that all $\mu$ values associated with points in $\text{Ex}$ must be on the boundary between two intervals. 
Furthermore, by construction, any point in $\text{Ex}'$ have $\mu'$ values which are exactly $\beta$ different from some $\mu$ in $\text{Ex}$.
Hence $\mu'_{i_{k}}$ and $\mu'_{i_{l}}$ will be at least $\beta$-far from the edge of any interval. 
Therefore, 
Lemma~\ref{lmm:bayes-stable} implies that $p'(\mu)$ is a $(x \cdot c_{1}\beta,x)$-stable policy under $\mu$.
By combining with Lemma~\ref{lmm:stabilized}, we prove Lemma~\ref{lmm:bayes-alt}.
\end{proof}

\subsubsection{Proof of Theorem~\ref{thm:bayes-discrete}}
\thmbayesdiscrete*
\begin{proof}[Proof of Theorem~\ref{thm:bayes-discrete}]
    Recalling our definition of $p'(\mu)$, it induces a convex combination of two points in $\text{Ex}'$, saying $\mu = \tau \cdot \mu'_{i_{k}} + (1-\tau) \cdot \mu'_{i_{l}}$. Then the explicit form of $p'(\mu)$ is
    \begin{align*}
         p(s_{i_{k}}|y=1) = \frac{\tau \cdot \mu_{k}'}{\mu}\,, \quad p(s_{i_{k}}|y=0) = \frac{\tau \cdot (1-\mu_{k}')}{1-\mu}
    \end{align*}
    By rounding these two probabilities, we obtain a discretized signal scheme $p_{\delta}(\mu)$ with
    \begin{align*}
         p_\delta(s_{i_{k}}|y=1) = \delta \cdot \argmin_{k\in \{0,\ldots,1/\delta\}} \abs{k\delta -p(s_{i_{k}}|y=1) }\,, \quad p_\delta(s_{i_{k}}|y=0) =\delta \cdot \argmin_{k\in\{0,\ldots,1/\delta\}} \abs{k\delta -p(s_{i_{k}}|y=0) }\,.
    \end{align*}
    Let $\delta_1 = p_\delta(s_{i_{k}}|y=1) - p'(s_{i_{k}}|y=1) $ and $\delta_0 = p_\delta(s_{i_{k}}|y=0) - p'(s_{i_{k}}|y=0)$ denote the discretization errors with $\abs{\delta_0},\abs{\delta_1}<\delta$. We have the new distribution of posteriors $(\tau_\delta, \mu_{\delta,i_k}), (1-\tau_\delta, \mu_{\delta,i_l})$
    with
    \begin{align*}
     \tau_\delta &=p_\delta(s_{i_k}|y=1)\mu +p_\delta(s_{i_k}|y=0)(1-\mu) =  (\frac{\tau \cdot \mu_{i_{k}}'}{\mu} + \delta_1)\mu + (\frac{\tau \cdot (1-\mu_{i_{k}}')}{1-\mu} +\delta_0)(1-\mu) \\
     &=\tau + \delta_1 \mu + \delta_0 (1-\mu)\,,\\
        \mu_{\delta,i_k} &= \pi(y|s_{i,k}) = \frac{p_\delta(s_{i_k}|y=1)\mu}{\tau_\delta} = \frac{(\frac{\tau \cdot \mu_{i_{k}}'}{\mu} + \delta_1)\mu}{\tau + \delta_1 \mu + \delta_0 (1-\mu)}=\mu_{i_{k}}' + \frac{\delta_1\mu(1-\mu_{i_{k}}') - \delta_0(1-\mu)\mu_{i_{k}}'}{\tau + \delta_1 \mu + \delta_0 (1-\mu)}\,,\\
       \mu_{\delta,i_l} &= \frac{\mu - \tau_\delta \mu_{\delta,i_k}}{1-\tau_\delta} \,.
    \end{align*}
    Thus, we have $\abs{\tau_\delta - \tau} \leq \delta$ and $\abs{\mu_{\delta,i_k}-\mu_{i_{k}}'} \leq \frac{\delta}{|\tau -\delta|}$. Due to the symmetry, we have $\abs{\mu_{\delta,i_l}-\mu_{i_{l}}'} \leq \frac{\delta}{|(1-\tau) -\delta|}$.
    Then we consider two cases based on the value of $\tau$.
    \begin{itemize}
        \item $\tau <\sqrt{\delta}$ or $\tau > 1-\sqrt{\delta}$. W.l.o.g., we assume that $\tau > 1-\sqrt{\delta}$. Then we can show that $\abs{\mu_{\delta,i_k}-\mu_{i_{k}}'} \leq 2\delta$. Then $p_\delta$ is $((1-\sqrt{\delta})c_1(\beta-2\delta),\sqrt{\delta})$-stable. Since $\mu_{\delta,i_k}$ is at least $\beta-2\delta$ way from the edge and if the Agent chooses the strategy $a(s^*(\mu_{\delta,i_k}))$ is not $s^*(\mu_{\delta,i_k})$ itself given  the signal $s^*(\mu_{\delta,i_k})$, then 
        \begin{align*}
            U(a,p_\delta,\mu) \leq U(a^* (p_\delta,\mu),p_\delta,\mu) - \tau c_1(\beta-2\delta)\leq U(a^*(p_\delta,\mu),p_\delta,\mu) - (1-\sqrt{\delta})c_1(\beta-2\delta)\,.
        \end{align*}
        If the Agent follows the signal $a(s^*(\mu_{\delta,i_k})) = s^*(\mu_{\delta,i_k})$, then 
        \begin{align*}
            V(a,p_\delta,\mu) \geq V(a^*(p_\delta,\mu),p_\delta,\mu) - (1-\tau)\geq V(a^*,p_\delta,\mu) - \sqrt{\delta}\,.
        \end{align*}
        And also, since $\abs{\mu_{\delta,i_k}-\mu_{i_{k}}'} \leq 2\delta$, we have $s^*(\mu_{\delta,i_k}) = s^*(\mu_{i_{k}}')$. Thus, we have
        \begin{align*}
            &V(a^*(p_\delta,\mu),p_\delta,\mu)\\
            = &\tau_\delta v(s^*(\mu_{\delta,i_k})) +(1-\tau_\delta) v(s^*(\mu_{\delta,i_l})) \\
            =& \tau_\delta v(s^*(\mu_{i_{k}}')) +(1-\tau_\delta) v(s^*(\mu_{\delta,i_l}))\\
            \geq& \tau_\delta v(s^*(\mu_{i_{k}}'))\\
            \geq& (\tau-\delta) v(s^*(\mu_{i_{k}}')) \\
            \geq&  V(a^*(p'(\mu),\mu),p'(\mu),\mu) - (1-\tau) -\delta\\ 
            \geq& V(a^*(p'(\mu),\mu),p'(\mu),\mu) -2\sqrt{\delta}
        \end{align*}
        \item $\tau \in [\sqrt{\delta}, 1-\sqrt{\delta}]$. Then both $\abs{\mu_{\delta,i_k}-\mu_{i_{k}}'} \leq 2\sqrt{\delta}$ and $\abs{\mu_{\delta,i_l}-\mu_{i_{l}}'} \leq 2\sqrt{\delta}$. Let $\sqrt{\delta} < \frac{\beta}{4}$. 
        Then by Lemma~\ref{lmm:bayes-stable}, we have that $p_\delta$ is $(x\cdot c_1 \beta/2, x)$-stable for any $x\in [0,1]$.
        Since $s^*(\mu_{\delta,i_k}) = s^*(\mu_{i_{k}}')$ and $s^*(\mu_{\delta,i_l}) = s^*(\mu_{i_{l}}')$, we have
        \begin{align*}
            V(a^*,p_\delta,\mu)=V(a^*,p'(\mu),\mu).
        \end{align*}  
    \end{itemize}
    Hence, $p_\delta(\mu)$ is a $(\frac{3\beta}{C} + c_2 \sqrt{\epsilon} +2\sqrt{\delta},\epsilon, x \cdot c_{1}\beta/2, \max(x,\sqrt{\delta}))$-optimal stable policy under $\mu$ for any $x\in [0,1]$.
\end{proof}

\subsubsection{Proof of Lemma~\ref{lmm:implication-gap-asp}}
\implicationgap*
\begin{proof}
    This is because for each strategy $s\in \cS$, there exists $\mu_s\in [0,1]$ and $c_S>0$ such that $u(s,\mu_s) \geq \mu(s',\mu_s) +C_s$ for all $s'\neq s$ in $\cS$. Hence for all $\mu \in (\mu_s - \frac{C_s}{2}, \mu_s + \frac{C_s}{2})$, we have $u(s,\mu) \geq u(s,\mu_s)-\frac{C_s}{2} 
\geq \mu(s',\mu_s) +\frac{C_s}{2} \geq u(s',\mu)$, where the first and the last inequalities follow from the fact that  $u(s,\cdot)$ is a linear function with $\abs{\partial u(s,\cdot)} \leq 1$.
Let $C =\min_{s\in \cS} C_s$ denote the minimal width of the intervals in $\{S_1,\ldots, S_{n}\}$ observe that since each $C_s > 0$, $C>0$. 

Note that Assumption~\ref{asp-bayes:interval} also implies that, for any two different strategies $s,s'$, the slopes of $u(s,\cdot)$ and $u(s',\cdot)$, denoted by $\partial u(s,\cdot)$ and $\partial u(s',\cdot)$, are different. Otherwise, one of the strategies is dominated by the other one and cannot be strictly optimal at any prior $\mu$, which conflicts with Assumption~\ref{asp-bayes:interval}. 
\end{proof}
\section{Proofs from Section~\ref{sec:general}}

\subsection{Proof of Lemma~\ref{lmm:weakerasp}}
\lmmweakerasp*
\begin{proof}
    When Assumption~\ref{asp:no-correlation} holds, we have
    $$\frac{1}{n_{p,r}}\EEs{a_{1:T}}{\abs{\sum_{t\in (p,r)} U(a_t, p, y_t) -U(\hat \mu_{p,r}, p, y_t)}}\leq \cO\left(\frac{1}{\sqrt{n_{p,r}}}\right)\,,$$
    and
    $$\frac{1}{n^{p_0}_{r}}\EEs{a^{p_0}_{1:T}}{\abs{\sum_{t:r_t=r} U(a_t^{p_0}, p, y_t) - U(\hat \mu^{p_0}_{r}, p, y_t)}}\leq \cO\left(\frac{1}{\sqrt{n^{p_0}_{r}}}\right)\,.$$
    This directly implies the following.
    \begin{align*}
    &\nr(y_{1:T},p_{1:T}^\sigma,r^\sigma_{1:T}) \\
    = & \frac{1}{T}\EEs{a_{1:T}}{\sum_{t=1}^T U(a_t,p_t,y_t) - \max_{h:\cP_0 \times \cA\mapsto \cA} \sum_{t=1}^T U(h(p_t, r_t),p_t,y_t)}\\
       \leq  &\frac{1}{T}\EEs{a_{1:T}}{\sum_{(p,r)\in \cP_\cO\times \cA}\left(\sum_{t\in (p,r)} U(\hat \mu_{p,r}, p, y_t) - \max_{a
       \in \cA} \sum_{t\in (p,r)}U(a, p, y_t)\right)} + \cO(\sqrt{\abs{\cP'}\abs{\cA}/T})\\
        \leq & \cO(\sqrt{\abs{\cP'}\abs{\cA}/T})\,.
    \end{align*}
    Similarly, we have $\nr(y_{1:T},(p_0,\ldots,p_0),r^{p_0}_{1:T})\leq \cO(\sqrt{\abs{\cA}/T})$\,.
\end{proof}
\subsection{Proof of Lemma~\ref{lmm:regret-follow-recommendation}}
\lmmfollowrecommendation*
\begin{proof}[Proof of Lemma~\ref{lmm:regret-follow-recommendation}]
% For any $S\in E$, let $\calerr(\pip,S) = \frac{1}{T} \norm{\sum_{t:\pip_t\in S} (\pip_t-y_t)}_1$ denote the calibration error on $S$.
% Let $\calerr(\pip) = \max\left(\sum_{p,r} \calerr(\pip,S_{p,r}), \max_{p}\sum_{r}\calerr(\pip,S'_{p,r})\right)$, which is upper bounded by $\abs{\cP}\abs{\cA}\alpha$ under Assumption~\ref{asp:principal-cal}.
For proposed mechanism $\sigma^\dagger$, let $t\in (p,r)$ denote $t:(p_t,r_t) = (p,r)$. Let $n_{p,r} = \sum_{t=1}^T\ind{t\in (p,r)}$ denote the number of rounds in which $(p_t,r_t) = (p,r)$. 
Let $\hat y_{p,r} = \frac{1}{n_{p,r}}\sum_{t\in (p,r)} y_t$ denote the empirical distribution of states in these rounds and $\pip_{p,r} = \frac{1}{n_{p,r}}\sum_{t\in (p,r)} \pip_t$ denote the empirical distribution of the forecasts.
For constant mechanism $\sigma^{p_0}$, let $t\in (r)$ denote $t: r^{p_0}_t = r$. 
Let $\cE_{3,p_0} = \{\ind{a^*(p_0,\pi_t) = a}\}_{a\in \cA}$.
Let $\alpha(\cE_{3,p_0}) = \sum_{E\in \cE_{3,p_0}} \alpha(E)$ and $\alpha(\cE_4) = \sum_{E\in \cE_4} \alpha(E)$.
For any $p_0\in \cP_0$, we have
    \begin{align*}
        &\sum_{t=1}^T V(r_t,p_t,y_t) \\
        =& \sum_{(p,r)\in \cP_0\times \cA} \sum_{t\in (p,r)} V(r,p,y_t)\\
        =& \sum_{(p,r)\in \cP_0 \times \cA} n_{p,r} V(r,p,\hat y_{p,r}) \\
        \geq & \sum_{(p,r)\in \cP_0 \times \cA} n_{p,r} V(r,p,\pip_{p,r}) -\alpha(\cE_4)T\tag{$\cE_4$-bias}\\
        = & \sum_{(p,r)\in \cP_0 \times \cA} \sum_{t\in(p,r)} V(\brr(p,\pip_t),p,\pip_t) -\alpha(\cE_4)T 
        \tag{since $r_t= \brr(p,\pip_t)$}\\
        =& \sum_{(p,r)\in \cP_0 \times \cA} \sum_{t\in(p,r)} \max_{p'\in\cP_0}V(\brr(p',\pip_t),p',\pip_t) -\alpha(\cE_4)T \tag{since $p_t = p^*(\pi_t)$}\\
        \geq& \sum_{t=1}^T V(\brr(p_0,\pip_t),p_0,\pip_t)-\alpha(\cE_4)T\\
        =& \sum_{r}\sum_{t\in (r)} V(r,p_0,\pip_t) -\alpha(\cE_4)T\\
        \geq & \sum_{r}\sum_{t \in (r)} V(r,p_0,y_t) - \alpha(\cE_{3,p_0})T -\alpha(\cE_4)T\tag{$\cE_3$-bias}\\
        =& \sum_{t=1}^T V(r^{p_0}_t, p_0,y_t)-\alpha(\cE_{3,p_0})T -\alpha(\cE_4)T\,.
    \end{align*}
    According to Theorem~\ref{thm:forecast-bias}, we have $\alpha(\cE_{3,p_0}) = \tilde \cO(\abs{\cY}\sqrt{\abs{\cA}/T})$ and $\alpha(\cE_{4}) = \tilde \cO(\abs{\cY}\sqrt{\abs{\cP_0}\abs{\cA}/T})$. Then we are done with the proof.
\end{proof}

\subsection{Proof of Lemma~\ref{lmm:utility-difference}}
\lmmutilitydiff*
\begin{proof}[Proof of Lemma~\ref{lmm:utility-difference}]
For the proposed mechanism $\sigma^\dagger$, let 
    $$\ir^\dagger_{p,r} = \max_{h:\cA\mapsto \cA}\sum_{t\in (p,r)} \left(U(h(a_t),p_t, y_t) - U(a_t,p_t,y_t)\right)$$
    denote the contextual swap regret for the Agent over the subsequence in which $(p_t,r_t) = (p,r)$.
    Similarly, for the fixed mechanism $\sigma^{p_0}$, let 
    $$\ir^{p_0}_{r} = \max_{h:\cA\mapsto \cA}\sum_{t\in (r)} \left(U(h(a^{p_0}_t),p_0, y_t) - U(a^{p_0}_t,{p_0},y_t)\right)$$
    denote the contextual swap regret for the Agent over the subsequence in which $r^{p_0}_t =  r$.
    
    Similarly, let 
    $$\nr^\dagger_{p,r} = \sum_{t\in (p,r)} U(a_t,p_t, y_t) - \max_{a\in \cA} \sum_{t\in (p,r)} U(a,p_t,y_t)$$
    and  
    $$\nr^{p_0}_{r} = \sum_{t\in (r)} U(a^{p_0}_t,p_0, y_t) -\max_{a\in \cA} \sum_{t\in (r)} U(a,{p_0},y_t)$$
    denote the negative cross swap regrets for the Agent over the subsequence in which $(p_t,r_t) = (p,r)$ under the proposed mechanism $\sigma^\dagger$ and the subsequence in which $r^{p_0}_t =  r$ under the constant mechanism $\sigma^{p_0}$ respectively. For proposed mechanism $\sigma^\dagger$, let $t\in (p,r,a)$ denote $t:(p_t,r_t,a_t) = (p,r,a)$.
    For constant mechanism $\sigma^{p_0}$, let $t\in (r,a)$ denote $t: (r^{p_0}_t, a^{p_0}_t) = (r,a)$.
    We have
    \begin{align*}
    &\ugap(y_{1:T},p^\sigma_{1:T},r^\sigma_{1:T},a^\sigma_{1:T})\\
    =&\max_{h:\cP_0\times\cA\times \cA\mapsto \cA} \min_{h':\cP_0\times\cA \mapsto \cA}\sum_{t=1}^T (U(h(p_t,r_t, a_t),p_t, y_t) -U(h'(p_t,r_t),p_t, y_t))\\
    = & \sum_{(p,r)\in \cP_0\times \cA}\left( \max_{h:\cA\mapsto \cA}\sum_{t\in (p,r)}(U(h(a_t),p, y_t) -U(a_t,p, y_t)) + \min_{r'\in \cA} \sum_{t\in (p,r)}(U(a_t,p, y_t) - U(r',p, y_t) )\right)\\
    =&\sum_{(p,r)\in \cP_0\times \cA} \ir^\dagger_{p,r} + \nr^\dagger_{p,r}\,.
     % \leq & \sum_{(p,r)\in \cP_0\times \cA} \left(\ir^\dagger_{p,r} + \sum_{t\in (p,r)}(U(a_t,p, y_t) -U(r,p, y_t) + U(r,p, \pi_t) - \max_{r'\in \cA}U(r',p, \pi_t)) \right) 
     % \\&+ 2\alpha(\cE_4)\tag{$\cE_4$-bias}\\
     % = & \sum_{(p,r)\in \cP_0\times \cA} \left(\ir^\dagger_{p,r} + \sum_{t\in (p,r)}(U(a_t,p, y_t) -U(r,p, y_t)) \right) + 2\alpha(\cE_4)\tag{since $r=a^*(p,\pi_t)$}\\
     %     = & \sum_{(p,r)\in \cP_0\times \cA} \sum_{t\in (p,r)}U(a_t,p_t, y_t) - \max_{a\in \cA} \sum_{t\in (p,r)} U(a,p_t,y_t) + \max_{a\in \cA} \sum_{t\in (p,r)} U(a,p_t,y_t)-\sum_{t\in (p,r)}U(r,p_t, y_t)\\
     %    \leq &  \sum_{(p,r)\in \cP_0\times \cA} \left(\nr^\dagger_{p,r} + \max_{a\in \cA} \sum_{t\in (p,r)} (U(a,p_t,\pi_t)-U(r,p_t, \pi_t)) \right) + 2\alpha(\cE_4)\tag{$\cE_4$-bias}\\
     %    = &\sum_{(p,r)\in \cP_0\times \cA} \nr^\dagger_{p,r} + 2\alpha(\cE_4)
    \end{align*}
    Similarly, for constant mechanism $\sigma^{p_0}$, we have
    $$\ugap(y_{1:T},(p_0,\ldots,p_0),r^{p_0}_{1:T},a^{p_0}_{1:T}) =\sum_{r\in \cA}\ir^{p_0}_{r} +\nr^{p_0}_{r}\,.$$
    According to Assumption~\ref{asp:alignment}, we have
    $$\frac{1}{T}\sum_{t=1}^T (V(r_t,p_t, y_t) -V(a_t,p_t, y_t)) \leq M_1\cdot \ugap(y_{1:T},p_{1:T},r_{1:T},a_{1:T}) + M_2\,. $$
    Therefore,
    \begin{align*}
        \EEs{a_{1:T}}{\frac{1}{T}\sum_{t=1}^T  V(a_t,p_t, y_t)}\geq & \frac{1}{T}\sum_{t=1}^T V(r_t,p_t, y_t)- M_1\cdot \EEs{a_{1:T}}{\ugap(y_{1:T},p^\sigma_{1:T},r^\sigma_{1:T},a^\sigma_{1:T})} - M_2\\
        \geq & \frac{1}{T}\sum_{t=1}^T V(r_t,p_t, y_t) - M_1(\eint + \eneg) - M_2\,,
    \end{align*}
    and
    \begin{align*}
    \EEs{a^{p_0}_{1:T}}{\frac{1}{T}\sum_{t=1}^T V(a_t^{p_0},p_0,y_t)}\leq &\frac{1}{T}\sum_{t=1}^T V(r^{p_0}_t,p_0,y_t) + M_1\cdot \EEs{a^{p_0}_{1:T}}{\ugap(y_{1:T},(p_0,\ldots,p_0),r^{p_0}_{1:T},a^{p_0}_{1:T})} + M_2\\
        \leq & \frac{1}{T}\sum_{t=1}^T V(r^{p_0}_t,p_0,y_t) + M_1(\eint + \eneg) + M_2\,.
    \end{align*}
\end{proof}

\section{Proofs from Section~\ref{sec:imposs}}

\nostable*

\begin{proof}
Consider the following contract setting: there are two actions the Agent can take, $a_{1}$ and $a_{2}$. $a_{1}$ gives the Principal a value of $1$, and $a_{2}$ gives her a value of $2$. The cost of $a_{1}$ for the Agent is $\frac{1}{4}$, and the cost of $a_{2}$ is $\frac{1}{2}$. The Principal's contract space has only two linear contracts, $p_{1} = \frac{1}{4}$ and $p_{2} = \frac{1}{2}$. Thus, $p_{1}$ equally incentivizes $a_{1}$ and $a_{2}$, while $p_{2}$ strictly incentivizes $a_{2}$. 

Intuitively, we will show that $p_{1}$ is not stable, as the Agent could tiebreak in favor of $a_{1}$ instead of $a_{2}$ and significantly decrease the Principal's payoff. Furthermore, $p_{2}$ is not optimal, as if the Agent were tiebreaking in favor of $a_{2}$, the Principal would have rather played $p_{1}$. We formalize this below.

Note that the payoffs for the Principal and Agent are independent of the state of nature, and thus of the prior $\pi$. Furthermore, $a^{*}(p_{1},\pi) = a_{2}$, and $a^{*}(p_{1},\pi) = a_{2}$, $\forall \pi$. Let us first assume for contradiction that $p_{1}$ is a $(\beta,\gamma)$-stable optimal policy where $\gamma = o(1)$ and $\beta > 0$. This means that either

$$U(a,p_{1},\pi) \leq U(a^{*}(p_{1},\pi),p_{1},\pi) - \beta$$
or
$$V(a,p_{1},\pi) \geq V(a^{*}(p_{1},\pi),p_{1},\pi) - \gamma$$

For the first condition, we get that

\begin{align*}
& U(a_{1},p_{1},\pi) \leq U(a_{2},p_{1},\pi) - \beta \\
& \Rightarrow p_{1}f(a_{1}) - c(a_{1}) \leq p_{2}f(a_{1}) - c(a_{2}) - \beta \\ 
& \Rightarrow \frac{1}{4} - \frac{1}{4} \leq \frac{1}{2} - \frac{1}{2} - \beta\\ 
& \Rightarrow \beta \leq 0
\end{align*}

This derives a contradiction, so the second condition must be satisfied.

For the second condition, we get that 

\begin{align*}
& V(a_{1},p_{1},\pi) \geq V(a_{2},p_{1},\pi) - \gamma \\
& \Rightarrow (1 - \frac{1}{4})\cdot 1 \geq (1 - \frac{1}{4}) \cdot 2 - \gamma \\
& \Rightarrow \gamma \geq \frac{3}{4} 
\end{align*}

 This also derives a contradiction. Therefore neither condition is satisfied, so $p_{1}$ is not a $(c,\epsilon,\beta,\gamma)$-stable optimal policy for any $\gamma = o(1)$ and $\beta > 0$.

Next, consider $p_{2}$. Let us assume for contradiction that $p_{2}$ is a $(c,\epsilon,\beta,\gamma)$-stable optimal policy where $c = o(1)$ and $\epsilon = 0$. 

Then, 
\begin{align*}
& V(a^*(p_{2},\pi),p_{2},\pi)\geq V(a^*(p_1,\pi,0),p_1,\pi)-c \\
& \Rightarrow V(a_{2},p_{2},\pi)\geq V(a_{2},p_1,\pi)-c \\
& \Rightarrow 1 \geq \frac{3}{2} - c \\
& \Rightarrow c \geq \frac{1}{2} 
\end{align*} 
This derives a contradiction.

As neither $p_{1}$ nor $p_{2}$ are $(c,\epsilon,\beta,\gamma)$-stable optimal policies for $c = o(1)$, $\epsilon \geq 0$, $\beta > 0$ and $\gamma = o(1)$, this completes our proof.
\end{proof}

\necessity*

\strongnecessity*

We will prove these propositions in conjunction. Our proof assumes the existence of and makes use of the learning algorithm $\mathcal{L}^{*}$, and we derive results for both propositions, depending on which guarantees $\mathcal{L}^{*}$ has.

\begin{proof}

Consider a repeated linear contracting problem with two states of nature, $M$ and $H$, and let the realized state sequence be $y_{1:T}$. The Agent's per-round action space is $\mathcal{A} = \{work, shirk\}$. The Principal's per-round policy space is discretized according to $\cP_\delta = \{0, \delta, 2\delta, \ldots, \floor{\frac{1}{\delta}}\delta\}$, the set of all $\delta$-discretized linear contracts. We assume $\delta$ is such that $0.5, 0.6 \in \cP_\delta$. If the state of nature in a given round is $M$, the task will be completed if and only if the Agent plays $work$. If the state of nature is $H$, the task will not be completed regardless. The Principal gets payoff $2$ if the task is completed. It costs the Agent $0$ to shirk and $1$ to work.

%The Principal's task is to select a mechanism $\sigma$ which at round $t$ maps $a_{1:t-1}$, $p_{1:t-1}$ and $y_{1:t-1}$ to a new policy $p_{t} \in \mathcal{P}$. The Agent's task is to select a learner $\mathcal{L}$ which at round $t$ maps $a_{1:t-1}$, $p_{1:t}$, $y_{1:t-1}$ and any additional information or guesses they may have about $y_{t:T}$ to $\{work, shirk\}$. 

%For every realization of state sequences, the policy regret of the Principal is measured against the best fixed policy $p^{*}$ they could have run over all rounds, if the state sequences remain fixed but the Agent's learner $\mathcal{L}$ responds to the new policy. 

%Consider a distribution over nature sequences $y^{*}_{2:T}$ in which every state is i.i.d. with $P(H) = \frac{1}{2}$, $P(M) = \frac{1}{2}$ (we will discuss $y^{*}_{1}$ later). The Principal has no prior knowledge of the realizations of $y$. However, we assume that the Agent has complete knowledge of $y_{2:T} \sim y^{*}_{2:T}$ after it is drawn, and he will use this knowledge in his learning algorithms. 

For any mechanism $\sigma$, we will construct an algorithm $\mathcal{L}$ for the Agent that gives the Principal high regret. Unlike standard learning algorithms, $\mathcal{L}$ has access to the entire state sequence. Towards defining this algorithm, we will first define two simpler algorithms that will be used as a subroutines which use knowledge of $y_{2:T}$. We will call these algorithms $a^{*}$ and $b^{*}$.  

$a^{*}$ plays $work$ if $y_{t} = M$ and $shirk$ if $y_{t} = H$. 

$b^{*}$ plays $work$ if $y_{t} = M$ and plays $shirk$ w.p. $\frac{4}{5}$ and $work$ w.p. $\frac{1}{5}$ if $y_{t} = H$. 

Furthermore, let us pick an algorithm which always achieves sublinear Contextual Swap Regret for all states of nature sequences against $\sigma$, and call it $noreg$. We know that $noreg$ must exist, by our assumption that some $\cL^{*}$ exists. If there is a learning algorithms in this setting which achieve sublinear negative regret for all sequences against $\sigma$, we will pick such an algorithm. For some $y_{1:T}$, let $m_{y,t}$ be the number of medium states seen in the first $t$ rounds. Let $balanced_{t}=true$ if, on round $t$, $|m_{m,t} - m_{h,t}| \leq \sqrt{12T\ln\left(2\left(1+\log_{2}\left(T\right)\right)^{2}\right)}$. Furthermore, let $balanced_{all}$ be the event that $balanced_{t} = true$ for all $t \leq T$. Intuitively, this condition checks whether the history of nature states is roughly balanced between $M$ and $H$ at each  round.

We are finally ready to define $\mathcal{L}$. This algorithm uses $a^{*}$, $b^{*}$, $noreg$ and $balanced_{t}$ to exploit knowledge about the states of nature fully, but does so deliberately imperfectly so as not to incur negative regret.

In $\mathcal{L}$, if the very first state of nature of $y$ is $M$, then $\mathcal{L}$ plays $a^{*}$ until the Principal ever plays a contract which is not $(0.5,r_{t}=work)$, and then it plays $noreg$ for the rest of the game. If the very first state of nature of $y$ is $H$, then it plays $b^{*}$ until the Principal ever plays a contract which is not $(0.6,r_{t}=work)$, and then it plays $noreg$ for the rest of the game. Furthermore, if the state sequence ever invalidates the balanced condition, the algorithm immediately begins playing $noreg$ for the rest of the game.

\begin{algorithm}
\caption{$\mathcal{L}$}\label{alg:L1}
\begin{algorithmic}
\State $t \gets 1$
\State{Play $shirk$ on the first round}
\State{Observe $y_{1}$}
\State $t \gets 2$
\If{$y_{1} = M$}
\State $\bar{p} \gets 0.5$
\State $alg \gets a^{*}$
\Else 
\State $\bar p \gets 0.6$
\State $alg \gets b^{*}$
\EndIf
\While{$(t \leq T)$, $(p_{t} = \bar p)$, $(r_{t} = work)$ and $balanced_{t}$}
    \State{Play according to $alg$}
    \State{$t \gets t+1$}
\EndWhile
\While{$(t \leq T)$} 
    \State{Play $noreg$ with the entire history of play in mind}
    \State{$t \gets t+1$}
\EndWhile
\end{algorithmic}
\end{algorithm}

The intuition is as follows: if the number of $M$ and $H$ states is approximately equal, the Principal gets a higher payoff when the Agent plays according to $a^{*}$ or $b^{*}$ than when he plays according to $noreg$. But the Agent himself is roughly indifferent between these algorithms. Therefore if the Principal's mechanism causes $noreg$ to be played when $a^{*}$ or $b^{*}$ could have been played, the Principal will have non-vanishing policy regret. Of course, if the number of $M$ and $H$ states is not approximately equal, there is no guarantee on the performance of $a^{*}$ or $b^{*}$. However, if this is ever the case, $\cL$ will switch to playing $noreg$ to ensure that it continues to satisfy the assumptions on its performance.

%There is a delicacy in how we find the worst-case nature sequence here. One might be tempted to search for an explicitly constructed sequence of nature states such that no Principal mechanism has vanishing policy regret against $\mathcal{L}$. But for any fixed nature sequence, even one satisfying $balanced_{all}$, there is some optimal policy $\sigma$ against$\mathcal{L}$, which by definition must get no policy regret. Instead of pinpointing a specific sequence, we show two sets of nature states which are each mostly comprised of sequences satisfying $balanced_{all}$ (Lemma~\ref{lem:iteratedlog}). Then we show that,
%against our carefully constructed learner algorithm $\mathcal{L}$, any Principal mechanism $\sigma$ must have non-vanishing regret in expectation over at least one of these nature state distributions. This implies that there is always at least one sequence in the support of one of the distributions which leads to non-vanishing regret. The focus on these distributions is an artifact of our analysis only and does not limit the scope of the result--we show that the learner Algorithm satisfies assumption~\ref{asp:no-internal-reg} and Assumption~\ref{asp:no-sec-info} on all nature sequences.

We prove that $\cL$ ensures the Principal high regret in Lemma~\ref{lem:nonvanishing}. To do this, we introduce a distribution $y^{*}$ which is i.i.d. between $M$ and $H$ in each round. We use the fact that, in expectation over $y^{*}$, the Principal payoff under $noreg$ is $o(T)$, and the Principal payoff when the Agent is playing either $a^{*}$ or $b^{*}$ is $\Omega(T)$ (Lemma~\ref{lem:payoffUBextended}). This implies that there is at least one sequence under which this difference is realized, or in other words, there is a sequence where the Principal has significant regret when $noreg$ is played rather than $a^{*}$ or $b^{*}$. The final piece we need is that such a sequence exists where $balanced_{all}$ is satisfied, in order that the Agent is actually playing $a^{*}$ or $b^{*}$. Because the probability of a sequence from $y^{*}$ not satisfying the balanced condition approaches $0$ with $T$ (Lemma~\ref{lem:iteratedlog}), we can show that such a sequence must exist.

Next, we turn to proving that $\cL$ has vanishing Contextual Swap regret in Lemma~\ref{lem:contextualnegative}. Towards this, use the fact that if $balanced_{all}$ is true and the Principal is playing in a way that causes the Agent to play $a^{*}$ or $b^{*}$, $a^{*}$ and $b^{*}$ have bounded swap regret (Lemma~\ref{lem:boundreg}). We use this with the fact that $\cL$ switches to playing $noreg$ when either $balanced_{all}$ is not true or the Principal misbehaves to show that $\cL$ always has vanishing swap regret. Combining Lemmas~\ref{lem:nonvanishing} and~\ref{lem:contextualnegative} completes the proof of Proposition~\ref{lem:lb-easy}.

Finally, in the case where $noreg$ also has bounded negative regret, we show in Lemma~\ref{lem:negative} that $\cL$ has bounded negative regret as well, completing the proof of Proposition~\ref{lem:lb-hard}.
\end{proof}

\begin{lemma}
For any Principal mechanism, there is a sequence of states of nature such that $\mathcal{L}$ will ensure the Principal non-vanishing policy regret.
\label{lem:nonvanishing}
\end{lemma}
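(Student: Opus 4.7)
The plan is to argue by the probabilistic method using the i.i.d.\ distribution $y^{*}$ over $\{M,H\}^{T}$ with each $y^{*}_{t}\sim\Ber(1/2)$. The goal is to show that for any fixed Principal mechanism $\sigma$, the expected policy regret $\EEs{y^{*}}{\text{PR}(\sigma,\cL,y^{*}_{1:T})}$ is $\Omega(1)$, from which the existence of some deterministic $\bar y_{1:T}$ with $\Omega(1)$ regret follows by averaging.

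First I would apply Lemma~\ref{lem:iteratedlog} to conclude that $\Pr_{y^{*}}[balanced_{all}] = 1-o(1)$, so that $\cL$ never spuriously switches into the $noreg$ branch on account of an unbalanced state history. Conditioned on $balanced_{all}$, the Agent's behavior is determined entirely by $y_{1}$ and by whether each subsequent output $(p_{t},r_{t})$ matches the branch-specific trigger $(0.5,work)$ (when $y_{1}=M$) or $(0.6,work)$ (when $y_{1}=H$).

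Next I would invoke Lemma~\ref{lem:payoffUBextended}, which asserts that under $y^{*}$ the Principal's payoff is $o(T)$ whenever $\cL$ plays $noreg$ (because any no-swap-regret algorithm degenerates to shirking against linear contracts with $p<5/6$, yielding Principal payoff $0$) and $\Omega(T)$ whenever $\cL$ plays $a^{*}$ or $b^{*}$ (because the Agent then works on every $M$-round, of which there are $\Theta(T)$ by Chernoff). The key combinatorial observation is that $\sigma$ must commit to $(p_{1},r_{1})$ before observing $y_{1}$, so the round-$1$ pair can satisfy at most one of the two distinct triggers $(0.5,work)$ and $(0.6,work)$. Hence at least one of $\{y_{1}=M\}$ or $\{y_{1}=H\}$, each of probability $1/2$, is a ``bad event'' on which $\cL$ enters $noreg$ immediately and stays there. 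On that same bad event, the constant benchmark $\sigma^{p_{0}}$ with $p_{0}\in\{0.5,0.6\}$ chosen to complement $\sigma$'s round-$1$ choice steers $\cL$ into the matching $a^{*}$ or $b^{*}$ branch, giving Principal payoff $\Omega(T)$ against only $o(T)$ under $\sigma$. Averaging against the constant mass of the bad event yields $\EEs{y^{*}}{\text{PR}(\sigma,\cL,y^{*}_{1:T})} = \Omega(1)$, and the probabilistic method then extracts the claimed deterministic sequence.

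The main obstacle I anticipate is showing that the benchmark $\sigma^{p_{0}}$ actually steers $\cL$ into the $a^{*}$ or $b^{*}$ branch, since its recommendation $r^{p_{0}}_{t}=a^{*}(p_{0},\pi_{t})$ depends on the forecasts inherited from $\sigma$'s forecaster, and these need not make $work$ the strict best response. I plan to handle this by breaking Agent-ties in favor of the Principal so that $r^{p_{0}}_{t}=work$ whenever $work$ is among the best responses, and by a separate sub-argument for forecasters for which $work$ is nowhere close to a best response: in that degenerate regime $\sigma$ itself cannot trigger either of $\cL$'s favorable branches, so $\sigma$ loses on both realizations of $y_{1}$ simultaneously, strengthening rather than weakening the regret bound. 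A secondary subtlety is the treatment of randomized mechanisms, which I plan to handle by conditioning on the realization of $\sigma$'s round-$1$ randomness and applying the deterministic argument to each realization before taking an outer expectation.
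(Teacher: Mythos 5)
Your plan follows the same route as the paper: the probabilistic method over the i.i.d.\ state distribution $y^*$, restriction to the balanced event via Lemma~\ref{lem:iteratedlog}, and exploitation of $\sigma$'s forced commitment on round 1 to argue that at least one of the two sub-cases (first state $M$ or $H$) makes $\cL$ defect to $noreg$ with constant probability. You slightly misattribute the helper lemmas: the $o(T)$ bound on Principal payoff against $noreg$ is Lemma~\ref{lem:PrincipalpayoffUB}, which is proved by an independence-of-$a_t$-from-$y_t$ argument rather than by $noreg$ ``degenerating to shirking,'' while Lemma~\ref{lem:payoffUBextended} is an \emph{upper} bound of $T/4+o(T)$ or $T/5+o(T)$ on the Principal's payoff against $\cL$, not the $o(T)$/$\Omega(T)$ dichotomy you describe. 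These misattributions are inessential to the structure, though.

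The genuine problem is in your ``main obstacle'' paragraph. You correctly observe that the formal benchmark $\sigma^{p_0}$ recommends $r^{p_0}_t = a^*(p_0,\pi_t)$, which need not equal $work$, so $\cL$ may not remain in the $a^*$ or $b^*$ branch under the benchmark; the paper sidesteps this by comparing against the fixed sequence $(p_0,work)$ rather than the forecast-dependent $\sigma^{p_0}$. But your proposed repair does not close the gap. Tie-breaking in the Principal's favor cannot force $r^{p_0}_t=work$: under $p_0=0.5$, $work$ is a (weak) best response only when $\pi_t$ places probability exactly $1$ on $M$, and under $p_0=0.6$ only when that probability is at least $5/6$; for a typical forecaster tracking a roughly balanced empirical frequency, these events may never occur. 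And your ``degenerate regime'' fallback is circular: if the forecaster never produces a $work$-favoring forecast, the benchmark $\sigma^{p_0}$ inherits those \emph{same} forecasts, so it too never recommends $work$, $\cL$ defects to $noreg$ in both the realized and the counterfactual world, the Principal receives $o(T)$ in both, and the policy regret is $o(1)$, not $\Omega(1)$. This does not ``strengthen the regret bound''; it collapses it. To rescue the argument you need either to justify directly comparing against the non-adaptive sequence $(p_0,work)$ throughout, as the paper's computation does, or to supply a separate argument that policy regret remains $\Omega(1)$ when the forecaster never recommends $work$---and your current proposal supplies neither.
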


\begin{proof}
Consider any Principal mechanism $\sigma$. On round $t = 1$, before observing any information about the nature states, the mechanism must provide the first policy. There are two cases:
\begin{itemize}
\item The mechanism provides the contract $0.5$ and the recommendation $work$ $w.p. \leq \frac{1}{2}$. Then, we will evaluate the expected regret of $\sigma$ over the distribution of nature states which begin with $y_{1} = M$ and then are distributed according to $y^{*}_{2:T}$. In the first round, with probability at least $\frac{1}{2}$, the Agent immediately begins playing $noreg$. Alternately, if the Principal had played $(0.5,work)$ in the first round (and throughout the entire game), the Agent would have played $a^{*}$. We can compute the regret of the Principal to this alternate policy sequence, in expectation over $y^{*}_{2:T}$. 

\begin{align*}
& \mathbb{E}_{y_{2:T}^{*},\mathcal{L},\sigma}[\sum_{t = 1}^{T}V(a_{t}^{\sigma},(0.5,work),y_{t})] - \mathbb{E}_{y_{2:T}^{*},\mathcal{L},\sigma}[\sum_{t = 1}^{T}V(a_{t}^{\sigma},p_{t}^{\sigma},y_{t})] \\
& = \mathbb{P}(balanced_{all})\mathbb{E}_{y_{2:T}^{*},\mathcal{L}}[\sum_{t = 1}^{T}V(\mathcal{L},(0.5,work),y_{t})|balanced_{all}]\\ & + \mathbb{P}(\neg balanced_{all})\mathbb{E}_{y_{2:T}^{*},\mathcal{L},\sigma}[\sum_{t = 1}^{T}V(a_{t}^{\sigma},(0.5,work),y_{t})|\neg balanced_{all}]  - \mathbb{E}_{y_{2:T}^{*},\mathcal{L},\sigma}[\sum_{t = 1}^{T}V(a_{t}^{\sigma},p_{t}^{\sigma},y_{t})] \\
& \geq \frac{3}{4}\mathbb{E}_{y_{2:T}^{*},\mathcal{L},\sigma}[\sum_{t = 1}^{T}V(a_{t}^{\sigma},(0.5,work),y_{t})|balanced_{all}] - \mathbb{E}_{y_{2:T}^{*},\mathcal{L},\sigma}[\sum_{t = 1}^{T}V(a_{t}^{\sigma},p_{t}^{\sigma},y_{t})] \tag{By Lemma~\ref{lem:iteratedlog}} \\
& \geq \frac{3}{4} \mathbb{E}_{y_{2:T}^{*},a^{*},\sigma}[\sum_{t = 1}^{T}V(a_{t}^{\sigma},(0.5,work),y_{t})|balanced_{all}]\\  & - \mathbb{P}(\sigma_{1}\neq (0.5,work)) \cdot \mathbb{E}_{y_{2:T}^{*},noreg,\sigma}[\sum_{t = 1}^{T}V(noreg^{^{\sigma}},\sigma,p_{t}^{\sigma},y_{t})] \\ & - \mathbb{P}(\sigma_{1} = (0.5,work)) \cdot \mathbb{E}_{y_{2:T}^{*},\mathcal{L},\sigma}[\sum_{t = 1}^{T}V(a_{t}^{\sigma},p_{t}^{\sigma},y_{t})] \\
& \geq \frac{3}{4} \mathbb{E}_{y_{2:T}^{*},a^{*},\sigma}[\sum_{t = 1}^{T}V(a_{t}^{\sigma},(0.5,work),y_{t})] - o(T) \\ & - \mathbb{P}(\sigma_{1}\neq (0.5,work)) \cdot \mathbb{E}_{y_{2:T}^{*},noreg,\sigma}[\sum_{t = 1}^{T}V(noreg^{^{\sigma}},\sigma,p_{t}^{\sigma},y_{t})] \\ & - \mathbb{P}(\sigma_{1} = (0.5,work)) \cdot \mathbb{E}_{y_{2:T}^{*},\mathcal{L},\sigma}[\sum_{t = 1}^{T}V(a_{t}^{\sigma},p_{t}^{\sigma},y_{t})] \tag{By Lemma~\ref{lem:balanced_close}}\\
& = \frac{3}{4}(\frac{T}{2} - \frac{1}{2}\cdot \frac{T}{2}) - \mathbb{P}(\sigma_{1}\neq (0.5,work)) \cdot \mathbb{E}_{y_{2:T}^{*},noreg,\sigma}[\sum_{t = 1}^{T}V(noreg^{^{\sigma}},\sigma,p_{t}^{\sigma},y_{t})] \\ & - \mathbb{P}(\sigma_{1} = (0.5,work)) \cdot \mathbb{E}_{y_{2:T}^{*},\mathcal{L},\sigma}[\sum_{t = 1}^{T}V(a_{t}^{\sigma},p_{t}^{\sigma},y_{t})] \tag{By the definition of $a^{*}$ over $y^{*}$} \\
& = \frac{3}{4}\cdot \frac{T}{4} - \mathbb{P}(\sigma_{1}\neq (0.5,work)) \cdot o(T) - \mathbb{P}(\sigma_{1} = (0.5,work)) \cdot \mathbb{E}_{y_{2:T}^{*},\mathcal{L},\sigma}[\sum_{t = 1}^{T}V(a_{t}^{\sigma},p_{t}^{\sigma},y_{t})] \tag{By Lemma~\ref{lem:PrincipalpayoffUB}} \\
& \geq \frac{3}{4} \cdot \frac{T}{4} - \mathbb{P}(\sigma_{1}\neq (0.5,work)) \cdot o(T) - \mathbb{P}(\sigma_{1} = (0.5,work)) \cdot (\frac{T}{4} + o(T)) \tag{By Lemma~\ref{lem:payoffUBextended}} \\
& \geq \frac{3}{4} \cdot \frac{T}{4} - \frac{1}{2} \cdot o(T) - \frac{1}{2} \cdot (\frac{T}{4} + o(T)) = \frac{3T}{16} - \frac{T}{8} - o(T) \\
& = \Omega(T) \\
\end{align*}

The expected total regret over this distribution of sequences against $\mathcal{L}$ is $\Omega(T)$. Therefore, there must be at least one sequence beginning with $M$ that has regret of $\Omega(T)$.

\item The mechanism provides the contract $0.6$ and the recommendation $work$ w.p. $\leq \frac{1}{2}$. Then, we will evaluate the expected regret of $\sigma$ over the distribution of nature states which begin with $y_{1} = H$ and then are distributed according to $y^{*}_{2:T}$. There is at least a $\frac{1}{2}$ probability that after the first round, the Agent immediately begins playing $noreg$. Alternately, if the Principal had played $(0.6,work)$ in the first round (and throughout the entire game), the Agent would have played $b^{*}$. We can compute the regret of the Principal to this alternate policy sequence, in expectation over $y^{*}_{2:T}$:

\begin{align*}
& \mathbb{E}_{y_{2:T}^{*},\mathcal{L},\sigma}[\sum_{t = 1}^{T}V(a_{t}^{\sigma},(0.6,work),y_{t})] - \mathbb{E}_{y_{2:T}^{*},\mathcal{L},\sigma}[\sum_{t = 1}^{T}V(a_{t}^{\sigma},p_{t}^{\sigma},y_{t})] \\
& = \mathbb{P}(balanced_{all})\mathbb{E}_{y_{2:T}^{*},\mathcal{L}}[\sum_{t = 1}^{T}V(\mathcal{L},(0.6,work),y_{t})|balanced_{all}]\\ & + \mathbb{P}(\neg balanced_{all})\mathbb{E}_{y_{2:T}^{*},\mathcal{L},\sigma}[\sum_{t = 1}^{T}V(a_{t}^{\sigma},(0.6,work),y_{t})|\neg balanced_{all}]  - \mathbb{E}_{y_{2:T}^{*},\mathcal{L},\sigma}[\sum_{t = 1}^{T}V(a_{t}^{\sigma},p_{t}^{\sigma},y_{t})] \\
& \geq \frac{3}{4}\mathbb{E}_{y_{2:T}^{*},\mathcal{L},\sigma}[\sum_{t = 1}^{T}V(a_{t}^{\sigma},(0.6,work),y_{t})|balanced_{all}] - \mathbb{E}_{y_{2:T}^{*},\mathcal{L},\sigma}[\sum_{t = 1}^{T}V(a_{t}^{\sigma},p_{t}^{\sigma},y_{t})] \tag{By Lemma~\ref{lem:iteratedlog}} \\
& \geq \frac{3}{4} \mathbb{E}_{y_{2:T}^{*},b^{*},\sigma}[\sum_{t = 1}^{T}V(a_{t}^{\sigma},(0.6,work),y_{t})|balanced_{all}]\\  & - \mathbb{P}(\sigma_{1}\neq (0.6,work)) \cdot \mathbb{E}_{y_{2:T}^{*},noreg,\sigma}[\sum_{t = 1}^{T}V(noreg^{^{\sigma}},\sigma,p_{t}^{\sigma},y_{t})] \\ & - \mathbb{P}(\sigma_{1} = (0.6,work)) \cdot \mathbb{E}_{y_{2:T}^{*},\mathcal{L},\sigma}[\sum_{t = 1}^{T}V(a_{t}^{\sigma},p_{t}^{\sigma},y_{t})] \\
& \geq \frac{3}{4} \mathbb{E}_{y_{2:T}^{*},b^{*},\sigma}[\sum_{t = 1}^{T}V(a_{t}^{\sigma},(0.6,work),y_{t})] - o(T) \\ & - \mathbb{P}(\sigma_{1}\neq (0.6,work)) \cdot \mathbb{E}_{y_{2:T}^{*},noreg,\sigma}[\sum_{t = 1}^{T}V(noreg^{^{\sigma}},\sigma,p_{t}^{\sigma},y_{t})] \\ & - \mathbb{P}(\sigma_{1} = (0.6,work)) \cdot \mathbb{E}_{y_{2:T}^{*},\mathcal{L},\sigma}[\sum_{t = 1}^{T}V(a_{t}^{\sigma},p_{t}^{\sigma},y_{t})] \tag{By Lemma~\ref{lem:balanced_close}}\\
& = \frac{3}{4}(\frac{T}{5}) - \mathbb{P}(\sigma_{1}\neq (0.6,work)) \cdot \mathbb{E}_{y_{2:T}^{*},noreg,\sigma}[\sum_{t = 1}^{T}V(noreg^{^{\sigma}},\sigma,p_{t}^{\sigma},y_{t})] \\ & - \mathbb{P}(\sigma_{1} = (0.6,work)) \cdot \mathbb{E}_{y_{2:T}^{*},\mathcal{L},\sigma}[\sum_{t = 1}^{T}V(a_{t}^{\sigma},p_{t}^{\sigma},y_{t})] \tag{By the definition of $b^{*}$ over $y^{*}$} \\
& = \frac{3}{4}\cdot \frac{T}{5} - \mathbb{P}(\sigma_{1}\neq (0.6,work)) \cdot o(T) - \mathbb{P}(\sigma_{1} = (0.6,work)) \cdot \mathbb{E}_{y_{2:T}^{*},\mathcal{L},\sigma}[\sum_{t = 1}^{T}V(a_{t}^{\sigma},p_{t}^{\sigma},y_{t})] \tag{By Lemma~\ref{lem:PrincipalpayoffUB}} \\
& \geq \frac{3T}{20} - \mathbb{P}(\sigma_{1}\neq (0.6,work)) \cdot o(T) - \mathbb{P}(\sigma_{1} = (0.6,work)) \cdot (\frac{T}{4} + o(T)) \tag{By Lemma~\ref{lem:payoffUBextended}} \\
& \geq \frac{3T}{20} - \frac{1}{2} \cdot o(T) - \frac{1}{2} \cdot (\frac{T}{4} + o(T)) = \frac{3T}{20} - \frac{T}{8} - o(T) \\
& = \Omega(T) \\
\end{align*}

For an equivalent argument to the first case, there must be at least one sequence beginning with $H$ that has regret of $\Omega(T)$

\end{itemize}
\end{proof}

Next, we show that these learning algorithm which can guarantee the Principal non-vanishing policy regret also satisfies our assumption that the Agent achieves no swap regret.

\begin{lemma}
$\mathcal{L}$ will have vanishing contextual swap regret  \label{lem:contextualnegative}
\end{lemma}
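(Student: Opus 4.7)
The plan is to decompose the Agent's total contextual swap regret by partitioning play into the \emph{initial phase} (in which $\mathcal{L}$ plays $a^*$ or $b^*$) and the \emph{noreg phase} (in which $\mathcal{L}$ plays $noreg$), and then to bound each piece separately. Concretely, for any subsequence defined by a fixed policy--recommendation pair $(p,r)$, the contextual swap regret on that subsequence is at most the sum of the swap regret on the initial-phase rounds with $(p_t,r_t)=(p,r)$ plus the swap regret on the noreg-phase rounds with $(p_t,r_t)=(p,r)$; this is just the fact that $\max_h\bigl(\sum_A + \sum_B\bigr)\leq \max_h\sum_A +\max_h\sum_B$. Summing over all $(p,r)$ will give the total contextual swap regret.

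Next I would observe that by the construction of $\mathcal{L}$, every round of the initial phase occurs while the Principal plays the single pair $(\bar p, work)$ (with $\bar p\in\{0.5,0.6\}$ depending on $y_1$) and while $balanced_t$ holds. Consequently, all initial-phase contributions concentrate on the single subsequence $(p,r)=(\bar p, work)$, and every other $(p,r)$-subsequence is entirely contained in the noreg phase. For the non-$(\bar p, work)$ subsequences, the swap regret contribution is therefore upper bounded directly by the contextual swap regret of $noreg$ run on the adversarial tail of the game, which is $o(1)$ by the choice of $noreg$.

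For the remaining $(\bar p, work)$ subsequence, I would bound its two pieces separately. The noreg-phase piece is again $o(1)$ by the no-contextual-swap-regret property of $noreg$. For the initial-phase piece, I appeal to the forthcoming Lemma~\ref{lem:boundreg}, which (by the small calculation in my sketch above: under a fixed linear contract $p\in\{0.5,0.6\}$ and equal numbers of $M$ and $H$ states, both $a^*$ and $b^*$ achieve the same expected per-round utility $0$ as every swap of their actions) shows that on any balanced subsequence the swap regret of $a^*$ against $(0.5,work)$ and of $b^*$ against $(0.6,work)$ is $o(T)$. Since $balanced_t$ is enforced throughout the initial phase, this piece is $o(1)$ as well. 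Summing the $O(|\cP_{\delta}||\cA|)$ subsequence contributions then yields vanishing contextual swap regret.

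The main obstacle is the initial-phase bound on $(\bar p, work)$, because one must account both for the Agent's internal randomization in $b^*$ and for the slack in the balanced condition (the gap between $m_{m,t}$ and $m_{h,t}$ is not exactly $0$ but only $O(\sqrt{T\log\log T})$). Handling this carefully requires showing that the expected utility of every swap modification of $a^*$ or $b^*$ differs from that of $a^*$ (resp.\ $b^*$) by at most an additive $O(\sqrt{T\log\log T})$ term, which translates to $o(1)$ per-round swap regret. All other steps (the decomposition and the appeals to the $noreg$ guarantee) are essentially syntactic.
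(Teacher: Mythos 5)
Your proposal is correct and tracks the paper's own proof almost exactly: the paper also splits the game at the round $t_b$ at which $\mathcal{L}$ defects to $noreg$, bounds the prefix (during which $\mathcal{L}$ plays $a^*$ or $b^*$ under a fixed $(\bar p, work)$ while $balanced_t$ holds) by $o(T)$ via Lemma~\ref{lem:boundreg}, and bounds the suffix by the contextual-swap-regret guarantee of $noreg$. The only cosmetic difference is that you organize the decomposition first by $(p,r)$-context and then by time, whereas the paper splits only by time and uses subadditivity of $\max_h$ over the two pieces directly; both routes are syntactically equivalent here because, as you observe, the initial phase lives entirely on the single subsequence $(\bar p, work)$, so the two decompositions collapse to the same two terms.
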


\begin{proof}

We will prove that $\mathcal{L}$ has vanishing contextual swap regret against any mechanism $\sigma$. Because this set of mechanisms includes all constant mechanisms, we now only need to prove this one stronger claim instead of two claims to satisfy assumption~\ref{asp:no-internal-reg}.
Let $t_{b}$ be the first round when the Agent defects to begin playing $noreg$. Then, the contextual swap regret of $\mathcal{L}$ against any sequence $y$ (not necessarily drawn from $y^{*}$) can be expressed as

\begin{align*}
&T \cdot \ir(y_{1:T},p_{1:T},r_{1:T}) = \E_{\cL,\sigma}\left[\max_{h:\cP\times \cA\times \cA\mapsto \cA}\sum_{t=1}^T (U(h(p_t^{\sigma},r_t^{\sigma},a_t^{\sigma}),p_t^{\sigma}, y_t) - U(a_t^{\sigma},p_t^{\sigma},y_t^{\sigma})) \right]\  \\
& = \E_{\cL,\sigma}\left[\max_{h:\cP\times \cA\times \cA\mapsto \cA}\sum_{t=1}^{t_{b}} (U(h(p_t^{\sigma},r_t^{\sigma},a_t^{\sigma}),p_t^{\sigma}, y_t) - U(a_t^{\sigma},p_t^{\sigma},y_t)) \right]\ + \\ & \E_{\cL,\sigma}\left[\max_{h:\cP\times \cA\times \cA\mapsto \cA}\sum_{t=t_{b}+1}^T (U(h(p_t^{\sigma},r_t^{\sigma},a_t^{\sigma}),p_t^{\sigma}, y_t) - U(a_t^{\sigma},p_t^{\sigma},y_t)) \right]\   \\
& = \E_{\cL,\sigma}\left[\max_{h:\cP\times \cA\times \cA\mapsto \cA}\sum_{t=1}^{t_{b}} (U(h(p_t^{\sigma},r_t^{\sigma},a_t^{\sigma}),p_t^{\sigma}, y_t) - U(a_t^{\sigma},p_t^{\sigma},y_t)) \right]\ + \\ & \E_{noreg,\sigma}\left[\max_{h:\cP\times \cA\times \cA\mapsto \cA}\sum_{t=t_{b}+1}^T (U(h(p_t^{\sigma},r_t^{\sigma},a_t^{\sigma}),p_t^{\sigma}, y_t) - U(a_t^{\sigma},p_t^{\sigma},y_t)) \right]\ \tag{By the fact that $\mathcal{L}$ begins playing $noreg$ at $t_{b}+1$}  \\
& \leq o(T) + \E_{noreg,\sigma}\left[\max_{h:\cP\times \cA\times \cA\mapsto \cA}\sum_{t=t_{b}+1}^T (U(h(p_t^{\sigma},r_t^{\sigma},a_t^{\sigma}),p_t^{\sigma}, y_t) - U(a_t^{\sigma},p_t^{\sigma},y_t)) \right]\ \tag{By Lemma~\ref{lem:boundreg}}  \\
& \leq o(T) \tag{By the fact that $noreg$ has bounded contextual swap regret} \\
\end{align*}

Thus, $\ir(y_{1:T},p_{1:T},r_{1:T}) \leq \frac{o(T)}{T} = o(1)$

\end{proof}

\begin{lemma}
As long as $noreg$ has vanishing negative regret, $\mathcal{L}$ will have vanishing negative regret. \label{lem:negative}
\end{lemma}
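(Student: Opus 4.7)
The plan is to mirror the pre-/post-switch decomposition used for Lemma~\ref{lem:contextualnegative}. Let $t_b$ denote the first round at which $\mathcal{L}$ switches to $noreg$. By the design of $\mathcal{L}$, during rounds $1,\ldots,t_b$ the Principal must be playing the constant pair $(p^*,work)$ with $p^*\in\{0.5,0.6\}$ and the balanced condition still holds, while from round $t_b+1$ onward $\mathcal{L}$'s actions coincide with those of $noreg$. Because $T\cdot\nr(\mathcal{L})$ equals the expected gap between $\sum_t U(a_t,p_t,y_t)$ and $\max_h\sum_t U(h(p_t,r_t),p_t,y_t)$, it suffices to exhibit a single mapping $h:\cP_0\times\cA\to\cA$ for which the pre-switch sum $\sum_{t\le t_b}\bigl(U(a_t,p^*,y_t)-U(h(p^*,work),p^*,y_t)\bigr)$ and the post-switch sum $\sum_{t>t_b}\bigl(U(a_t,p_t,y_t)-U(h(p_t,r_t),p_t,y_t)\bigr)$ are each $o(T)$ in expectation. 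I would construct $h$ piecewise: pin $h(p^*,work)$ using the pre-switch subsequence, and set $h$ on every other $(p,r)$ pair from the post-switch subsequence.

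For the pre-switch piece, I would take $h(p^*,work)$ equal to the best fixed action on rounds $1,\ldots,t_b$ and verify directly that the Agent's gap against this benchmark is $\tilde O(\sqrt T)$. For $a^*$ at $p^*=0.5$, both $a^*$'s expected utility and every fixed action's utility are zero per round, so the gap is exactly zero. For $b^*$ at $p^*=0.6$, $b^*$'s total expected utility is $0.2(m_{pre}-h_{pre})$ (plus an $O(1)$ contribution from round $1$), while the best fixed action earns $\max(0.2m_{pre}-h_{pre},0)$; a short case analysis on the sign of $0.2m_{pre}-h_{pre}$ combined with the balanced condition $|m_{pre}-h_{pre}|\le\tilde O(\sqrt T)$ forces the difference to be $\tilde O(\sqrt T)$, which is $o(T)$.

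For the post-switch piece, I would extend $h$ on every pair $(p,r)\ne(p^*,work)$ to agree with the mapping $h'$ that maximizes the post-switch benchmark $\sum_{t>t_b}U(h'(p_t,r_t),p_t,y_t)$. Applying the assumed vanishing negative regret of $noreg$ to this maximizer yields an $o(T)$ bound on the contribution from post-switch rounds whose $(p_t,r_t)$ differs from $(p^*,work)$. The delicate step, which I expect to be the main obstacle, is the gluing on any remaining post-switch rounds with $(p_t,r_t)=(p^*,work)$: the single benchmark value $h(p^*,work)$ has already been pinned to the pre-switch-optimal action and may disagree with the post-switch-optimal action at that pair. I would resolve this by chaining the optimality inequalities between the pre-switch-optimal, post-switch-optimal, and pooled-optimal actions at $(p^*,work)$, absorbing the resulting mismatch into the combined $o(T)$ slack already produced by the pre-switch bound and the negative-regret guarantee on $noreg$. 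Dividing by $T$ and taking expectation over $\mathcal{L}$'s internal randomness then yields $\nr(\mathcal{L})\le o(1)$, as required.
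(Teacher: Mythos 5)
Your instinct about the ``gluing'' step is exactly right, and I do not think it can be closed by the proposed chaining of optimality inequalities: when the Principal revisits $(p^*,work)$ after $t_b$ with a very different empirical distribution of states, the pre-switch-optimal, post-switch-optimal, and pooled-optimal actions at $(p^*,work)$ can be $\Theta(T)$ apart in value. Concretely, let $y_1 = H$ so that $b^*$ is selected with $\bar p = 0.6$; let the Principal play $(0.6,work)$ at every round except round $T/2+1$, triggering the switch there; let $y_{2:T/2}$ alternate $M,H$ so that $b^*$ earns $\approx 0$ before the switch; and let $y_{T/2+2:T}$ be all $M$. Then $noreg$ earns $\approx 0.2 \cdot T/2 = 0.1T$ after the switch (by its swap-regret guarantee, since $work$ earns $0.2$ per round there), while the pooled benchmark at $(0.6,work)$ over the whole game is $\max\!\left(0.2\cdot\tfrac{3T}{4} - \tfrac{T}{4},\,0\right)=0$, so the cumulative negative regret is $\approx 0.1T$. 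No single choice of $h(p^*,work)$ repairs this: picking $shirk$ leaves the post-switch piece of order $T$, while picking $work$ leaves the pre-switch piece of order $T$.

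It is worth noting that the paper's own proof contains the same unaddressed step: it decomposes $\max_h \sum_{t=1}^T U(h(p_t,r_t),p_t,y_t)$ into $\max_h\sum_{t\le t_b}(\cdot) + \max_h\sum_{t>t_b}(\cdot)$ and treats this as an equality, but by subadditivity of the max this replacement can only \emph{enlarge} the benchmark, which makes the negative regret \emph{smaller} and therefore cannot furnish an \emph{upper} bound on it. (The mirror-image decomposition in the proof of Lemma~\ref{lem:contextualnegative} is fine precisely because swap regret carries the benchmark with a plus sign, so enlarging it is harmless there.) So the obstacle you flagged is real, not a presentation detail: the statement appears to require either an extra structural hypothesis ruling out the Principal reissuing the pre-switch pair after the switch, or a modification of $\mathcal{L}$, before the claimed bound follows.
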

\begin{proof}

Let $t_{b}$ be the first round in which the Agent begins playing $noreg$.
We can split up the negative regret of the Agent as follows:

\begin{align*}
&T \cdot \nr(y_{1:T},p_{1:T}^\sigma,r^\sigma_{1:T}) =\EEs{\cL,\sigma}{\sum_{t=1}^T U(a_t^\sigma,p_t^\sigma,y_t)- \max_{h:\cP_0 \times \cA\mapsto \cA} (h(p_t^\sigma, r_t^\sigma),p_t^\sigma,y_t)} \\
& =\EEs{\mathcal{L},\sigma}\sum_{t=1}^{t_{b}} U(a_t^\sigma,p_t^\sigma,y_t)- \max_{h:\cP_0 \times \cA\mapsto \cA} {\sum_{t=1}^{t_{b}} U(h(p_t^\sigma, r_t^\sigma),p_t^\sigma,y_t)} \\
& + \EEs{noreg,\sigma}\sum_{t=t_{b}+1}^T U(a_t^\sigma,p_t^\sigma,y_t)- \max_{h:\cP_0 \times \cA\mapsto \cA} {\sum_{t=t_{b+1}}^{T} U(h(p_t^\sigma, r_t^\sigma),p_t^\sigma,y_t)} \\
& \leq \EEs{\mathcal{L},\sigma}\sum_{t=1}^{t_{b}} U(a_t^\sigma,p_t^\sigma,y_t)- \max_{h:\cP_0 \times \cA\mapsto \cA} {\sum_{t=1}^{t_{b}} U(h(p_t^\sigma, r_t^\sigma),p_t^\sigma,y_t)} + o(T) \tag{By the fact that $noreg$ has vanishing negative regret.} \\
& \leq o(T) \tag{
By Lemma~\ref{lem:boundreg}.}
\end{align*}

Thus, $\nr \leq \frac{o(T)}{T} = o(1)$.

\end{proof}

\label{app:imposs}
\begin{lemma}
$\mathbb{E}_{y^{*},\mathcal{L},\sigma}[\sum_{t = 1}^{T}V(a_{t}^{\sigma},(0.5,work),y_{t})] \leq \mathbb{E}_{y^{*},\mathcal{L},\sigma}[\sum_{t = 1}^{T}V(a_{t}^{\sigma},(0.5,work),y_{t})|balanced_{all}] + o(T)$
\label{lem:balanced_close}
\end{lemma}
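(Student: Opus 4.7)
The plan is to prove this by straightforward conditioning on the event $balanced_{all}$ and bounding the contribution from its complement. I would write
\begin{align*}
\mathbb{E}_{y^*,\mathcal{L},\sigma}\!\left[\textstyle\sum_t V(a_t^\sigma,(0.5,work),y_t)\right]
&= \mathbb{P}(balanced_{all})\,\mathbb{E}[X \mid balanced_{all}] \\
&\quad + \mathbb{P}(\neg balanced_{all})\,\mathbb{E}[X \mid \neg balanced_{all}],
\end{align*}
where $X = \sum_t V(a_t^\sigma,(0.5,work),y_t)$. Since $V \in [0,2]$ and $\mathbb{P}(balanced_{all}) \le 1$, the first term is at most $\mathbb{E}[X\mid balanced_{all}]$, and the second term is at most $2T \cdot \mathbb{P}(\neg balanced_{all})$.

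The whole statement then reduces to showing $T \cdot \mathbb{P}(\neg balanced_{all}) = o(T)$, i.e. $\mathbb{P}(\neg balanced_{all}) = o(1)$. This is exactly what Lemma~\ref{lem:iteratedlog} is designed to deliver: the constant $\sqrt{12T\ln(2(1+\log_2 T)^2)}$ was tuned precisely so that a peeling argument over dyadic time scales combined with a Hoeffding bound (on the $\{-1,+1\}$-valued random walk $m_{m,t} - m_{h,t}$ under $y^*$) yields $\mathbb{P}(\neg balanced_{all}) \to 0$ as $T\to\infty$; in fact it shrinks polynomially fast in $T$, so $T \cdot \mathbb{P}(\neg balanced_{all}) = o(T)$ with room to spare.

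Since Lemma~\ref{lem:iteratedlog} is already stated (and used with the ``$\geq 3/4$'' constant in the main regret derivation), I would simply invoke it here for the $o(1)$ conclusion. Plugging back gives
\[
\mathbb{E}[X] \;\le\; \mathbb{E}[X \mid balanced_{all}] + 2T \cdot \mathbb{P}(\neg balanced_{all}) \;=\; \mathbb{E}[X \mid balanced_{all}] + o(T),
\]
which is the claim.

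The only real obstacle is confirming that Lemma~\ref{lem:iteratedlog} indeed gives the decay rate $\mathbb{P}(\neg balanced_{all}) = o(1)$ rather than just a fixed constant like $1/4$; if it only yields a constant, this lemma as stated would be false and we would need to strengthen the bound (e.g.\ by replacing the balanced threshold with something slightly larger, or by using the law of the iterated logarithm directly to make the deviation probability vanish). Given the parameter choice $\sqrt{12T\ln(2(1+\log_2 T)^2)}$, the standard dyadic-peeling computation does give a vanishing tail, so this obstacle is essentially a verification task rather than a conceptual one.
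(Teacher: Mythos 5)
Your argument is exactly the paper's: condition on $balanced_{all}$, drop $\mathbb{P}(balanced_{all})\le 1$ on the main term (valid since $V\ge 0$ in this contracting instance), and bound the complement term by $V_{\max}\cdot T\cdot \mathbb{P}(\neg balanced_{all})$. Your residual concern is resolved: Lemma~\ref{lem:iteratedlog} in fact yields $\mathbb{P}(\neg balanced_{all}) \le T^{-1/10}$ (the exponent sign in the lemma's statement is a typo---its proof establishes ``with probability $\geq 1-T^{-1/10}$''), so the complement contribution is $O(T^{9/10})=o(T)$, and the constant $3/4$ appearing elsewhere in the regret derivation is merely a convenient weakening of this same bound.
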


\begin{proof}

In this proof we use the fact that the distributions $y^{*}|balanced_{all}$ and $y^{*}$ are very close to each other to show that the Principal's expected payoff must be similar under both.

\begin{align*}
&\mathbb{E}_{y^{*},\mathcal{L},\sigma}[\sum_{t = 1}^{T}V(a_{t}^{\sigma},(0.5,work),y_{t})] \\
& =\mathbb{P}(balanced_{all})\mathbb{E}_{y^{*},\mathcal{L},\sigma}[\sum_{t = 1}^{T}V(a_{t}^{\sigma},(0.5,work),y_{t})|balanced_{all}] \\ & + \mathbb{P}(\neg balanced_{all})\mathbb{E}_{y^{*},\mathcal{L},\sigma}[\sum_{t = 1}^{T}V(a_{t}^{\sigma},(0.5,work),y_{t})|\neg balanced_{all}] \\
& \leq \mathbb{P}(balanced_{all})\mathbb{E}_{y^{*},\mathcal{L},\sigma}[\sum_{t = 1}^{T}V(a_{t}^{\sigma},(0.5,work),y_{t})|balanced_{all}] \\ & + T^{-\frac{1}{10}}\mathbb{E}_{y^{*},\mathcal{L},\sigma}[\sum_{t = 1}^{T}V(a_{t}^{\sigma},(0.5,work),y_{t})|\neg balanced_{all}]  \tag{By Lemma~\ref{lem:iteratedlog}}\\ 
& \leq \mathbb{P}(balanced_{all})\mathbb{E}_{y^{*},\mathcal{L},\sigma}[\sum_{t = 1}^{T}V(a_{t}^{\sigma},(0.5,work),y_{t})|balanced_{all}] + T^{-\frac{1}{10}} \cdot T \\ 
& \leq \mathbb{E}_{y^{*},\mathcal{L},\sigma}[\sum_{t = 1}^{T}V(a_{t}^{\sigma},(0.5,work),y_{t})|balanced_{all}] + o(T) \\ 
\end{align*}
\end{proof}

\begin{lemma}
If $y_{2:T} \sim y^{*}_{2:T}$, with probability at least $1 - T^{\frac{1}{10}}$, $balanced_{all}=true$. Furthermore, $balanced_{all}$ implies that the difference between the number of $M$ and $H$ states is $o(T)$. \label{lem:iteratedlog}
\end{lemma}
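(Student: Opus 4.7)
The plan is to recognize $m_{m,t} - m_{h,t}$ as a simple symmetric random walk under the i.i.d.\ uniform distribution $y^*$. Writing $X_i = \mathds{1}[y_i = M] - \mathds{1}[y_i = H] \in \{-1,+1\}$ and $S_t = m_{m,t} - m_{h,t} = \sum_{i=1}^t X_i$, the process $(S_t)_{t \ge 0}$ is a zero-mean martingale with $|X_i| \le 1$ (the fact that $y_1$ is not drawn from $y^*$ only shifts $S_t$ by at most $1$ and does not affect the asymptotics).

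Next, I would apply Azuma--Hoeffding's maximal inequality (equivalently, the reflection principle combined with Hoeffding's tail bound), which gives $\Pr[\max_{t \le T} |S_t| \ge a] \le 2\exp(-a^2/(2T))$ for any $a > 0$. Plugging in the threshold $a = \sqrt{12T \ln(2(1+\log_2 T)^2)}$ from the definition of $balanced_t$ yields $\Pr[\neg balanced_{all}] \le 2\,(2(1+\log_2 T)^2)^{-6}$, which is $o(1)$ in $T$; by enlarging the threshold mildly (to order $\sqrt{T \ln T}$) one obtains the explicitly polynomial-in-$T$ bound $T^{-1/10}$ stated in the lemma.

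For the ``furthermore'' claim, observe directly that the threshold is of order $\sqrt{T \log \log T} = o(T)$. Hence on the event $balanced_{all}$, for every $t \le T$ (and in particular for $t = T$) we have $|m_{m,t} - m_{h,t}| = o(T)$, as desired.

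The main subtlety—really the only one worth flagging—is that $balanced_{all}$ demands the bound hold \emph{simultaneously} for every round $t \in [1,T]$, not just at the final round. A naive union bound over the $T$ rounds would cost a factor of $T$ in the tail probability and destroy any sub-polynomial target rate; what makes the proof go through cleanly is that a maximal inequality controls the supremum $\max_{t \le T} |S_t|$ at essentially the same cost as controlling $|S_T|$ alone, losing only a constant factor.
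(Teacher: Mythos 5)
Your route is genuinely different from the paper's. Where the paper invokes the nonasymptotic Law of the Iterated Logarithm of \cite{balsubramani2015sharp} to control $\max_{t\le T}|S_t|$ uniformly, you use the elementary maximal form of Hoeffding/Azuma (via the exponential submartingale, or reflection for the Rademacher walk). This is cleaner and self-contained, and you correctly identify the crux: a naive union bound over $t\le T$ costs a factor of $T$ that would ruin any useful rate, so one must bound the running maximum directly.

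That said, you have correctly flagged a quantitative gap between your calculation and the stated lemma, and it deserves emphasis. With the paper's threshold $a=\sqrt{12T\ln(2(1+\log_2T)^2)}$, which is $\Theta(\sqrt{T\log\log T})$, your bound gives $\Pr[\neg\, balanced_{all}]\le 2\exp(-a^2/(2T))=2\bigl(2(1+\log_2T)^2\bigr)^{-6}$, a polylogarithmic decay rate, not the claimed $T^{-1/10}$ (the lemma's ``$1-T^{1/10}$'' is evidently a typo for $1-T^{-1/10}$). Your suggested repair — enlarge the threshold to $\Theta(\sqrt{T\ln T})$ — would give the polynomial rate, but the threshold is hard-coded into the definition of $balanced_t$ inside $\cL$, so you cannot simply enlarge it here without also adjusting the construction and rechecking the regret lemmas that reference it. It is worth noting that the paper's own proof is not watertight on this point either: with $\delta=T^{-1/10}$ the Balsubramani bound contains a $\log(2T^{1/10})=\Theta(\log T)$ term, producing a uniform bound of order $\sqrt{T\log T}$, which strictly exceeds the $\Theta(\sqrt{T\log\log T})$ threshold in $balanced_t$ — so the cited inequality as applied does not certify $balanced_{all}$ cleanly. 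Fortunately, the only fact the downstream arguments (Lemmas~\ref{lem:nonvanishing} and~\ref{lem:balanced_close}) actually consume is $\Pr[\neg\, balanced_{all}]=o(1)$, so that $\Pr[balanced_{all}]\ge\frac{3}{4}$ for large $T$ and $\Pr[\neg\, balanced_{all}]\cdot T=o(T)$; both your polylogarithmic rate and the paper's (once repaired) rate deliver that. Your ``furthermore'' argument is correct: the threshold is $O(\sqrt{T\log\log T})=o(T)$, so on $balanced_{all}$ the imbalance is $o(T)$ at every $t\le T$.
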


\begin{proof}
Let us consider $y^{*}_{2:T}$ to be a sequence of independent, identically distributed random variables $S$, where the value is $1$ when the state is $M$ and $-1$ otherwise. Then they have mean $0$ and variance $1$. The absolute value of the difference between the number of $M$ states and the number of $H$ states is now exactly equal to $|S_{T}| = |\sum_{i=1}^{T}y_{i}|$.

This is now a Rademacher random walk. By an application of the nonasymptotic version of the Law of Iterated Logarithm in \cite{balsubramani2015sharp}, we have that with probability $\geq 1 - T^{-\frac{1}{10}}$, for all $t \leq T$  simultaneously,

\begin{align*}
& |S_{t}| \leq \sqrt{3t(2log(log(\frac{5}{2}t))+ log(2T^{\frac{1}{10}))}} \\
& \leq \sqrt{3T(2log(log(\frac{5}{2}T))+ log(2T^{\frac{1}{10}))}} = o(T) \\
\end{align*}

\end{proof}

\begin{lemma}
In expectation over $y^{*}$, the expected payoff of any mechanism $\sigma$ against $noreg$ is at most $o(T)$. \label{lem:PrincipalpayoffUB}
\end{lemma}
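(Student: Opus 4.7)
The plan is to combine the sublinear contextual swap regret of $noreg$ with the fact that under $y^{*}$ the states are i.i.d.\ $\Ber(1/2)$ and independent of the prior transcript. I would first rewrite the Principal's payoff in a convenient per-context form. Since $V(a_t,p_t,y_t) = 2(1-p_t)\ind{a_t=\mathrm{work},\,y_t=M}$, define $N_{p,r} = \{t : (p_t,r_t) = (p,r)\}$, $W_{p,r} = \sum_{t \in N_{p,r}}\ind{a_t=\mathrm{work}}$, and $Q_{p,r} = \sum_{t \in N_{p,r}}\ind{a_t=\mathrm{work},\,y_t=M}$. Then the expected Principal payoff equals $\sum_{(p,r)}2(1-p)\EE{Q_{p,r}}$, where the expectation is taken over $y^{*}$ together with the internal randomness of $\sigma$ and $noreg$.

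The first key step is the independence identity $\EE{Q_{p,r}} = \tfrac{1}{2}\EE{W_{p,r}}$. This holds because $\sigma$ is nonresponsive, so $(p_t,r_t)$ is a measurable function of $y_{1:t-1}$ and independent external randomness, and because the Agent commits to $a_t$ before seeing $y_t$ (Protocol~\ref{prot:interaction}), so $a_t$ is likewise measurable with respect to $y_{1:t-1}$ and independent randomness. Hence $\ind{a_t=\mathrm{work},\,(p_t,r_t)=(p,r)}$ is independent of $y_t$, and since $y_t \sim \Ber(1/2)$ under $y^{*}$, the identity follows after summing over $t$.

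The second key step applies the contextual swap regret of $noreg$ to the modification rule $h(p,r,\mathrm{work}) = \mathrm{shirk}$, $h(p,r,\mathrm{shirk}) = \mathrm{shirk}$. Since $U(\mathrm{shirk},p,y) - U(\mathrm{work},p,y) = 1 - 2p\,\ind{y=M}$, the definition of $\ir$ yields
\[
W - 2\sum_{(p,r)} p\cdot Q_{p,r} \;\leq\; T\cdot \ir(y_{1:T},p_{1:T},r_{1:T}),
\]
where $W = \sum_{(p,r)} W_{p,r}$. Taking expectations and substituting $\EE{Q_{p,r}} = \tfrac{1}{2}\EE{W_{p,r}}$ (equivalently, $\EE{W} = 2\EE{Q}$) gives
\[
\sum_{(p,r)}(1-p)\EE{W_{p,r}} \;\leq\; T\cdot \EEs{y^{*}}{\ir(y_{1:T},p_{1:T},r_{1:T})}.
\]
Since the expected Principal payoff equals $\sum_{(p,r)} 2(1-p)\EE{Q_{p,r}} = \sum_{(p,r)}(1-p)\EE{W_{p,r}}$, and since $noreg$ was chosen so that $\ir(y_{1:T},p_{1:T},r_{1:T}) = o(1)$ for every state sequence, the right-hand side is $o(T)$, as required.

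The main obstacle is the independence step: it is essential that both $(p_t,r_t)$ and $a_t$ depend only on $y_{1:t-1}$ and not on $y_t$. The nonresponsiveness of the Principal and the order of play in Protocol~\ref{prot:interaction} together provide exactly this. If either failed---for example, if $\sigma$ could react to past Agent actions that correlate with future states, or if $\cL$ could observe $y_t$ before acting---then the Agent could selectively work on $M$-rounds and hand the Principal $\Omega(T)$ payoff while still incurring vanishing swap regret, breaking the argument.
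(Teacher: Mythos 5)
Your proof is correct and rests on the same two pillars as the paper's: independence of $y_t$ from the round-$t$ transcript under $y^{*}$ (yielding $\E[Q_{p,r}] = \tfrac12\E[W_{p,r}]$, which in the paper appears as $\E[s_{m,w}-s_{h,w}]=0$), and the Agent's no-regret guarantee against the constant-shirk modification rule (which the paper phrases as the Agent's cumulative utility being $\geq -o(T)$). You present the argument directly rather than by contradiction, and you decompose by the context $(p,r)$, but the content is the same as the paper's proof.
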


\begin{proof}
Let $s_{m,w}$ be the number of rounds in which the state is medium and the Agent works, and define $s_{h,w}$, $s_{m,s}$, and $s_{h,s}$ accordingly. Let us assume for contradiction that the Principal receives expected payoff of at least $c \cdot T$. Then,

\begin{align*}
&\mathbb{E}_{y_{*},noreg,\sigma}[\sum_{t = 1}^{T}V(a_{t}^{\sigma},p_{t}^{\sigma},y_{t})]\geq c\cdot T \\ & \Rightarrow \mathbb{E}_{y^{*},noreg,\sigma}[\sum_{t=1}^{T}((2 - 2p_{t}) \cdot \mathbbm{1}[m,w])] \geq c \cdot T  \\
& \Rightarrow \mathbb{E}_{y^{*},noreg,\sigma}[2s_{m,w}]  - c \cdot T \geq  \mathbb{E}_{y^{*},noreg,\sigma}[\sum_{t=1}^{T} 2p_{t} \cdot \mathbbm{1}[m,w])] \\
\end{align*}

However, by assumption, we also have that 

\begin{align*}
& \mathbb{E}_{y^{*},noreg,\sigma}[\sum_{t=1}^{T}U(a_{t}^{\sigma},p_{t}^{\sigma},y_{t})] \geq -o(T) \tag{By the fact that the Agent could play $shirk$ every round and get $0$} \\
& \Rightarrow \mathbb{E}_{y^{*},noreg,\sigma}[\sum_{t=1}^{T}(2p_{t} \cdot \mathbbm{1}[m,w]) - s_{m,w} - s_{h,w}] \geq -o(T)  \\
& \Rightarrow \mathbb{E}_{y^{*},noreg,\sigma}[2 \cdot s_{m,w} - c\cdot T - s_{m,w} - s_{h,w}] \geq -o(T)\tag{Using the Principal payoff expression} \\
& \Rightarrow \mathbb{E}_{y^{*},noreg,\sigma}[s_{m,w} - s_{h,w}] \geq c \cdot T  -o(T) > 0 \tag{For sufficiently large $T$} \\
\end{align*}

As $\sigma$ does not take the states of nature as input, we know that $y_{t} \sim y^{*}$ is independent of $(p_{t},r_{t})$. Furthermore, as $noreg$ does not take the states of nature as input, we know that $a_{t}$, conditioned on $(p_{t},r_{t})$, is independent of $y_{t} \sim y^{*}$. Putting these together, we get that $a_{t}$ is independent of $y_{t}$. Therefore,

\begin{align*}
& \mathbb{E}_{y^{*},noreg,\sigma}[s_{m,w} - s_{h,w}] \\
& = \sum_{t=1}^{T}\mathbb{P}_{y^{*},noreg,\sigma}(y_{t} = M, a_{t} = work) - \sum_{t=1}^{T}\mathbb{P}_{y^{*},noreg,\sigma}(y_{t} = H, a_{t} = work) \\
& = \sum_{t=1}^{T}\mathbb{P}_{y^{*}_{1:t-1},noreg,\sigma}(a_{t} = work) \cdot \mathbb{P}_{y^{*}_{t}}y_{t} = M) - \sum_{t=1}^{T}\mathbb{P}_{y^{*}_{1:t-1},noreg,\sigma}(a_{t} = work) \cdot \mathbb{P}_{y^{*}_{t}}y_{t} = H) \tag{By the independence of $a$ and $y$} \\
& = \frac{1}{2}\sum_{t=1}^{T}\mathbb{P}_{y^{*}_{1:t-1},noreg,\sigma}(a_{t} = work)  - \frac{1}{2}\sum_{t=1}^{T}\mathbb{P}_{y^{*}_{1:t-1},noreg,\sigma}(a_{t} = work)  = 0 \\
\end{align*}

This derives a contradiction, proving our claim.

\end{proof}

\begin{lemma}
If $y_{1} = M$ then no Principal mechanism can get expected payoff more than $\frac{T}{4} + o(T)$ payoff against $\mathcal{L}$, in expectation over $y^{*}_{2:T}$. If $y_{1} = H$ then no Principal mechanism can get expected payoff more than $\frac{T}{5} + o(T)$ payoff against $\mathcal{L}$, in expectation over $y^{*}_{2:T}$. 
\label{lem:payoffUBextended}
\end{lemma}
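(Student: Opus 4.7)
My plan is to split the Principal's expected payoff according to the first round at which $\mathcal{L}$ abandons its cooperative mode and switches to $noreg$, and to bound the two resulting pieces separately. Define $\tau$ to be the first round $t \geq 2$ at which either $(p_t, r_t) \neq (\bar p, \text{work})$ or $balanced_t$ fails, and set $\tau = T+1$ if this event never occurs. Then under $\mathcal{L}$ the Agent plays $shirk$ in round~$1$, $alg \in \{a^*, b^*\}$ in rounds $2, \ldots, \tau - 1$, and $noreg$ in rounds $\tau, \ldots, T$, and I would write
\[
\mathbb{E}_{y^*_{2:T}, \mathcal{L}, \sigma}\!\left[\sum_{t=1}^T V(a_t^\sigma, p_t^\sigma, y_t)\right] = \mathbb{E}\!\left[\sum_{t=1}^{\tau-1} V_t\right] + \mathbb{E}\!\left[\sum_{t=\tau}^T V_t\right].
\]

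For the pre-$\tau$ piece, by the definition of $\tau$ the Principal must play $(\bar p, \text{work})$ at every round $2 \leq t \leq \tau - 1$, and the Agent responds deterministically with $alg$. Round~$1$ contributes $V_1 = 0$ because the Agent shirks. For $t \geq 2$, since $y_{2:T}$ are i.i.d.\ $\text{Ber}(1/2)$ and independent of the fixed $y_1$, a direct calculation using the per-completion Principal payoff $(1 - \bar p)$ yields: in Case~$1$ ($y_1 = M$, $\bar p = 1/2$, $alg = a^*$) the Agent works iff $y_t = M$, so $V_t = (1-\bar p)\,\mathbb{1}[y_t = M]$ with $\mathbb{E}[V_t] = 1/4$; in Case~$2$ ($y_1 = H$, $\bar p = 3/5$, $alg = b^*$) completion still happens iff $y_t = M$ (the $1/5$ probability that $b^*$ works on $H$ rounds is wasted), so $V_t = (1-\bar p)\,\mathbb{1}[y_t = M]$ with $\mathbb{E}[V_t] = 1/5$. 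Summing the at most $T-1$ contributing rounds gives pre-$\tau$ expected totals of at most $T/4$ and $T/5$ respectively.

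For the post-$\tau$ piece, I would adapt the argument of Lemma~\ref{lem:PrincipalpayoffUB} to the sub-game starting at the random time $\tau$. The crucial structural fact is that both $p_t$ (output by $\sigma$) and $a_t$ (output by $noreg$) depend only on the history $\mathcal{H}_{t-1}$, while $y_t \sim \text{Ber}(1/2)$ is independent of that history for every $t \geq 2$. Conditioning on $(\tau, \mathcal{H}_{\tau-1})$, this independence gives
\[
\mathbb{E}\!\left[\sum_{t=\tau}^T V_t \,\Big|\, \tau, \mathcal{H}_{\tau-1}\right] = \tfrac{1}{2}\,\mathbb{E}\!\left[\sum_{t=\tau}^T (1-p_t)\,\mathbb{1}[a_t = \text{work}] \,\Big|\, \tau, \mathcal{H}_{\tau-1}\right].
\]
Applying the same conditional-expectation identity to $U_t = \mathbb{1}[a_t = \text{work}](2p_t\,\mathbb{1}[y_t = M] - 1)$ and invoking the no-negative-regret guarantee of $noreg$ on the sub-horizon $[\tau, T]$ against the constant benchmark of always shirking (which delivers $0$) yields $\mathbb{E}[\sum_{t=\tau}^T (1-p_t)\,\mathbb{1}[a_t = \text{work}]] \leq o(T)$, so the post-$\tau$ expected total is $o(T)$. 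Combining the two bounds proves the lemma. The main obstacle is this last step: transferring the no-regret guarantee of $noreg$ from a fixed horizon to the sub-game beginning at the random stopping time $\tau$, which requires careful conditioning on $\mathcal{H}_{\tau-1}$ and arguing that the restriction of a no-regret algorithm to a suffix is itself no-regret on the shorter horizon.
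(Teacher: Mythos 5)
Your proposal takes essentially the same route as the paper: decompose the Principal's expected payoff at the first round the Agent abandons $a^*$/$b^*$, compute the pre-defection expected payoff per round directly, and bound the post-defection piece by $o(T)$ via the independence-plus-no-negative-regret argument of Lemma~\ref{lem:PrincipalpayoffUB}. Your version is in fact slightly more careful than the paper's at two points the paper glosses over: you define the cutoff $\tau$ so that it also fires when $balanced_t$ fails (the paper's $t'$ is written only in terms of $\sigma$ misbehaving, even though $\mathcal{L}$ defects on either trigger), and you explicitly flag that Lemma~\ref{lem:PrincipalpayoffUB} is being applied to a suffix starting at a random stopping time rather than a fixed horizon, which the paper simply cites without comment; neither of these refinements changes the bound, and the argument goes through as you say.
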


\begin{proof}
First, assume $y_{1} = M$. Furthermore, let $t'$ be the first round in which the Principal mechanism $\sigma$ does not play $(0.5,work)$. Then, the payoff of the Principal is

\begin{align*}
& \mathbb{E}_{y_{2:T}^{*},\cL,\sigma}[\sum_{t = 1}^{t'}V(a_{t}^{\sigma},(0.5,work),y_{t})] + \mathbb{E}_{y_{2:T}^{*},\cL,\sigma}[\sum_{t = t'}^{T}V(a_{t}^{\sigma},p_{t}^{\sigma},y_{t})] \\
& = \mathbb{E}_{y_{2:T}^{*},(a^{*})^{\sigma}}[\sum_{t = 1}^{t'}V(a_{t}^{\sigma},(0.5,work),y_{t})] + \mathbb{E}_{y_{2:T}^{*},noreg,\sigma}[\sum_{t = t'}^{T}V(noreg^{\sigma},\sigma,p_{t}^{\sigma},y_{t})] \\
& = \frac{t'}{2} - \frac{t'}{4} + \mathbb{E}_{y_{2:T}^{*},noreg,\sigma}[\sum_{t = t'}^{T}V(noreg^{\sigma},\sigma,p_{t}^{\sigma},y_{t})] \\
& = \frac{t'}{4} + o(T)  \tag{By Lemma~\ref{lem:PrincipalpayoffUB}}\\
& \leq \frac{T}{4} + o(T) \\
\end{align*}

The analysis is similar for $y_{1} = H$. Let $t'$ be the first round in which the Principal mechanism $\sigma$ does not play $(0.6,work)$. Then, the payoff of the Principal is

\begin{align*}
& \mathbb{E}_{y_{2:T}^{*},\cL,\sigma}[\sum_{t = 1}^{t'}V(a_{t}^{\sigma},(0.6,work),y_{t})] + \mathbb{E}_{y_{2:T}^{*},\cL,\sigma}[\sum_{t = t'}^{T}V(a_{t}^{\sigma},p_{t}^{\sigma},y_{t})] \\
& = \mathbb{E}_{y_{2:T}^{*},(b^{*})^{\sigma}}[\sum_{t = 1}^{t'}V(a_{t}^{\sigma},(0.6,work),y_{t})] + \mathbb{E}_{y_{2:T}^{*},noreg,\sigma}[\sum_{t = t'}^{T}V(noreg^{^{\sigma}},\sigma,p_{t}^{\sigma},y_{t})] \\
& = \frac{2t'}{5} - \frac{t'}{5} + \mathbb{E}_{y_{2:T}^{*},noreg,\sigma}[\sum_{t = t'}^{T}V(noreg^{^{\sigma}},\sigma,p_{t}^{\sigma},y_{t})] \\
& = \frac{t'}{5} + o(T)  \tag{By Lemma~\ref{lem:PrincipalpayoffUB}}\\
& \leq \frac{T}{5} + o(T) \\
\end{align*}

\end{proof}

\begin{lemma}
For any prefix of play of length $T' \leq T$, as long as $balanced_{all}=true$ and the Principal plays only $0.5,work$ for all $\sigma$,
$$\E_{a^{*},\sigma}[\ir(y_{1:T},p_{1:T},r_{1:T})] \leq o(T)$$
and 
$$\E_{a^{*},\sigma}[\nr(y_{1:T},p_{1:T}^\sigma,r^\sigma_{1:T})] \leq o(T)$$

Similarly, for any prefix of play of length $T' \leq T$, as long as $balanced_{all}=true$, the Principal plays only $0.6,work$, for all $\sigma$,
$$\E_{b^{*},\sigma}[\ir(y_{1:T},p_{1:T},r_{1:T})] \leq o(T)$$
and
$$\E_{b^{*},\sigma}[\nr(y_{1:T},p_{1:T}^\sigma,r^\sigma_{1:T})] \leq o(T)$$
\label{lem:boundreg}
\end{lemma}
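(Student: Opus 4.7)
The plan is to handle the two algorithms $a^\ast$ and $b^\ast$ separately, reducing each regret quantity to an arithmetic expression in $s_M$ and $s_H$ (the numbers of $M$- and $H$-rounds in the prefix) and, for $b^\ast$, the random variable $X := s_{h,w}$ counting the $H$-rounds on which $b^\ast$ happened to play $work$. A key observation is that since $(p_t,r_t)$ is constant across the relevant subsequence, the $\max$ defining $\ir$ decomposes independently over the tuples $(p,r,a)$ and the one defining $\nr$ decomposes independently over $(p,r)$. As preliminary input I would record the stage-game utilities: under $(p,r)=(0.5,work)$, $U(work,0.5,M)=U(shirk,0.5,M)=0$ and $U(work,0.5,H)=-1$, $U(shirk,0.5,H)=0$; under $(p,r)=(0.6,work)$, $U(work,0.6,M)=0.2$, $U(work,0.6,H)=-1$, and $U(shirk,0.6,\cdot)=0$.

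The $a^\ast$ bounds are immediate and deterministic, requiring no use of $balanced_{all}$. Because $a^\ast$ plays $work$ on $M$-rounds and $shirk$ on $H$-rounds, its per-round utility under $p=0.5$ is identically $0$, as is the payoff of the benchmark $h(0.5,work)=shirk$; this gives $\nr=0$. For $\ir$, the only candidate improving swap on the $(0.5,work,work)$-subsequence (the $M$-rounds) yields $0-0=0$ per round, while on the $(0.5,work,shirk)$-subsequence (the $H$-rounds) it yields $-1$ per round, so the max swap gain is $0$ and the unnormalized $\ir=0$ on every prefix.

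The $b^\ast$ case carries the real content. Decomposing the swap maximization, the swap $(0.6,work,work)\mapsto shirk$ has cumulative gain $-0.2\,s_M + X$ while the swap $(0.6,work,shirk)\mapsto work$ has gain $-(s_H - X)\le 0$, so the unnormalized contextual swap regret equals $\max\bigl(0,\,-0.2\,s_M + X\bigr)$. Under $balanced_{all}$ (which, for large $T$, forces $s_H\ge 0.2\,s_M$), the benchmark $h(0.6,work)=shirk$ dominates the alternative $h=work$, and the unnormalized negative cross-swap regret equals $0.2\,s_M - \E[X]$. The main---and essentially only---obstacle is to bound both $\E\max\bigl(0,\,-0.2\,s_M+X\bigr)$ and $0.2\,s_M - \E[X]$ by $o(T)$ uniformly over prefixes $T'\le T$. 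For the latter, $\E[X]=s_H/5$, so the expression equals $0.2\,(s_M-s_H)$, which is $o(T)$ directly from Lemma~\ref{lem:iteratedlog}. For the former, the triangle inequality gives $\E\,|{-0.2\,s_M+X}| \le 0.2\,|s_M-s_H| + \E\,|X-s_H/5|$; the first summand is $o(T)$ by Lemma~\ref{lem:iteratedlog}, and the second is controlled by Hoeffding's inequality, since $X$ is a sum of independent $\mathrm{Ber}(1/5)$ variables over the $H$-rounds. A union bound over the at most $T$ prefix lengths yields $\sup_{T'\le T}|X(T')-s_H(T')/5| = O(\sqrt{T\log T})$ with probability $1 - O(1/T)$; the low-probability failure event contributes at most $T$ times a $1/\mathrm{poly}(T)$ failure probability, still $o(T)$, completing the proof.
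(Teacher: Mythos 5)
Your proof takes essentially the same decomposition as the paper's argument — split the swap maximum by the cell $(p,r,a)$ (resp.\ $(p,r)$ for $\nr$), compute the cumulative gain of each candidate modification as a function of $s_M$, $s_H$ and (for $b^*$) the random variable $X$, and invoke the balancing condition — and in a couple of places yours is the cleaner version: you correctly observe that both $a^*$ bounds are identically zero, deterministically and without any appeal to $balanced_{all}$, whereas the paper's arithmetic there carries a spurious $\frac{1}{2}m_{y,T'}$ term.

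There is, however, one genuine gap. You assert that $balanced_{all}$ ``for large $T$, forces $s_H \ge 0.2\,s_M$'' and so take the $\nr$ benchmark to be $h(0.6,work)=shirk$. That implication depends on $T'$, not $T$: the balancing condition only yields $|s_M - s_H| \le O\bigl(\sqrt{T\ln\ln T}\bigr)$, and since $s_M + s_H = T'$, for short prefixes $T' = O\bigl(\sqrt{T\ln\ln T}\bigr)$ the inequality $s_H \ge 0.2\,s_M$ can fail (indeed, for small $T'$ the balancing constraint is vacuous, so $s_M = T'$, $s_H = 0$ is admissible). In that regime $h=work$ is the dominating benchmark and the unnormalized negative regret is $(0.2 s_M - \E[X]) - (0.2 s_M - s_H) = 0.8\,s_H$, not $0.2(s_M - s_H)$. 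The conclusion is unaffected — $0.8\,s_H \le 0.8\,T' = o(T)$ trivially in that regime — but the case split has to be made explicit for the argument to be correct for every prefix length. As a smaller remark, the closing union bound over prefixes is unnecessary: Lemma~\ref{lem:boundreg} is stated for a single fixed $T'$, so the pointwise bound $\E\,|X - \E X| \le \sqrt{\mathrm{Var}(X)} = O(\sqrt{T})$ already suffices, and in the downstream application (Lemma~\ref{lem:contextualnegative}) the switching time $t_b$ is measurable with respect to $\sigma$'s internal randomness and $y_{1:T}$ alone, so one can condition on it and apply the pointwise lemma.
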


\begin{proof}

For the first case of $(0.5,work)$ and $a^{*}$, the Agent is always mapping $M$ to $work$ and $H$ to $shirk$. As $work$ gets payoff $2p - 1 \geq 0$ under $m$ and $shirk$ gets $0$, while $work$ gets $-1$ under $H$ and $shirk$ gets $0$, this is the optimal mapping. Therefore the Contextual Swap Regret is $0$. Now we can upper bound the negative regret:

\begin{align*}
& \E_{a^{*},\sigma}\left[\max_{h:\cP\times \cA\mapsto \cA}\sum_{t=1}^{T'} U(h(p_t^{\sigma},r_t^{\sigma}),p_t^{\sigma}, y_t)) - U(a_{t}^{\sigma},p_t^{\sigma}, y_t)) \right] \\
& =\E_{a^{*},\sigma}\left[\max_{a \in \cA}\sum_{t=1}^{T'} U(h(0.5,work),0.5, y_t)) - U(a_{t}^{\sigma},0.5, y_t)) \right] \tag{By the fact that the Principal is making a fixed (policy, recommendation) pair across all $t\leq T'$} \\
&= \E_{a^{*},\sigma}\left[\max_{a \in \cA}\sum_{t=1}^{T'} U(h(0.5,work),0.5, y_t)) \right] \tag{By definition of $a^{*}$} \\
&= \max(\E_{a^{*},\sigma}[\sum_{t=1}^{T'} U(work,0.5, y_t)], \E_{a^{*},\sigma}[\sum_{t=1}^{T'} U(shirk,0.5, y_t)) \\
&= \max(\E_{a^{*},\sigma}[\frac{1}{2}m_{y,T'} - h_{y,T}], 0)  \\
&\leq \max(\E_{a^{*},\sigma}[\frac{1}{2}h_{y,T'} + o(T) - h_{y,T}], 0) \tag{By the fact that $balanced_{t}$ is true over the entire prefix.}  \\
& \leq o(T)
\end{align*}

For the second case of $(0.6, work)$ and $b^{*}$, let us use $m_{y,T'}$ to refer to the number of $m$ states in the sequence, and $h_{y,T'}$ to refer to the number of $h$ states:

\begin{align*}
& \E_{b^{*},\sigma}\left[\max_{h:\cP\times \cA\times \cA\mapsto \cA}\sum_{t=1}^{T'} U(h(p_t^{\sigma},r_t^{\sigma},a_t^{\sigma}),p_t^{\sigma}, y_t)) - U(a_{t}^{\sigma},p_t^{\sigma}, y_t)) \right] \\
& =\E_{b^{*},\sigma}\left[\max_{h: \cA\mapsto \cA}\sum_{t=1}^{T'} U(h(0.6,work,a_t^{\sigma}),0.6, y_t)) - U(a_{t}^{\sigma},0.6, y_t)) \right] \tag{By the fact that the Principal is making a fixed (policy, recommendation) pair across all $t\leq T'$} \\
&= \E_{b^{*},\sigma}\left[\max_{h: \cA\mapsto \cA}\sum_{t=1}^{T'} U(h(0.6,work,a_t^{\sigma}),0.6, y_t)) - \frac{1}{5}(m_{y,T'} - h_{y,T'}) \right] \tag{By definition of $b^{*}$} \\
&= \E_{b^{*},\sigma}[\max_{a \in \cA}\sum_{t=1}^{T'} U(a,0.6, y_t))\mathbbm{1}[a_{t}^{\sigma} = work]] \\& + \E_{b^{*},\sigma}[\max_{a \in \cA}\sum_{t=1}^{T'} U(a,0.6, y_t))\mathbbm{1}[a_{t}^{\sigma} = shirk]] - \E_{b^{*},\sigma}[\frac{1}{5}(m_{y,T'} - h_{y,T'}) ]  \\
&= \max(\E_{b^{*},\sigma}[\sum_{t=1}^{T'} U(work,0.6, H))\mathbbm{1}[a_{t}^{\sigma} = shirk]], 0) - \frac{1}{5}(m_{y,T'} - h_{y,T'})  \tag{By the fact that $b^{*}$ only shirks when $y = H$, and that shirking always guarantees payoff $0$.}  \\
&= \max(m_{y},T' - h_{y,T'}, 0) - \frac{1}{5}(m_{y,T'} - h_{y,T'})   \tag{By the distribution of the states conditioned on $b^{*}$ playing $work$} \\
& \leq o(T) \tag{By the fact that $balanced_{t}$ is true over the entire prefix.} 
\end{align*}

Thus, in the second case the Contextual Swap Regret is upper bounded. Finally, we need that the Negative Regret is upper bounded: 

\begin{align*}
& \E_{b^{*},\sigma}\left[\max_{h:\cP\times \cA\mapsto \cA}\sum_{t=1}^{T'} U(h(p_t^{\sigma},r_t^{\sigma}),p_t^{\sigma}, y_t)) - U(a_{t}^{\sigma},p_t^{\sigma}, y_t)) \right] \\
& =\E_{b^{*},\sigma}\left[\max_{a \in \cA}\sum_{t=1}^{T'} U(h(0.6,work),0.6, y_t)) - U(a_{t}^{\sigma},0.6, y_t)) \right] \tag{By the fact that the Principal is making a fixed (policy, recommendation) pair across all $t\leq T'$} \\
&= \E_{a^{*},\sigma}\left[\max_{a \in \cA}\sum_{t=1}^{T'} U(h(0.6,work),0.6, y_t)) - \frac{1}{5}(m_{y,T'} - h_{y,T'}) \right] \tag{By definition of $b^{*}$} \\
&= \max(\E_{b^{*},\sigma}[\sum_{t=1}^{T'} U(work,0.6, y_t)], \E_{a^{*},\sigma}[\sum_{t=1}^{T'} U(shirk,0.6, y_t)) - \E_{a^{*},\sigma}[(m_{y,T'} - h_{y,T'}) ]  \\
&= \max(\E_{b^{*},\sigma}[\frac{1}{2}m_{y,T'} - h_{y,T}], 0) - \E_{b^{*},\sigma}[(m_{y,T'} - h_{y,T'}) ]  \\
& \leq o(T) \tag{By the fact that $balanced_{t}$ is true over the entire prefix.}  \\
\end{align*}

\end{proof}
\end{document}